\documentclass[sigconf, nonacm]{acmart}

\newcommand\vldbdoi{XX.XX/XXX.XX}
\newcommand\vldbpages{XXX-XXX}

\newcommand\vldbvolume{14}
\newcommand\vldbissue{9}
\newcommand\vldbavailabilityurl{http://vldb.org/pvldb/format_vol14.html}

\newcommand\vldbyear{2021}
\newcommand\vldbauthors{Raghavendra Addanki, Sainyam Galhotra, Barna Saha}
\newcommand\vldbtitle{\shorttitle} 
\usepackage{mathtools}
\usepackage{alltt}
\usepackage{url}
\usepackage{subfigure}
\usepackage{paralist}
\usepackage{listings}
\lstset{
  basicstyle=\ttfamily,
  mathescape
}
\usepackage{color}

\usepackage{booktabs}
%
\usepackage{natbib}
\pagestyle{plain}

\usepackage{enumitem}
\setlist[enumerate]{itemsep=-1mm}
\usepackage[utf8]{inputenc} 
\usepackage[T1]{fontenc}    
\usepackage{amsmath}
\usepackage{wrapfig}
\usepackage{xcolor}
\usepackage{xspace}
\usepackage[noend]{algpseudocode}
\usepackage{algorithm}
\usepackage{color}
\usepackage{amsfonts}
\usepackage{bbm}
\usepackage{amsthm}
\usepackage{tikz}
\usepackage{hyperref}       
\usepackage{url}            
\usepackage{booktabs}       
\usepackage{amsfonts}       
\usepackage{nicefrac}       
\usepackage{microtype}      

\DeclareMathOperator*{\oracle}{\mathcal{O}}
\DeclareMathOperator*{\E}{{\bf {E}}}

\DeclareMathOperator{\OPT}{OPT}

\newcommand{\children}{\textit{children}}

\newcommand{\vmax}{v_{\text{max}}}

\newcommand{\umax}{u_{\text{max}}}
\newcommand{\smax}{s_{\text{max}}}

\newcommand{\Count}{\texttt{Count}}
\newcommand{\FCount}{\texttt{FCount}}
\usepackage[utf8]{inputenc}

\newcommand{\MCount}{\texttt{MCount}}
\newcommand{\ACount}{\texttt{ACount}}

\newcommand{\identifycore}{\textsc{Identify-Core}}
\newcommand{\Assign}{\textsc{Assign}}
\newcommand{\ApproxFarthest}{\textsc{Approx-Farthest}\xspace}
\newcommand{\Yes}{\texttt{Yes}}
\newcommand{\No}{\texttt{No}}

\newtheorem{theorem}{Theorem}[section]
\newtheorem{lemma}[theorem]{Lemma}
\newtheorem{definition}[theorem]{Definition}

\newtheorem{observation}[theorem]{Observation}

\newtheorem{claim}[theorem]{Claim}
\newtheorem{example}[theorem]{Example}
\newtheorem{problem}[theorem]{Problem}

\usepackage{import}
\newcommand{%
    
    \import{./}{.pdf_tex}
}[1]{%
    
    \import{./}{#1.pdf_tex}
}

\usepackage{tablefootnote}

\definecolor{mygreen}{rgb}{0,0.6,0}
\definecolor{mygray}{rgb}{0.5,0.5,0.5}
\definecolor{mymauve}{rgb}{0.58,0,0.82}

\newcommand{\reva}[1]{{#1}}
\newcommand{\revb}[1]{{#1}}
\newcommand{\revc}[1]{{#1}}

\newcommand{\changes}[1]{{#1}}

\newcommand{\cities}{\texttt{cities}}
\newcommand{\caltech}{\texttt{caltech}}
\newcommand{\amazon}{\texttt{amazon}}
\newcommand{\monument}{\texttt{monuments}}
\newcommand{\monuments}{\texttt{monuments}}

\begin{document}

\title{How to Design Robust Algorithms using Noisy Comparison Oracle}
\author{Raghavendra Addanki}
\affiliation{%
  \institution{UMass Amherst}
  \streetaddress{}
  \city{}
  \country{}
}
\email{raddanki@cs.umass.edu}

\author{Sainyam Galhotra}
\affiliation{%
  \institution{UMass Amherst}
  \streetaddress{}
  \city{}
  \state{}
  \postcode{}
}
\email{sainyam@cs.umass.edu}

\author{Barna Saha}
\affiliation{%
  \institution{UC Berkeley}
  \city{}
  \country{}
}
\email{barnas@berkeley.edu}

\begin{abstract}

Metric based comparison operations such as finding maximum, nearest and farthest neighbor are fundamental to studying various clustering techniques such as $k$-center clustering and agglomerative hierarchical clustering. These techniques crucially rely on accurate estimation of pairwise distance between records. However,  computing exact features of the records, and their pairwise distances is often challenging, and sometimes not possible. We circumvent this challenge by leveraging weak supervision in the form of a comparison oracle that compares the relative distance between the queried points such as `Is point $u$ closer to $v$ or  $w$ closer to $x$?'. 

However, it is possible that some queries are easier to answer than others using a comparison oracle. We capture this by introducing two different noise models called adversarial and probabilistic noise. In this paper, we study various problems that include finding maximum, nearest/farthest neighbor search under these noise models. Building upon the techniques we develop for these comparison operations, we give robust algorithms for $k$-center clustering and agglomerative hierarchical clustering. We prove that our algorithms achieve good approximation guarantees with a high probability and analyze their query complexity. We  evaluate the effectiveness and efficiency  of our techniques empirically on various real-world datasets.
\end{abstract}



\maketitle

\vspace*{-1ex}
\begingroup\small\noindent\raggedright\textbf{PVLDB Reference Format:}\\
\vldbauthors. \vldbtitle. PVLDB, \vldbvolume(\vldbissue): \vldbpages, \vldbyear.
\href{https://doi.org/\vldbdoi}{doi:\vldbdoi}
\endgroup
\begingroup
\renewcommand\thefootnote{}\footnote{\noindent
This work is licensed under the Creative Commons BY-NC-ND 4.0 International License. Visit \url{https://creativecommons.org/licenses/by-nc-nd/4.0/} to view a copy of this license. For any use beyond those covered by this license, obtain permission by emailing \href{mailto:info@vldb.org}{info@vldb.org}. Copyright is held by the owner/author(s). Publication rights licensed to the VLDB Endowment. 
\raggedright Proceedings of the VLDB Endowment, Vol. \vldbvolume, No. \vldbissue\ %
ISSN 2150-8097. \href{https://doi.org/\vldbdoi}{doi:\vldbdoi}
}\addtocounter{footnote}{-1}\endgroup

\ifdefempty{\vldbavailabilityurl}{}{
}
\vspace*{-1ex}

\section{Introduction\label{sec:intro}}
Many real world applications such as data summarization, social network analysis, facility location crucially rely on metric based comparative operations such as finding maximum, nearest neighbor search or ranking. As an example, data summarization aims to identify a small representative subset of the data where each representative is a summary of similar records in the dataset. Popular clustering algorithms such as $k$-center clustering  and hierarchical clustering are often used for data summarization~\cite{kleindessner2019fair, girdhar2012efficient}. In this paper, we study fundamental metric based operations such as finding maximum, nearest neighbor search, and use the developed techniques to study clustering algorithms such as $k$-center clustering and agglomerative hierarchical clustering.

Clustering is often regarded as a challenging task especially due to the absence of domain knowledge, and the final set of clusters identified can be  highly inaccurate and noisy~\cite{ben2018}. It is often hard to compute the exact features of points and thus pairwise distance computation from these feature vectors could be highly noisy. This will render the clusters computed based on objectives such as $k$-center unreliable.
\begin{center}
\vspace*{-2ex}
\begin{figure}
    \centering
    \includegraphics[width=0.77\columnwidth]{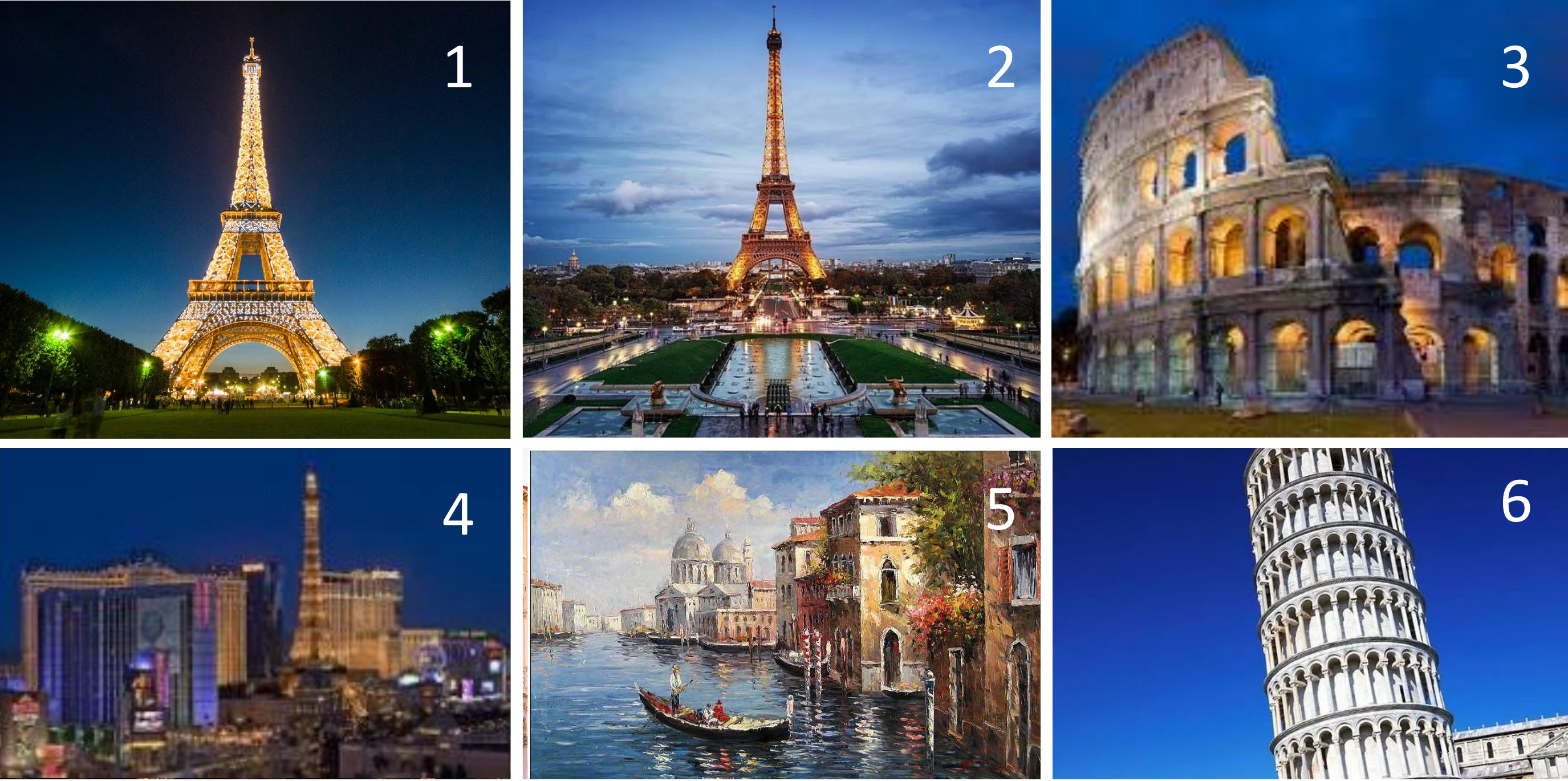}
    \vspace*{-2.5ex}
    \caption{Data summarization example}
    \vspace*{-3ex}
    \label{fig:example}
\end{figure}
\end{center}
To address these challenges, there has been a recent interest to leverage supervision from crowd workers (abstracted as an oracle) which  provides significant improvement in accuracy but at an added cost incurred by human intervention~\cite{verroios2015entity,galhotra2018robust,vinayak2016crowdsourced}. For clustering, the existing literature on oracle based techniques mostly use \emph{optimal cluster queries}, that ask questions of the form `do the points  u and v belong to the same optimal cluster?'\cite{ashtiani2016clustering,vinayak2016crowdsourced,mazumdar2017clustering,emamjomeh2018adaptive}. The goal is to minimize the number of queries aka query complexity while ensuring high accuracy of clustering output. This model is relevant for applications where the oracle (human expert or a crowd worker) is aware of the optimal clusters such as in entity resolution~\cite{verroios2015entity,galhotra2018robust}. However, in most applications, the clustering output depends highly on the required number of clusters and the presence of other records. Without a holistic view of the entire dataset, answering optimal queries may not be feasible for any realistic oracle. Let us consider an example data summarization task that highlights some of the challenges.
\begin{example}
Consider a  data summarization task over a collection of images (shown in Figure~\ref{fig:example}). The goal is to identify $k$  images (say $k=3$) that summarize the \reva{different locations in the dataset}.  The images $1,2$ refer to the Eiffel tower in Paris, $3$ is the Colosseum in Rome, $4$ is the replica of Eiffel tower at Las Vegas, USA, $5$ is Venice and $6$ is the Leaning tower of Pisa. The ground truth output in this case would be $\{\{1,2\},\{3,5,6\},\{4\}\}$. We calculated pairwise similarity between images using the visual features generated from Google Vision API~\cite{googlevision}. The pair $(1,4)$ exhibits the highest similarity of $0.87$, while all other pairs have similarity lower than $0.85$. Distance between a pair of images $u$ and $v$, denoted as $d(u,v)$, is defined as $(1-$similarity between $u$ and $v$).   We ran a user experiment by querying crowd workers to answer simple Yes/No questions to help summarize the data (Please refer to Section~\ref{sec:userstudy} for more details).
\end{example}
\noindent In this example, we make the following observations.
\setlength{\textfloatsep}{10pt}


\sloppy
\begin{description}[leftmargin=0pt]
\vspace{0mm}
\item[$\bullet$ Automated clustering techniques generate noisy clusters.] Consider the greedy approach for $k$-center clustering~\cite{gonzalez1985clustering} which sequentially identifies the farthest record as a new cluster center. In this example, records $1$ and $4$ are placed in the same cluster by the greedy $k$-center clustering, thereby leading to poor performance. \reva{In general, automated techniques are known to generate erroneous similarity values between records due to missing information or even presence of noise~\cite{wang2012crowder,vesdapunt2014crowdsourcing,firmanier}. } Even Google's landmark detection API~\cite{googlevision} did not identify the location of images $4$ and $5$.
\item[$\bullet$ Answering pairwise optimal cluster query is infeasible.] Answering whether \reva{$1$ and $3$} belong to the same optimal cluster when presented in isolation is impossible unless the crowd worker is aware of other records present in the dataset, \reva{and the granularity of the optimum clusters}. \reva{Using the pair-wise Yes/No answers obtained from the crowd workers for the ${6\choose 2}$ pairs in this example, the identified clusters achieved $0.40$ F-score for $k=3$. Please refer to Section~\ref{sec:userstudy} for additional details. }
\item[$\bullet$ Comparing relative distance between the locations is easy.] Answering \emph{relative distance} queries of the form \reva{ `Is $1$ closer to $3$, or is $5$ closer to $6$?' does not require any extra knowledge about other records in the dataset. For the $6$ images in the example, we queried relative distance queries and the final clusters constructed for $k=3$ achieved an F-score of $1$. }
\end{description}

\reva{In summary, we observe that humans have an innate understanding of the domain knowledge and can answer \emph{relative distance queries} between records easily. }
Motivated by {the aforementioned} observations, we consider a \emph{quadruplet} comparison oracle that  compares the relative distance between two pairs of points $(u_1,u_2)$ and $(v_1,v_2)$ and outputs the pair with smaller distance between them breaking ties arbitrarily.  
Such oracle models
have been studied extensively in the literature ~\cite{ilvento2019metric,emamjomeh2018adaptive,chatziafratis2018hierarchical,ghoshdastidar2019foundations,ukkonen2017crowdsourced,tamuz2011adaptively,hopkins2020noise}. Even though quadruplet queries are easier than binary optimal queries, some oracle queries maybe harder than the rest.  In a comparison query, if there is a significant gap between the two distances being compared, then such queries are easier to answer~\cite{DKMR15,braverman2008noisy}. However, when the two distances are close, the chances of an error could increase. For example,  `Is location in image $1$ closer to $3$, or $2$ is closer to $6$?' maybe difficult to answer. 

To capture noise in quadruplet comparison oracle answers, we consider two noise models. In the first noise model, when the pairwise distances are comparable, the oracle can return the pair of points that are farther instead of closer. Moreover, we assume that the oracle has access to all previous queries and can answer queries by acting adversarially. More formally, there is a parameter $\mu > 0$ such that if $\frac{\max{d(u_1, u_2), d(v_1, v_2)}}{\min{d(u_1, u_2), d(v_1, v_2)}} \leq (1+\mu)$, then adversarial error may occur, otherwise the answers are correct. We call this the "Adversarial Noise Model". 
In the second noise model called "Probabilistic Noise Model", given a pair of distances, we assume that the oracle answers correctly with a probability of $1-p$ for some fixed constant $p < \frac{1}{2}$. We consider a \textit{persistent} probabilistic noise model, where our oracle answers are \emph{persistent}, i.e., query responses remain unchanged even upon repeating the same query multiple times. Such noise models have been studied extensively~\cite{mazumdar2017clustering, prelec2017solution,ghoshdastidar2019foundations, braverman2008noisy,bressan2019correlation, galhotra2018robust} since the error due to oracles often does not change with repetition, and in some cases increases upon repeated querying~\cite{mazumdar2017clustering,prelec2017solution,galhotra2018robust}. This is in contrast to the noise models studied in ~\cite{emamjomeh2018adaptive} where response to every query is independently noisy. Persistent query models are more difficult to handle than independent query models where repeating each query is sufficient to generate the correct answer by majority voting. 
 
\subsection{Our Contributions}
\sloppy
We present algorithms for finding the maximum, nearest and farthest neighbors, $k$-center clustering and hierarchical clustering objectives under the adversarial and probabilistic noise model using comparison oracle. \revc{We show that our techniques have provable approximation guarantees for both the noise models, are efficient and  obtain good query complexity}. We empirically evaluate the robustness and efficiency of our techniques on real world datasets.

\noindent (i) \textbf{Maximum, Farthest and Nearest Neighbor}: Finding maximum has received significant attention under both adversarial and probabilistic models \cite{ajtai2009sorting,feige1994computing,geissmann2018optimal,DKMR15,braverman2008noisy,geissmann2017sorting,klein2011tolerant,geissmann2020optimal}. In this paper, we provide the following results.
\begin{description}[leftmargin=0pt]
\item[$\bullet$ Maximum under adversarial model.] We present an algorithm that returns a value within $(1+\mu)^3$ of the maximum among a set of $n$ values $V$ with probability $1-\delta$\footnote{$\delta$ is the confidence parameter and is standard in the literature of randomized algorithms.} \revc{using $O(n \log^2(1/\delta))$ oracle queries and running time (Theorem~\ref{thm:max_adv})}.
\item[$\bullet$ Maximum under probabilistic model.] \revc{We present an algorithm that requires $O(n \log^2(n/\delta))$ queries to identify $O(\log^2 (n/\delta))$th rank value with probability $1-\delta$ (Theorem~\ref{thm:max_prob})}. In other words, in $O(n \log^2(n))$ time we can identify $O(\log^2 (n))$th value in the sorted order with probability $1-\frac{1}{n^c}$ for any constant $c$. 
\end{description}
To contrast our results with the state of the art, Ajtai et al.~\cite{ajtai2009sorting} study a slightly different additive adversarial error model where the answer of a maximum query is correct if the compared values differ by $\theta$ (for some $\theta > 0$) and otherwise the oracle answers adversarially. Under this setting, they give an additive $3\theta$-approximation with $O(n)$ queries. Although, our model cannot be directly compared with theirs, we note that our model is scale invariant, and thus, provides a much stronger bound when distances are small. As a consequence, our algorithm can be used under additive adversarial model as well, and obtaining the same approximation guarantees (Theorem~\ref{thm:farthest_nearest_prob}). 

For the probabilistic model, after a long series of works \cite{braverman2008noisy,geissmann2017sorting,klein2011tolerant,geissmann2020optimal}, only recently an algorithm is proposed with query complexity $O(n\log{n})$ that returns an $O(\log{n})$th rank value with probability $1-\frac{1}{n}$ \cite{geissmann2018optimal}. Previously, the best query complexity was $O(n^{3/2})$ \cite{geissmann2020optimal}. While our bounds are slightly worse than \cite{geissmann2018optimal}, our algorithm is significantly simpler.

Rest of the work in finding maximum allow repetition of queries and assume the answers are independent~\cite{feige1994computing,DKMR15}. As discussed earlier, persistent errors are much more difficult to handle than independent errors. In \cite{feige1994computing}, the authors present an algorithm that finds maximum using $O(n\log 1/\delta)$ queries and succeeds with probability $1-\delta$. Therefore, even under persistent errors, we obtain guarantees close to the existing ones which assume independent error. The algorithms of \cite{feige1994computing,DKMR15} do not extend to our model.
\begin{description}[leftmargin=0pt]
\item[$\bullet$ Nearest Neighbor.] Nearest neighbor queries can be cast as ``finding minimum'' among a set of distances.  We can obtain bounds similar to finding maximum for the nearest neighbor queries. \revc{In the adversarial model, we obtain an $(1+\mu)^3$-approximation, and in the probabilistic model, we are guaranteed to return an element with rank $O(\log^2({n}/\delta))$ with probability $1-\delta$ using $O(n \log^2(1/\delta))$ and $O(n\log^2({n}/\delta))$ oracle queries respectively.}
\end{description}

Prior techniques have studied nearest neighbor search under noisy distance queries~\cite{mason2019learning}, where the oracle returns a noisy estimate of a distance between queried points, and repetitions are allowed. Neither the algorithm of \cite{mason2019learning}, nor other techniques developed for maximum~\cite{ajtai2009sorting,feige1994computing} and top-$k$~\cite{DKMR15} extend for nearest neighbor under our noise models.

\begin{description}[leftmargin=0pt]
\item[$\bullet$ Farthest Neighbor.] Similarly, the farthest neighbor query can be cast as finding maximum among a set of distances, and the results for computing maximum extends to this setting. However, computing the farthest neighbor is one of the basic primitives for more complex tasks like {\em $k$-center clustering}, and for that, the existing bounds under the probabilistic model that return an $O(\log{n})$th rank element is insufficient. Since distances on a metric space satisfies triangle inequality, we exploit it to get a constant approximation to the farthest query under the probabilistic model and a mild distribution assumption (Theorem~\ref{thm:farthest_nearest_prob}).
\end{description}

\noindent (ii) \textbf{$k$-center Clustering}:
$k$-center clustering is one of the fundamental models of clustering and is very well-studied \cite{williamson2011design,vazirani2013approximation}.
\begin{description}[leftmargin=0pt]
\item[$\bullet$ $k$-center under adversarial model] \revc{We design algorithm that returns a clustering that is a $2+\mu$ approximation for small values of $\mu$ with probability $1-\delta$ using $O(nk^2 + nk\log^2(k/\delta))$ queries (Theorem~\ref{thm:main_adv}).} In contrast, even when exact distances are known, $k$-center cannot be approximated better than a $2$-factor unless $P=NP$ \cite{vazirani2013approximation}. Therefore, we achieve near-optimal results.
\item[$\bullet$ $k$-center under probabilistic noise model.] \revc{For probabilistic noise, when optimal $k$-center clusters are of size at least $\Omega(\sqrt{n})$, our algorithm returns a clustering that achieves constant approximation with probability $1-\delta$ using $O(nk \log^2(n/\delta))$ queries (Theorem~\ref{thm:kcenterprob}).}
\end{description}
To the best of our knowledge, even though $k$-center clustering is an extremely popular and basic clustering paradigm, it hasn't been studied under the comparison oracle model, and we provide the first results in this domain.

\noindent (iii) \textbf{Single Linkage and Complete Linkage-- Agglomerative Hierarchical Clustering} : \revc{Under adversarial noise, we show a clustering technique that loses only a multiplicative factor of  $(1+\mu)^3$ in each merge operation and has an overall query complexity of $O(n^2)$.} Prior work~\cite{ghoshdastidar2019foundations} considers comparison oracle queries to perform  average linkage in which the unobserved pairwise similarities are generated according to a normal distribution. These techniques do not extend to our noise models.

\subsection{Other Related Work}
For finding the maximum among a given set of values, it is known that techniques based on tournament obtain optimal guarantees and are widely used~\cite{DKMR15}. For the problem of finding nearest neighbor, techniques based on locality sensitive hashing generally work well in practice~\cite{arora2018hd}. Clustering points using $k$-center objective is NP-hard and there are many well known heuristics and approximation algorithms~\cite{williamson2011design} with the classic greedy algorithm achieving an approximation ratio of $2$.  All these techniques are not applicable when pairwise distances are unknown.
 As distances between points cannot always be accurately estimated, many recent techniques leverage supervision in the form of an oracle.  Most oracle based clustering frameworks consider `optimal cluster' queries~\cite{mazumdar2017clustering,huleihel2019same,mazumdar2017query,choudhury2019top,kasper} to identify ground truth clusters. Recent techniques for distance based clustering objectives, such as $k$-means~\cite{ashtiani2016clustering,chien2018query,kim2017relaxed,kim2017semi} and $k$-median~\cite{ailon2018approximate} use optimal cluster queries in addition to distance information for obtaining better approximation guarantees. As `optimal cluster' queries can be costly or sometimes infeasible, there has been recent interest in leveraging distance based comparison oracles for other problems similar to our quadruplet oracles~\cite{emamjomeh2018adaptive, ghoshdastidar2019foundations}. 

Distance based comparison oracles have been used to study a wide range of problems and we list a few of them -- learning fairness metrics~\cite{ilvento2019metric}, top-down hierarchical clustering with a different objective~\cite{emamjomeh2018adaptive,chatziafratis2018hierarchical,ghoshdastidar2019foundations}, correlation clustering~\cite{ukkonen2017crowdsourced} and classification~\cite{tamuz2011adaptively,hopkins2020noise}, identify maximum~\cite{guo2012so, venetis2012max}, top-$k$ elements~\cite{klein2011tolerant, polychronopoulos2013human,ciceri2015crowdsourcing, DKMR15,kou2017crowdsourced,dushkin2018top},  information retrieval~\cite{kazemi2018comparison}, skyline computation~\cite{verdugo2020skyline}. To the best of our knowledge, there is no work that considers \textit{quadruplet} comparison oracle queries to perform $k$-center clustering and single/complete linkage based hierarchical clustering. 

Closely related to finding maximum, sorting has also been well studied under various comparison oracle based noise models~\cite{braverman2008noisy, braverman2016parallel}. The work of~\cite{DKMR15} considers a different probabilistic noise model with error varying as a function of difference in the values but they assume that each query is independent and therefore repetition can help boost the probability of success. Using a quadruplet oracle, \cite{ghoshdastidar2019foundations} studies the problem of recovering a hierarchical clustering under a planted noise model and is not applicable for single linkage.

\section{Preliminaries}\label{sec:prelim}
\sloppy

Let $V=\{v_1,v_2,\ldots, v_n\}$ be a collection of $n$ records such that each record maybe associated with a value $val(v_i), \forall i\in[1,n]$. We assume that there exists a total ordering over the values of elements in $V$. For simplicity we denote the value of record $v_i$ as $v_i$ instead of $val(v_i)$ whenever it is clear from the context.

Given this setting, we consider a comparison oracle that compares the values of any pair of records $(v_i,v_j)$ and outputs $\Yes$ if $v_i\leq v_j$ and $\No$ otherwise.

\begin{definition}[Comparison Oracle]
An oracle  is a function $\oracle: V\times V\rightarrow \{\Yes,\No\}$. Each oracle query considers two values as input and outputs $\oracle(v_1,v_2)=\Yes$ if $v_1\leq v_2$ and \No{} otherwise.
\end{definition}

\revc{Note that a comparison oracle is defined for any pair of values. }Given this oracle setting, we define the problem of identifying the maximum over the records $V$.

\begin{problem}[Maximum] Given a collection of $n$ records $V=\{v_1,\ldots,v_n\}$ and access to a comparison oracle $\oracle$, identify the $\arg \max_{v_i\in V} v_i$ with minimum number of queries to the oracle.\label{prob:max}
\end{problem}

As a natural extension, we can also study the problem of identifying the record corresponding to the smallest value in $V$.

\subsection{Quadruplet Oracle Comparison Query}

In applications that consider distance based comparison of records like nearest neighbor identification, the records $V=\{v_1,\ldots,v_n\}$ are generally considered to be present in  a high-dimensional metric space along with a distance  $d:V\times V\rightarrow \mathbb{R}^+$  defined over pairs of records. We assume that the embedding of records in latent space is not known, but there exists an underlying ground truth~\cite{arora2018hd}. Prior techniques mostly assume complete knowledge of accurate distance metric and are not applicable in our setting. In order to capture the setting where we can compare distances between pair of records, we define quadruplet oracle below.

\begin{definition}[ Quadruplet Oracle]
An oracle  is a function $\oracle: V\times V\times V\times V\rightarrow \{\Yes,\No\}$. Each oracle query considers two pairs of records as input and outputs $\oracle(v_1,v_2,v_3,v_4)=\Yes$ if $d(v_1,v_2)\leq d(v_3,v_4)$ and $\No{}$ otherwise.
\end{definition}

The quadruplet oracle is similar to the comparison oracle discussed before with  a difference that the two values being compared are associated with pair of records as opposed to individual records. 
Given this oracle setting, we define the problem of identifying the farthest record over $V$ with respect to a query point $q$ as follows.

\begin{problem}[Farthest point] Given a collection of $n$ records $V=\{v_1,\ldots,v_n\}$, a query record $q$ and access to a quadruplet oracle $\oracle$, identify $\arg \max_{v_i\in V\setminus \{q\}} d(q,v_i)$.\label{prob:farthest}
\end{problem}

Similarly, the nearest neighbor query returns a point that satisfies $\arg \min_{u_i\in V\setminus \{q\}} d(q,u_i)$.
 Now, we formally define the k-center clustering problem.

\begin{problem}[k-center clustering] Given a collection of $n$ records $V=\{v_1,\ldots,v_n\}$ and access to a comparison oracle $\oracle$, identify $k$ centers (say $S\subseteq V$) and a mapping of records to corresponding centers, $\pi: V\rightarrow S$, such that the maximum distance of any record from its center, i.e., $\max_{v_i \in V} d(v_i,\pi(v_i))$ is minimized.\label{prob:kcenter}
\end{problem}

We assume that the points $v_i\in V$ exist in a metric space and the distance between any pair of points is not known. We denote the \textit{unknown} distance between any pair of points $(v_i, v_j)$ where $v_i, v_j \in V$ as $d(v_i,v_j)$ and use $k$ to denote the number of clusters. Optimal clusters are denoted as $C^*$  with $C^*(v_i)\subseteq V$ denoting the set of points belonging to the optimal cluster containing $v_i$. Similarly, $C(v_i)\subseteq V$ refers to the nodes belonging to the cluster containing $v_i$ for any clustering given by $C(\cdot)$. 

In addition to the k-center clustering,  we study single linkage and complete linkage--agglomerative clustering techniques where the distance metric over the records is not known apriori. These techniques initialize each record $v_i$ in a separate singleton cluster and sequentially merge the pair of clusters having the least distance between them. In case of single linkage, the distance between two clusters $C_1$ and $C_2$ is characterized by the closest pair of records defined as:
$$d_{SL}(C_1,C_2) = \min_{v_i\in C_1,v_j\in C_2}d(v_i,v_j)$$
In complete linkage, the distance between a pair of clusters $C_1$ and $C_2$ is calculated by identifying the farthest pair of records,  
$d_{CL}(C_1,C_2) = \max_{v_i\in C_1,v_j\in C_2}d(v_i,v_j).$
\subsection{Noise Models}
The oracle models discussed in Problem~\ref{prob:max},~\ref{prob:farthest} and \ref{prob:kcenter} assume that the oracle answers every comparison query correctly. In real world applications, however, the answers can be wrong which can lead to noisy results. To formalize the notion of noise, we consider two different models. First, adversarial noise model considers a setting where a comparison query can be adversarially wrong if the two values being compared are within a multiplicative factor of $(1+\mu)$ for some constant $\mu >0$. 
$$
\oracle(v_1,v_2) =
\begin{cases}
    \Yes{}, \text{ if }  v_1< \frac{1}{(1+\mu)} v_2 \\
    \No{}, \text{ if } v_1 > (1+\mu )v_2 \\
    \text{adversarially incorrect if }\frac{1}{(1+\mu)}  \leq \frac{v_1}{v_2}\leq (1+\mu)
\end{cases}
$$

The parameter $\mu$ corresponds to the degree of error. For example, $\mu=0$ implies a perfect oracle. The model extends to quadruplet oracle as follows.
 $$
\oracle(v_1,v_2,v_3,v_4) =
\begin{cases}
    \Yes{}, \text{ if }  d(v_1,v_2)< \frac{1}{(1+\mu)} d(v_3,v_4) \\
   \No{}, \text{ if } d(v_1,v_2) > (1+\mu )d(v_3,v_4) \\
    \text{adversarially incorrect if}
    \frac{1}{(1+\mu)}  \leq \frac{d(v_1,v_2)}{d(v_3,v_4)}\leq (1+\mu)
\end{cases}
$$

The second model considers a probabilistic noise model where each comparison query is incorrect independently with a probability $p < \frac{1}{2}$ and asking the same query multiple times yields the same response.  
\revc{We discuss ways to estimate $\mu$ and $p$ from real data in Section~\ref{sec:Experiments}.}
\section{Finding \textsc{maximum}}\label{sec:finding_max}
In this section, we present robust algorithms to identify the record corresponding to the maximum value in $V$ under the adversarial noise model and the probabilistic noise model. Later we extend the algorithms to find the farthest and the nearest neighbor.
\revc{We note that our algorithms for the adversarial model are parameter free (do not depend on $\mu$) and the algorithms for the probabilistic model can use $p=0.5$ as a worst case estimate of the noise.}
\subsection{Adversarial Noise}
Consider a trivial approach that  maintains a running maximum value while sequentially processing the records, i.e., if a larger value is encountered, the current maximum value is updated to the larger value. This approach requires $n-1$ comparisons. However, in the presence of adversarial noise, our output can have a significantly lower value compared to the correct maximum. In general, if $v_{max}$ is the true maximum of $V$, then the above approach can return an approximate maximum whose value could be as low as ${v_{max}}/{(1+\mu)^{n-1}}$. To see this, assume $v_1=1$, and $v_i=(1+\mu-\epsilon)^i$ where $\epsilon > 0$ is very close to $0$. It is possible that while comparing $v_i$ and $v_{i+1}$, the oracle returns $v_i$ as the larger element. If this mistake is repeated for every $i$, then, $v_1$ will be declared as the maximum element whereas the correct answer is $v_n\approx v_1(1+\mu)^{n-1}$.

To improve upon this naive strategy, we introduce a natural \textit{keeping score} based idea where given a set $S \subseteq V$ of records, we maintain $\Count(v, S)$ that is equal to the number of values smaller than $v$ in $S$. 
    $$ \Count(v, S) = \sum_{x \in S \setminus \{ v \}} 1\{ \oracle(v, x) == \No{} \}$$

It is easy to observe that when the oracle makes no mistakes, $\Count(\smax, S) = |S|-1$ and obtains the highest score, where $\smax$ is the maximum value in $S$. Using this observation, in Algorithm~\ref{alg:maxCount}, we output the value with the highest $\Count$ score.

Given a set of records $V$, we show in Lemma~\ref{lem:count} that
$\textsc{Count-Max}(V)$ obtained using Algorithm~\ref{alg:maxCount} always returns a good approximation of the maximum value in $V$.

\begin{lemma}\label{lem:count}
Given a set of values $V$ with maximum value $\vmax$, $\textsc{Count-Max}(V)$ returns a value $\umax$ where $\umax \geq \vmax/(1+\mu)^2$ using $O(|V|^2)$ oracle queries.
\end{lemma}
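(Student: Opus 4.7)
My plan is to prove the two parts of the lemma separately: the query complexity is immediate, so the real work is the approximation guarantee, which I would establish by comparing the $\Count$ values of $v_{\max}$ and of any ``too small'' candidate.

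For the query bound, $\textsc{Count-Max}(V)$ invokes $\Count(v, V)$ once for each $v \in V$, and each such call issues exactly $|V|-1$ oracle queries, for $O(|V|^2)$ in total. The value returned is the argmax of these counts, so no additional queries are needed.

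For the approximation bound, fix the true maximum $v_{\max}$ and let $u_{\max}$ be the returned value. I would argue by contradiction, assuming $u_{\max} < v_{\max}/(1+\mu)^2$, and introduce the ``certainly small'' set $A = \{x \in V : x < v_{\max}/(1+\mu)\}$; observe that $v_{\max} \notin A$ while $u_{\max} \in A$. The first key step is to lower-bound $\Count(v_{\max}, V)$: for every $x \in A$ we have $v_{\max} > (1+\mu)x$, so $x$ lies strictly outside the adversarial window and the oracle is forced to answer $\No$ on $\oracle(v_{\max}, x)$. Since $v_{\max} \notin A$, this yields $\Count(v_{\max}, V) \geq |A|$. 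The second key step is to upper-bound $\Count(u_{\max}, V)$: the oracle can only answer $\No$ to $\oracle(u_{\max}, x)$ when $x \leq (1+\mu)\, u_{\max}$, and under the contradiction hypothesis $(1+\mu)\, u_{\max} < v_{\max}/(1+\mu)$, so any such $x$ lies in $A$. Excluding $u_{\max}$ itself from the sum gives $\Count(u_{\max}, V) \leq |A| - 1$.

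Combining the two bounds produces $\Count(v_{\max}, V) \geq |A| > |A| - 1 \geq \Count(u_{\max}, V)$, contradicting the fact that $u_{\max}$ maximizes $\Count$. Hence the returned $u_{\max}$ must satisfy $u_{\max} \geq v_{\max}/(1+\mu)^2$, as claimed. The main subtlety I anticipate is keeping the strict versus non-strict inequalities in the oracle definition straight and carefully accounting for the ``self'' elements excluded from each $\Count$ (namely, $v$ is excluded from $\Count(v, V)$), which is precisely what makes the bookkeeping $|A|$ versus $|A|-1$ work out in favor of $v_{\max}$.
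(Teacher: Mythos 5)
Your proposal is correct and follows essentially the same argument as the paper: it lower-bounds $\Count(\vmax,V)$ by the number of elements forced to lose to $\vmax$ (those below $\vmax/(1+\mu)$) and upper-bounds $\Count(u,V)$ for any $u < \vmax/(1+\mu)^2$ by noting its $\No$ answers can only come from that same set, so its count is strictly smaller. Your set $A$ and the $|A|$ versus $|A|-1$ bookkeeping correspond exactly to the paper's ``$+1$'' term from the query $\oracle(\vmax,w)$, with only the contradiction framing being cosmetic.
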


\revc{
\noindent Using Example~\ref{ex:maximum}, when $\mu = 1$, we demonstrate that $(1+\mu)^2 = 4$ approximation ratio is achieved by Algorithm~\ref{alg:maxCount}.
\begin{example}
Let $S$ denote a set of four records $u,v,w$ and $t$ with ground truth values $51$, $101$, $102$ and $202$, respectively. While identifying the maximum value under adversarial noise with $\mu=1$, the oracle must return a correct answer to $\oracle(u,t)$ and all other oracle query answers can be incorrect adversarially. If the oracle answers all other queries incorrectly, we have, \Count~values of $t, w, u, v$ are $1, 1, 2$, and $2$ respectively. Therefore, $u$ and $v$ are equally likely, and when Algorithm~\ref{alg:maxCount} returns $u$, we have a $202/51\approx 3.96$ approximation.\label{ex:maximum}
\end{example}
}

\noindent From Lemma~\ref{lem:count}, we have that $O(n^2)$ oracle queries where $|V| = n$, are required to get $(1+\mu)^2$ approximation. In order to improve the query complexity, we use a \textit{tournament} to obtain the maximum value. The idea of using a \emph{tournament} for finding maximum has been studied in the past~\cite{DKMR15,feige1994computing}.

Algorithm~\ref{alg:maxTournament} presents pseudo code of the approach that takes values $V$ as input and outputs an approximate maximum value. It constructs a balanced $\lambda$-ary tree $\mathcal{T}$ containing $n$ leaf nodes such that a random permutation of the values $V$ is assigned to the leaves of $\mathcal T$. In a \textit{tournament}, the internal nodes of $\mathcal{T}$ are processed bottom-up such that at every internal node $w$, we assign the value that is largest among the children of $w$. To identify the largest value, we calculate $\arg \max_{v\in \children(w)}\Count(v,\children(w))$ at the internal node $w$, where $\Count(v, X)$ refers to the number of elements in $X$ that are considered smaller than $v$. Finally, we return the value at the root of $\mathcal T$ as our output. 
In Lemma~\ref{lem:tournament}, we show that Algorithm~\ref{alg:maxTournament} returns a value that is a $(1+\mu)^{2\log_\lambda n}$ multiplicative approximation of the maximum value.

\begin{algorithm}
\begin{algorithmic}[1]
\State \textbf{Input} : A set of values $S$
\State \textbf{Output} : An approximate maximum value of $S$
\For {$v \in S$}
\State Calculate $\Count(v, S)$ 
\EndFor
\State $\umax \leftarrow \text{arg max}_{v \in S} \Count(v, S)$
\State \Return $\umax$
\end{algorithmic}
\caption{\textsc{Count-Max}(S) : finds the max. value by counting}\label{alg:maxCount}
\end{algorithm}
\setlength{\textfloatsep}{3pt}

\begin{lemma}
Suppose $v_{max}$ is the maximum value among the set of records $V$. Algorithm~\ref{alg:maxTournament} outputs a value $u_{max}$ such that $u_{max}\geq \frac{v_{max}}{(1+\mu)^{2 \log_\lambda n}}$ using  $O(n\lambda)$ oracle queries.
\label{lem:tournament}
\end{lemma}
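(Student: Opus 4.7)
The plan is to prove the approximation guarantee by induction on the depth of the tournament tree $\mathcal{T}$, propagating the per-level loss guaranteed by Lemma~\ref{lem:count}, and then to compute the query complexity by a simple sum over the internal nodes.

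First, I would fix the leaf $\ell^*$ of $\mathcal T$ that holds $v_{max}$ and consider the unique root-to-leaf path through $\ell^*$. Let $w_0 = \ell^*, w_1, \dots, w_h$ be the nodes on this path, where $h = \log_\lambda n$ is the height of the complete $\lambda$-ary tree. The inductive claim to prove is: for every $i \in \{0, 1, \dots, h\}$, the value $u_i$ assigned by the algorithm to node $w_i$ satisfies $u_i \ge v_{max}/(1+\mu)^{2i}$. The base case $i = 0$ is immediate since $u_0 = v_{max}$. For the inductive step, observe that $w_i$ is one of the $\lambda$ children of $w_{i+1}$, and $w_{i+1}$ is assigned the value returned by $\textsc{Count-Max}$ run on the multiset of values at its $\lambda$ children. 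By Lemma~\ref{lem:count} applied to this input set of size $\lambda$, the returned value $u_{i+1}$ is at least the maximum child value divided by $(1+\mu)^2$. Since one of those child values is $u_i \ge v_{max}/(1+\mu)^{2i}$, we get $u_{i+1} \ge v_{max}/(1+\mu)^{2(i+1)}$, completing the induction. Setting $i = h$ gives $u_{max} = u_h \ge v_{max}/(1+\mu)^{2\log_\lambda n}$, which is the claimed approximation.

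For the query complexity, I would count queries per internal node and then sum. By Lemma~\ref{lem:count}, a single invocation of $\textsc{Count-Max}$ on an input of size $\lambda$ uses $O(\lambda^2)$ oracle queries. A complete $\lambda$-ary tree with $n$ leaves has $\sum_{i \ge 1} n/\lambda^i = O(n/\lambda)$ internal nodes (geometric series with ratio $1/\lambda$). The total number of oracle queries is therefore $O(\lambda^2) \cdot O(n/\lambda) = O(n\lambda)$, as claimed.

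The argument itself is clean and does not contain a genuinely hard step: the only subtle point is to realize that the induction must be carried out along the specific path from $\ell^*$ to the root, rather than making a uniform statement about every internal node, because Lemma~\ref{lem:count} only guarantees that the winner approximates the \emph{maximum among the inputs at that level}, not the global $v_{max}$. Once the induction is set up on this path, each level contributes exactly one factor of $(1+\mu)^2$ through Lemma~\ref{lem:count}, giving the stated bound in a telescoping fashion.
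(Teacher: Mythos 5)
Your proposal is correct and matches the paper's proof: the paper likewise argues that each level of the tournament loses a factor of $(1+\mu)^2$ via Lemma~\ref{lem:count} and sums $O(\lambda^2)$ queries over the $O(n/\lambda)$ internal nodes to get $O(n\lambda)$. You merely make the per-level loss explicit as an induction along the root-to-leaf path containing $v_{max}$, which is a sharper write-up of the same argument.
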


According to Lemma~\ref{lem:tournament}, Algorithm~\ref{alg:maxTournament}  identifies a constant approximation when $\lambda=\Theta(n)$, $\mu$ is a fixed constant and has a query complexity of $\Theta(n^2)$. By reducing the degree of the tournament tree from $\lambda$ to $2$, we can achieve $\Theta(n)$ query complexity, but with a worse approximation ratio of $(1+\mu)^{\log n}$. 

Now, we describe our main algorithm (Algorithm~\ref{alg:max_adv}) that uses the the following observation to improve the overall query complexity.

\begin{observation}
At an internal node $w\in \mathcal T$, the identified maximum is incorrect only if there exists $x\in \children(w)$ that is very close to the true maximum (say $w_{max}$), i.e. $\frac{w_{\max}}{(1+\mu)}\leq x\leq (1+\mu)w_{\max}$.
\end{observation}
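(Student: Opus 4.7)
The plan is to prove the observation by contrapositive: assume that every child $x \in \children(w) \setminus \{w_{\max}\}$ lies outside the interval $[w_{\max}/(1+\mu),\, (1+\mu)w_{\max}]$, and show that in this case the $\arg\max$ over $\Count$ computed at node $w$ in Algorithm~\ref{alg:maxTournament} necessarily selects $w_{\max}$. Since $w_{\max}$ is by definition the largest value among the children, the exclusion collapses to $x < w_{\max}/(1+\mu)$, i.e., the ratio $w_{\max}/x$ strictly exceeds $1+\mu$ for every other child.

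The key step is then to invoke the adversarial noise model directly: because every pairwise comparison between $w_{\max}$ and any other child lies strictly outside the adversarial regime, the oracle must answer truthfully on all such queries. In particular, $\oracle(w_{\max}, x) = \No$ for every $x \in \children(w) \setminus \{w_{\max}\}$, so $\Count(w_{\max}, \children(w))$ attains its maximum possible value $|\children(w)| - 1$. Symmetrically, each such $x$ has its comparison against $w_{\max}$ deterministically resolved as $\oracle(x, w_{\max}) = \Yes$, contributing $0$ to $\Count(x, \children(w))$, which is therefore capped at $|\children(w)| - 2$.

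Combining these two facts yields a strict gap $\Count(w_{\max}, \children(w)) > \Count(x, \children(w))$ for every other child, so the $\arg\max$ step at node $w$ unambiguously returns $w_{\max}$, contradicting the hypothesis that the identified maximum was incorrect. I do not expect a serious obstacle here; the only subtlety is ensuring strict inequality so that arbitrary tie-breaking cannot spoil the conclusion, and this follows immediately from the ``free point'' that $w_{\max}$ earns over every other child in the head-to-head comparison while those children each forfeit that same point.
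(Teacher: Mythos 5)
Your proof is correct and is essentially the formalization of the paper's own (implicit) reasoning: the paper leaves this observation unproved, but its justification in the analysis of \textsc{Tournament-Partition} and in the proof of Lemma~\ref{lem:count} is exactly your argument --- when every other child lies strictly below $w_{\max}/(1+\mu)$, all oracle answers involving $w_{\max}$ are forced to be truthful, giving $\Count(w_{\max},\children(w))=|\children(w)|-1$ while every other child forfeits that head-to-head point and is capped at $|\children(w)|-2$. Your explicit handling of the strict gap (so arbitrary tie-breaking cannot interfere) is the one detail the paper glosses over, and it is handled correctly.
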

Based on the above observation, our Algorithm~\textsc{Max-Adv} uses two steps to identify a good approximation of $\vmax$. Consider the case when there are a lot of values that are close to $\vmax$. In Algorithm~\textsc{Max-Adv}, we use a subset $\widetilde{V} \subseteq V$ of size $\sqrt{n} t$ (for a suitable choice of parameter $t$) obtained using uniform sampling with replacement. We show that using a sufficiently large subset $\widetilde{V}$, obtained by sampling, we ensure that at least one value that is closer to $\vmax$ is in $\widetilde{V}$, thereby giving a good approximation of $\vmax$.

\begin{algorithm}
\begin{algorithmic}[1]
\State \textbf{Input} : Set of values $V$, Degree $\lambda$
\State \textbf{Output} : An approximate maximum value $\umax$
\State Construct a balanced $\lambda$-ary tree $\mathcal T$ with $|V|$ nodes as leaves.
\State Let $\pi_V$ be a random permutation of $V$ assigned to leaves of $\mathcal T$ 
\For{$i=1$ to $\log_\lambda |V|$}
\For{internal node $w$ at level $\log_\lambda |V|-i$}
\State Let  $U$ denote the children of $w$.
\State Set the internal node $w$ to $\textsc{Count-Max}(U)$
\EndFor
\EndFor
\State $\umax \leftarrow $value at root of $\mathcal T$
\State \Return $\umax$
\end{algorithmic}
\caption{\textsc{Tournament} : finds the maximum value using a tournament tree}\label{alg:maxTournament}
\end{algorithm}
\setlength{\textfloatsep}{3pt}
In order to handle the case when there are only a few values closer to $\vmax$, we divide the entire data set into $l$ disjoint parts (for a suitable choice of $l$) and run the \textsc{Tournament} algorithm with degree $\lambda = 2$ on each of these parts separately (Algorithm~\ref{alg:max_noise}). As there are very few points close to $\vmax$, the probability of comparing any such value with $\vmax$ is small, and this ensures that in the partition containing $\vmax$, $\textsc{Tournament}$ returns $\vmax$. We collect the maximum values returned by Algorithm~\ref{alg:maxTournament} from all the partitions and include these values in $T$ in Algorithm~\textsc{Max-Adv}. We repeat this procedure $t$ times and set $ l = \sqrt{n}, t = 2\log(2/\delta)$ to achieve the desired success probability $1-\delta$. We combine the outputs of both the steps, i.e., $\widetilde{V}$ and $T$ and output the maximum among them using $\textsc{Count-Max}$. This ensures that we get a good approximation as we use the best of both the approaches.

\setlength{\textfloatsep}{3pt}
\begin{algorithm}[!ht]
\begin{algorithmic}[1]
\State \textbf{Input} : Set of values $V$, number of partitions $l$
\State \textbf{Output} : A set of maximum values from each partition
\State Randomly partition $V$ into $l$ equal parts $V_1, V_2, \cdots V_{l}$ 
\For{$i = 1$ to $l$}
\State $p_i\leftarrow \textsc{Tournament}(V_i,2)$
\State $T\leftarrow T\cup\{p_i\}$
\EndFor
\State \Return $T$
\end{algorithmic}
\caption{\textsc{Tournament-Partition} \label{alg:max_noise}}
\end{algorithm}

\begin{algorithm}[h]
\begin{algorithmic}[1]
\State \textbf{Input} : Set of values $V$, number of iterations $t$, partitions $l$
\State \textbf{Output} : An approximate maximum value $\umax$
\State $i\leftarrow 1,  T \leftarrow \phi$ 
\State Let $\widetilde{V}$ denote a sample of size $\sqrt{n}t$ selected uniformly at random (with replacement) from $V$.
\For{$i\leq t$}
\State $T_i \leftarrow \textsc{Tournament-Partition}(V, l)$
\State $T\leftarrow T \cup T_i$
\EndFor
\State  $\umax  \leftarrow \textsc{Count-Max}(\widetilde{V} \cup T)$
\State \Return $\umax$
\end{algorithmic}
\caption{\textsc{Max-Adv} : {Maximum with Adversarial Noise} \label{alg:max_adv}}
\end{algorithm}
\setlength{\textfloatsep}{3pt}

\noindent \textbf{Theoretical Guarantees}. In order to prove approximation guarantee of Algorithm~\ref{alg:max_adv}, we first argue that the sample $\widetilde{V}$ contains a good approximation of the maximum value $\vmax$ with a high probability. Let $C$ denote the set of values that are very close to $\vmax$.
$\text{Suppose }C=\{u : {\vmax}/{(1+\mu)}\leq u\leq \vmax \}. $
In Lemma~\ref{lem:hierquality}, we first show that $\widetilde{V}$ contains a value $v_j \in \widetilde{V}$ such that $v_j \geq {\vmax}/{(1+\mu)}$,  whenever the size of $C$ is large, i.e.,  $|C| > {\sqrt{n}}/{2}$. Otherwise, we show that we can recover $\vmax$ correctly with probability $1-\delta/2$ whenever $|C| \leq {\sqrt{n}}/{2}$.

\begin{lemma}

\begin{enumerate}
 \item If $|C| >  {\sqrt{n}}/{2}$, then there exists a value $ v_j \in \widetilde{V}$ satisfying $v_j \geq {\vmax}/{(1+\mu)}$ with  probability of $1-\delta/2$.
 \item Suppose $|C| \leq  {\sqrt{n}}/{2}$. Then, $T$ contains  $\vmax$ with probability at least $1-\delta/2$.
\end{enumerate}

\label{lem:hierquality}
\end{lemma}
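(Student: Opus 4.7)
The plan is to handle the two cases separately, using a sampling argument for part~(1) and a random-partition plus binary-tournament argument for part~(2).

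For part~(1), I would bound the probability that the uniform-with-replacement sample $\widetilde{V}$, of size $\sqrt{n}\,t$ with $t = 2\log(2/\delta)$, misses $C$ entirely. Since $|C|/n > 1/(2\sqrt{n})$ under the hypothesis, a single draw lies in $C$ with probability at least $1/(2\sqrt{n})$; by independence and the standard inequality $(1-x)^m \leq e^{-xm}$, the probability that all $\sqrt{n}\,t$ draws miss $C$ is at most $e^{-t/2} \leq \delta/2$. Any sampled element of $C$ automatically satisfies $v_j \geq \vmax/(1+\mu)$, which is the desired witness.

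For part~(2), I would argue that a single call to \textsc{Tournament-Partition} succeeds whenever the random partition $V_j$ containing $\vmax$ is ``clean'', meaning no other element of $C$ falls into $V_j$. Conditioning on the location of $\vmax$, each of the at most $\sqrt{n}/2 - 1$ remaining elements of $C \setminus \{\vmax\}$ lands in $V_j$ with probability $(\sqrt{n}-1)/(n-1) \leq 1/\sqrt{n}$, so a union bound yields a clean partition with probability at least $1/2$. Given a clean $V_j$, every element of $V_j \setminus \{\vmax\}$ is strictly less than $\vmax/(1+\mu)$, so every pairwise comparison $\vmax$ encounters on its root-to-leaf path in the binary tournament lies outside the adversarial ``grey zone'' $[1/(1+\mu),\,1+\mu]$ of the noise definition, forcing the oracle to answer correctly. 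Hence $\vmax$ propagates to the root of its sub-tournament and is added to $T$. The $t$ iterations of \textsc{Tournament-Partition} use independent random partitions, so the probability that every iteration produces a dirty partition is at most $(1/2)^t \leq \delta/2$ by our choice of $t$.

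The step I would revisit most carefully is the observation that, in the binary tournament inside a clean partition, every opponent of $\vmax$ at every level is itself an original value from $V_j$ (because \textsc{Count-Max} with $\lambda = 2$ simply promotes one of its two children rather than producing some derived quantity). This keeps every comparison on $\vmax$'s root-to-leaf path within the ``far'' regime of the adversarial-noise specification and thereby guarantees oracle correctness at each of the $\log_2 \sqrt{n}$ matches along the path. The remaining probability and union-bound manipulations are routine.
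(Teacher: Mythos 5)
Your proposal is correct and follows essentially the same route as the paper: part (1) via the miss probability of the $\sqrt{n}\,t$-sample, and part (2) via the event that the partition containing $\vmax$ contains no other element of $C$ (probability at least $1/2$, which you get by a union bound over $C\setminus\{\vmax\}$ where the paper applies Markov's inequality to $\E[|C\cap V_i|]=1/2$ — an immaterial difference), followed by amplification over the $t=2\log(2/\delta)$ independent repetitions. Your explicit check that in a clean partition every comparison on $\vmax$'s path is outside the grey zone, so the oracle is forced to be correct and $\vmax$ wins its binary tournament, is exactly the (implicit) justification the paper relies on.
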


Now, we briefly provide a sketch of the proof of Lemma~\ref{lem:hierquality}. Consider the first step, where we use a uniformly random sample $\widetilde{V}$ of $\sqrt{n}t$ points from $V$ (obtained with replacement).  When $|C| \geq {\sqrt{n}}/{2}$, probability that $\widetilde{V}$ contains a value from $C$ is given by $1-(1-|C|/n)^{|\widetilde{V}|} 
=1-\left(1-\frac{1}{2\sqrt{n}}\right)^{2\sqrt{n}\log(2/\delta)} \approx 1-\delta/2$.

In the second step, Algorithm~\ref{alg:max_adv} uses a modified tournament tree that partitions the set $V$ into $l = \sqrt{n}$ parts of size ${n}/{l} = \sqrt{n}$ each and identifies a maximum $p_i$ from each partition $V_i$ using Algorithm~\ref{alg:maxTournament}. We have that the expected number of elements from $C$ in a partition $V_i$ containing $\vmax$ is ${|C|}/{l} = {\sqrt{n}}/{(2\sqrt{n})} = {1}/{2}$. Thus by the Markov's inequality, the probability that $V_i$ contains a value from $C$ is $\leq {1}/{2}$. With $1/2$ probability, $\vmax$ will never be compared with any point from $C$ in the partition $V_i$. To increase the success probability, we run this procedure $t$ times and obtain all the outputs. Among the $t$ runs of Algorithm~\ref{alg:maxTournament}, we argue that $\vmax$ is never compared with any value of $C$ in at least one of the iterations with a probability at least $1-(1-1/2)^{2\log(2/\delta) } \geq 1-{\delta}/{2}$.

 {In Lemma~\ref{lem:count}}, we show that using $\textsc{Count-Max}$ we get a $(1+\mu)^2$ multiplicative approximation. Combining it with Lemma~\ref{lem:hierquality}, we have that $\umax$ returned by Algorithm~\ref{alg:max_adv} satisfies $\umax \geq \frac{\vmax}{(1+\mu)^3}$ with probability $1-\delta$. For query complexity, Algorithm~\ref{alg:max_noise} identifies $\sqrt{n}t$  samples denoted by $\widetilde{V}$. These identified values, along with $T$ are then processed by $\textsc{Count-Max}$ to identify the maximum $u_{max}$. This step requires $O(|\widetilde{V} \cup T |^2 ) = O(n\log^2(1/\delta))$ oracle queries. 

\begin{theorem}\label{thm:max_adv}
Given a set of values $V$, Algorithm~\ref{alg:max_adv} returns a $(1+\mu)^3$ approximation of maximum value with probability $1-\delta$ using $O(n \log^2(1/\delta))$ oracle queries.
\end{theorem}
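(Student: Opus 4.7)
The plan is to combine Lemma~\ref{lem:hierquality} with Lemma~\ref{lem:count} applied to the combined set $\widetilde{V} \cup T$. First I would invoke Lemma~\ref{lem:hierquality} to argue that, regardless of $|C|$, the set $\widetilde{V} \cup T$ fed to \textsc{Count-Max} in the final line of Algorithm~\ref{alg:max_adv} contains some value at least $\vmax/(1+\mu)$ with probability $1-\delta$. Concretely, if $|C| > \sqrt{n}/2$, then by part~(1) of Lemma~\ref{lem:hierquality} the uniform sample $\widetilde{V}$ already contains such a value with probability $1-\delta/2$; otherwise, by part~(2), the set $T$ produced across the $t=2\log(2/\delta)$ rounds of \textsc{Tournament-Partition} contains $\vmax$ itself with probability $1-\delta/2$. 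Since exactly one of the two cases holds, a single application of Lemma~\ref{lem:hierquality} (not a union bound) gives the desired guarantee with probability at least $1-\delta/2$; being slightly more careful, I would take $1-\delta$ by routing the $\delta$ parameter appropriately.

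Next I would apply Lemma~\ref{lem:count} to the set $S := \widetilde{V} \cup T$. That lemma guarantees $\textsc{Count-Max}(S)$ returns a value $\umax$ with
\begin{equation*}
\umax \geq \frac{\max_{v \in S} v}{(1+\mu)^2}.
\end{equation*}
Chaining this with the previous paragraph, the maximum of $S$ is at least $\vmax/(1+\mu)$, so $\umax \geq \vmax/(1+\mu)^3$, which is the claimed approximation factor.

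For the query complexity, I would account for each of the three sources of queries separately. The sample $\widetilde{V}$ is drawn without any queries and has size $\sqrt{n}\,t = O(\sqrt{n}\log(1/\delta))$. Each invocation of $\textsc{Tournament-Partition}$ runs $\textsc{Tournament}$ with $\lambda = 2$ on $l = \sqrt{n}$ disjoint parts of size $\sqrt{n}$; by Lemma~\ref{lem:tournament} each such call uses $O(\sqrt{n})$ queries, giving $O(n)$ per iteration and $O(n\log(1/\delta))$ across the $t$ iterations; this also bounds $|T| = O(\sqrt{n}\log(1/\delta))$. The final \textsc{Count-Max} step, by Lemma~\ref{lem:count}, uses $O(|S|^2) = O(n\log^2(1/\delta))$ queries, which dominates the total and gives the stated bound.

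The only real obstacle I expect is the probability accounting, since the two parts of Lemma~\ref{lem:hierquality} are phrased conditionally on the value of $|C|$, and one must be careful that the choice $t = 2\log(2/\delta)$ simultaneously drives both failure probabilities below $\delta/2$. Everything else — the multiplicative composition of the $(1+\mu)$ losses and the geometric-series style query count — is straightforward once Lemmas~\ref{lem:count} and \ref{lem:hierquality} are in hand.
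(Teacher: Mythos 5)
Your proposal is correct and follows essentially the same route as the paper's proof: a case split on $|C|$ via Lemma~\ref{lem:hierquality} to show $\widetilde{V}\cup T$ contains a value at least $\vmax/(1+\mu)$, chained with the $(1+\mu)^2$ guarantee of Lemma~\ref{lem:count} on $\widetilde{V}\cup T$, and the same query accounting (the final \textsc{Count-Max} on a set of size $O(\sqrt{n}\log(1/\delta))$ dominating at $O(n\log^2(1/\delta))$). Your observation that only one branch of the case analysis is ever active, so no union bound over the two cases is needed, is a minor sharpening of the paper's bookkeeping rather than a different argument.
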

\subsection{Probabilistic Noise}\label{sec:max_prob}
We cannot directly extend the algorithms for the adversarial noise model to probabilistic noise. Specifically, the theoretical guarantees of Lemma~\ref{lem:tournament} do not apply when the noise is probabilistic. In this section, we develop several new ideas to handle probabilistic noise.

Let $\textrm{rank}(u, V)$ denote the index of $u$ in the non-increasing sorted order of values in $V$. So, $v_{max}$ will have rank $1$ and so on. Our main idea is to use an early stopping approach that uses a sample $S \subseteq V$ of $O(\log(n/\delta))$ values selected randomly and for every value $u$ that is not in $S$, we calculate $\Count(u, S)$ and discard $u$ using a chosen threshold for the $\Count$ scores. We argue that by doing so, it helps us eliminate the values that are far away from the maximum in the sorted ranking. This process is continued $\Theta(\log n)$ times to identify the maximum value. We present the pseudo code in the Appendix and prove the following approximation guarantee.

\begin{theorem}\label{thm:max_prob}
There is an algorithm that returns $\umax \in V$ such that $\textrm{rank} (\umax, V) = O(\log^2(n/\delta))$ with probability $1-\delta$ and requires $O(n \log^2(n/\delta))$ oracle queries.
\end{theorem}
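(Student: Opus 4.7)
My plan is to handle the persistent probabilistic noise via iterative random sampling with a count-based filter. Initialize $V_0 = V$ and, in each round $i = 0, 1, \ldots, T-1$, draw a uniformly random sample $S_i \subseteq V_i$ of size $k = \Theta(\log(n/\delta))$, issue the $n_i k$ oracle queries needed to compute $\Count(u, S_i)$ for every $u \in V_i$, and set $V_{i+1} = \{u \in V_i : \Count(u, S_i) \ge \tau k\}$ for a fixed threshold $\tau \in (p, 1-p)$. After $T = \Theta(\log n)$ rounds, I output the element of $V_T$ with the largest final count (any element of $V_T$ actually suffices for the rank bound).

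The concentration step is standard Hoeffding: if $u \in V_i$ has rank $R_i(u)$ inside $V_i$ (of size $n_i$), then $\Count(u, S_i)/k$ concentrates around its mean $(1-p) - (R_i(u) - 1)(1-2p)/(n_i - 1)$ with deviation at most a prescribed $\epsilon$ except on a bad event of probability $\exp(-\Omega(\epsilon^2 k))$. Choosing $k = \Theta(\log(nT/\delta))$ and union-bounding over the at most $nT$ (element, round) pairs leaves a single good event of probability $1-\delta$ under which every filtering decision behaves as expected. Tuning $\tau$ and $\epsilon$ appropriately guarantees that every element with $R_i(u) \le c_1 n_i$ passes the filter while every element with $R_i(u) \ge c_2 n_i$ is eliminated, for constants $0 < c_1 < c_2 < 1$, so $c_1 n_i \le n_{i+1} \le c_2 n_i$.

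The heart of the argument is an inductive invariant on ranks inside the original $V$: every $u \in V_i$ satisfies $R(u) \le M_i$, where $M_i$ decays geometrically. The key counting identity is $R(u) \le R_i(u) + (M_i - n_i)$, since the only records of $V$ outranking $u$ but absent from $V_i$ must lie in the top-$M_i$-but-outside-$V_i$ slab of size $M_i - n_i$. Combining this with $R_i(u) \le c_2 n_i$ for a survivor gives $M_{i+1} \le M_i(1 - (1 - c_2) \beta_i)$ where $\beta_i = n_i / M_i$. The lower-Chernoff side $n_{i+1} \ge c_1 n_i$ combined with $M_{i+1} \le c_2 n_i$ forces $\beta_{i+1} \ge c_1/c_2$, so $\beta_i$ stays bounded below by a constant and $M_i$ shrinks by a fixed factor per round. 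After $T = \Theta(\log n)$ rounds, $M_T = O(\log^2(n/\delta))$, so the returned element has rank $O(\log^2(n/\delta))$ in $V$; the total query count $\sum_i n_i k$ is absorbed into $O(n \log^2(n/\delta))$ after the geometric sum and the $\log(nT/\delta)$ factor hidden in $k$.

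The main obstacle is propagating the rank-in-$V$ invariant alongside a filter that only sees rank within the evolving $V_i$: a naive analysis bounds $R_i(u)$ but leaves $R(u)$ uncontrolled. The fix is the counting identity above together with the lower-bound Chernoff side that keeps the ratio $\beta_i = n_i/M_i$ bounded away from zero; without that lower-bound side, $\beta_i$ could collapse and the decay of $M_i$ would stall, so the choice of $c_1 < c_2$ and the threshold $\tau$ must be coupled to prevent this degeneracy.
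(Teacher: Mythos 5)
There is a genuine gap, and it sits exactly where you flagged the ``heart of the argument.'' From your counting identity $R(u) \le R_i(u) + (M_i - n_i)$ and the filter guarantee $R_i(u) \le c_2 n_i$ for survivors, the valid update is $M_{i+1} \le M_i - (1-c_2)\,n_i$, \emph{not} $M_{i+1} \le c_2 n_i$; the latter would require $M_i \le n_i$, which holds only at round $0$. Consequently $\beta_{i+1} \ge c_1/c_2$ does not follow, and in the worst case consistent with your stated guarantees (take $n_{i+1} = c_1 n_i$, so $\sum_i n_i \le n/(1-c_1)$) the invariant stalls at $M_T \ge n\,(c_2-c_1)/(1-c_1) = \Omega(n)$. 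This is not merely a weak bound: the parenthetical claim that ``any element of $V_T$ suffices'' is false under the events you actually control. Your good event only forces removal of elements with pool-rank above $c_2 n_i$ and survival of those below $c_1 n_i$; an element $u$ with $R(u)=\Theta(n)$ can keep its pool-rank fraction strictly inside $(c_1,c_2)$ in every round, because the noise is free to retain, besides the protected top-$c_1 n_i$ prefix, the element $u$ together with a geometrically shrinking set of filler elements ranked below $u$ --- so $u$ can reach $V_T$. The invariant you can actually propagate is different: since everything with $R_i(u)\le c_1 n_i$ survives, $V_i$ contains the entire top-$\lfloor c_1 n_{i-1}\rfloor$ prefix of $V$ intact, so the pool maximum is always present; the rank guarantee must then come from the final selection on $V_T$ (a \textsc{Count-Max}-type argument transferring $V_T$-rank to $V$-rank through that intact prefix). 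This is essentially how the paper argues: it shows the current pool maximum always passes the filter and that at least a $59/60$ fraction is discarded per round, and it bounds the output's rank by the $O(\log n \cdot \log(n/\delta))$ sampled points that are thrown away --- which is precisely where the $O(\log^2(n/\delta))$ in the statement comes from.

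A second, more patchable issue is persistence. You keep sampled elements in the pool, so a pair $(u,x)$ can be re-queried in a later round; under the persistent model the answer is the same bit as before and is correlated with the conditioning event $u\in V_i$, so the per-round Hoeffding bound plus a union bound over (element, round) pairs is not directly valid. The overlap with $\bigcup_{j\le i} S_j$ is negligible while $n_i \gg k\log n$ but not in the last rounds, which is where your argument needs the filter most. The paper sidesteps this by removing each round's sample from the pool (accepting the loss of up to $O(\log^2(n/\delta))$ possibly top-ranked elements, again the source of the rank bound); you need either the same device or an explicit bound on reused comparisons together with an early stopping rule once $n_i$ is polylogarithmic.
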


The algorithm to identify the minimum value is same as that of maximum with a modification where $\Count$ scores consider the case of $\Yes$ (instead of $\No$): $\Count(v,S) = \sum_{x\in S\setminus \{v\}}\mathbf{1} \{\oracle(v,x)==\Yes{}\} $
\subsection{Extension to Farthest and Nearest Neighbor}\label{sec:farthest}

Given a set of records $V$, the farthest record from a query $u$ corresponds to the record $u'\in V$ such that $d(u,u')$ is maximum. This query is equivalent to finding maximum in the set of distance values given by $D(u) =\{ d(u,u') \mid \forall u'\in V \}$ containing $n$ values for which we already developed algorithms in Section~\ref{sec:finding_max}. Since the ground truth distance between any pair of records is not known, we require quadruplet oracle (instead of comparison oracle) to identify the maximum element in $D(u)$.  Similarly, the nearest neighbor of query record $u$ corresponds to finding the record with minimum distance value in $D(u)$.  Algorithms for finding maximum from previous sections, extend for these settings with similar guarantees. 

\revc{ \begin{example}
Figure~\ref{fig:1dexample} shows a worst-case example for the approximation guarantee to identify the farthest point from $s$ (with $\mu=1$). Similar to Example~
\ref{ex:maximum}, we have, \Count~values of $t, w, u, v$ are $1, 1, 2, 2$ respectively. Therefore, $u$ and $v$ are equally likely, and when Algorithm~\ref{alg:maxCount} outputs $u$, we have a $\approx 3.96$ approximation.\label{ex:farthest}
\end{example}
}
\begin{figure}
\centering
\begin{tikzpicture}
\fill[color=red]  (-2,0) circle (2pt);

\fill  (-0.25,0) circle (2pt);
\fill  (1.5,0) circle (2pt);

\fill  (1.8,0) circle (2pt);
\fill  (5.2,0) circle (2pt);

\draw (-2,0) -- (0,0);
\draw (0,0) -- (-2,0);

\draw (2,0) -- (0,0);
\draw (0,0) -- (2,0);

\draw (2,0) -- (3,0);
\draw (3,0) -- (2,0);

\draw (5,0) -- (3,0);
\draw (3,0) -- (5.2,0);

\draw (-2,0) node[anchor=north east] {$s$};

\draw (-1.25,0) node[anchor=south] {$51$};

\draw (-0.25,0) node[anchor=north] {$u$};

\draw (0.25,0) node[anchor=south] {$50$};

\draw (1.5,0) node[anchor=north east] {$v$};

\draw (1.7,0) node[anchor=south] {$1$};

\draw (1.8,0) node[anchor=north] {$w$};

\draw (4,0) node[anchor=south] {$100$};

\draw (5,0) node[anchor=north west] {$t$};

\end{tikzpicture}
\caption{\small \revc{Example for Lemma~\ref{lem:count} with $\mu=1$. }}
\label{fig:1dexample}
\end{figure}
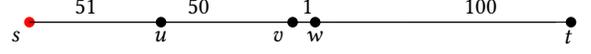

For probabilistic noise, the farthest identified in Section~\ref{sec:max_prob} is guaranteed to rank within the top-$O(\log^2 n)$ values of set $V$ (Theorem~\ref{thm:max_prob}). In this section, we show that it is possible to compute the farthest point within a small additive error under the probabilistic model, if the data set satisfies an additional property {discussed below}. \revc{For the simplicity of exposition,  we assume $p \leq 0.40$, though our algorithms work for any value of $p < 0.5$ (with different constants).} 

One of the challenges in  developing robust algorithms for farthest identification is that every relative distance comparison of records from $u$ ($\oracle(u,v_i,u,v_j)$ for some $v_i,v_j\in V$) may be answered incorrectly with constant error probability $p$ and the success probability cannot be boosted by repetition. We overcome this challenge by performing pairwise comparisons in a robust manner. Suppose the desired failure probability is $\delta$, we observe that if $\Theta(\log(1/\delta))$ records closest to the query $u$ are known (say $S$) and $\max_{x\in S}\{d(u,x)\}\leq \alpha$ for some $\alpha > 0$, then each pairwise comparison of the form $\oracle(u,v_i,u,v_j)$  can be replaced by Algorithm~\textsc{PairwiseComp} and use it to execute Algorithm~\ref{alg:max_adv}. Algorithm~\ref{alg:pairwise} takes the two records $v_i$ and $v_j$ as input along with $S$ and outputs $\Yes$ or $\No$ where $\Yes$ denotes that $v_i$ is closer to $u$. We calculate $\FCount(v_i,v_j) = \sum_{x\in S} \mathbf{1}\{ \oracle(v_i,x,v_j,x) == \Yes \}$ as a robust estimate where the oracle considers $v_i$ to be closer to $x$ than $v_j$. If $\FCount(v_i,v_j)$ is smaller than $0.3 |S| \leq (1-p)|S|/2$  then we output $\No$ and $\Yes$ otherwise. Therefore, every pairwise comparison query is replaced with $\Theta(\log(1/\delta))$ quadruplet queries using Algorithm~\ref{alg:pairwise}.

We argue that Algorithm~\ref{alg:pairwise} will output the correct answer with a high probability if $|d(u,v_j)-d(u,v_i)|\geq 2\alpha$ \revc{(See Fig~\ref{fig:probexample})}.  In Lemma~\ref{lem:pairwise}, we show that, if $d(u,v_j)> d(u,v_i)+2\alpha$, then,  $\FCount(v_i,v_j) \geq 0.3 |S|$  with probability $1-\delta$.
\begin{lemma}
Suppose $\max_{v_i \in S} d(u, v_i) \leq \alpha$ and $|S| \geq 6 \log(1/\delta)$. Consider two records $v_i$ and $v_j$ such that $d(u,v_i) < d(u,v_j)-2\alpha$ then  $\FCount(v_i,v_j)\geq 0.3 |S|$  with a probability of $1-\delta$.\label{lem:pairwise}
\end{lemma}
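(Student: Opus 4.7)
My plan is to show that every quadruplet query used in computing $\FCount(v_i,v_j)$ has $\Yes$ as its ground-truth answer under the lemma's hypothesis, and then conclude by a standard concentration inequality on a sum of $|S|$ independent Bernoullis.

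First I would apply the triangle inequality, using $d(u,x) \leq \alpha$ for every $x \in S$: this gives $d(v_i,x) \leq d(u,v_i) + \alpha$ and $d(v_j,x) \geq d(u,v_j) - \alpha$, hence $d(v_j,x) - d(v_i,x) \geq d(u,v_j) - d(u,v_i) - 2\alpha > 0$ by hypothesis. So for every $x \in S$ the correct answer to the query $\oracle(v_i,x,v_j,x)$ is $\Yes$, and under the persistent probabilistic noise model the oracle returns $\Yes$ on each of these distinct queries independently with probability at least $1-p \geq 0.6$.

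Next I would set $X_x := \mathbf{1}\{\oracle(v_i,x,v_j,x) = \Yes\}$ for $x \in S$, so that $\FCount(v_i,v_j) = \sum_{x \in S} X_x$ is a sum of $|S|$ independent Bernoullis each with mean at least $0.6$. Applying Hoeffding's lower-tail inequality with deviation $0.3\,|S|$ between the mean lower bound and the threshold yields
$$\Pr\bigl[\FCount(v_i,v_j) < 0.3\,|S|\bigr] \leq \exp\bigl(-2(0.3)^2\,|S|\bigr) = \exp(-0.18\,|S|),$$
and substituting $|S| \geq 6\log(1/\delta)$ makes this at most $\delta$, completing the argument.

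The only step beyond routine calculation is matching the $2\alpha$ slack in the hypothesis to the triangle inequality so that all $|S|$ independent queries deterministically share the same correct underlying answer; once that alignment is set up, the remainder is a textbook concentration bound. The constants $0.3$ and $6$ are calibrated so that the additive gap between the threshold and the per-query success probability $1-p$ is a positive constant bounded away from zero (using $p \leq 0.40$), which is exactly what lets $|S| = \Theta(\log(1/\delta))$ suffice for failure probability $\delta$.
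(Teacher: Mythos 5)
Your proof is correct and follows essentially the same route as the paper: the identical triangle-inequality chain shows that for every $x \in S$ the ground-truth answer to $\oracle(v_i,x,v_j,x)$ is $\Yes$, so $\FCount(v_i,v_j)$ is a sum of $|S|$ independent Bernoullis with success probability $1-p \geq 0.6$, and Hoeffding finishes the argument. The paper phrases the concentration step via the mean $(1-p)|S|$ and the threshold $(1-p)|S|/2 \geq 0.3|S|$, yielding $\exp(-(1-p)^2|S|/2)$, which for $p \leq 0.4$ is the same $\exp(-0.18|S|) \leq \delta$ bound you obtain directly.
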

With the help of Algorithm~\ref{alg:pairwise}, relative distance query of any pair of records $v_i,v_j$ from $u$ can be answered correctly with a high probability provided $|d(u,v_i)-d(u,v_j)|\geq 2\alpha$. Therefore, the output of  Algorithm~\ref{alg:pairwise} is equivalent to an additive adversarial error model where any quadruplet query can be adversarially incorrect if the distance $|d(u,v_i)-d(u,v_j)|< 2\alpha$ and correct otherwise. In the Appendix, we show that Algorithm~\ref{alg:max_adv} can be extended to the additive adversarial error model, such that each comparison $\O(u,v_i,u,v_j)$ is replaced by \textsc{PairwiseComp} (Algorithm~\ref{alg:pairwise}).
 We give an approximation guarantee, that loses an additive $6 \alpha$  following a similar analysis of Theorem~\ref{thm:max_adv}. 

\begin{figure}
\centering
\begin{tikzpicture}[thick, scale=0.4, node/.style={circle,draw=white,scale=0.9, fill=white,font=\sffamily\Large\bfseries}]
 \filldraw[color=black, fill=blue!20,  line width=2pt](0,0) circle (3.5);
 \filldraw[color=black, fill=red!0,  line width=2pt](0,0) circle (2);
 \filldraw[color=gray, fill=red!10,  line width=2pt](0,0) circle (0.75);
 \coordinate [label=below:$S$] (A) at (0,-0.8);
 \coordinate [label=left:$v_j$] (A) at (-3,2);
\coordinate [label=left:$v_i$] (A) at (0,1.20);
\coordinate [label=below:$u$] (A) at (0,0);
  \draw[line width=1pt,black,->] (0,0)--(0.75, 0)node[anchor=north ] at (0.35,0.72){\textbf{$\alpha$}};
  \draw[line width=1pt,gray,->] (0,0)--(0,1.7);
  \draw[line width=1pt,black,<->] (2,0)--(3.5,0)node[anchor=north ] at (2.75,0.2){\textbf{$2\alpha$}};
      \draw[gray,->] (0,0)--(-2.9,2)node[anchor=north ] at (-3.39,2.58){};
      \node[draw,text width=4.7cm] at (11,0) {In this example, $\oracle(u,v_i,u,v_j)$ is answered correctly with a probability $1-p$. To boost the correctness probability, $\FCount$ uses the queries $\oracle(x,v_i,x,v_j)$,  $\forall x$ in the red region around $u$, denoted by $S$.};
\end{tikzpicture}
\caption{\revc{Algorithm~\ref{alg:pairwise} returns `\Yes'~as $d(u, v_i) < d(u, v_j) - 2\alpha$.}\label{fig:probexample}}
\end{figure}
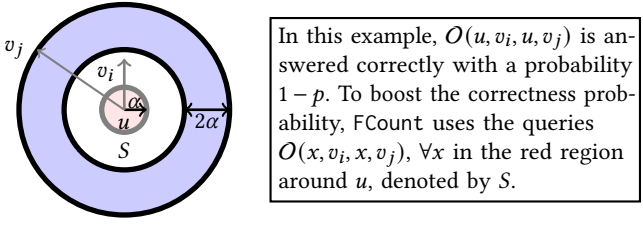

\begin{algorithm}
\begin{small}
\caption{\small \textsc{PairwiseComp} ($u, v_i, v_j, S$)\label{alg:pairwise}}
\label{alg:maxoptimized}
\begin{algorithmic}[1]
\State Calculate $\FCount(v_i, v_j) = \sum_{x\in S} \mathbf{1}\{\oracle(x,v_i,x,v_j) == \Yes \} $
\If{$\FCount(v_i, v_j) < 0.3 |S|$}
\State \Return \No
\Else ~\Return \Yes
\EndIf
\end{algorithmic}
\end{small}
\end{algorithm}

\begin{theorem}\label{thm:farthest_nearest_prob}
Given a query vertex $u$ and a set $S$ with $|S| = \Omega(\log(n/\delta))$ such that $\max_{v\in S}d(u,v)\leq \alpha$ then the farthest identified using Algorithm~\ref{alg:max_adv} (with \textsc{PairwiseComp}), denoted by $\umax$ is within $6 \alpha$ distance from the optimal farthest point, i.e., $d(u, \umax) \geq \max_{v \in V} d(u, v) - 6\alpha$ with a probability of $1-\delta$. Further the query complexity is $O(n\log^3(n/\delta))$.
\end{theorem}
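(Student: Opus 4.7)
The plan is to reduce the probabilistic-noise farthest neighbor problem to the additive adversarial model, and then re-run the analysis of Theorem~\ref{thm:max_adv} with additive slack in place of multiplicative slack. Algorithm~\ref{alg:max_adv} issues $N = O(n\log^2(1/\delta))$ comparison queries in total; replacing each by a call to \textsc{PairwiseComp}$(u, v_i, v_j, S)$ with the given reference set of size $|S| = \Omega(\log(n/\delta)) \geq 6\log(2N/\delta)$, Lemma~\ref{lem:pairwise} together with a union bound over all $N$ calls guarantees that, with probability at least $1-\delta/2$, every pair $(v_i, v_j)$ satisfying $|d(u,v_i) - d(u,v_j)| \geq 2\alpha$ is compared correctly. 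Conditioned on this event, the composite oracle behaves as an additive adversarial oracle: it may err only when the two queried distances are within $2\alpha$ of each other. This alone accounts for the $O(n\log^3(n/\delta))$ query complexity, since each of the $O(n\log^2(1/\delta))$ comparisons now costs $O(\log(n/\delta))$ quadruplet queries.

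Next I would port the proof of Theorem~\ref{thm:max_adv} to this additive regime by substituting the factor $(1+\mu)$ with additive slack $2\alpha$ throughout. The analog of Lemma~\ref{lem:count} asserts that \textsc{Count-Max}$(S')$ returns a value at least $\max_{v \in S'} d(u,v) - 4\alpha$: an element beating the true maximum in the count must itself be within $2\alpha$ of the maximum, and the ultimate winner must then be within another $2\alpha$ of that element. The analog of Lemma~\ref{lem:hierquality} uses the near-maximum set $C = \{v \in V : d(u,v) \geq d^{\star} - 2\alpha\}$ where $d^{\star} = \max_{v\in V} d(u,v)$. If $|C| > \sqrt{n}/2$, the sample $\widetilde V$ of size $\sqrt{n}\,t$ with $t = 2\log(4/\delta)$ hits $C$ with probability $1-\delta/4$ by the same geometric calculation as before. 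Otherwise, $|C| \leq \sqrt{n}/2$ and a Markov argument shows that in at least one of the $t$ partition rounds, the bucket $V_i$ containing the true farthest point contains no other element of $C$; inside that bucket, every comparison made by the binary \textsc{Tournament} involves either the true maximum against something at least $2\alpha$ smaller or two non-maxima, so the true maximum survives every level and lands in $T$ with probability $1-\delta/4$. Combining the final \textsc{Count-Max} slack of $4\alpha$ with the $2\alpha$ sampling gap (or $0$ in the partition case) yields $d(u, u_{\max}) \geq d^{\star} - 6\alpha$, exactly mirroring the $(1+\mu)^3 = (1+\mu)\cdot(1+\mu)^2$ factorization in Theorem~\ref{thm:max_adv}.

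The main obstacle is verifying that the additive slack does not compound across the $\Theta(\log n)$ levels of the binary tournament inside a bucket. The key observation is that a misleading pairwise answer at an internal node can only promote a value that is within $2\alpha$ of its opponent; when the bucket contains $v_{\max}$ but no other element of $C$, every other element of the bucket is more than $2\alpha$ closer to $u$ than $v_{\max}$, so \textsc{PairwiseComp} correctly favours $v_{\max}$ in every round it participates in, and $v_{\max}$ propagates unchanged to the root. A short union bound folding together the $\delta/2$ allotment for the PairwiseComp reductions with the $\delta/2$ allotment for the two cases of the adapted Lemma~\ref{lem:hierquality} closes the argument.
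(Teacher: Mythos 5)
Your proposal follows essentially the same route as the paper's own proof: replace each comparison in Algorithm~\ref{alg:max_adv} by \textsc{PairwiseComp} so that, after a union bound over all calls (using $|S| = \Omega(\log(n/\delta))$ and Lemma~\ref{lem:pairwise}), the composite oracle is an additive adversarial oracle with slack $2\alpha$, and then re-run the Theorem~\ref{thm:max_adv} analysis with additive analogues of Lemma~\ref{lem:count} (a $4\alpha$ loss in the final \textsc{Count-Max}) and Lemma~\ref{lem:hierquality} (a $2\alpha$ loss in the sampling case, exact recovery in the partition case), yielding the $6\alpha$ guarantee and the $O(n\log^3(n/\delta))$ query count. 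Your parameter choices and where you place the union bound differ slightly from the appendix, but the decomposition, the key lemmas, and the final accounting are the same, so the proof is correct.
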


\section{$k$-center clustering}\label{sec:kcenter}
In this section, we present algorithms for $k$-center clustering and prove constant approximation guarantees of our algorithm.
Our algorithm is an adaptation of the classical greedy algorithm for $k$-center~\cite{gonzalez1985clustering}. 
The greedy algorithm~\cite{gonzalez1985clustering} is initialized with an arbitrary point as the first cluster center and then iteratively identifies the next centers. In each iteration, it assigns all the points to the current set of clusters, by identifying the closest center for each point. Then, it finds the farthest point among the clusters and uses it as the new center. This technique requires $O(n k)$ distance  comparisons in the absence of noise and guarantees $2$-approximation of the optimal clustering objective. We provide the pseudocode for this approach in Algorithm~\ref{alg:greedykcenter_adv}. Using an argument similar to the one presented for the worst case example in Section~\ref{sec:finding_max}, we can show that if we use Algorithm~\ref{alg:greedykcenter_adv} where we replace every comparison with an oracle query, the generated clusters can be arbitrarily worse even for small error. In order to improve its robustness, we devise new algorithms to perform assignment of points to respective clusters and  farthest point identification. Missing Details from this section are discussed in Appendix~\ref{app:k-center} and ~\ref{app:k-center-prob}.

\begin{algorithm}
\begin{algorithmic}[1]
\State \textbf{Input} : Set of points $V$
\State \textbf{Output} : Clusters $\mathcal{C}$
\State $s_1 \leftarrow \text{arbitrary point from }V$, $S=\{s_1\}$, $C=\{\{V\}\}$.
\For{$i = 2$ to $k$}
\State $s_i\leftarrow \ApproxFarthest(S,C)$
\State $S\leftarrow S\cup\{s_i\}$
\State $\mathcal C \leftarrow  \Assign(S)$
\EndFor
\State \Return ${\mathcal{C}}$
\end{algorithmic}
\caption{{Greedy Algorithm} \label{alg:greedykcenter_adv}}
\end{algorithm}

\subsection{Adversarial Noise}
Now, we describe the two steps (\textsc{Approx-Farthest} and \textsc{Assign}) of the Greedy Algorithm that will complete the description of Algorithm~\ref{alg:greedykcenter_adv}. To do so, we build upon the results from previous section that give algorithms for obtaining maximum/farthest point.  

\noindent \textbf{\textsc{Approx-Farthest}}. 
Given a clustering $\mathcal C$, and a set of centers $S$, we construct the pairs $(v_i, s_j)$ where $v_i$ is assigned to cluster $C(s_j)$ centered at $s_j \in S$.  Using Algorithm~\ref{alg:max_adv}, we identify the point, center pair that have the maximum distance i.e. $\arg \max_{v_i\in V} d(v_i,s_j)$, which corresponds to the farthest point. For the parameters, we use $l = \sqrt{n}$, $t = \log(2k/\delta)$ and number of samples $\widetilde{V} = \sqrt{n}t$. 

\noindent  \textbf{\textsc{Assign}}. After identifying the farthest point, we reassign all the points to the centers (now including the farthest point as the new center) closest to them. We calculate a movement score called $\MCount$ for every point with respect to each center. 
$\MCount(u, s_j) = \sum_{s_k \in S\setminus \{s_j\}} 1 \{ \oracle((s_j, u), (s_k, u)) == \Yes \}$, for any record $u\in V$ and $s_j\in S$.
This step is similar to \textsc{Count-Max} Algorithm. We assign the point $u$ to the center with the highest \MCount~value.

\revc{\begin{example}
Suppose we run k-center algorithm with $k=2$ and $\mu=1$ on the points in Example~\ref{ex:farthest}. The optimal centers are $u$ and $t$ with radius $51$. On running our algorithm, suppose  $w$ is chosen as the first center and \textsc{Approx-Farthest} calculates \Count~values similar to Example~\ref{ex:maximum}.
We have, \Count~values of $s, t, u, v$ are $1, 2, 3, 0$ respectively. Therefore, our algorithm identifies $u$ as the second center, achieving $3$-approximation.
\end{example}}

\noindent \textbf{Theoretical Guarantees}.
We now prove the approximation guarantee obtained by Algorithm ~\ref{alg:greedykcenter_adv}.

In each iteration, we show that \textsc{Assign} reassigns each point to a center with distance approximately similar to the distance from the closest center. This is surprising given that we only use $\MCount$ scores for assignment. Similarly, we show that \textsc{Approx-Farthest} (Algorithm~\ref{alg:max_adv}) identifies a close approximation to the true farthest point. Concretely, we show that every point is assigned to a center which is a $(1+\mu)^2$ approximation; Algorithm~\ref{alg:max_adv} identifies farthest point $w$ which is a $(1+\mu)^{5}$ approximation. 

In every iteration of the Greedy algorithm, if we identify an $\alpha$-approximation of the farthest point, and a $\beta$-approximation when reassigning the points, then, we show  that the clusters output are a $2\alpha \beta^2$-approximation to the $k$-center objective. For complete details, please refer Appendix~\ref{app:k-center}. Combining all the claims, for a given error parameter $\mu$, we obtain:

{
\begin{theorem}
For $\mu < \frac{1}{18}$, 
Algorithm~\ref{alg:greedykcenter_adv} achieves a $(2+O(\mu))$-approximation for the $k$-center objective using $O(nk^2 + n k\cdot \log^2(k/\delta))$ oracle queries with probability $1-\delta$.
\label{thm:main_adv}
\end{theorem}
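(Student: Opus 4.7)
The plan is to adapt Gonzalez's classical $2$-approximation greedy argument to the noisy oracle by fixing a multiplicative distortion for each subroutine (\textsc{Assign} and \textsc{Approx-Farthest}), union-bounding their failure probabilities across the $k$ iterations, and then running the usual pigeonhole at the end. The output radius will come out as $2 \cdot (\text{Approx-Farthest distortion}) \cdot (\text{final Assign distortion}) \cdot \OPT$.

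I first handle \textsc{Assign}. Fix a point $u$ and the current center set $S$; applying Lemma~\ref{lem:count} to the distances $\{d(u,s):s\in S\}$ (queried through the quadruplet oracle and scored by $\MCount$) shows that the center $s$ maximizing $\MCount(u,\cdot)$ satisfies $d(u,s)\le (1+\mu)^2 \, d(u,s^\ast)$, where $s^\ast$ is the true closest center of $u$. Denote $\beta := (1+\mu)^2$. This distortion is deterministic, since the adversarial guarantee of Lemma~\ref{lem:count} is worst-case over valid oracle responses.

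Next I analyze \textsc{Approx-Farthest}. After \textsc{Assign}, every candidate pair $(v,\mathrm{asgn}(v))$ has true distance in $[d(v,S_i),\,\beta\,d(v,S_i)]$, so the maximum over these pairs lies in $[r_i^\ast,\,\beta r_i^\ast]$ with $r_i^\ast:=\max_v d(v,S_i)$. Running Algorithm~\ref{alg:max_adv} with failure parameter $\delta/k$ (Theorem~\ref{thm:max_adv}) returns a pair whose oracle-distance is at least $r_i^\ast/(1+\mu)^3$, and unwinding the assignment distortion yields that the selected center $s_{i+1}$ has true min-distance $d(s_{i+1},S_i)\ge r_i^\ast/(1+\mu)^5$. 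A union bound over the $k$ iterations makes all calls succeed simultaneously with probability at least $1-\delta$. For the pigeonhole, let $S=\{s_1,\ldots,s_k\}$, let $p$ realize $r^\ast:=\max_v d(v,S)$, and note $r_i^\ast\ge r^\ast$ because the max min-distance is non-increasing as centers are added. Hence $d(s_i,s_j)\ge r^\ast/(1+\mu)^5$ for $i<j$ and $d(p,s_j)\ge r^\ast$, so the $k+1$ points $\{s_1,\ldots,s_k,p\}$ are pairwise at true distance $\ge r^\ast/(1+\mu)^5$. Pigeonholing them into the $k$ optimal clusters plus the triangle inequality yields $r^\ast\le 2(1+\mu)^5\OPT$; the final \textsc{Assign} then inflates this by $\beta$, producing an output radius at most $2(1+\mu)^7\OPT = 2+O(\mu)$, which is strictly less than $3$ for $\mu<1/18$.

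For query complexity, each \textsc{Approx-Farthest} call costs $O(n\log^2(k/\delta))$ by Theorem~\ref{thm:max_adv}, summing to $O(nk\log^2(k/\delta))$, while each reassignment computes $\MCount$ across the current centers via pairwise oracle queries, accumulating to $O(nk^2)$ across iterations. The main obstacle is threading the assignment distortion $\beta$ through \textsc{Approx-Farthest} cleanly so that only a single $(1+\mu)^5$ factor compounds, and anchoring the pigeonhole to the final $r^\ast$ rather than the iteration-dependent $r_i^\ast$; the non-trivial observation is $r_i^\ast\ge r^\ast$ for every $i$, which is what lets Gonzalez's calculation go through despite the clustering evolving from one iteration to the next.
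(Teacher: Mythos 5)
Your proposal is correct, and its skeleton matches the paper's: the same deterministic $(1+\mu)^2$ distortion for \textsc{Assign} via the \textsc{Count-Max}/$\MCount$ argument, the same $(1+\mu)^5$ guarantee for \textsc{Approx-Farthest} (Theorem~\ref{thm:max_adv}'s $(1+\mu)^3$ composed with the assignment distortion, run at confidence $\delta/k$ and union-bounded over iterations), and the same query accounting $O(nk^2+nk\log^2(k/\delta))$. Where you genuinely diverge is the combination step. The paper proves a generic lemma (Lemma~\ref{lem:alphabetalem}): an $\alpha$-approximate farthest step and $\beta$-approximate assignment give a $2\alpha\beta^2$ radius, argued by case analysis on whether some optimal cluster receives two chosen centers, anchored at the iteration in which the second such center was selected; instantiating $\alpha=(1+\mu)^5$, $\beta=(1+\mu)^2$ gives $2(1+\mu)^9$. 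You instead run the textbook Gonzalez pigeonhole on the $k+1$ points $\{s_1,\dots,s_k,p\}$, anchored at the \emph{final} radius $r^\ast$ via the monotonicity $r_i^\ast\ge r^\ast$, which yields $r^\ast\le 2(1+\mu)^5\OPT$ and, after the last \textsc{Assign}, a radius of $2(1+\mu)^7\OPT$ --- a cleaner argument and a slightly sharper constant, with both bounds being $2+O(\mu)$ for $\mu<1/18$. One small point to make explicit: your $O(nk^2)$ total for \textsc{Assign} requires that center--center comparisons relative to each point be asked only once (i.e., each new iteration only queries the newly added center against the existing ones, reusing the persistent answers), as the paper states in Lemma~\ref{lem:maxAssign_appendix}; recomputing all of $\MCount$ from scratch each iteration would cost $O(nk^3)$.
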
}

\subsection{Probabilistic Noise}
For probabilistic noise, each query can be incorrect with  probability $p$ and therefore, Algorithm~\ref{alg:greedykcenter_adv} 
may lead to poor approximation guarantees. Here, we build upon the results from section~\ref{sec:farthest} and provide \ApproxFarthest~and \Assign~algorithms. We denote the size of minimum cluster among optimum clusters $C^*$ to be $m$, and total failure probability of our algorithms to be $\delta$. We assume $p \leq 0.40$, a constant strictly less than $\frac{1}{2} $.  Let $\gamma = 450$ be a large constant used in our algorithms which obtains the claimed guarantees.

\noindent \textbf{Overview}. Algorithm~\ref{alg:greedykcenter_prob} presents the pseudo-code of our algorithm that operates in two phases. In the first phase (lines 3-12), we sample each point with a probability ${\gamma \log(n/\delta)}/{m}$ to identify a small sample of  {$\approx \frac{\gamma n\log(n/\delta)}{m}$} points (denoted by $\widetilde{V}$) and use Algorithm~\ref{alg:greedykcenter_prob}  to identify $k$ centers iteratively. In this process, we also identify a \textit{core} for each cluster (denoted by $R$). Formally, \textit{core} is defined as a set of $\Theta({\log(n/\delta)})$ points that are very close to the center with high probability. The cores are then used in the second phase (line 15) for the assignment of remaining points. 
\begin{algorithm}
\begin{algorithmic}[1]
\State \textbf{Input} : Set of points $V$, smallest cluster size $m$.
\State \textbf{Output} : Clusters ${C}$
\State For every$ \ u \in V$, include $u$ in $\widetilde{V}$ with probability $\frac{\gamma \log(n/\delta)}{m}$
\State $s_1\leftarrow \text{select an arbitrary point from }\widetilde{V}$, $S\leftarrow \{s_1\}$
\State ${C(s_1)} \leftarrow  \widetilde{V}$
\State $R(s_1)\leftarrow \identifycore(C(s_1),s_1)$
\For{$i = 2$ to $k$}
\State $s_i\leftarrow \ApproxFarthest(S,C)$
\State ${C,R} \leftarrow  \Assign(S, s_i,R)$
\State $S\leftarrow S\cup\{s_i\}$
\EndFor
\State $C\leftarrow \textsc{Assign-Final}(S,R,V\setminus \widetilde{V})$\\
\Return ${{C}}$
\end{algorithmic}
\caption{{Greedy Clustering} \label{alg:greedykcenter_prob}}
\end{algorithm}
Now, we describe the main challenge in extending $\ApproxFarthest$ and $\Assign$ ideas of Algorithm~\ref{alg:greedykcenter_adv}. Given a cluster $C$ containing the center $s_i$, when we find the $\ApproxFarthest$, the ideas from Section~\ref{sec:max_prob} give a $O(\log^2 n)$ rank approximation. As shown in section~\ref{sec:farthest}, we can improve the approximation guarantee by considering a set of $\Theta(\log(n/\delta))$ points closest to $s_i$, denoted by $R(s_i)$ and call them \textit{core} of $s_i$. We argue that such an assumption of set $R$ is justified. For example, consider the case when clusters are of size $\Theta(n)$ and sampling $k\log(n/\delta)$ points gives us $\log(n/\delta)$ points from each optimum cluster; which means that there are $\log(n/\delta)$ points within a distance of $2\OPT$ from every sampled point where $\OPT$ refers to the optimum $k$-center objective.

\noindent \textbf{\Assign}. Consider a point $s_i$ such that we have to assign points to form the cluster $C(s_i)$ centered at $s_i$. We calculate an \textit{assignment} score (called \ACount~in line 4) for every point $u$ of a cluster $C(s_j) \setminus R(s_j)$ centered at $s_j$. \ACount~captures the total number of times $u$ is considered to belong to same cluster as that of $x$ for each $x$ in the core $R(s_j)$. Intuitively, points that belong to same cluster as that of $s_i$ are expected to have higher \ACount~score. Based on the scores, we move $u$ to $C(s_i)$ or keep it in $C(s_j)$.
\begin{algorithm}
\caption{$\Assign( S, s_i,  R)$}
\label{alg:assign_prob}
\begin{small}
\begin{algorithmic}[1]
\State $C(s_i)\leftarrow \{ s_i \}$ 
\For{$s_j \in S$}
\For{$u \in C(s_j) \setminus R(s_j)$}
\State $\ACount (u, s_i, s_j) = \sum_{v_k\in R(s_j)} \mathbf{1}\{\oracle(u,s_i,u,v_k) == \Yes \}$
\If {$\ACount(u, s_i, s_j) > 0.3 |R(s_j)|$}    
\State $C(s_i) \leftarrow C(s_i) \cup \{ u \}; C(s_j) \leftarrow C(s_j) \setminus \{u \}$ 
\EndIf
\EndFor
\EndFor
\State $R(s_i)\leftarrow \identifycore(C(s_i),s_i)$
\State \Return C, R
\end{algorithmic}
\end{small}
\end{algorithm}
\begin{algorithm}
\caption{$\identifycore(C(s_i),s_i)$}
\label{alg:core_prob}
\begin{algorithmic}[1]
\For{$u \in C(s_i)$}
\State \small{\Count(u)=}\small{$\sum_{x \in C(s_i)} 1\{ \oracle(s_i, x, s_i,u) == \No \}$}
\EndFor
\State $R(s_i)$ denote set of ${8\gamma \log(n/\delta)}/9$ points with the highest $\Count{}$ values.
\State \Return $R(s_i)$ 
\end{algorithmic}
\end{algorithm}

\noindent \textbf{\textsc{\identifycore}}.  After forming cluster $C(s_i)$, we identify the \emph{core} of $s_i$. For this, we calculate a score, denoted by \Count~and captures number of times it is closer to $s_i$ compared to other points in $C(S_i)$. Intuitively, we expect  points with high values of \Count~to belong to $C^*(s_i)$ i.e., optimum cluster containing $s_i$. Therefore we sort these \Count~scores and return the highest scored points.

\noindent \textbf{\ApproxFarthest}. 
For a set of clusters $\mathcal C$, and a set of centers $S$, we construct the pairs $(v_i, s_j)$ where $v_i$ is assigned to cluster $C(s_j)$ centered at $s_j \in S$ and each center $s_j\in S$ has a corresponding core $R(s_j)$.  The farthest point can be found by finding the maximum distance (point, center) pair among all the points considered. To do so, we use the ideas developed in section~\ref{sec:farthest}.

 We leverage \textsc{ClusterComp} (Algorithm~\ref{alg:cluster_comp}) to compare the distance of two points, say $v_i,v_j$ from their respective centers $s_i,s_j$. \textsc{ClusterComp} gives a robust answer to a pairwise comparison query to the oracle $\oracle(v_i, s_i, v_j, s_j)$ using the cores $R(s_i)$ and $R(s_j)$. \textsc{ClusterComp} can be used as a pairwise comparison subroutine in place of \textsc{PairwiseComp} for the algorithm in Section~\ref{sec:finding_max} to calculate the farthest point. For every $s_i \in S$, let $\widetilde{R}(s_i)$ denote an arbitrary set of $\sqrt{R(s_i)}$ points from $R(s_i)$. For a \textsc{ClusterComp} comparison query between the pairs $(v_i, s_i)$ and $(v_j, s_j)$, we use these subsets in Algorithm~\ref{alg:cluster_comp} to ensure that we only make $\Theta(\log(n/\delta))$ oracle queries for every comparison. However, when the query is between points of the same cluster, say $C(s_i)$, we use all the $\Theta(\log(n/\delta))$ points from $R(s_i)$. For the parameters used to find maximum using Algorithm~\ref{alg:max_adv}, we use $l = \sqrt{n}$, $t = \log(n/\delta)$.
 \revc{
 \begin{example}
 Suppose we run $k$-center Algorithm~\ref{alg:greedykcenter_prob} with $k=2$ and $m = 2$ on the points in Example~\ref{ex:farthest}. Let $w$ denote the first center chosen and Algorithm~\ref{alg:greedykcenter_prob} identifies the core $R(w)$ by calculating \Count~values. If $\oracle(u,w,s,w)$ and $\oracle(s,w,t,w)$ are answered incorrectly (with probability $p$), we obtain \Count~values of $v, s, u, t$ as $3, 2, 1, 0$ respectively; and $v$ is added to $R(w)$. We identify the second center $u$ by calculating $\FCount$ for $s,u$ and $t$ (See Fig.~\ref{fig:probexample}). After assigning (using \Assign), the clusters identified are $\{w, v\}, \{u, s, t\}$, achieving $3$-approximation.
 \end{example}
 }

\begin{algorithm}
\begin{small}
\caption{\small \textsc{ClusterComp} ($v_i, s_i, v_j, s_j$)}
\label{alg:cluster_comp}
\begin{algorithmic}[1]
\State comparisons $\leftarrow 0, \FCount(v_i, v_j) \leftarrow 0$
\If{$s_i = s_j$}
\State Let $\FCount(v_i, v_j) = \sum_{x \in R(s_i)} \mathbf{1} \{ \oracle(v_i, x, v_j, x) == \Yes \}$
\State comparisons $\leftarrow |R(s_i)|$
\Else~ Let $\FCount(v_i, v_j) = \sum_{x \in \widetilde{R}(s_i), y\in \widetilde{R}(s_j)} \mathbf{1} \{\oracle(v_i,x,v_j,y) == \Yes \} $
\State comparisons $\leftarrow |\widetilde{R}(s_i)| \cdot |\widetilde{R}(s_j)|$
\EndIf
\If{$\FCount(v_i, v_j) < 0.3 \cdot\text{comparisons }$}
\State \Return \No
\Else ~\Return \Yes
\EndIf
\end{algorithmic}
\end{small}
\end{algorithm}

\noindent \textbf{\textsc{Assign-Final}}.
After obtaining $k$ clusters on the set of sampled points $\widetilde{V}$, we assign the remaining points using $\ACount$ scores, similar to the one described in $\Assign$. For every point that is not sampled, we first assign it to $s_1 \in S$, and if $\ACount(u, s_2, s_1) \geq 0.3 {|R(s_1)|}$, we re-assign it to $s_2$, and continue this process iteratively. After assigning all the points, the clusters are returned as output.

\subsection*{Theoretical Guarantees} Our algorithm first constructs a sample $\widetilde{V} \subseteq V$ and runs the greedy algorithm on this sampled set of points. Our main idea to ensure that good approximation of the $k$-center objective lies in identifying a good \textit{core} around each center. Using a sampling probability of $\gamma\log(n/\delta)/m$ ensures that we have at least $\Theta(\log(n/\delta))$ points from each of the optimal clusters in our sampled set $\widetilde{V}$. By finding the closest points using $\Count$ scores, we identify $O(\log(n/\delta))$ points around every center that are in the optimal cluster. Essentially, this forms the core of each cluster. These cores are then used for robust pairwise comparison queries (similar to Section~\ref{sec:farthest}), in our $\ApproxFarthest$ and $\Assign$ subroutines. 
We give the following theorem, which guarantees a constant, i.e., $O(1)$ approximation with high probability.
\begin{theorem}\label{thm:kcenterprob}
Given $p \leq 0.4$, a failure probability $\delta$, and $m = \Omega({\log^3(n/\delta)}/{\delta})$. Then, Algorithm~\ref{alg:greedykcenter_prob} achieves a $O(1)$-approximation for the $k$-center objective using $O(nk\log(n/\delta)+\frac{n^2}{m^2}k \log^2(n/\delta))$ oracle queries with probability $1-O(\delta)$.
\end{theorem}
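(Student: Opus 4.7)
The plan is to mirror the classical analysis of the greedy $k$-center algorithm while showing that each of the primitives \textsc{ApproxFarthest}, \textsc{Assign}, \textsc{Identify-Core}, and \textsc{Assign-Final} succeeds with high probability and loses only constant factors relative to $\OPT$. Let $\OPT$ denote the optimal $k$-center radius and let $C^*_1,\dots,C^*_k$ be the optimum clusters, each of size at least $m$.

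First I would analyze the sample $\widetilde V$. Since each point is retained independently with probability $\gamma\log(n/\delta)/m$ and $|C^*_i|\geq m$, a Chernoff bound gives that for every $i$, $|\widetilde V\cap C^*_i|\in[\tfrac12\gamma\log(n/\delta),\,2\gamma\log(n/\delta)]$ with probability at least $1-\delta/n$; a union bound over the $k\leq n$ clusters gives this jointly with probability $1-\delta$. Combined with $|\widetilde V|=\Theta(n\log(n/\delta)/m)$ w.h.p.

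Next I would argue that \textsc{Identify-Core} applied to a cluster $C(s_i)$ that is approximately correct returns a core $R(s_i)$ of size $\tfrac{8\gamma}{9}\log(n/\delta)$ whose points all lie within $2\OPT$ of $s_i$. The intuition is that the $\Omega(\log(n/\delta))$ sampled points from $C^*(s_i)$ are all within $2\OPT$ of $s_i$ by triangle inequality; for a point $u$ outside the optimal cluster to attain a larger \Count{} value than those points, the oracle must err on a constant fraction of comparisons between $u$ and the in-cluster points—a Chernoff bound on the independent persistent $p<1/2$ errors rules this out with failure probability $O(\delta/n)$.

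The heart of the proof is then to show that once the cores are good, both \textsc{ClusterComp} (hence \textsc{ApproxFarthest}) and \textsc{Assign} produce decisions that behave like queries to an additive-$O(\OPT)$ adversarial oracle. Concretely, if $R(s_i)$ is contained in a ball of radius $2\OPT$ around $s_i$, then by triangle inequality $|d(v_i,x)-d(v_i,s_i)|\leq 2\OPT$ for every $x\in R(s_i)$, and an analogue of Lemma~\ref{lem:pairwise} (Chernoff on $\Theta(\log(n/\delta))$ independent oracle calls, using $p\leq 0.4<1/2$) shows that \textsc{ClusterComp} answers correctly whenever the underlying pair of distances differs by more than a constant multiple of $\OPT$, each with failure $O(\delta/n^2)$. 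The same argument applied to \textsc{Assign} shows that each point is assigned to a center at distance at most $O(\OPT)$ from it. I would then invoke the extended additive-error version of Theorem~\ref{thm:max_adv} (mentioned in Section~\ref{sec:farthest}) on the pairs $(v,\pi(v))$ to conclude that \textsc{ApproxFarthest} returns a point whose distance from $S$ is at least $\max_{v}d(v,\pi(v))-O(\OPT)$. Union bounding over all $O(nk)$ pairwise comparisons performed throughout the $k$ iterations gives overall failure probability $O(\delta)$.

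With these pieces in hand, the standard inductive argument for Gonzalez's algorithm goes through: if in every iteration the farthest point is identified up to additive $O(\OPT)$ and each assignment is correct up to factor $O(1)$, then after $k$ iterations $\max_{v\in \widetilde V}\min_{s\in S}d(v,s)\leq c\cdot \OPT$ for some constant $c$. Finally, \textsc{Assign-Final} on $V\setminus\widetilde V$ uses the same core-based comparisons to attach each remaining point to a center within $O(\OPT)$, giving an $O(1)$-approximation overall. The query complexity follows by accounting: \textsc{Identify-Core} and \textsc{Assign} on the sample cost $O(|\widetilde V|^2\log(n/\delta))=O\!\bigl(\tfrac{n^2}{m^2}\log^3(n/\delta)\bigr)$ over all iterations (and \textsc{ApproxFarthest}, via Algorithm~\ref{alg:max_adv} with $t=\log(n/\delta)$ on a sample of size $O(|\widetilde V|)$, fits in $O\!\bigl(\tfrac{n^2}{m^2}k\log^2(n/\delta)\bigr)$), while \textsc{Assign-Final} costs $O(nk\log(n/\delta))$ queries, giving the claimed bound. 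The main obstacle is the middle step: certifying that the cores stay ``good'' across all $k$ iterations despite the persistent noise and despite \textsc{Assign} moving points around—this is where the assumption $m=\Omega(\log^3(n/\delta)/\delta)$ enters, since we need the cores to remain large enough for Chernoff to beat a union bound over all $O(nk)$ comparisons and to tolerate the constant-fraction errors introduced by earlier imperfect assignments.
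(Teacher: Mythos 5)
Your proposal follows essentially the same route as the paper: Chernoff bounds on the sample to guarantee $\Theta(\log(n/\delta))$ points per optimal cluster, Count-score cores used to turn \textsc{ClusterComp}/\textsc{Assign} into an additive-$O(\OPT)$ adversarial oracle so the extended Theorem~\ref{thm:max_adv} machinery applies to \ApproxFarthest, the $\alpha$--$\beta$ greedy analysis (Lemma~\ref{lem:alphabetalem}) yielding an $O(1)$ factor, \textsc{Assign-Final} for the unsampled points, and the same query accounting. The one quibble is your claim that every core point lies within $2\OPT$ of its center: the paper's Count-based argument (Lemmas~\ref{lem:core_comp} and~\ref{lem:core_approx_appendix}) only certifies a core radius of $8\OPT$, since points with insufficient rank margin can slip into the core without the oracle erring on a constant fraction of comparisons, but this affects only the constant and not the $O(1)$ guarantee.
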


\section{Hierarchical Clustering}\label{sec:hierarchical}
In this section, we present robust algorithms for agglomerative hierarchical clustering using single linkage and complete linkage objectives. The naive algorithms initialize every record as a singleton cluster and merge the closest pair of clusters iteratively. For a set of clusters $\mathcal{C}=\{C_1,\ldots,C_t\}$, the distance between any pair of clusters $C_i$ and $C_j$, for single linkage clustering, is defined as the minimum distance between any pair of records in the clusters,
$d_{SL}(C_1,C_2) = \min_{v_1\in C_1,v_2\in C_2} d(v_1,v_2)$.
For complete linkage, cluster distance is defined as the  maximum distance between any pair of records.
 All algorithms discussed in this section can be easily extended for complete linkage, and therefore we study single linkage clustering. The main challenge in implementing single linkage clustering in the presence of \emph{adversarial} noise is identification of  minimum value in a list of at most ${n\choose 2}$ distance values. In each iteration, the closest pair of clusters can be identified by using Algorithm~\ref{alg:max_adv} (with $t=2\log(n/\delta)$) to calculate the minimum over the set containing pairwise distances.
For this algorithm, Lemma~\ref{lem:single_linkage} shows that the pair of clusters merged in any iteration are a constant approximation of the optimal merge operation at that iteration. The proof of this lemma follows from Theorem~\ref{thm:max_adv}.
\begin{lemma}
Given a collection of clusters $\mathcal{C}=\{C_1,\ldots,C_r\}$, our algorithm to calculate the closest pair (using Algorithm~\ref{alg:max_adv}) identifies $C_1$ and $C_2$ to merge according to single  linkage objective if $d_{SL}(C_2,C_2) \leq (1+\mu)^3 \min_{C_i,C_j\in \mathcal{C}} d(C_i,C_j)$ with $1-\delta/n$ probability and requires  $O(n^2\log^2(n/\delta))$ queries.\label{lem:single_linkage}
\end{lemma}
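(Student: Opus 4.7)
The plan is to reduce the closest-pair identification problem to the minimum-finding primitive we already analyzed, and then instantiate Theorem~\ref{thm:max_adv} in its minimum-finding form. Concretely, let $\mathcal{D} = \{d_{SL}(C_i, C_j) : C_i, C_j \in \mathcal{C}, i \neq j\}$ be the collection of pairwise single-linkage distances; this set has size $N = \binom{|\mathcal{C}|}{2} \leq \binom{n}{2}$. Comparing any two such distances $d_{SL}(C_i,C_j)$ and $d_{SL}(C_k,C_\ell)$ can be implemented by the quadruplet oracle on the witness record pairs that realize the single-linkage distances — in fact at the start, when all clusters are singletons (which is the only regime we need for the first invocation; later iterations fold into the same bound since $|\mathcal{C}|$ only shrinks), $d_{SL}(C_i,C_j)$ is just a single pairwise distance, and so each comparison is directly one quadruplet query.

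First, I would invoke the minimum-finding variant of Algorithm~\ref{alg:max_adv} described in Section~\ref{sec:max_prob} — identical to \textsc{Max-Adv} except that all \Count{} tallies are built on $\Yes$ answers rather than $\No$ answers — on the set $\mathcal{D}$. By the min-analogue of Theorem~\ref{thm:max_adv}, which follows from flipping the role of \Yes{}/\No{} throughout the analysis in Section~\ref{sec:finding_max}, this returns an element $d^\ast \in \mathcal{D}$ satisfying
\[
d^\ast \;\leq\; (1+\mu)^3 \cdot \min_{C_i,C_j \in \mathcal{C}} d_{SL}(C_i, C_j)
\]
with probability at least $1 - \delta'$, using $O(N \log^2(1/\delta'))$ oracle queries. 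Setting $\delta' = \delta/n$ to match the per-iteration failure probability required by the lemma, the query count becomes $O(n^2 \log^2(n/\delta))$, exactly as claimed. The pair $(C_1, C_2)$ whose single-linkage distance equals $d^\ast$ is then the merge our algorithm selects.

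The only nontrivial ingredient is justifying that the min-version of Theorem~\ref{thm:max_adv} carries over verbatim; but this is a direct symmetry of the adversarial model — the oracle errs exactly when $v_1/v_2 \in [1/(1+\mu), 1+\mu]$, which is symmetric in its two arguments, so Lemma~\ref{lem:count}, Lemma~\ref{lem:tournament}, and Lemma~\ref{lem:hierquality} all go through with ``smallest \Count{} score'' replaced by ``largest'' (equivalently, counting \Yes{} instead of \No{}). No step requires new ideas, and the analysis there already contemplated this symmetric minimum counterpart. Taking a union bound over the $n-1$ merges would upgrade the per-iteration $1 - \delta/n$ guarantee to an overall $1 - \delta$ correctness of the full single-linkage run, but only the per-iteration bound is needed here. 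The main obstacle is merely the bookkeeping of parameters $(N = O(n^2),\ \delta' = \delta/n)$ when invoking the black-box primitive; everything else is immediate from the results we already have.
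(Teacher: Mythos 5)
Your proposal is correct and follows essentially the same route as the paper: the paper's proof simply runs the minimum-finding version of Algorithm~\ref{alg:max_adv} (with $\Count$ built on \Yes{} answers, as noted at the end of Section~\ref{sec:max_prob}) over the list of at most $\binom{n}{2}$ pairwise distances and invokes Theorem~\ref{thm:max_adv} with failure probability $\delta/n$, which is exactly your reduction and parameter choice. Your extra bookkeeping about witness pairs for non-singleton clusters is harmless but unnecessary, since one can equivalently take the minimum directly over all inter-cluster record-pair distances, each comparison being a single quadruplet query.
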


\begin{algorithm}
\begin{small}
\begin{algorithmic}[1]
\State \textbf{Input} : Set of points $V$
\State \textbf{Output} : Hierarchy $H$
\State $H \leftarrow \{\{v\}\mid v\in V\},\mathcal{C}\leftarrow \{\{v\}\mid v\in V\}$
\For{$C_i\in \mathcal{C}$}
\State $\widetilde{C_i} \leftarrow $\textsc{NearestNeighbor} of $C_i$ among $\mathcal{C}\setminus\{C_i\}$ using Sec~\ref{sec:farthest}
\EndFor
\While{$|\mathcal{C}|>1$}
\State Let $(C_j,\widetilde{C_j})$ be the closest pair among $(C_i,\widetilde C_i), \forall C_i\in \mathcal{C}$
\State $C'\leftarrow C_j\cup \widetilde{C_j}$
\State Update Adjacency list of $C'$ with respect to $\mathcal{C}$
\State Add $C'$ as parent of $\widetilde{C_j}$ and $C_j$ in $H$.
\State $\mathcal{C} \leftarrow \left(\mathcal{C}\setminus \{C_j,\widetilde{C_j}\}\right)\cup \{C'\}$
\State $\widetilde{C'}\leftarrow$ \textsc{NearestNeighbor} of $C'$ from its adjacency list
\EndWhile
\State \Return $H$
\end{algorithmic}
\caption{{Greedy Algorithm}\label{alg:greedy_hierarchical}}
\end{small}
\end{algorithm}
\noindent \textbf{Overview.} Agglomerative clustering techniques are known to be inefficient. Each iteration of merge operation compares at most ${n\choose 2}$ pairs of distance values and the algorithm operates $n$ times to construct the hierarchy. This yields an overall query complexity of $O(n^3)$. To improve their query complexity, \texttt{SLINK} algorithm~\cite{sibson1973slink} was proposed to construct the hierarchy in $O(n^2)$ comparisons. To implement this algorithm with a comparison oracle, for every cluster $C_i \in \mathcal{C}$, we maintain an adjacency list containing every cluster $C_j$ in $\mathcal{C}$ along with a pair of records with the distance equal to the distance between the clusters. For example, the entry for $C_j$ in the adjacency list of $C_i$ contains the pair of records $(v_i,v_j)$ such that $d(v_i,v_j)=  \min_{v_i\in C_i,v_j\in C_j} d(v_i,v_j)$.
Algorithm~\ref{alg:greedy_hierarchical} presents the pseudo code for single linkage clustering under the adversarial noise model.
The algorithm is initialized with singleton clusters where every record is a separate cluster. Then, we identify the closest cluster for every $C_i \in \mathcal{C}$, and denote it by $\widetilde{C_i}$. This step takes $n$ nearest neighbor queries, each requiring $O(n \log^2(n/\delta))$ oracle queries. In every subsequent iteration, we identify the closest pair of clusters (Using section~\ref{sec:farthest}), say $C_j$ and $\widetilde{C_j}$ from $\mathcal{C}$. 

After merging these clusters, the data structure is updated as follows. To update the adjacency list, we need the pair of records with minimum distance between the merged cluster $C'\equiv C_j\cup \widetilde{C_j}$ and every other cluster $C_k \in \mathcal{C}$. In the previous iteration of the algorithm, we already have the minimum distance record pair for $(C_j,C_k)$ and $(\widetilde{C_j},C_k)$. Therefore a single query between these two pairs of records is sufficient to identify the minimum distance edge between $C'$ and $C_k$ (formally: 
$ d_{SL}(C_j\cup \widetilde{C_j},C_k)
     = \min \{d_{SL}(C_j,C_k),d_{SL}(\widetilde{C_j},C_k)\}\label{eq:mincomparison}$).
The nearest neighbor of the merged cluster is identified by running minimum calculation over its adjacency list. In Algorithm~\ref{alg:greedy_hierarchical}, as we identify closest pair of clusters, each iteration requires $O(n\log^2(n/\delta))$ queries. As our Algorithm terminates in at most $n$ iterations, it has an overall query complexity of $O(n^2\log^2(n/\delta))$.  In Theorem~\ref{thm:optimizedsingle}, we given an approximation guarantee for every merge operation of Algorithm~\ref{alg:greedy_hierarchical}.

\begin{theorem}
In any iteration, suppose the distance between a cluster $C_j\in \mathcal{C}$ and its identified nearest neighbor $\widetilde{C_j}$ is $\alpha$-approximation of its distance from the optimal nearest neighbor, then the distance between pair of clusters merged by  Algorithm~\ref{alg:greedy_hierarchical} is $\alpha (1+\mu)^3$ approximation of the optimal distance between the closest pair of clusters in $\mathcal{C}$ with a probability of $1-\delta$ using $O(n \log^2(n/\delta))$ oracle queries. \label{thm:optimizedsingle}
\end{theorem}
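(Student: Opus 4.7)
The plan is to reduce the correctness of one merge step to two ingredients that are already in hand: the hypothesis that each stored neighbor $\widetilde{C_i}$ is an $\alpha$-approximate nearest neighbor of $C_i$, and the $(1+\mu)^3$-approximate minimum-finding guarantee of Theorem~\ref{thm:max_adv}. Let $(C_a^{*}, C_b^{*}) \in \mathcal{C}\times \mathcal{C}$ denote an optimal closest pair in this iteration, with $d^{*} = d_{SL}(C_a^{*}, C_b^{*}) = \min_{C_i,C_j\in \mathcal{C}} d_{SL}(C_i, C_j)$. Because $C_b^{*}$ is (by definition of $d^{*}$) also the true nearest neighbor of $C_a^{*}$, the hypothesis of the theorem instantiated at $C_j = C_a^{*}$ gives
\[
d_{SL}\bigl(C_a^{*}, \widetilde{C_a^{*}}\bigr) \;\le\; \alpha \cdot d_{SL}\bigl(C_a^{*}, C_b^{*}\bigr) \;=\; \alpha\, d^{*}.
\]
Thus among the $|\mathcal{C}| \le n$ candidate pairs $\{(C_i, \widetilde{C_i}) : C_i \in \mathcal{C}\}$ that Line~8 of Algorithm~\ref{alg:greedy_hierarchical} minimizes over, at least one pair witnesses a distance of at most $\alpha\, d^{*}$.

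Next, I would argue that the minimum-finding step preserves this guarantee up to a factor of $(1+\mu)^3$. The algorithm feeds the distances $\{d_{SL}(C_i, \widetilde{C_i})\}_{i}$ to the minimum variant of Algorithm~\ref{alg:max_adv} (the same tournament/sample-and-partition procedure applied with $\Count$ defined using $\Yes$ answers, as noted at the end of Section~\ref{sec:max_prob}). By Theorem~\ref{thm:max_adv} applied with confidence parameter $\delta$, this subroutine returns, with probability at least $1-\delta$, some pair $(C_j, \widetilde{C_j})$ whose distance is at most $(1+\mu)^3$ times the true minimum of the input list. Chaining the two bounds,
\[
d_{SL}(C_j, \widetilde{C_j}) \;\le\; (1+\mu)^3 \min_{i} d_{SL}(C_i, \widetilde{C_i}) \;\le\; (1+\mu)^3 \cdot \alpha\, d^{*},
\]
which is exactly the claimed approximation for the pair that Algorithm~\ref{alg:greedy_hierarchical} merges in this iteration.

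For the query complexity, the only per-iteration oracle cost that this statement concerns is the minimum over $n$ pairs in Line~8 (the adjacency-list update in Line~10 is handled by a single cluster-level comparison per entry, as exploited by the min-merge identity for single linkage, and the \textsc{NearestNeighbor} call in Line~13 is charged to the next iteration's hypothesis). Theorem~\ref{thm:max_adv} gives query cost $O(n\log^2(1/\delta))$ for this min; rescaling the confidence parameter to $\delta/n$ to accommodate a union bound over iterations yields the stated $O(n\log^2(n/\delta))$. The main thing to be careful about is purely bookkeeping: making sure the invariant ``each $\widetilde{C_i}$ stored at the start of an iteration is an $\alpha$-approximate nearest neighbor of $C_i$'' is exactly what the theorem assumes (so no inductive strengthening is needed here), and verifying that the min-merge equation $d_{SL}(C_j \cup \widetilde{C_j}, C_k) = \min\{d_{SL}(C_j, C_k), d_{SL}(\widetilde{C_j}, C_k)\}$ justifies using one oracle call per $C_k$ for the adjacency update without any additional $(1+\mu)$ loss beyond what is already absorbed.
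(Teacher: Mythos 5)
Your proposal is correct and follows essentially the same route as the paper: the paper's (very terse) proof simply observes that Algorithm~\ref{alg:greedy_hierarchical} runs the minimum-finding of Algorithm~\ref{alg:max_adv} over the $n$ candidate pairs $(C_i,\widetilde{C_i})$ and invokes Theorem~\ref{thm:max_adv}, which is exactly the chaining of the $\alpha$-approximate stored neighbor with the $(1+\mu)^3$ minimum-finding guarantee that you spell out. Your additional details (instantiating the hypothesis at the optimal pair to exhibit a candidate of distance at most $\alpha d^{*}$, and rescaling the confidence parameter to $\delta/n$ to get the $O(n\log^2(n/\delta))$ query bound) are just the fleshed-out version of what the paper leaves implicit.
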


\noindent \textbf{Probabilistic Noise model.} The above discussed algorithms do not extend to the probabilistic noise due to constant probability of error for each query. However, when we are given a priori,  a partitioning of $V$ into clusters of size $>\log n$ such that the maximum distance between any pair of records in every cluster is smaller than $\alpha$ (a constant), Algorithm~\ref{alg:greedy_hierarchical} can be used to construct the hierarchy correctly. For this case, the algorithm to identify the closest and farthest pair of clusters is same as the one discussed in Section~\ref{sec:farthest}.

\revb{
\noindent Note that agglomerative clustering algorithms are known to require $\Omega(n^2)$ queries, which can be infeasible for million scale datasets. However, blocking based techniques present efficient heuristics to prune out low similarity pairs~\cite{papadakis2016comparative}. Devising provable algorithms with better time complexity is outside the scope of this work.
}
\section{Experiments}\label{sec:Experiments}
In this section, we evaluate the effectiveness of our techniques on various real world datasets and answer the following questions. \changes{\textbf{Q1:} Is quadruplet oracle practically feasible? How do the different types of queries compare in terms of quality and time taken by annotators? \textbf{Q2:}} Are proposed techniques robust to different levels of noise in oracle answers? \changes{ \textbf{Q3:}} How does the query complexity and solution quality of proposed techniques compare with optimum for varied levels of noise?  
 \subsection{Experimental Setup}

\noindent \textbf{Datasets.} We consider the following real-world datasets.\\
\textbf{(1)} \texttt{cities} dataset~\cite{uscities} comprises of $~36$K cities of the United States. The different features of the cities include state, county, zip code, population, time zone, latitude and longitude.\\
\changes{\textbf{(2)} \texttt{caltech} dataset comprises  11.4K images from 20 categories. The ground truth distance between records is calculated using the hierarchical categorization as described in~\cite{griffin2007caltech}.\\
\textbf{(3)} \amazon~dataset contains $7$K images and textual descriptions collected from \texttt{amazon.com}~\cite{he2016ups}. For obtaining the ground truth distances we use Amazon's hierarchical catalog.\\
\textbf{(4)} \monuments~dataset comprises of $100$ images belonging to $10$ tourist locations around the world.\\
}
\textbf{(5)} \texttt{dblp} contains 1.8M titles of computer science papers from different areas~\cite{zhang2018taxogen}. From these titles, noun phrases were extracted and a dictionary of all the phrases was constructed. Euclidean distance in word2vec embedding space is considered as the ground truth distance between concepts.

\noindent \textbf{Baselines.} We compare our techniques with  the optimal solution (whenever possible) and the following baselines.
(a) \texttt{Tour2} constructs a binary tournament tree over the entire dataset to compare the values and the root node corresponds to the identified maximum/minimum value (Algorithm~\ref{alg:maxTournament} with $\lambda=2$). This approach is an adaptation of the finding maximum algorithm in~\cite{DKMR15} with a difference that each query is not repeated multiple times to increase success probability. We also use them to identify the farthest and nearest point in the greedy $k$-center Algorithm~\ref{alg:greedykcenter_adv} and closest pair of clusters in  hierarchical clustering.  

(b) \texttt{Samp} considers a sample of $\sqrt{n}$ records and identifies the farthest/nearest by performing quadratic number of comparisons over the sampled points using \textsc{Count-Max}. For $k$-center, \texttt{Samp} considers a 
sample of $k\log n$ points to identify $k$ centers over these samples using the greedy algorithm. 
It then assigns all the remaining points to the identified centers by querying each record with every pair of center.

Calculating optimal clustering objective for $k$-center is NP-hard even in the presence of accurate pairwise distance~\cite{williamson2011design}. So, we compare the solution quality with respect to the greedy algorithm on the ground truth distances, denoted by \texttt{TDist}. For farthest, nearest neighbor and hierarchical clustering, \texttt{TDist} denotes the optimal technique that has access to ground truth distance between records.

Our algorithm is labelled \texttt{Far} for farthest identification, \texttt{NN} for nearest neighbor, \texttt{kC} for $k$-center and \texttt{HC} for hierarchical clustering with subscript $a$ denoting the adversarial model and $p$ denoting the probabilistic noise model.
All algorithms are implemented in C++ and run on a server with 64GB RAM. The reported results are averaged over 100 randomly chosen iterations. Unless specified, we set  $t=1$ in Algorithm~\ref{alg:max_adv} and $\gamma=2$ in Algorithm~\ref{alg:greedykcenter_prob}.

\noindent \textbf{Evaluation Metric.} For finding maximum and nearest neighbors, we compare different techniques by evaluating the true distance of the returned solution from the queried points. 
For $k$-center, we use the  objective value, i.e., maximum radius  of the returned clusters as the evaluation metric and compare against the true greedy algorithm (\texttt{TDist}) and other baselines. \changes{For datasets where ground truth clusters are known (\amazon, \caltech~and \monument), we use F-score over intra-cluster pairs for comparing it with the baselines~\cite{galhotra2018robust}.} For hierarchical clustering, we compute the pairs of clusters merged in every iteration and compare the average true distance between these clusters.
In addition to the quality of returned solution, we compare the query complexity and running time of the proposed techniques with the baselines described above.

\changes{\noindent \textbf{Noise Estimation.} For \cities, \amazon, \caltech,\ and \monuments\ datasets, we ran a user study on Amazon Mechanical Turk to estimate the noise in oracle answers over a small sample of the dataset, often referred to as the validation set. Using crowd responses, we trained a classifier (random forest~\cite{scikit} obtained the best results) using active learning to act as the quadruplet oracle, and reduce the number of queries to the crowd. Our active learning algorithm~\cite{modal} uses a batch of $20$ queries and we stop it when the classifier accuracy on the validation set does not improve by more than $0.01$~\cite{gokhale2014corleone}. \changes{To efficiently construct a small set of candidates for active learning and pruning low similarity pairs for \texttt{dblp}, we employ token based blocking~\cite{papadakis2016comparative} for the datasets.} For the synthetic oracle, we simulate quadruplet oracle with different values of the noise parameters.
}

\changes{
\subsection{User study}\label{sec:userstudy}
In this section, we evaluate the users ability to answer quadruplet queries and compare it with other types of queries.

\noindent \textbf{Setup}. We ran a user study on Amazon Mechanical Turk platform for four datasets \cities, \amazon, \caltech~and \monument. 
We consider the ground truth distance between record pairs and discretize them into buckets, and assign a pair of records to a bucket if the distance falls within its range. For every pair of buckets, we query a random subset of $\log n$ quadruplet oracle queries (where $n$ is size of dataset). Each query is answered by three different crowd workers and a majority vote is taken as the answer to the query.

}
 \begin{figure}
 \centering
 \includegraphics[width=0.49\columnwidth ]{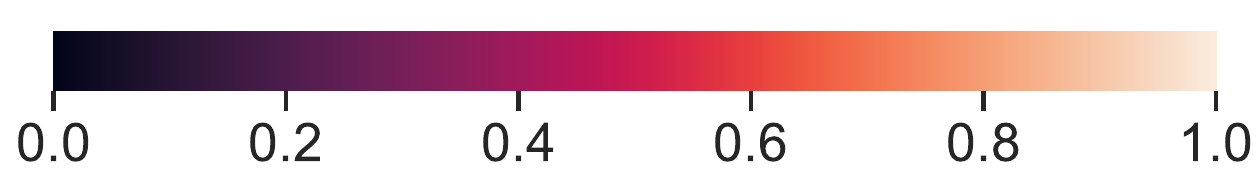}\\
 \vspace{-2mm}
  \subfigure[width=0.46\columnwidth][\texttt{caltech}]{\includegraphics[width=0.46\columnwidth ]{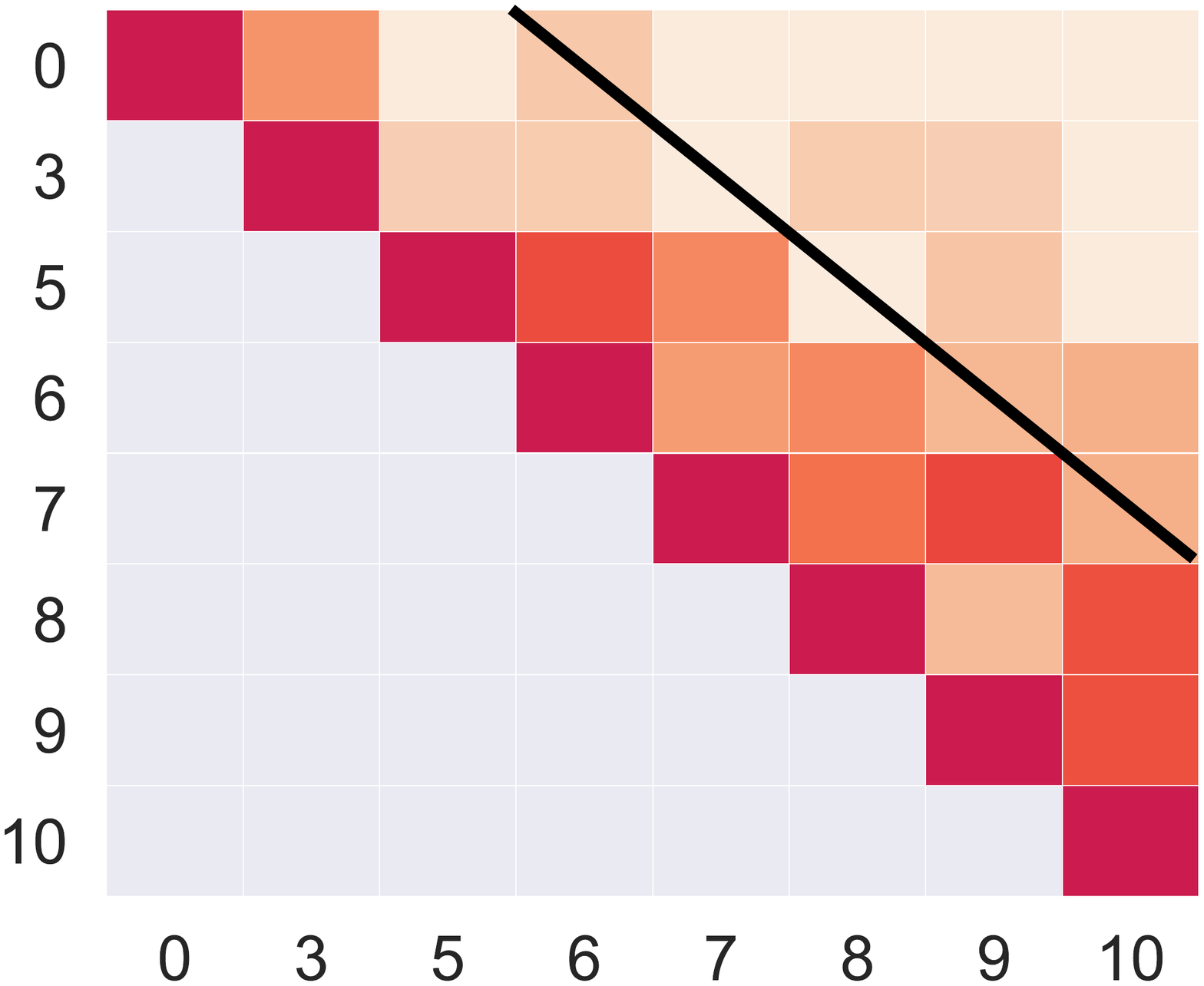}\label{fig:caltechheatmap}}
  ~\subfigure[width=0.46\columnwidth][\texttt{amazon}]{\includegraphics[width=0.46\columnwidth ]{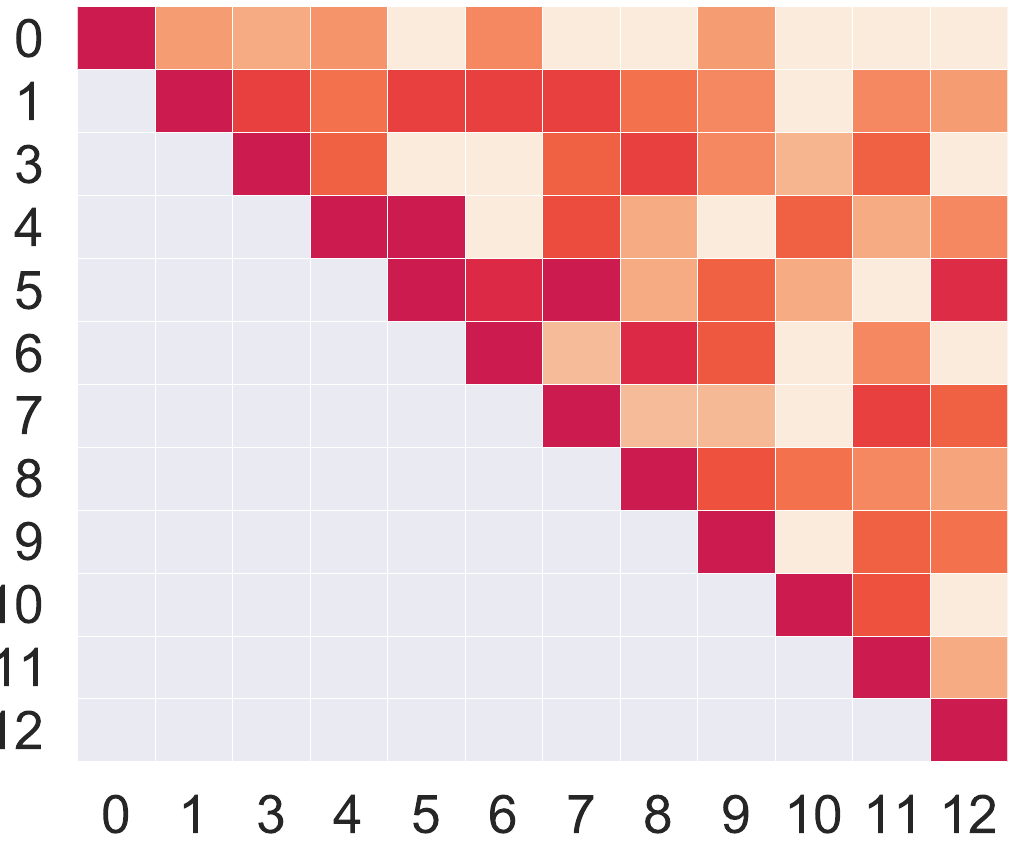}\label{fig:amazonheatmap}}
  \vspace{-3mm}
   \caption{\changes{Accuracy values (denoted by the color of a cell) for different distance ranges observed during our user study. The diagonal entries refer to the quadruplets with similar distance between the corresponding pairs and the distance increases as we go further away from the diagonal.}}\label{fig:heatmap}
 \vspace{-1mm}
 \end{figure}

\changes{
\subsubsection{Qualitative Analysis of Oracle}
In Figure~\ref{fig:heatmap}, for every pair of buckets, using a heat map, we plot the accuracy of answers obtained from the crowd workers for quadruplet queries. For all datasets, average accuracy of quadruplet queries is more than $0.83$ and the accuracy is minimum whenever both pairs of records belong to the same bucket (as low as $0.5$).  However, we observe varied behavior across datasets as the distance between considered pairs increases.

For the \caltech\ dataset, we observe that when the ratio of the distances is more than $1.45$ (indicated by a black line in the Figure~\ref{fig:caltechheatmap})
, there is no noise (or close to zero noise) observed in the query responses. As we observe a sharp decline in noise as the distance between the pairs increases, it suggests that adversarial noise is satisfied for this dataset. We observe a similar pattern for the \cities~and \monuments~ datasets. For the \amazon~dataset, we observe that there is substantial noise across all distance ranges (See Figure~\ref{fig:amazonheatmap}) rather than a sharp decline, suggesting that the probabilistic model is satisfied.

\subsubsection{Comparison with pairwise querying mechanisms} To evaluate the benefit of quadruplet queries, we compare the quality of quadruplet comparison oracle answers with the following pairwise oracle query models. 
(a) Optimal cluster query: This query asks questions of type `do $u$ and $v$ refer to same/similar type?'. (b) Distance query: How similar are the records $x$ and $y$? In this query, the annotator scores the similarity of the pair within $1$ to $10$. 

\noindent We make the following observations. (i) Optimal cluster queries are answered correctly only if the ground truth clusters refer to different entities (each cluster referring to a distinct entity). Crowd workers tend to answer `No' if the pair of records refer to different entities. Therefore, we observe high precision (more than $0.90$) but low recall ($0.50$ on \amazon~and $0.30$ on \caltech~for $k=10$) of the returned labels.
(ii) We observed very high variance in the distance estimation query responses. For all record pairs with identical entities, the users returned distance estimates that were within $20\%$ of the correct distances. In all other cases, we observe the estimates to have errors of upto $50\%$.
\changes{We provide more detailed comparison on the quality of clusters identified by pairwise query responses along with quadruplet queries in the next section.}
\begin{figure}
    \centering
    \subfigure[\changes{Farthest,higher is better}]{\includegraphics[width=0.48\columnwidth]{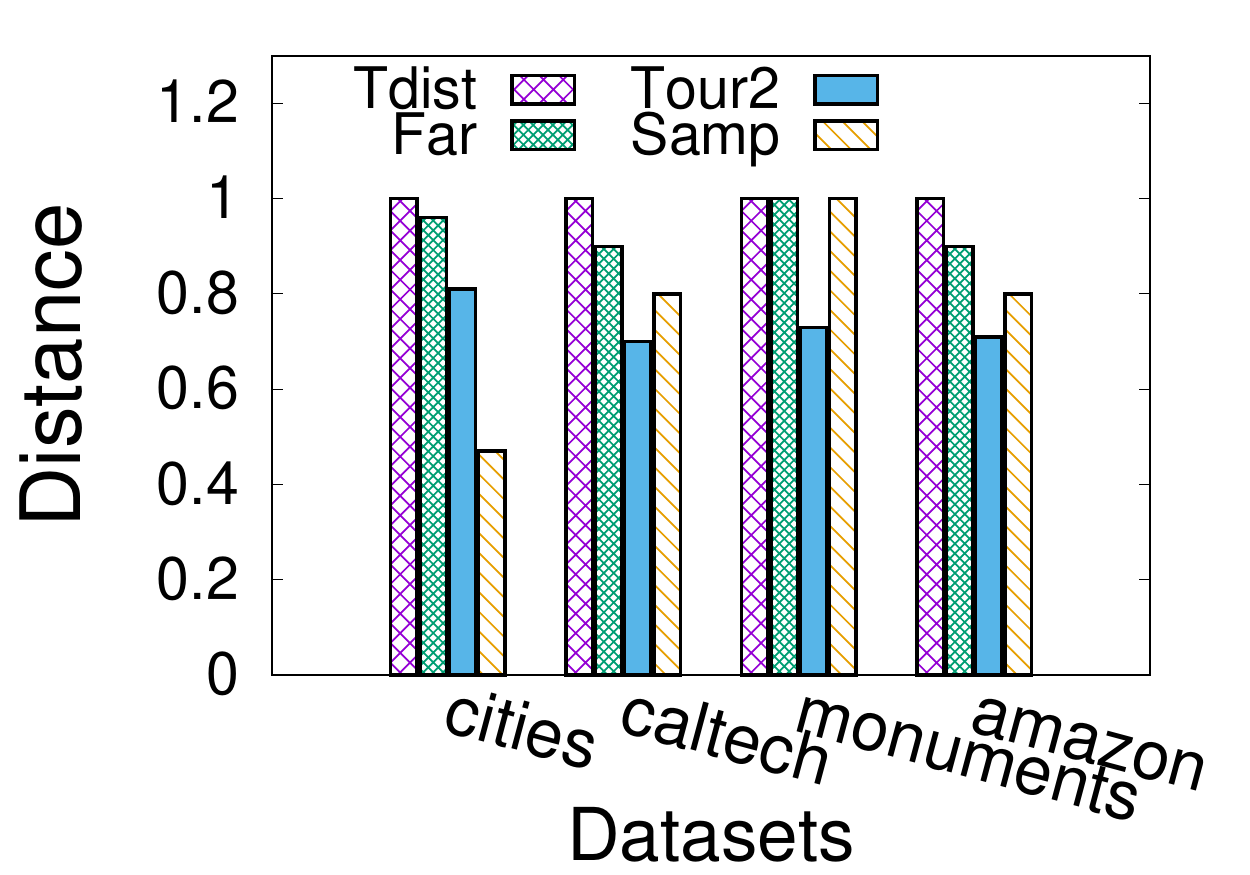}} 
    \subfigure[\changes{Nearest Neighbor (NN), lower is better}]{\includegraphics[width=0.48\columnwidth]{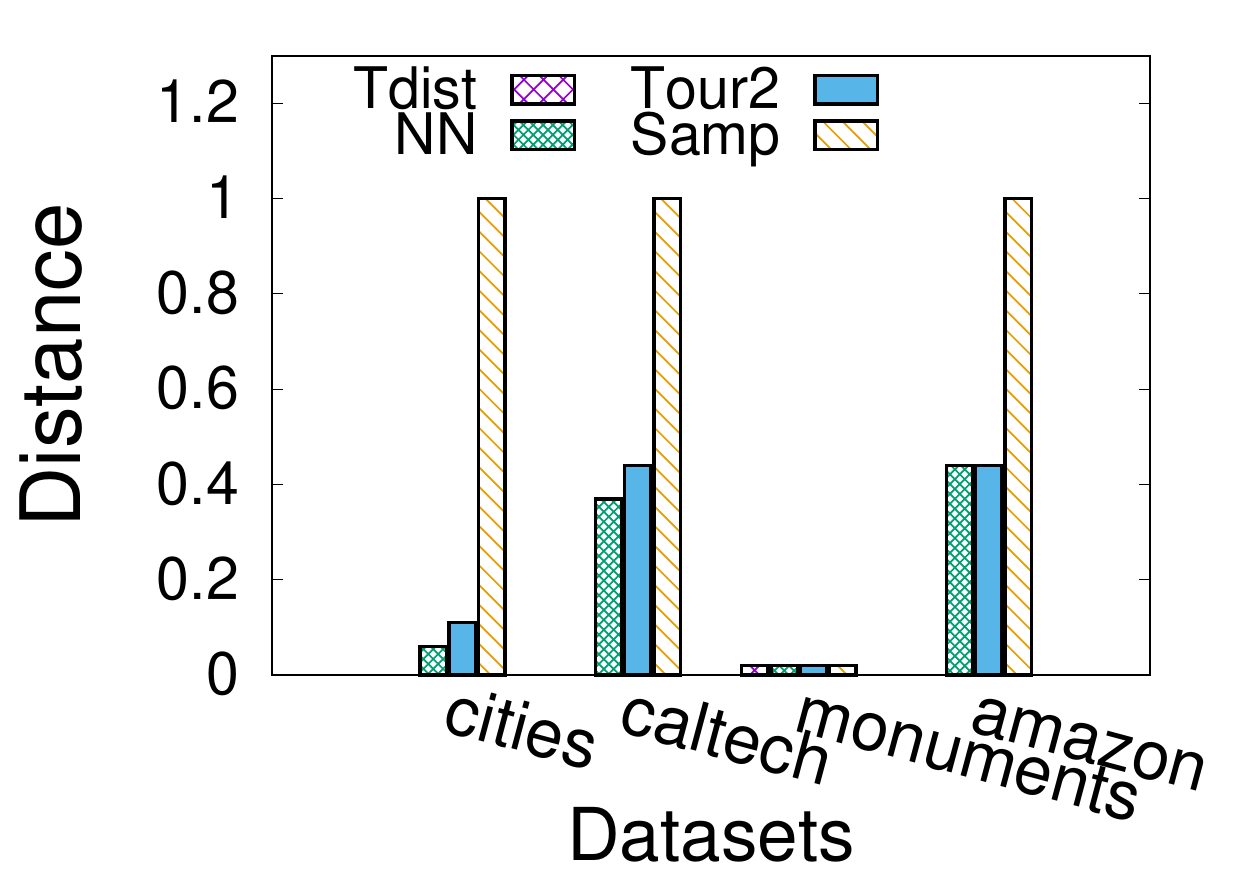}} 
    \vspace*{-4ex}
    \caption{\changes{Comparison of farthest and NN techniques for crowdsourced oracle queries.}}
    \vspace*{-1ex}
    \label{fig:realdata}
\end{figure}

}
\changes{\subsection{Crowd Oracle: Solution Quality \& Query Complexity }
In this section, we compare the quality of our proposed techniques for the datasets on which we performed the  user study. Following the findings of Section~\ref{sec:userstudy}, we use probabilistic model based algorithm for \amazon\ (with $p=0.50$) and  adversarial noise model based algorithm for \caltech, \monument\ and \cities.
\begin{figure*}
\includegraphics[width=\textwidth]{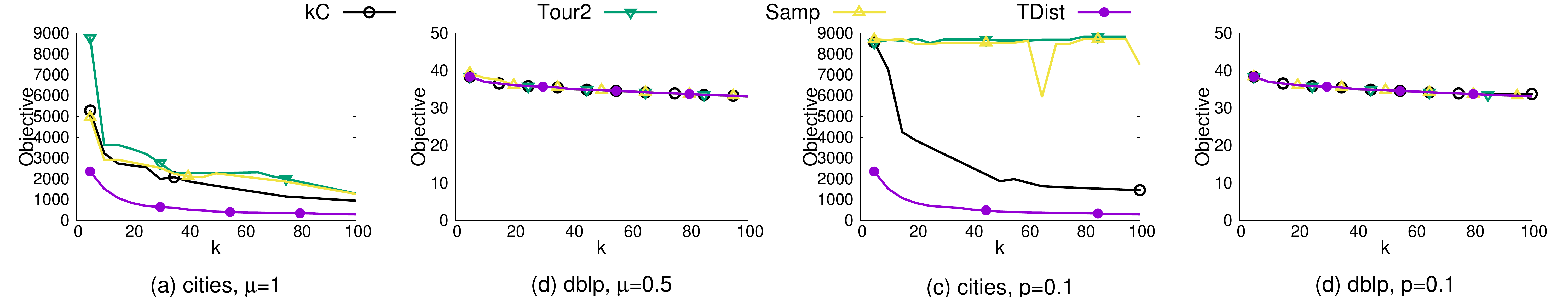}
\vspace*{-5ex}
\caption{$k$-center clustering objective comparison for adversarial and probabilistic noise model. \label{fig:kcenteradv}}
\vspace*{-3ex}
\end{figure*}

\noindent \textbf{Finding Max and Farthest/Nearest Neighbor.} 
\changes{Figure~\ref{fig:realdata} compares the quality of farthest and nearest neighbor (NN) identified by proposed techniques along with other baselines. The values are normalized according to the maximum value to present all datasets on the same scale. Across all datasets, the point identified by \texttt{Far} and \texttt{NN} is closest to the optimal value, \texttt{TDist}. In contrast, the farthest returned by \texttt{Tour2} is better than that of \texttt{Samp} for \cities~dataset but not for \caltech, \monument~and \amazon. 
We found that this difference in quality across datasets is due to varied distance distribution between pairs. The \cities~dataset has a skewed distribution of distance between record pairs, leading to a unique optimal solution to the farthest/NN problem. Due to this reason, the set of records sampled by \texttt{Samp} does not contain any record that is a  good approximation of the optimal farthest. However, ground truth distances between record pairs in \amazon, \monument~and \caltech~are less skewed with more than $\log n$ records satisfying the optimal farthest point for all queries. Therefore, \texttt{Samp} performs better than \texttt{Tour2} on these datasets.} We observe \texttt{Samp} performs worse for NN because our sample does not always contain the closest point.

\noindent \textbf{$k$-center Clustering.}
\changes{We evaluate the F-score\footnote{\changes{Optimal clusters are identified from the original source of the datasets (\amazon~and \caltech) and manually for \monument.}} of the clusters generated by our techniques along with baselines and techniques for pairwise optimal query mechanism (denoted as \texttt{Oq})\footnote{
\changes{We report the results on the sample of queries asked to the crowd as opposed to training a classifier because the classifier generates noisier results and has poorer F-score than the quality of labels generated by crowdsourcing}}. Table~\ref{tab:kcenter} presents the summary of our results for different values of $k$. Across all datasets, our technique achieves more than $0.90$ F-score. On the other hand, \texttt{Tour2} and \texttt{Samp} do not identify the ground truth clusters correctly, leading to low F-score. Similarly, \texttt{Oq} achieves poor recall (and hence low F-score) as it labels many record pairs to belong to separate clusters. For example, a \textit{frog} and a \textit{butterfly} belong to the same optimal cluster for \caltech\ (k=10) but the two records are assigned to different clusters by \texttt{Oq}.
}

\begin{figure}
    \centering
    \subfigure[\changes{Single Linkage}]{\includegraphics[width=0.49\columnwidth]{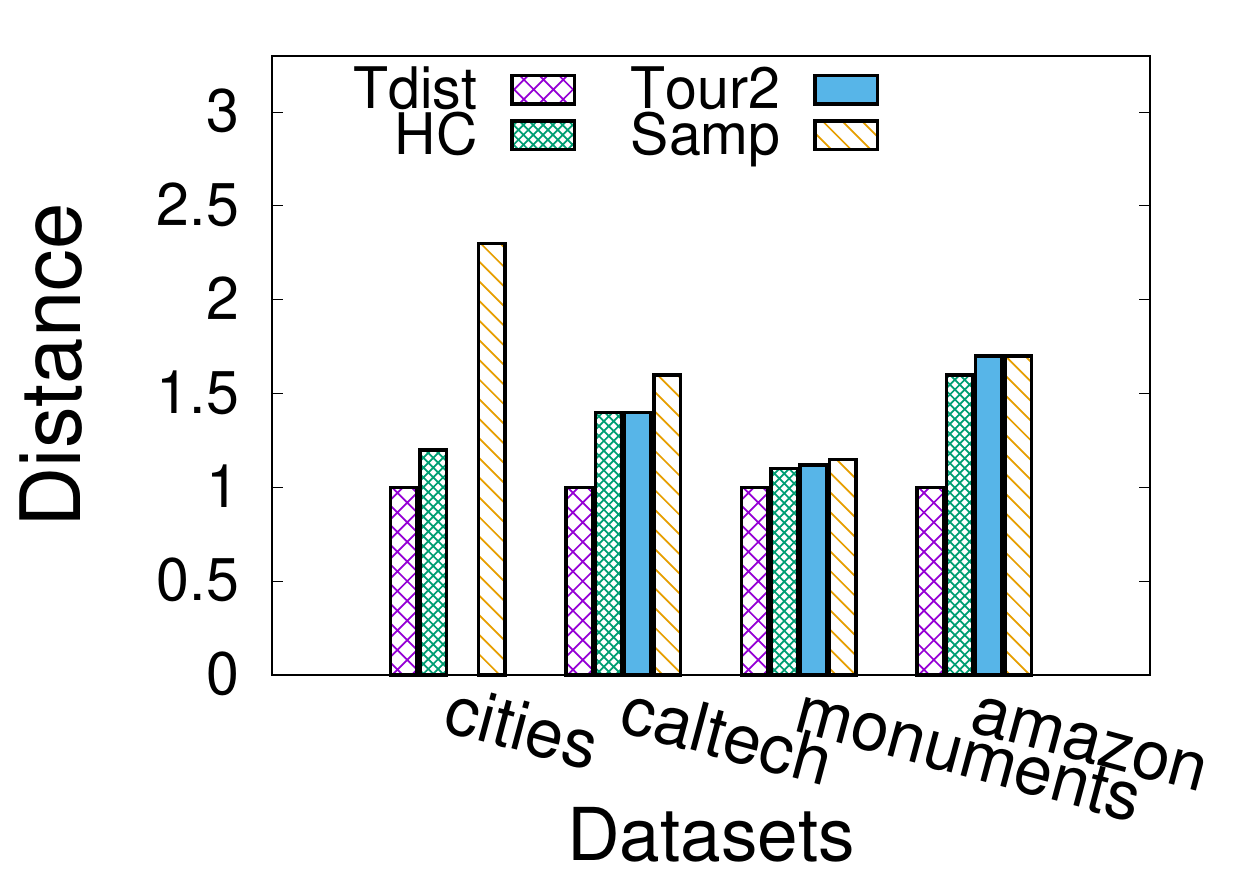}} 
    \subfigure[\changes{Complete Linkage}]{\includegraphics[width=0.49\columnwidth]{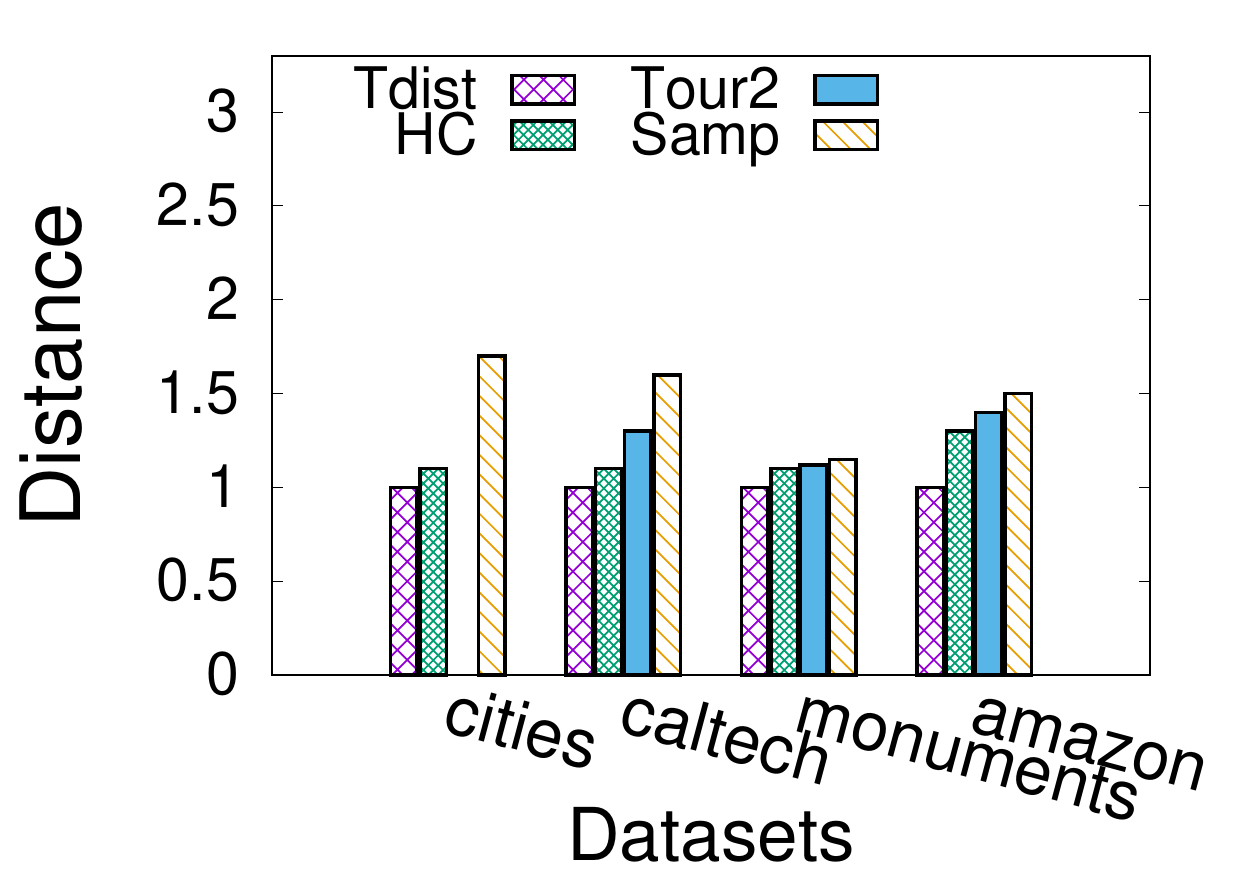}} 
    \vspace*{-4ex}
    \caption{\changes{Comparison of Hierarchical clustering techniques with crowdsourced oracle.}}
    \vspace*{-2ex}
    \label{fig:realhc}
\end{figure}

\noindent \textbf{Hierarchical Clustering.}
Figure~\ref{fig:realhc} compares the average distance of the merged clusters across different iterations of the agglomerative clustering algorithm. \texttt{Tour2} has $O(n^3)$ complexity and does not run for \cities~dataset in less than $48$ hrs. The objective value of different techniques are normalized by the optimal value with \texttt{Tdist} denoting $1$. For all datasets, \texttt{HC}  performs better than \texttt{Samp} and \texttt{Tour2}. Among datasets, the quality of hierarchies generated for \monuments~is similar for all techniques due to low noise.

\begin{table}
\footnotesize
    \centering
    \begin{tabular}{|c|c|c|c|c|}
    \hline
        Technique & \texttt{kC}  & \texttt{Tour2} & \texttt{Samp}&\texttt{
        Oq}* \\
        \hline
    \caltech\ ($k=10$) & $\mathbf{1}$ & $0.88$& $0.91$ & $0.45$\\
    \caltech\ ($k=15$) & $\mathbf{1}$ & $0.89$ & $0.88$ & $0.49$\\
    \caltech\ ($k=20$) & $\mathbf{0.99}$ & $0.93$&$0.87$ & $0.58$ \\
    \hline
    \monument\ ($k=5$) &$\mathbf{1}$ & $0.95$ &$0.97$&0.77\\
    \hline 
   \amazon\ ($k=7$)     & $\mathbf{0.96}$ & $0.74$ & $0.57$&$0.48$ \\
   \amazon\ ($k=14$)     & $\mathbf{0.92}$ & $0.66$ & $0.54$ & $0.72$   \\
    \hline
    \end{tabular}
    \caption{ \changes{F-score comparison of k-center clustering. \texttt{Oq} is marked with $*$ as it was computed on a sample of $150$ pairwise queries to the crowd$\footnotesize^3$. All other techniques were run on the complete dataset using a classifier.}}
    \label{tab:kcenter}
    \vspace{-2mm}
\end{table}
\noindent \textbf{Query Complexity.}
To ensure scalability,  we trained active learning based classifier for all the aforementioned experiments. In total, \amazon, \cities, and \caltech\ required $540$ (cost: $\$32.40$), $220$ (cost: $\$13.20$) and $280$ (cost: $\$16.80$) queries to the crowd respectively. 

\subsection{Simulated Oracle: Solution Quality \& Query Complexity \label{sec:quality}}

In this section, we compare the robustness of the techniques where the query response is simulated synthetically for given $\mu$ and $p$.}

\noindent \textbf{Finding Max and Farthest/Nearest Neighbor.} 
In Figure~\ref{fig:farthestadv}, $\mu=0$ denotes the setting where the oracle answers all queries correctly. In this case, \texttt{Far} and \texttt{Tour2} identify the optimal solution but \texttt{Samp} does not identify the optimal solution for \texttt{cities}. 
In both datasets, \texttt{Far} identifies the correct farthest point for $\mu < 1$. Even with an increase in noise ($\mu$), we observe that the farthest is always at a distance within $4$ times the optimal distance (See Fig~\ref{fig:farthestadv}).
We observe that the quality of farthest identified by \texttt{Tour2} is close to that of \texttt{Far} for smaller $\mu$ because the optimal farthest point $\vmax$ has only a few points in the confusion region $C$ (See Section~\ref{sec:finding_max}) that contains the points that are close to $\vmax$. For e.g., less than $10\%$ are present in $C$ when $\mu=1$ for \texttt{cities} dataset, i.e., less than $10\%$ points return erroneous answer when compared with $\vmax$.

\begin{figure}
    \centering
     \vspace*{-1ex}
      \subfigure[{\cities--Adversarial} \label{fig:farthestadv}]{\includegraphics[width=0.48\columnwidth]{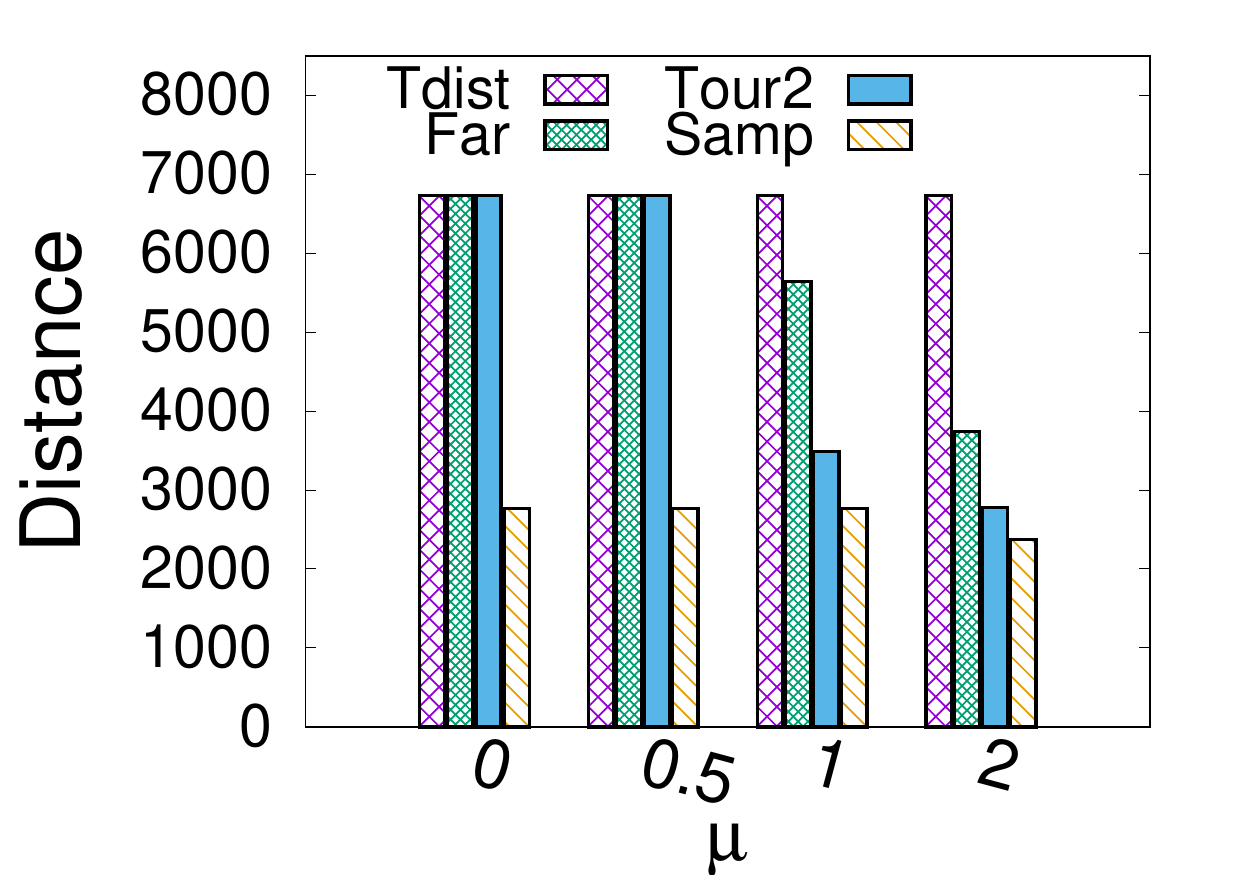}} 
    \subfigure[\texttt{cities}--Probabilistic\label{fig:farthestprob}]{\includegraphics[width=0.48\columnwidth]{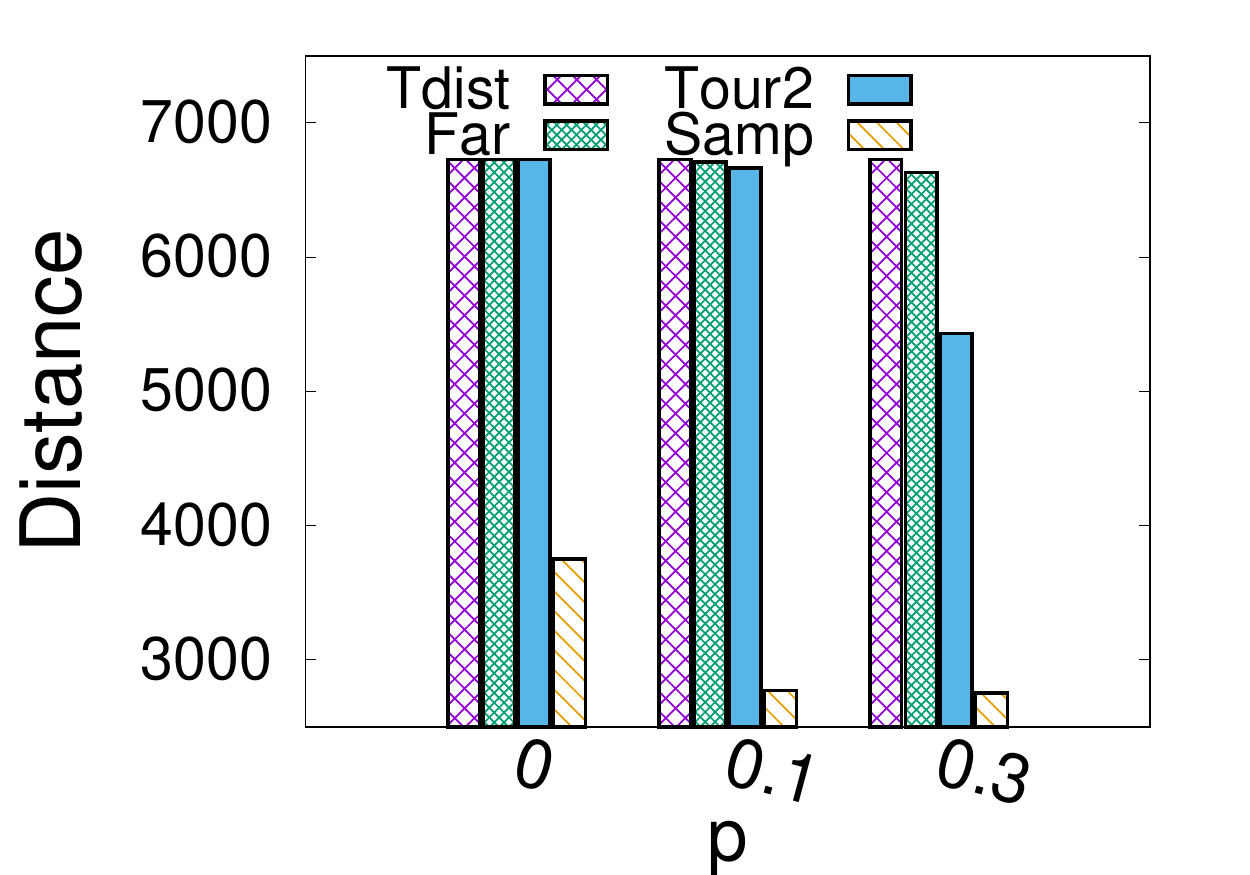}} 
    \vspace*{-4ex}
    \caption{Comparison of farthest identification techniques for adversarial and probabilistic noise models.}
    \vspace*{-1ex}
    
\end{figure}

\noindent In Figure~\ref{fig:farthestprob}, we compare the true distance of the identified farthest points for the case of probabilistic noise with error probability $p$. We observe that \texttt{Far}$_p$ identifies points with distance values very close to the farthest distance \texttt{Tdist}, across all data sets and error values. This shows that \texttt{Far} performs significantly better than the theoretical approximation presented in Section~\ref{sec:finding_max}.
On the other hand, the solution returned by \texttt{Samp} is more than $4\times$ smaller than the value returned by  \texttt{Far}$_p$ for an error probability of $0.3$. \texttt{Tour2} has a similar performance as that of \texttt{Far}$_p$ for $p \leq 0.1$,  but we observe a decline in solution quality for higher noise ($p$) values.

In Figures~\ref{fig:nearestadv},~\ref{fig:nearestprob}, we compare the true distance of the identified nearest neighbor with different baselines. 

\texttt{NN} shows superior performance as compared to \texttt{Tour2} across all error values. This justifies the lack of robustness of \texttt{Tour2} as discussed in Section~\ref{sec:finding_max}.  The solution quality of \texttt{NN} does not worsen with increase in error.   We omit \texttt{Samp} from the plots because the returned points had very poor performance (as bad as 700 even in the absence of error). We observed similar behavior for other datasets. In terms of query complexity, \texttt{NN} requires around $53 \times 10^3$ queries for \texttt{cities} dataset and the number of queries grow linearly with the dataset size. Among baselines, \texttt{Tour2} uses  $37 \times 10^3$ queries and \texttt{Samp} uses $18\times 10^3$.

\begin{figure}
    \centering
    \vspace{-3mm}
    \subfigure[\texttt{cities}--Adversarial\label{fig:nearestadv}]{\includegraphics[width=0.48\columnwidth]{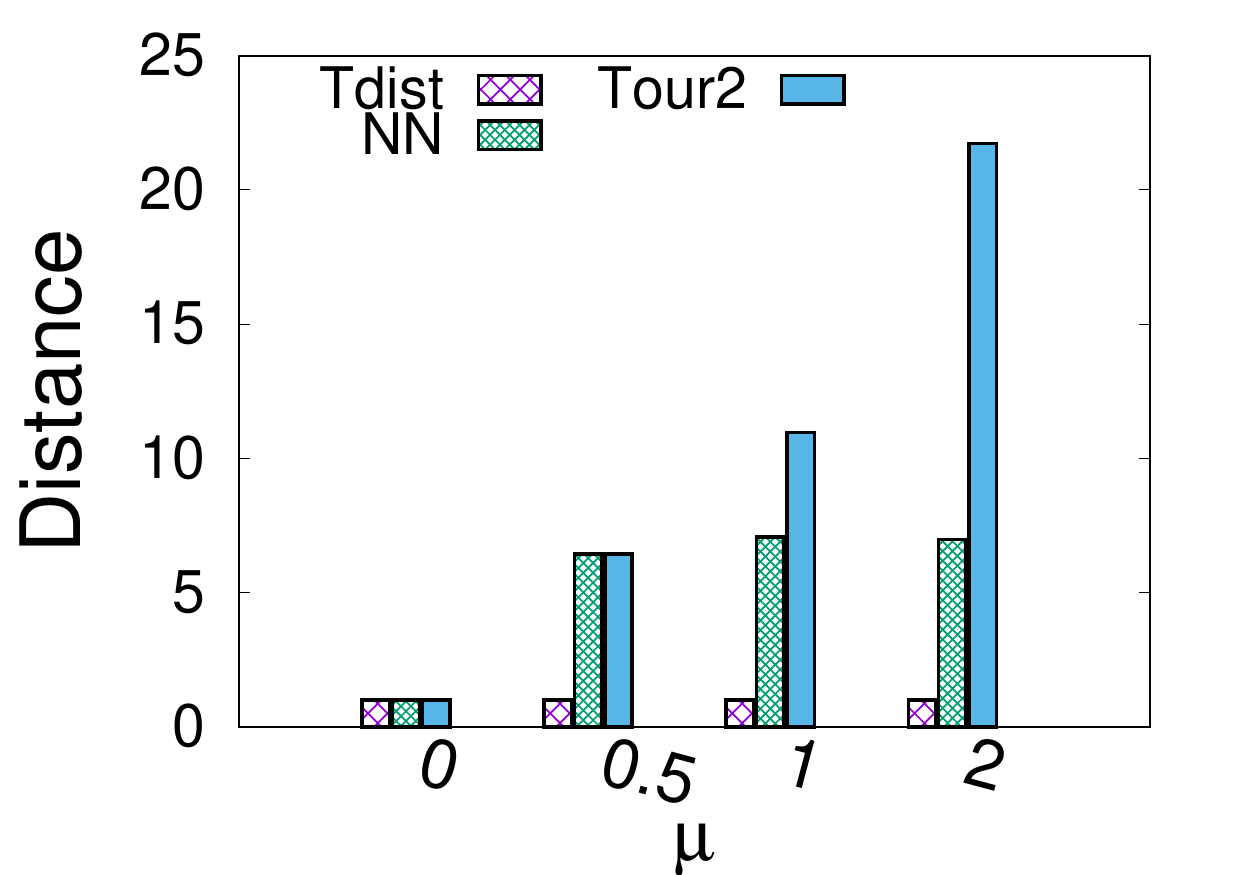}} 
    \subfigure[\texttt{cities}--Probabilistic\label{fig:nearestprob}]{\includegraphics[width=0.48\columnwidth]{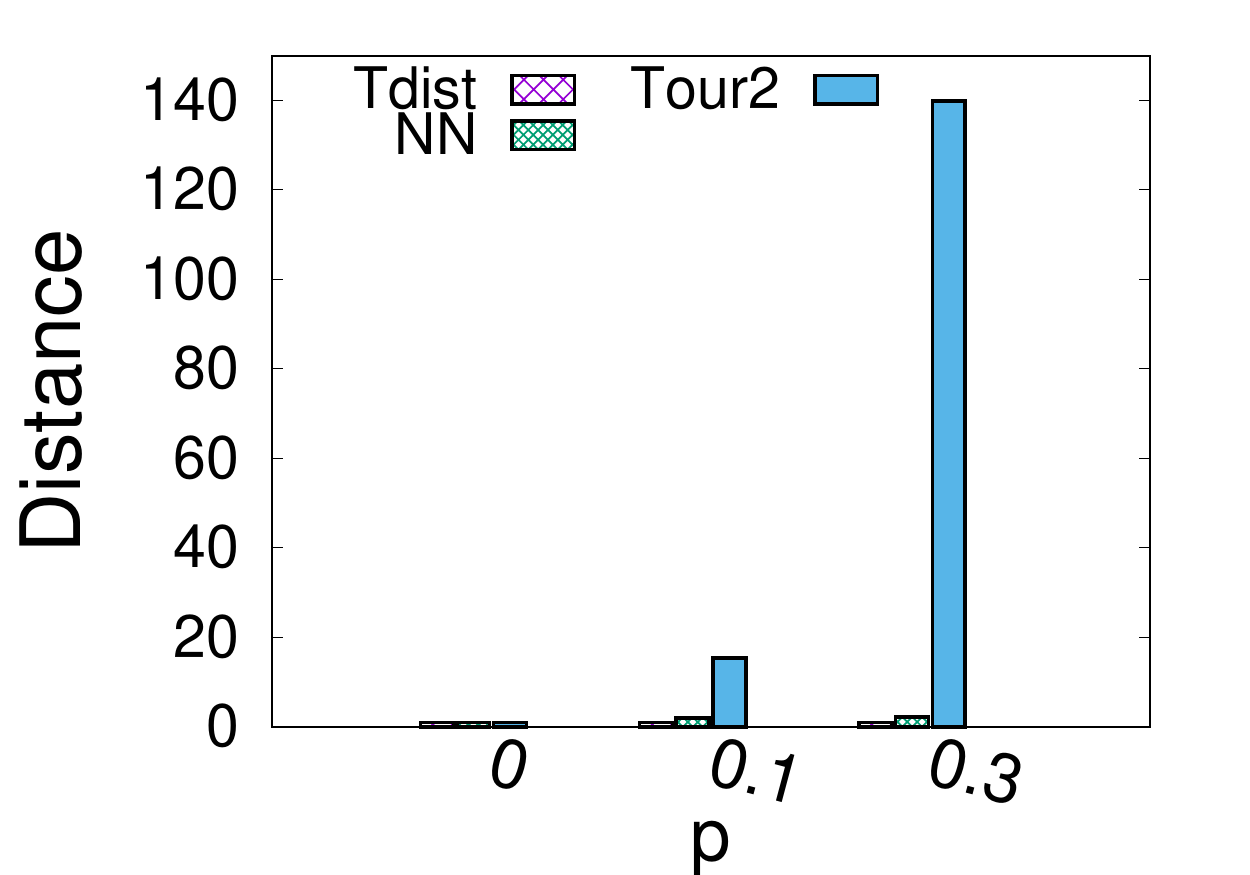}} 
    \vspace*{-4ex}
    \caption{{ Comparison of nearest neighbor techniques for adversarial and probabilistic noise model (lower is better).}}
     \vspace*{-1ex}
    \label{fig:nearest}
\end{figure}

\noindent {\it ``In conclusion, we observe that our techniques achieve the best quality across all data sets and error values, while \texttt{Tour2} performs similar to \texttt{Far} for low error, and its quality degrades with increasing error."}

\noindent \textbf{$k$-center Clustering.}
Figure~\ref{fig:kcenteradv} compares the $k$-center objective of the returned clusters for varying $k$ in the adversarial \changes{and probabilistic} noise model. \texttt{Tdist} denotes the best possible clustering objective, which is guaranteed to be a 2-approximation of the optimal objective. 
The set of clusters returned by \texttt{kC} are consistently very close to \texttt{TDist} across all datasets, validating the theory. For higher values of $k$, \texttt{kC} approaches closer to \texttt{TDist}, thereby improving the approximation guarantees. 
The quality of clusters identified by \texttt{kC} are similar to that of \texttt{Tour2} and \texttt{Far} for adversarial noise (Figure~\ref{fig:kcenteradv}a,b) but considerably better for probabilistic noise (Figure~\ref{fig:kcenteradv}c,d).

\noindent \textbf{Running time.} Table~\ref{tab:efficiency} compares the \revb{running time and the number of required quadruplet comparisons} for various problems under adversarial noise model with $\mu=1$ for the largest \texttt{dblp} dataset. 
\texttt{Far} and \texttt{NN}  requires less than $6$ seconds for both adversarial and probabilistic error models.  Our $k$-center clustering technique requires less than 450 min to identify $50$ centers for \texttt{dblp} dataset across different noise models; the running time grows linearly with $k$. 
While the running time of our algorithms are slightly higher than \texttt{Tour2} for farthest, nearest and $k$-center, \texttt{Tour2} did not finish in $48$ hrs due to $O(n^3)$ running time for single and complete linkage hierarchical clustering.  We observe similar performance for the probabilistic noise model. 
\revb{Note that even though the number of comparisons are in millions, this dataset requires only $740$ queries to the crowd workers to train the classifier.}

\begin{table}
\footnotesize
    \centering
    \begin{tabular}{|c|c|c|c|c|c|c|}
    \hline
 Problem & \multicolumn{2}{c|}{Our Approach} & \multicolumn{2}{c|}{\texttt{Tour2}} & \multicolumn{2}{c|}{\texttt{Samp}}\\\hline
 &Time&\# Comp&Time&\# Comp&Time& \# Comp\\\hline
     Farthest    & 0.1&2.2M& 0.06 &2M & 0.07 & 1M \\\hline
      Nearest   &0.075&2M & 0.07&2M & 0.61& 1M  \\\hline
      kC (k=50)   & 450&120M & 375.3&95M & 477& 105M\\\hline
       Single  Linkage  & 1813&990M & \multicolumn{2}{c|}{DNF} &  1760&940M \\\hline
      Complete Linkage & 1950&940M &\multicolumn{2}{c|}{DNF} & 1940&920M \\\hline
    \end{tabular}
    \caption{\revb{Running time (in minutes) and number of quadruplet comparisons (denoted by \# Comp, in millions) of different techniques for \texttt{dblp} dataset under the adversarial noise model with $\mu=1$. DNF denotes `did not finish'.}}
    \vspace*{-5ex}
    \label{tab:efficiency}
\end{table}   

\section{Conclusion}
In this paper, we show how algorithms for various basic tasks such as finding maximum, nearest neighbor, $k$-center clustering, and agglomerative hierarchical clustering can be designed using distance based comparison oracle in presence of noise. We believe our techniques can be useful for other clustering tasks such as $k$-means and $k$-median, and we leave those as future work.

\bibliographystyle{plain}
\bibliography{references}

\begin{thebibliography}{10}

\bibitem{googlevision}
Google vision api \url{https://cloud.google.com/vision}.

\bibitem{uscities}
United states cities database.
\newblock \url{https://simplemaps.com/data/us-cities}.

\bibitem{ailon2018approximate}
Nir Ailon, Anup Bhattacharya, Ragesh Jaiswal, and Amit Kumar.
\newblock Approximate clustering with same-cluster queries.
\newblock In {\em 9th Innovations in Theoretical Computer Science Conference
  (ITCS 2018)}, volume~94, page~40. Schloss Dagstuhl--Leibniz-Zentrum fuer
  Informatik, 2018.

\bibitem{ajtai2009sorting}
Mikl{\'o}s Ajtai, Vitaly Feldman, Avinatan Hassidim, and Jelani Nelson.
\newblock Sorting and selection with imprecise comparisons.
\newblock In {\em International Colloquium on Automata, Languages, and
  Programming}, pages 37--48. Springer, 2009.

\bibitem{arora2018hd}
Akhil Arora, Sakshi Sinha, Piyush Kumar, and Arnab Bhattacharya.
\newblock Hd-index: pushing the scalability-accuracy boundary for approximate
  knn search in high-dimensional spaces.
\newblock {\em Proceedings of the VLDB Endowment}, 11(8):906--919, 2018.

\bibitem{ashtiani2016clustering}
Hassan Ashtiani, Shrinu Kushagra, and Shai Ben-David.
\newblock Clustering with same-cluster queries.
\newblock In {\em Advances in neural information processing systems}, pages
  3216--3224, 2016.

\bibitem{ben2018}
Shai Ben-David.
\newblock Clustering-what both theoreticians and practitioners are doing wrong.
\newblock In {\em Thirty-Second AAAI Conference on Artificial Intelligence},
  2018.

\bibitem{braverman2016parallel}
Mark Braverman, Jieming Mao, and S~Matthew Weinberg.
\newblock Parallel algorithms for select and partition with noisy comparisons.
\newblock In {\em Proceedings of the forty-eighth annual ACM symposium on
  Theory of Computing}, pages 851--862, 2016.

\bibitem{braverman2008noisy}
Mark Braverman and Elchanan Mossel.
\newblock Noisy sorting without resampling.
\newblock In {\em Proceedings of the nineteenth annual ACM-SIAM symposium on
  Discrete algorithms}, pages 268--276. Society for Industrial and Applied
  Mathematics, 2008.

\bibitem{bressan2019correlation}
Marco Bressan, Nicol{\`o} Cesa-Bianchi, Andrea Paudice, and Fabio Vitale.
\newblock Correlation clustering with adaptive similarity queries.
\newblock In {\em Advances in Neural Information Processing Systems}, pages
  12510--12519, 2019.

\bibitem{chatziafratis2018hierarchical}
Vaggos Chatziafratis, Rad Niazadeh, and Moses Charikar.
\newblock Hierarchical clustering with structural constraints.
\newblock {\em arXiv preprint arXiv:1805.09476}, 2018.

\bibitem{chien2018query}
I~Chien, Chao Pan, and Olgica Milenkovic.
\newblock Query k-means clustering and the double dixie cup problem.
\newblock In {\em Advances in Neural Information Processing Systems}, pages
  6649--6658, 2018.

\bibitem{choudhury2019top}
Tuhinangshu Choudhury, Dhruti Shah, and Nikhil Karamchandani.
\newblock Top-m clustering with a noisy oracle.
\newblock In {\em 2019 National Conference on Communications (NCC)}, pages
  1--6. IEEE, 2019.

\bibitem{ciceri2015crowdsourcing}
Eleonora Ciceri, Piero Fraternali, Davide Martinenghi, and Marco Tagliasacchi.
\newblock Crowdsourcing for top-k query processing over uncertain data.
\newblock {\em IEEE Transactions on Knowledge and Data Engineering},
  28(1):41--53, 2015.

\bibitem{DKMR15}
Susan Davidson, Sanjeev Khanna, Tova Milo, and Sudeepa Roy.
\newblock Top-k and clustering with noisy comparisons.
\newblock {\em ACM Trans. Database Syst.}, 39(4), December 2015.

\bibitem{dushkin2018top}
Eyal Dushkin and Tova Milo.
\newblock Top-k sorting under partial order information.
\newblock In {\em Proceedings of the 2018 International Conference on
  Management of Data}, pages 1007--1019, 2018.

\bibitem{emamjomeh2018adaptive}
Ehsan Emamjomeh-Zadeh and David Kempe.
\newblock Adaptive hierarchical clustering using ordinal queries.
\newblock In {\em Proceedings of the Twenty-Ninth Annual ACM-SIAM Symposium on
  Discrete Algorithms}, pages 415--429. SIAM, 2018.

\bibitem{feige1994computing}
Uriel Feige, Prabhakar Raghavan, David Peleg, and Eli Upfal.
\newblock Computing with noisy information.
\newblock {\em SIAM Journal on Computing}, 23(5):1001--1018, 1994.

\bibitem{firmanier}
Donatella Firmani, Barna Saha, and Divesh Srivastava.
\newblock Online entity resolution using an oracle.
\newblock {\em PVLDB}, 9(5):384–395, 2016.

\bibitem{galhotra2018robust}
Sainyam Galhotra, Donatella Firmani, Barna Saha, and Divesh Srivastava.
\newblock Robust entity resolution using random graphs.
\newblock In {\em Proceedings of the 2018 International Conference on
  Management of Data}, pages 3--18, 2018.

\bibitem{geissmann2017sorting}
Barbara Geissmann, Stefano Leucci, Chih-Hung Liu, and Paolo Penna.
\newblock Sorting with recurrent comparison errors.
\newblock In {\em 28th International Symposium on Algorithms and Computation
  (ISAAC 2017)}. Schloss Dagstuhl-Leibniz-Zentrum fuer Informatik, 2017.

\bibitem{geissmann2018optimal}
Barbara Geissmann, Stefano Leucci, Chih-Hung Liu, and Paolo Penna.
\newblock Optimal sorting with persistent comparison errors.
\newblock In {\em 27th Annual European Symposium on Algorithms (ESA 2019)},
  volume 144, page~49. Schloss Dagstuhl-Leibniz-Zentrum f{\"u}r Informatik,
  2019.

\bibitem{geissmann2020optimal}
Barbara Geissmann, Stefano Leucci, Chih-Hung Liu, and Paolo Penna.
\newblock Optimal dislocation with persistent errors in subquadratic time.
\newblock {\em Theory of Computing Systems}, 64(3):508--521, 2020.

\bibitem{ghoshdastidar2019foundations}
Debarghya Ghoshdastidar, Micha{\"e}l Perrot, and Ulrike von Luxburg.
\newblock Foundations of comparison-based hierarchical clustering.
\newblock In {\em Advances in Neural Information Processing Systems}, pages
  7454--7464, 2019.

\bibitem{girdhar2012efficient}
Yogesh Girdhar and Gregory Dudek.
\newblock Efficient on-line data summarization using extremum summaries.
\newblock In {\em 2012 IEEE International Conference on Robotics and
  Automation}, pages 3490--3496. IEEE, 2012.

\bibitem{gokhale2014corleone}
Chaitanya Gokhale, Sanjib Das, AnHai Doan, Jeffrey~F Naughton, Narasimhan
  Rampalli, Jude Shavlik, and Xiaojin Zhu.
\newblock Corleone: Hands-off crowdsourcing for entity matching.
\newblock In {\em Proceedings of the 2014 ACM SIGMOD international conference
  on Management of data}, pages 601--612, 2014.

\bibitem{gonzalez1985clustering}
Teofilo~F Gonzalez.
\newblock Clustering to minimize the maximum intercluster distance.
\newblock {\em Theoretical Computer Science}, 38:293--306, 1985.

\bibitem{kasper}
Kasper Green~Larsen, Michael Mitzenmacher, and Charalampos Tsourakakis.
\newblock Clustering with a faulty oracle.
\newblock In {\em Proceedings of The Web Conference 2020}, WWW ’20, page
  2831–2834, New York, NY, USA, 2020. Association for Computing Machinery.

\bibitem{griffin2007caltech}
Gregory Griffin, Alex Holub, and Pietro Perona.
\newblock Caltech-256 object category dataset.
\newblock 2007.

\bibitem{guo2012so}
Stephen Guo, Aditya Parameswaran, and Hector Garcia-Molina.
\newblock So who won? dynamic max discovery with the crowd.
\newblock In {\em Proceedings of the 2012 ACM SIGMOD International Conference
  on Management of Data}, pages 385--396, 2012.

\bibitem{he2016ups}
Ruining He and Julian McAuley.
\newblock Ups and downs: Modeling the visual evolution of fashion trends with
  one-class collaborative filtering.
\newblock In {\em proceedings of the 25th international conference on world
  wide web}, pages 507--517, 2016.

\bibitem{hopkins2020noise}
Max Hopkins, Daniel Kane, Shachar Lovett, and Gaurav Mahajan.
\newblock Noise-tolerant, reliable active classification with comparison
  queries.
\newblock {\em arXiv preprint arXiv:2001.05497}, 2020.

\bibitem{huleihel2019same}
Wasim Huleihel, Arya Mazumdar, Muriel M{\'e}dard, and Soumyabrata Pal.
\newblock Same-cluster querying for overlapping clusters.
\newblock In {\em Advances in Neural Information Processing Systems}, pages
  10485--10495, 2019.

\bibitem{ilvento2019metric}
Christina Ilvento.
\newblock Metric learning for individual fairness.
\newblock {\em arXiv preprint arXiv:1906.00250}, 2019.

\bibitem{kazemi2018comparison}
Ehsan Kazemi, Lin Chen, Sanjoy Dasgupta, and Amin Karbasi.
\newblock Comparison based learning from weak oracles.
\newblock {\em arXiv preprint arXiv:1802.06942}, 2018.

\bibitem{kim2017relaxed}
Taewan Kim and Joydeep Ghosh.
\newblock Relaxed oracles for semi-supervised clustering.
\newblock {\em arXiv preprint arXiv:1711.07433}, 2017.

\bibitem{kim2017semi}
Taewan Kim and Joydeep Ghosh.
\newblock Semi-supervised active clustering with weak oracles.
\newblock {\em arXiv preprint arXiv:1709.03202}, 2017.

\bibitem{klein2011tolerant}
Rolf Klein, Rainer Penninger, Christian Sohler, and David~P Woodruff.
\newblock Tolerant algorithms.
\newblock In {\em European Symposium on Algorithms}, pages 736--747. Springer,
  2011.

\bibitem{kleindessner2019fair}
Matth{\"a}us Kleindessner, Pranjal Awasthi, and Jamie Morgenstern.
\newblock Fair k-center clustering for data summarization.
\newblock In {\em International Conference on Machine Learning}, pages
  3448--3457, 2019.

\bibitem{kou2017crowdsourced}
Ngai~Meng Kou, Yan Li, Hao Wang, Leong~Hou U, and Zhiguo Gong.
\newblock Crowdsourced top-k queries by confidence-aware pairwise judgments.
\newblock In {\em Proceedings of the 2017 ACM International Conference on
  Management of Data}, pages 1415--1430, 2017.

\bibitem{mason2019learning}
Blake Mason, Ardhendu Tripathy, and Robert Nowak.
\newblock Learning nearest neighbor graphs from noisy distance samples.
\newblock In {\em Advances in Neural Information Processing Systems}, pages
  9586--9596, 2019.

\bibitem{mazumdar2017clustering}
Arya Mazumdar and Barna Saha.
\newblock Clustering with noisy queries.
\newblock In {\em Advances in Neural Information Processing Systems}, pages
  5788--5799, 2017.

\bibitem{mazumdar2017query}
Arya Mazumdar and Barna Saha.
\newblock Query complexity of clustering with side information.
\newblock In {\em Advances in Neural Information Processing Systems}, pages
  4682--4693, 2017.

\bibitem{papadakis2016comparative}
George Papadakis, Jonathan Svirsky, Avigdor Gal, and Themis Palpanas.
\newblock Comparative analysis of approximate blocking techniques for entity
  resolution.
\newblock {\em Proceedings of the VLDB Endowment}, 9(9):684--695, 2016.

\bibitem{polychronopoulos2013human}
Vassilis Polychronopoulos, Luca De~Alfaro, James Davis, Hector Garcia-Molina,
  and Neoklis Polyzotis.
\newblock Human-powered top-k lists.
\newblock In {\em WebDB}, pages 25--30, 2013.

\bibitem{prelec2017solution}
Dra{\v{z}}en Prelec, H~Sebastian Seung, and John McCoy.
\newblock A solution to the single-question crowd wisdom problem.
\newblock {\em Nature}, 541(7638):532--535, 2017.

\bibitem{sibson1973slink}
Robin Sibson.
\newblock Slink: an optimally efficient algorithm for the single-link cluster
  method.
\newblock {\em The computer journal}, 16(1):30--34, 1973.

\bibitem{tamuz2011adaptively}
Omer Tamuz, Ce~Liu, Serge Belongie, Ohad Shamir, and Adam~Tauman Kalai.
\newblock Adaptively learning the crowd kernel.
\newblock In {\em Proceedings of the 28th International Conference on
  International Conference on Machine Learning}, pages 673--680, 2011.

\bibitem{ukkonen2017crowdsourced}
Antti Ukkonen.
\newblock Crowdsourced correlation clustering with relative distance
  comparisons.
\newblock In {\em 2017 IEEE International Conference on Data Mining (ICDM)},
  pages 1117--1122. IEEE, 2017.

\bibitem{modal}
\url{https://modal-python.readthedocs.io/en/latest/}.
\newblock modal library.

\bibitem{scikit}
\url{https://scikit-learn.org/stable/}.
\newblock Scikit-learn.

\bibitem{vazirani2013approximation}
Vijay~V Vazirani.
\newblock {\em Approximation algorithms}.
\newblock Springer Science \& Business Media, 2013.

\bibitem{venetis2012max}
Petros Venetis, Hector Garcia-Molina, Kerui Huang, and Neoklis Polyzotis.
\newblock Max algorithms in crowdsourcing environments.
\newblock In {\em Proceedings of the 21st international conference on World
  Wide Web}, pages 989--998, 2012.

\bibitem{verdugo2020skyline}
Victor Verdugo.
\newblock Skyline computation with noisy comparisons.
\newblock In {\em Combinatorial Algorithms: 31st International Workshop, IWOCA
  2020, Bordeaux, France, June 8--10, 2020, Proceedings}, page 289. Springer.

\bibitem{verroios2015entity}
Vasilis Verroios and Hector Garcia-Molina.
\newblock Entity resolution with crowd errors.
\newblock In {\em 2015 IEEE 31st International Conference on Data Engineering},
  pages 219--230. IEEE, 2015.

\bibitem{vesdapunt2014crowdsourcing}
Norases Vesdapunt, Kedar Bellare, and Nilesh Dalvi.
\newblock Crowdsourcing algorithms for entity resolution.
\newblock {\em Proceedings of the VLDB Endowment}, 7(12):1071--1082, 2014.

\bibitem{vinayak2016crowdsourced}
Ramya~Korlakai Vinayak and Babak Hassibi.
\newblock Crowdsourced clustering: Querying edges vs triangles.
\newblock In {\em Advances in Neural Information Processing Systems}, pages
  1316--1324, 2016.

\bibitem{wang2012crowder}
Jiannan Wang, Tim Kraska, Michael~J Franklin, and Jianhua Feng.
\newblock Crowder: Crowdsourcing entity resolution.
\newblock {\em Proceedings of the VLDB Endowment}, 5(11), 2012.

\bibitem{williamson2011design}
David~P Williamson and David~B Shmoys.
\newblock {\em The design of approximation algorithms}.
\newblock Cambridge university press, 2011.

\bibitem{zhang2018taxogen}
Chao Zhang, Fangbo Tao, Xiusi Chen, Jiaming Shen, Meng Jiang, Brian Sadler,
  Michelle Vanni, and Jiawei Han.
\newblock Taxogen: Unsupervised topic taxonomy construction by adaptive term
  embedding and clustering.
\newblock In {\em Proceedings of the 24th ACM SIGKDD International Conference
  on Knowledge Discovery \& Data Mining}, pages 2701--2709, 2018.

\end{thebibliography}


\newpage
\onecolumn

\section{Finding Maximum}\label{app:max}
\begin{lemma}(Hoeffding's Inequality) If $X_1, X_2, \cdots, X_n$ are independent random variables with $a_i \leq X_i \leq b_i$ for all $i \in [n]$, then 
\[ \Pr\left[ \left| \sum_{i} X_i - \E[X_i] \right| \geq n\epsilon \right] \leq 2\exp\left( -\frac{2n^2\epsilon^2}{\sum_{i} (b_i - a_i)^2} \right)\]
\label{hoeffding}
\end{lemma}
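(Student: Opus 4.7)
The plan is to use the classical Chernoff--Cram\'er method: exponentiate, apply Markov's inequality, exploit independence to factor the moment generating function, bound each factor via Hoeffding's lemma for bounded random variables, and finally optimize over the free parameter. Let $S = \sum_i X_i$ and $\mu_i = \E[X_i]$. The first step is to handle the two tails separately; by symmetry it suffices to bound $\Pr[S - \E S \geq n\epsilon]$ and then double the bound.

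For the upper tail, I would fix a parameter $t > 0$ and write
\[ \Pr\!\left[ S - \E S \geq n\epsilon \right] = \Pr\!\left[ e^{t(S - \E S)} \geq e^{tn\epsilon} \right] \leq e^{-tn\epsilon}\, \E\!\left[ e^{t(S-\E S)} \right], \]
invoking Markov's inequality. Independence of the $X_i$ then factors the expectation as $\prod_i \E[e^{t(X_i - \mu_i)}]$. The heart of the argument is Hoeffding's lemma, which I would prove as an auxiliary step: for a centered random variable $Y_i := X_i - \mu_i$ taking values in $[a_i - \mu_i,\, b_i - \mu_i]$ one has $\E[e^{tY_i}] \leq \exp(t^2(b_i - a_i)^2/8)$. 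This is the main technical obstacle, and I would establish it by convexity of $y \mapsto e^{ty}$ (writing $y$ as a convex combination of the interval endpoints), which yields $\E[e^{tY_i}] \leq e^{t u_i}(p_i + (1-p_i) e^{-t(b_i - a_i)})$ for suitable $p_i$, and then applying a Taylor expansion argument to the logarithm of the right-hand side to extract the quadratic bound $t^2(b_i-a_i)^2/8$.

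Assembling these pieces gives
\[ \Pr[S - \E S \geq n\epsilon] \;\leq\; \exp\!\left( -tn\epsilon + \tfrac{t^2}{8}\sum_i (b_i - a_i)^2 \right). \]
Optimizing over $t > 0$ by choosing $t = 4n\epsilon / \sum_i (b_i - a_i)^2$ yields the exponent $-2n^2\epsilon^2/\sum_i (b_i-a_i)^2$. Repeating the argument with $-X_i$ in place of $X_i$ gives the matching lower tail, and a union bound over the two tails produces the factor of $2$ in the statement. The only real obstacle is Hoeffding's lemma itself; the rest is bookkeeping around Markov's inequality and the optimization in $t$.
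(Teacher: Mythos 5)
Your proof is correct: it is the canonical Chernoff--Cram\'er argument (Markov's inequality applied to $e^{t(S-\E S)}$, factorization by independence, Hoeffding's lemma $\E[e^{tY_i}]\leq \exp\bigl(t^2(b_i-a_i)^2/8\bigr)$ for each centered bounded summand, and optimization at $t = 4n\epsilon/\sum_i(b_i-a_i)^2$, which indeed yields the exponent $-2n^2\epsilon^2/\sum_i(b_i-a_i)^2$, with the factor $2$ from the union bound over the two tails). The paper states this lemma as a classical black-box tool and gives no proof of its own, so there is nothing to diverge from; your argument is the standard textbook derivation and is sound.
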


\subsection{Adversarial Noise}

Let the maximum value among $V$ be denoted by $\vmax$ and the set of records for which the oracle answer  \textit{can} be incorrect is given by 
$$C =\{ u \mid u\in V, u\geq \frac{\vmax}{1+\mu} \}$$ 

\begin{claim}\label{cl:partition_q}
For any partition $V_i$, $\textsc{Tournament}(V_i)$ uses at most $2|V_i|$ oracle queries.
\end{claim}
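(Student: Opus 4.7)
The plan is to establish this by a direct counting argument: bound the number of oracle calls issued at each internal node of the tournament tree, and multiply by the number of internal nodes. I would first make explicit that, since the claim concerns $\textsc{Tournament}-\textsc{Partition}$ (Algorithm~\ref{alg:max_noise}), the relevant invocation is $\textsc{Tournament}(V_i, 2)$, i.e.\ with branching degree $\lambda = 2$. Thus $\mathcal{T}$ is a balanced binary tree with $|V_i|$ leaves, which has exactly $|V_i| - 1$ internal nodes.

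Next, I would analyze the per-node query cost. At each internal node $w$, the algorithm runs $\textsc{Count-Max}(U)$ where $U$ is the set of $\lambda = 2$ children of $w$. By the definition of $\Count$ in Algorithm~\ref{alg:maxCount}, for each $v \in U$ we compute $\Count(v, U) = \sum_{x \in U \setminus \{v\}} \mathbf{1}\{\oracle(v,x) = \No\}$, which for $|U| = 2$ amounts to a single oracle query per child. Therefore each internal node accounts for at most $2$ oracle queries (and in fact a single query suffices, since the two outcomes are complementary, but the bound of $2$ is already enough).

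Summing over all internal nodes gives a total of at most $2(|V_i| - 1) \le 2|V_i|$ oracle queries, which is exactly the claimed bound. I do not anticipate any genuine obstacle here; the only care needed is to confirm the branching factor $\lambda = 2$ from the calling context in Algorithm~\ref{alg:max_noise} and to observe that the standard identity ``balanced binary tree with $n$ leaves has $n - 1$ internal nodes'' applies after rounding (if $|V_i|$ is not a power of $2$, a balanced tree still has at most $|V_i| - 1$ internal nodes, so the bound is preserved).
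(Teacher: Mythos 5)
Your proof is correct and follows essentially the same route as the paper: both are direct counting arguments over the $\lambda=2$ tournament tree, the paper summing the comparisons level by level via a geometric series ($\sum_i |V_i|/2^{i+1}$) while you count the at most $|V_i|-1$ internal nodes at $\le 2$ queries each. Either accounting yields the claimed $2|V_i|$ bound, so no gap.
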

\begin{proof}
Consider the $i$th round in $\textsc{Tournament}$. We can observe that the number of remaining values is at most $\frac{|V_i|}{2^i}$. So, we make $\frac{|V_i|}{2^{i+1}}$ many oracle queries in this round. Total number of oracle queries made is 
$$
\sum_{i=0}^{\log n} \frac{|V_i|}{2^{i+1}} \leq 2|V_i|
$$ 
\end{proof}

\begin{lemma}\label{lem:count_appendix}
Given a set of values $S$, $\textsc{Count-Max}(S)$ returns a $(1+\mu)^2$ approximation of maximum value of $S$ using $O(|S|^2)$ oracle queries.
\end{lemma}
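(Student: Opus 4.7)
The plan is to prove the approximation bound by a direct contradiction argument using a carefully defined ``confusion set,'' and to handle the query complexity by a simple double-counting over pairs.

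Set $\vmax = \max_{v \in S} v$, let $\umax$ be the value returned by $\textsc{Count-Max}(S)$, and define the confusion set
\[ C = \{u \in S : u \geq \vmax/(1+\mu)\}. \]
First I would observe that if $\umax \in C$ we are already done, since then $\umax \geq \vmax/(1+\mu) \geq \vmax/(1+\mu)^2$. So the real work is to rule out the case $\umax \notin C$ combined with $\umax < \vmax/(1+\mu)^2$, which I would show by contradiction.

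The key step is a pair of tight bounds on $\Count$ values. For the lower bound on $\Count(\vmax, S)$: for every $x \in S \setminus C$ we have $x < \vmax/(1+\mu)$, so by the adversarial model the oracle must correctly return $\No$ on the query $\oracle(\vmax, x)$; therefore $\Count(\vmax, S) \geq |S| - |C|$ (note $\vmax \in C$ so it does not contribute to the sum itself). For the upper bound on $\Count(\umax, S)$ under the assumption $\umax < \vmax/(1+\mu)^2$: for every $x \in C$ we then have $x \geq \vmax/(1+\mu) > (1+\mu)\umax$, so the ratio $x/\umax$ exceeds $1+\mu$ and the oracle is forced to answer $\Yes$ on $\oracle(\umax, x)$, contributing $0$ to the count. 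Elements of $S \setminus C$ (other than $\umax$ itself, which is in $S\setminus C$ since $\umax \notin C$) contribute at most $1$ each, giving $\Count(\umax, S) \leq |S| - |C| - 1$. Since the algorithm selects $\umax$ as the argmax of the $\Count$ scores, $\Count(\umax, S) \geq \Count(\vmax, S)$, yielding the contradiction $|S|-|C|-1 \geq |S|-|C|$.

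The main subtlety to handle carefully is the bookkeeping about which elements land in which set and what the oracle is forced to say versus what it may say adversarially, especially the strict-versus-nonstrict inequalities defining $C$ and the adversarial regime; once those boundaries are nailed down the contradiction is immediate. The query complexity is the easy part: for each pair $\{v, x\} \subseteq S$ a single oracle call determines $\mathbf{1}\{\oracle(v,x)=\No\}$ (and equivalently $\mathbf{1}\{\oracle(x,v)=\Yes\}$), so all $\binom{|S|}{2}$ pairwise queries suffice to compute every $\Count(v,S)$, giving the claimed $O(|S|^2)$ bound.
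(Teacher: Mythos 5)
Your proof is correct and follows essentially the same route as the paper's: it compares $\Count(\vmax,S)$ against $\Count$ of any candidate below $\vmax/(1+\mu)^2$, using the fact that the oracle's answers are forced outside the $(1+\mu)$ band, and concludes via the argmax property; your confusion-set cardinality bookkeeping is just a repackaging of the paper's indicator-sum comparison (and handles the boundary cases a bit more cleanly). The query-complexity argument matches the paper's as well.
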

\begin{proof}
Let $\vmax = \max\{ x \in S \} $.
Consider a value $w \in S$ such that $w < \frac{\vmax}{(1+\mu)^2}$. We compare the \Count~ values for $\vmax$ and $w$ given by, $ \Count(\vmax, S) = \sum_{x \in S} 1\{ \oracle(\vmax, x) == \No{} \}$
and $\Count(w, S) = \sum_{x \in S} 1\{ \oracle(w, x) == \No{} \}$. We argue that $w$ can never be returned by Algorithm \ref{alg:maxCount}, i.e., $\Count(w, S) < \Count(\vmax, S)$.

 \begin{align*}
\Count(\vmax, S) = \sum_{x \in S} 1\{ \oracle(\vmax, x) == \No{} \}  &\geq \sum_{x\in S \setminus \{ \vmax \} } 1 \{  x< \vmax/(1+\mu) \} \\
&= {1}\{ \oracle(\vmax, w) == \No\} \ \ +  \sum_{x\in S\setminus \{\vmax,w\} } 1 \{ x< \vmax/(1+\mu) \} \\
&= 1 + \sum_{x\in S\setminus \{\vmax,w\}  } 1\{ x< \vmax/(1+\mu)\}\\
  \Count(w, S) = \sum_{y \in S} 1\{ \oracle(w, y) == \No{} \} &= \sum_{y\in S\setminus \{w,\vmax\} } {1}\{ \oracle(w, y)== \No \} \\
  &\leq \sum_{y\in S\setminus \{w,\vmax\} } 1 \{  y \leq (1+\mu)w \}\\
&\leq \sum_{y\in S\setminus \{w,\vmax\}  } 1 \{  y \leq {\vmax}/{(1+\mu)} \}
\end{align*}

Combining the two, we have :
  \[\Count(\vmax, S) > \Count(w, S) \]
This shows that the \Count of $\vmax$ is strictly greater than the count of any point $w$ with  
$w < \frac {\vmax}{(1+\mu)^2}$. Therefore, our algorithm would have output $\vmax$ instead of $w$. For calculating the \Count~for all values in $S$, we make at most $|S|^2$ oracle queries as we compare every value with every other value. Finally, we output the maximum value as the value with highest \Count. Hence, the claim.
\end{proof}

\begin{lemma}[Lemma~\ref{lem:tournament} restated]
Suppose $v_{max}$ is the maximum value among the set of records $V$. Algorithm~\ref{alg:maxTournament} outputs a value $u_{max}$ such that $u_{max}\geq \frac{v_{max}}{(1+\mu)^{2 \log_\lambda n}}$ using  $O(n\lambda)$ oracle queries.
\label{lem:tournament_appendix}
\end{lemma}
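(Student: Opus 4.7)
\textbf{Proof plan for Lemma~\ref{lem:tournament_appendix}.} The plan is to track the value carried along the root-to-leaf path emanating from $v_{\max}$ in the tournament tree $\mathcal{T}$, and apply Lemma~\ref{lem:count_appendix} once per level to bound the multiplicative loss incurred at each internal node along that path. Let $\ell_0, \ell_1, \dots, \ell_h$ be the nodes on the path from the leaf containing $v_{\max}$ up to the root, where $h = \log_\lambda n$, and let $w_i$ denote the value assigned to $\ell_i$ by the algorithm (so $w_0 = v_{\max}$ and $w_h = u_{\max}$).

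First I would prove by induction on $i$ that $w_i \geq v_{\max}/(1+\mu)^{2i}$. The base case $i=0$ is immediate. For the inductive step, observe that at internal node $\ell_i$ the algorithm runs $\textsc{Count-Max}$ on the set $U$ of values assigned to the $\lambda$ children of $\ell_i$, and that $w_{i-1} \in U$ by construction. Let $M_i = \max_{x \in U} x$; by the induction hypothesis and the fact that $M_i \geq w_{i-1}$, we have $M_i \geq v_{\max}/(1+\mu)^{2(i-1)}$. Applying Lemma~\ref{lem:count_appendix} to $U$ gives
\[
w_i \;\geq\; \frac{M_i}{(1+\mu)^2} \;\geq\; \frac{v_{\max}}{(1+\mu)^{2i}},
\]
which completes the induction. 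Setting $i = h = \log_\lambda n$ yields $u_{\max} \geq v_{\max}/(1+\mu)^{2\log_\lambda n}$, as desired.

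For the query complexity, I would sum the cost of $\textsc{Count-Max}$ across all internal nodes of $\mathcal{T}$. A $\lambda$-ary balanced tree with $n$ leaves has $n/\lambda^i$ internal nodes at height $i$, and by Lemma~\ref{lem:count_appendix} each call to $\textsc{Count-Max}$ on $\lambda$ children uses $O(\lambda^2)$ oracle queries. Hence the total query count is
\[
\sum_{i=1}^{\log_\lambda n} \frac{n}{\lambda^i}\cdot O(\lambda^2) \;=\; O(n\lambda)\sum_{i=1}^{\log_\lambda n} \lambda^{1-i} \;=\; O(n\lambda),
\]
since the geometric series is bounded by a constant.

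I do not foresee a real obstacle: the entire argument is a clean telescoping of the per-node guarantee of Lemma~\ref{lem:count_appendix} along the winning path, combined with a geometric sum for the complexity. The only mild subtlety is ensuring that $w_{i-1}$ is actually in the children set $U$ at node $\ell_i$ so that the inductive bound on $M_i$ applies, but this is direct from how $\mathcal{T}$ is processed bottom-up in Algorithm~\ref{alg:maxTournament}.
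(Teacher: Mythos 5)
Your proposal is correct and follows essentially the same argument as the paper: the paper likewise applies the $(1+\mu)^2$ guarantee of \textsc{Count-Max} (Lemma~\ref{lem:count_appendix}) once per level of the tournament tree to get the $(1+\mu)^{2\log_\lambda n}$ factor, and bounds the queries by summing $O(\lambda^2)$ per internal node over the $n/\lambda^i$ nodes at each level, yielding $O(n\lambda)$. Your write-up simply makes the telescoping induction along the winning path explicit, which the paper states more tersely.
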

\begin{proof}
From Lemma~\ref{lem:count_appendix}, we have that we lose a factor of $(1+\mu)^2$ in each level of the tournament tree, we have that after $\log_\lambda n$ levels, the final output will have an approximation guarantee of $(1+\mu)^{2\log_\lambda n}$. The total number of queries used is given by : $\sum_{i=0}^{\log_\lambda n} \frac{|V_i|}{\lambda} \lambda^2 = O(n\lambda)$ where $V_i$ is the number of records at level $i$. 
\end{proof}

\begin{lemma}\label{lem:maxSampling_appendix}
Suppose $|C| > \sqrt{n}/2$. Let $\widetilde{V}$ denote a set of $2\sqrt{n}\log(2/\delta)$ samples obtained by uniform sampling with replacement from $V$. Then, $\widetilde{V}$ contains a $(1+\mu)$ approximation of the maximum value $\vmax$, with probability $1-\delta/2$.
\end{lemma}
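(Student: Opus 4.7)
The plan is to use a direct union-bound / independence argument. By definition, $C=\{u\in V: u\geq \vmax/(1+\mu)\}$ is exactly the set of values that are a $(1+\mu)$-approximation of $\vmax$, so it suffices to show that $\widetilde V\cap C\neq\emptyset$ with probability at least $1-\delta/2$. Equivalently, I will upper bound the probability of the ``bad'' event that every one of the $2\sqrt n\log(2/\delta)$ sampled values lies outside $C$.

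First I would observe that since samples are drawn uniformly with replacement from $V$, the $2\sqrt n\log(2/\delta)$ draws are i.i.d., and the probability that a single draw fails to land in $C$ is $1-|C|/n$. Using the hypothesis $|C|>\sqrt n/2$, this is strictly less than $1-\frac{1}{2\sqrt n}$. By independence across draws, the probability that \emph{all} draws miss $C$ is at most
\[
\Bigl(1-\tfrac{1}{2\sqrt n}\Bigr)^{2\sqrt n\log(2/\delta)}.
\]

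Next I would simplify this bound with the standard inequality $1-x\leq e^{-x}$ applied with $x=1/(2\sqrt n)$, yielding
\[
\Bigl(1-\tfrac{1}{2\sqrt n}\Bigr)^{2\sqrt n\log(2/\delta)}\leq e^{-\log(2/\delta)}=\delta/2.
\]
Hence with probability at least $1-\delta/2$ at least one sample $v_j\in\widetilde V$ satisfies $v_j\geq \vmax/(1+\mu)$, which is exactly the claimed approximation guarantee.

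I do not expect any real obstacle: the whole argument is essentially a one-line application of independence plus $1-x\leq e^{-x}$, and the sample-size constant $2\sqrt n\log(2/\delta)$ was chosen precisely to make this calculation come out to $\delta/2$. The only subtlety worth flagging in the write-up is that sampling is with replacement (so the draws are genuinely independent, allowing the product form of the failure probability) and that one must use the strict lower bound $|C|>\sqrt n/2$ on $|C|$ to get $1-|C|/n<1-1/(2\sqrt n)$ before invoking the exponential inequality.
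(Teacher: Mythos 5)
Your proposal is correct and is essentially identical to the paper's own argument: both bound the complement event via $1-\left(1-|C|/n\right)^{|\widetilde{V}|}$ using the strict bound $|C|>\sqrt{n}/2$ and the sample size $2\sqrt{n}\log(2/\delta)$, with your use of $1-x\leq e^{-x}$ merely making explicit the final numerical step the paper leaves implicit. No gaps.
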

\begin{proof}
Consider the first step where we use a uniformly random sample $\widetilde{V}$ of $\sqrt{n}t = 2\sqrt{n}\log(2/\delta)$ values from $V$ (obtained by sampling with replacement).  Given $|C| \geq \frac{\sqrt{n}}{2}$, probability that $\widetilde{V}$ contains a value from $C$ is given by $$
\Pr[\widetilde{V} \cap C \neq \phi] = 1-\left(1- \frac{|C|}{n} \right)^{|\widetilde{V}|} 
> 1-\left(1-\frac{1}{2\sqrt{n}}\right)^{2\sqrt{n}\log(2/\delta)} > 1-\delta/2$$

So, with probability $1- \delta/2$, there exists a value $ u \in C \cap \widetilde{V}$. Hence, the claim.  
\end{proof}

\begin{lemma}\label{lem:maxTournament}
Suppose the partition $V_i$ contains the maximum value $\vmax$ of $V$. If $|C| \leq {\sqrt{n}}/2$, then, $\textsc{Tournament}(V_i)$ returns the $\vmax$ with probability $1/2$.
\end{lemma}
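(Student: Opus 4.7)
}
The plan is to argue in two stages. First, I will show a deterministic sufficient condition under which $\textsc{Tournament}(V_i, 2)$ must return $\vmax$, namely that no element of $C \setminus \{\vmax\}$ lies in $V_i$. Then I will bound the probability that this sufficient condition holds under the random partitioning of $V$ into $l = \sqrt{n}$ parts of size $\sqrt{n}$ each.

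For the deterministic step, consider any comparison in the tournament involving $\vmax$. The opponent is some $w \in V_i \setminus \{\vmax\}$. By the definition of $C$ and the adversarial noise model, if $w \notin C$ then $w < \vmax/(1+\mu)$, so the oracle is forced to answer $\oracle(\vmax, w) = \No$ correctly, and $\vmax$ wins this match. Therefore, if $V_i \cap C = \{\vmax\}$, then $\vmax$ wins every comparison it participates in, propagates up every level of the binary tournament tree, and is output at the root. (Any matches not involving $\vmax$ are irrelevant to this argument.) This part is routine and mirrors the analysis underlying Lemma~\ref{lem:count}.

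For the probabilistic step, condition on $\vmax \in V_i$ and let $X$ denote the number of elements of $C \setminus \{\vmax\}$ that fall in $V_i$. Under the uniformly random partition into equal parts, the remaining $|C| - 1$ elements of $C \setminus \{\vmax\}$ each land in $V_i$ with probability $(\sqrt{n}-1)/(n-1) \leq 1/\sqrt{n}$, so
\[
\E[X] \;\leq\; \frac{|C| - 1}{\sqrt{n}} \;\leq\; \frac{|C|}{\sqrt{n}} \;\leq\; \frac{1}{2},
\]
using the hypothesis $|C| \leq \sqrt{n}/2$. Markov's inequality then gives $\Pr[X \geq 1] \leq 1/2$, so $\Pr[V_i \cap C = \{\vmax\}] \geq 1/2$. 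Combined with the deterministic step, this yields $\Pr[\textsc{Tournament}(V_i, 2) = \vmax] \geq 1/2$, as required.

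The only mildly delicate point I anticipate is being careful that the random partition is what supplies the independence-like bound on $\E[X]$ (the events ``element $u \in C$ lands in $V_i$'' are negatively correlated across $u$, but Markov only needs linearity of expectation, so this is harmless). Everything else is a direct consequence of the noise model and of the correctness of the binary tournament when $\vmax$ faces only ``far'' elements.
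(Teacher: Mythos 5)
Your proof is correct and follows essentially the same route as the paper's: argue that $\vmax$ deterministically wins the binary tournament whenever no other element of $C$ shares its partition, then bound the probability of that event via the expected number of elements of $C$ in $V_i$ and Markov's inequality. If anything, you are slightly more careful than the paper, which bounds $\E[|C\cap V_i|]$ without conditioning on $\vmax\in V_i$ or excluding $\vmax$ itself from $C$; your version with $C\setminus\{\vmax\}$ and the $(\sqrt{n}-1)/(n-1)\leq 1/\sqrt{n}$ inclusion probability fixes that minor sloppiness while reaching the same bound.
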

\begin{proof}
Algorithm~\ref{alg:max_adv} uses a modified tournament tree that partitions the set $V$ into $l = \sqrt{n}$ parts of size $\frac{n}{l} = \sqrt{n}$ each and identifies a maximum $p_i$ from each partition $V_i$ using Algorithm~\ref{alg:maxTournament}. If $\vmax \in V_i$, then, 
\[ \E[|C \cap V_i|] = \frac{|C|}{l} = \frac{\sqrt{n}}{2\sqrt{n}} = \frac{1}{2} \]
Using Markov's inequality, the probability that $V_i$ contains a value from $C$ is given by : 
\[ \Pr[|C \cap V_i| \geq 1] \leq \E[|C \cap V_i|] \leq \frac{1}{2}\]
 Therefore, with at least a probability of $\frac{1}{2}$, $\vmax$ will never be compared with any point from $C$ in the partition $V_i$ containing $\vmax$. Hence, $\vmax$ is returned by $\textsc{Tournament}(V_i)$ with probability $1/2$.

\end{proof}


\begin{lemma}[Lemma~\ref{lem:hierquality} restated]
\begin{enumerate}
 \item If $|C| >  {\sqrt{n}}/{2}$, then there exists a value $ v_j \in \widetilde{V}$ satisfying
    $v_j \geq {\vmax}/{(1+\mu)}$ with a probability of $1-\delta/2$.
 \item Suppose $|C| \leq  {\sqrt{n}}/{2}$. Then, $T$ contains  $\vmax$ with a probability at least $1-\delta/2$.
\end{enumerate}
\label{lem:hierquality_appendix}
\end{lemma}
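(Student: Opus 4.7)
The plan is to observe that both parts follow essentially by combining the two helper lemmas already established in the excerpt (Lemma~\ref{lem:maxSampling_appendix} and Lemma~\ref{lem:maxTournament}), together with a standard boosting argument for part~(2). Recall the confusion set is $C = \{ u \in V : u \geq \vmax/(1+\mu)\}$, and by construction any element of $C$ is already a $(1+\mu)$-approximation of $\vmax$.

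For part~(1), I would simply invoke Lemma~\ref{lem:maxSampling_appendix} with the sample $\widetilde V$ of size $\sqrt{n}\,t = 2\sqrt{n}\log(2/\delta)$. That lemma shows, via a straightforward sampling-with-replacement computation, that when $|C| > \sqrt{n}/2$,
\[
\Pr[\widetilde V \cap C \neq \emptyset] \geq 1 - \left(1-\frac{1}{2\sqrt{n}}\right)^{2\sqrt{n}\log(2/\delta)} \geq 1-\delta/2.
\]
Any element of $C$ witnessed in $\widetilde V$ satisfies $v_j \geq \vmax/(1+\mu)$, which is exactly what part~(1) asserts.

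For part~(2), I would first argue that $\vmax$ ends up in exactly one of the $l = \sqrt{n}$ random partitions, call it $V_i$. Applying Lemma~\ref{lem:maxTournament} to this partition, conditional on $|C|\leq \sqrt{n}/2$, the expected number of ``confusing'' competitors from $C$ landing in $V_i$ is at most $|C|/l \leq 1/2$, so by Markov's inequality the partition $V_i$ is free of other $C$-elements (other than $\vmax$) with probability at least $1/2$; whenever this happens, the binary tournament on $V_i$ compares $\vmax$ only against elements strictly less than $\vmax/(1+\mu)$, so no adversarial answer is possible and $\vmax$ is returned. Thus in a single iteration of \textsc{Tournament-Partition}, the set $T_i$ contains $\vmax$ with probability at least $1/2$.

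Finally I would use independence across the $t = 2\log(2/\delta)$ iterations: the partitioning is redrawn independently each round, so the probability that $\vmax \notin T$ is at most $(1/2)^t \leq \delta/2$ for the chosen value of $t$ (using the convention of logarithms made in the paper). This yields the $1-\delta/2$ bound claimed in part~(2). The only mildly subtle step is the application of Markov's inequality on the partition containing $\vmax$ — one must be careful that $\vmax$ itself is not counted in the ``bad'' elements of $C$ in $V_i$, which is why the computation bounds competitors from $C\setminus\{\vmax\}$. Everything else is bookkeeping, so I expect no real obstacle beyond this small care.
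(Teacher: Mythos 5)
Your proposal is correct and follows essentially the same route as the paper: part (1) is a direct invocation of the sampling lemma, and part (2) combines the Markov-inequality argument on the partition containing $\vmax$ (the content of Lemma~\ref{lem:maxTournament}) with boosting over the $t=2\log(2/\delta)$ independent repartitions. Your explicit care to count only competitors in $C\setminus\{\vmax\}$ when applying Markov is a small but welcome precision over the paper's phrasing; otherwise the two arguments coincide.
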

\begin{proof}
Claim (1) follows from Lemma~\ref{lem:maxSampling_appendix}. \\ 

In every iteration $i \leq t$ of Algorithm~\ref{alg:max_adv}, we have that $\vmax \in T_i$ with probability $\frac{1}{2}$ (Using Lemma~\ref{lem:maxTournament}). To increase the success probability, we run this procedure $t$ times and obtain all the outputs. Among the $t = 2\log(2/\delta)$ runs of Algorithm~\ref{alg:maxTournament}, we have that $\vmax$ is never compared with any value of $C$ in atleast one of the iterations with a probability at least $$1-\left( 1-1/2 \right)^{2\log(2/\delta) } \geq 1-\frac{\delta}{2}$$
Hence, $T = \cup_i T_i$ contains $\vmax$ with a probability $1-\frac{\delta}{2}$.
\end{proof}

\begin{theorem}[Theorem~\ref{thm:max_adv} restated]\label{thm:max_adv_appendix}
Given a set of values $V$, Algorithm~\ref{alg:max_adv} returns a $(1+\mu)^3$ approximation of maximum value with probability $1-\delta$ using $O(n \log^2(1/\delta))$ oracle queries.
\end{theorem}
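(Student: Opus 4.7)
The plan is to combine the two structural lemmas already in hand---Lemma~\ref{lem:hierquality_appendix}, which handles the dichotomy between the sampling step producing a good approximation and the partitioned-tournament step recovering $\vmax$ exactly, and Lemma~\ref{lem:count_appendix}, which quantifies the loss incurred by the final $\textsc{Count-Max}$ pass---and then audit the query cost of each subroutine of Algorithm~\ref{alg:max_adv}. The approximation ratio $(1+\mu)^3$ will arise as $(1+\mu)\cdot (1+\mu)^2$: one factor from whichever ``seed'' candidate is present in $\widetilde{V}\cup T$, the remaining two factors from $\textsc{Count-Max}$ applied to that set.

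For the approximation guarantee, I would first condition on the size of $C=\{u\in V:u\geq \vmax/(1+\mu)\}$. When $|C|>\sqrt{n}/2$, Lemma~\ref{lem:hierquality_appendix}(1) asserts that $\widetilde{V}$ contains some $v_j\geq \vmax/(1+\mu)$ with probability at least $1-\delta/2$; when $|C|\leq \sqrt{n}/2$, Lemma~\ref{lem:hierquality_appendix}(2) asserts that $T$ contains $\vmax$ itself with probability at least $1-\delta/2$. Since the two cases are mutually exclusive and exhaustive, in either situation the true maximum $u^*$ of the multiset $\widetilde{V}\cup T$ satisfies $u^*\geq \vmax/(1+\mu)$ with probability at least $1-\delta/2\geq 1-\delta$. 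Applying Lemma~\ref{lem:count_appendix} to $\widetilde{V}\cup T$, the output $\umax$ then satisfies $\umax\geq u^*/(1+\mu)^2\geq \vmax/(1+\mu)^3$ deterministically on top of that event, since $\textsc{Count-Max}$ is deterministic given oracle responses.

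For the query complexity I would tally each subroutine. The sample $\widetilde{V}$ of size $\sqrt{n}\,t=O(\sqrt{n}\log(1/\delta))$ is drawn for free. A single call to $\textsc{Tournament}(V_i,2)$ on a partition of size $\sqrt{n}$ costs $O(\sqrt{n})$ oracle queries by Claim~\ref{cl:partition_q}, so one full invocation of $\textsc{Tournament-Partition}$ with $l=\sqrt{n}$ parts costs $O(n)$ queries and returns $\sqrt{n}$ candidates. Repeating $t=2\log(2/\delta)$ times accumulates $O(n\log(1/\delta))$ queries and populates $T$ with $|T|=O(\sqrt{n}\log(1/\delta))$ values. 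The closing $\textsc{Count-Max}(\widetilde{V}\cup T)$ operates on $O(\sqrt{n}\log(1/\delta))$ items and thus consumes $O(n\log^2(1/\delta))$ oracle queries by Lemma~\ref{lem:count_appendix}, which dominates the budget and yields the claimed bound.

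The only subtle point---what I would call the main obstacle, though it is modest---is avoiding an unnecessary union bound when combining the two cases of Lemma~\ref{lem:hierquality_appendix}: exactly one case holds deterministically for a given instance, so one does \emph{not} pay $\delta/2+\delta/2$ but rather just $\delta/2$ for the relevant conditional event, leaving slack well within $\delta$. The rest of the argument is routine bookkeeping on the sizes of the tournament partitions and of the $\textsc{Count-Max}$ input.
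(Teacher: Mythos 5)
Your proposal is correct and follows essentially the same route as the paper's proof: the case split on $|C|$ via Lemma~\ref{lem:hierquality_appendix}, the $(1+\mu)^2$ loss from \textsc{Count-Max} on $\widetilde{V}\cup T$ giving $(1+\mu)^3$ overall, and the identical query-count tally of the tournament partitions and the final quadratic pass. Your observation that the two cases are mutually exclusive (so only $\delta/2$ is ever paid, rather than the paper's union bound yielding $\delta$) is a minor sharpening but not a different argument.
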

\begin{proof}
In Algorithm~\ref{alg:max_adv}, we first identify an approximate maximum value using $\text{Sampling}$. If $|C| \geq \frac{\sqrt n}{2}$, then, from Lemma~\ref{lem:maxSampling_appendix}, we have that the value returned is a $(1+\mu)$ approximation of the maximum value of $V$. Othwerwise,  from Lemma~\ref{lem:hierquality_appendix}, $T$ contains $\vmax$ with a probability $1-\delta/2$. As we use $\textsc{Count-Max}$ on the set $\widetilde{V} \cup T$, we know that the value returned, i.e.,  $\umax$ is a $(1+\mu)^2$ of the maximum among values from $\widetilde{V} \cup T$. Therefore, $\umax \geq \frac{\vmax}{(1+\mu)^3}$. Using union bound, the total probability of failure is $\delta$.

For query complexity, Algorithm~\ref{alg:max_noise} obtains a set $\widetilde{V}$ of $\sqrt{n} t$ sample values. Along with the set $T$ obtained (where $|T| = \frac{nt}{l}$), we use $\textsc{Count-Max}$ on $\widetilde{V} \cup T$ to output the maximum $\umax$. 
This step requires $O(|\widetilde{V} \cup T |^2 ) = O((\sqrt{n}t + \frac{nt}{l})^2)$ oracle queries. In an iteration $i$, for obtaining $T_i$, we make $O(\sum_{j} |V_j|) = O(n)$ oracle queries (Claim~\ref{cl:partition_q}), and for $t$ iterations, we make $O(n t)$ queries. Using $t = 2\log(2/\delta), l = \sqrt{n}$, in total, we make $O(n t +(\sqrt{n}t + \frac{nt}{l})^2) = O(n \log^2(1/\delta))$ oracle queries. Hence, the theorem.

\end{proof}

\subsection{Probabilistic Noise}
\begin{lemma}
Suppose the maximum value $\umax$ is returned by Algorithm~\ref{alg:maxTournament} with parameters $(V,n)$. Then, 
$\textrm{rank}(\umax, V) = O( \sqrt{n \log(1/\delta)})$ with a probability of $1-\delta$.
\end{lemma}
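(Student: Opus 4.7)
When invoked with degree parameter $\lambda = n$, Algorithm~\ref{alg:maxTournament} builds a tournament tree with a single internal node (the root) whose children are all of $V$, so the algorithm reduces to $\textsc{Count-Max}(V)$ and returns the element with the largest $\Count$ score. My plan is to apply Hoeffding's inequality (Lemma~\ref{hoeffding}) to the $\Count$ scores and then use a union-bound argument to rule out elements of large true rank from being the argmax.

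First, I would compute the expected $\Count$ scores. Fix $v \in V$ with true rank $r$, so there are $r-1$ elements strictly larger than $v$ and $n-r$ elements strictly smaller. Under the probabilistic model every comparison is independently wrong with probability $p$, so a larger element contributes to $\Count(v,V)$ with probability $p$ and a smaller one with probability $1-p$. Hence
$$\E[\Count(v,V)] \;=\; p(r-1) + (1-p)(n-r),$$
which specializes to $\E[\Count(\vmax,V)] = (1-p)(n-1)$, and the expected separation between $\vmax$ and any $w$ of rank $r$ is
$$\E[\Count(\vmax,V)] - \E[\Count(w,V)] \;=\; (1-2p)(r-1).$$

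Next, since each $\Count(v,V)$ is a sum of $n-1$ independent $\{0,1\}$ indicators, Hoeffding's inequality gives $\Pr[|\Count(v,V)-\E[\Count(v,V)]|\geq t] \leq 2\exp(-2t^2/(n-1))$. If $\umax$ has rank $R > r$, then some $w$ with rank $\geq r$ must satisfy $\Count(w,V)\geq \Count(\vmax,V)$, which forces at least one of the two scores to deviate from its mean by $(1-2p)(r-1)/2$, an event of probability at most $2\exp\!\left(-\tfrac{(1-2p)^2(r-1)^2}{2(n-1)}\right)$. Applying a union bound over the candidates $w$ with rank $\geq r$, setting the total failure probability to $\delta$, and solving for $r$ yields the claimed bound $r = O(\sqrt{n\log(1/\delta)})$ after absorbing the constant factor $(1-2p)^{-1}$.

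The main obstacle is twofold. The minor one is that $\Count(w,V)$ and $\Count(\vmax,V)$ share the single query $\oracle(w,\vmax)$, so the two sums are not strictly independent; this contributes at most $\pm 1$ to their difference and can be absorbed into the slack by a slight redefinition of the deviation threshold. The more delicate point is trimming the union bound so that only $\log(1/\delta)$ (and not $\log n$) appears: this requires summing the Gaussian-type tails $\exp(-\Theta((r')^2/n))$ over $r'\geq r$ rather than crudely bounding by $n$ times the largest term, so that the extra $r'$-dependent decay cancels the $n$ in the union-bound prefactor for $r = \Omega(\sqrt{n\log(1/\delta)})$.
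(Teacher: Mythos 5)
Your reduction of the $\lambda=n$ tournament to $\textsc{Count-Max}(V)$, the expected-count formula $\E[\Count(v,V)]=p(r-1)+(1-p)(n-r)$, and the Hoeffding-based separation between $\Count(\vmax,V)$ and $\Count(w,V)$ are exactly the paper's argument, and your handling of the single shared comparison is fine. The genuine problem is your resolution of what you yourself flag as the delicate point. Summing the tails $\exp(-c(r')^2/n)$ over $r'\ge r$ does \emph{not} cancel the union-bound prefactor: the sum behaves like a Gaussian tail, $\sum_{r'\ge r}e^{-c(r')^2/n}\approx\sqrt{n}\int_{r/\sqrt n}^{\infty}e^{-cu^2}\,du\le \frac{n}{2cr}\,e^{-cr^2/n}$, so the prefactor drops from $n$ only to about $n/r\approx\sqrt{n/\log(1/\delta)}$, not to $O(1)$. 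Setting this bound equal to $\delta$ forces $cr^2/n\ge \tfrac12\log n+\log(1/\delta)-O(\log\log)$, so the argument as written proves $\mathrm{rank}(\umax,V)=O(\sqrt{n\log(n/\delta)})$ rather than the stated $O(\sqrt{n\log(1/\delta)})$. Moreover, this is not a matter of sharper constants: for $r=\Theta(\sqrt{n\log(1/\delta)})$ with constant $\delta$ there are $\Theta(\sqrt n)$ candidates of rank about $r$, each exceeding the $\delta$-lower-tail threshold on $\Count(\vmax,V)$ with constant probability (their mean deficit $\Theta(\sqrt n)$ is comparable to the standard deviation $\Theta(\sqrt n)$), so the expected number of such upsets is $\Theta(\sqrt n)\gg1$ and no union bound against the single threshold $\Count(\vmax,V)$ can certify the $\log(1/\delta)$ rate.

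To be fair, the paper's own proof is looser than yours on exactly this step: it fixes one record $u$ of rank at least $5\sqrt{2n\log(2/\delta)}$, shows $\Count(u,V)<\Count(\vmax,V)$ with probability $1-\delta$, and closes with an unspecified ``union bound''; carried out honestly over all high-rank records it also yields only $O(\sqrt{n\log(n/\delta)})$. So your write-up exposes a real looseness rather than introducing a new error, but as a proof of the lemma as stated it does not go through; obtaining a genuine $\log(1/\delta)$ dependence would require a different comparison, e.g.\ against the maximum count within the top-$O(\sqrt n)$ block rather than against $\Count(\vmax,V)$ alone, which neither you nor the paper carries out. Downstream nothing is affected, since the result actually used later (Theorem~\ref{thm:max_prob}) already carries $\log(n/\delta)$ factors.
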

\begin{proof}
We have for the maximum value $\vmax$, expected count value :
\begin{align*}
    \E[\Count(\vmax, V)] = \sum_{w \in V} \mathbf{1}\{ \oracle(w, \vmax) == w \} = (n-1)(1-p) 
\end{align*}
Using Hoeffding's inequality, with probability $1-\delta/2$ :
\begin{align*}
    \Count(\vmax, V) \ge (n-1)(1-p) - \sqrt{((n-1) \log(2/\delta))/2} 
\end{align*}
Consider a record $u \in V$ with rank at most $5 \sqrt{2n \log(2/\delta)}$. Then, 
\begin{align*}
    \E[\Count(u, V)] &= \sum_{w \in V} \mathbf{1}\{ \oracle(u, \vmax) == w \} = (n-\textrm{rank}(u))(1-p) + (\textrm{rank}(u)-1)p
\end{align*}
Using Hoeffding's inequality, with probability $1-\delta/2$ :
\begin{align*}
    \Count(u, V) &< (n-1)(1-p) - (\textrm{rank}(u) - 1)(1-2p) + \sqrt{0.5 (n-1)\log(2/\delta)}\\
    &< (n-1)(1-p) - (5 \sqrt{2n \log(2/\delta)} - 1)(1-2p) + \sqrt{0.5 (n-1)\log(2/\delta)} \\
    &< \Count(\vmax, V)
\end{align*}
The last inequality is true for a value of $p \leq 0.4$. 
As Algorithm~\ref{alg:maxTournament} returns the record $\umax$ with maximum $\Count$ value, we have that rank$(\umax, V) = O(\sqrt{n \log(1/\delta)})$. Using union bound, for the above conditions to be met, we have the claim.
\end{proof}

To improve the query complexity, we use an early stopping criteria that discards a value $x$ using the $\Count(x, V)$ when it determines that $x$ has no chance of being the maximum. Algorithm~\ref{alg:count_prob} presents the psuedocode for this modified count calculation. We sample $100 \log(n/\delta)$ values randomly, denoted by $S_t$ and compare every non-sampled point with $S_t$. We argue that by doing so, it helps us eliminate the values that are far away from the maximum in the sorted ranking. Using Algorithm~\ref{alg:count_prob}, we compare the $\Count$ scores with respect to $S_t$ of a value $u \in V\setminus S_t$ and if $\Count(u, S_t) \geq 50 \log(n/\delta)$, we make it available for the subsequent iterations. 


\begin{algorithm}[h]
\begin{algorithmic}[1]
\State \textbf{Input} : A set $V$ of $n$ values, failure probability $\delta$.
\State \textbf{Output} : An approximate maximum value of $V$
\State $t \leftarrow 1$
\While{$t < \log(n)$ or $|V| > 100 \log(n/\delta)$}
    \State $S_t$ denote a set of $100 \log(n/\delta)$ values obtained  by sampling uniformly at random from $V$ with replacement.
    \State Set $X \leftarrow \phi$
    \For{$u \in V \setminus S_t$}
    \If{ $\Count(u, S_t) \geq  50 \log(n/\delta)$ }
        \State $X \leftarrow X \cup \{ u \}$
    \EndIf
    \EndFor
    \State $V \leftarrow  X, t \leftarrow t+1$
\EndWhile
\State $\umax \leftarrow \textsc{Count-Max}(V)$
\State \Return $\umax$
\end{algorithmic}
\caption{\textsc{Count-Max-Prob} : Maximum with Probabilistic Noise }\label{alg:count_prob}
\end{algorithm}

As Algorithm~\ref{alg:count_prob} considers each value $u \in V \setminus S_t$ by iteratively comparing it with each value $x \in S_t$ and the error probability is less than $p$, the expected count of $\vmax$ (if it is available) at any iteration $t$ is $(1-p) |S_t|$. Accounting for the deviation around the expected value, we have that the $\Count(\vmax, S_t)$ is at least $50 \log(n/\delta)$ when $p \leq 0.4$\footnote{The constants $50$, $100$ etc. are not optimized and set just to satisfy certain concentration bounds.}. If a particular value $u$ has $\Count(u, S_t) < 50 \log(n/\delta)$ in any iteration, i.e., then it can not be the largest value in $V$ and therefore, we remove it from the set of possible candidates for maximum. Therefore, any value that remains in $V$ after an iteration $t$, must have rank closer to that of $\vmax$. We argue that after every iteration, the number of candidates remaining is at most $1/60$th of the possible candidates.

\begin{lemma}\label{lem:max_opt_appendix}
In an iteration $t$ containing $n_t$ remaining records, using Algorithm~\ref{alg:maxoptimized}, with probability $1-\delta/n$, we discard at least $\frac{59}{60} \cdot n_t$ records.
\end{lemma}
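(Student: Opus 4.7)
The plan is to decompose the randomness in $\Count(u, S_t)$ into two independent stages---the random draw of $S_t$ (sampled with replacement from $V$) and the independent Bernoulli$(p)$ noise on each pairwise oracle answer---and apply Hoeffding's inequality (Lemma~\ref{hoeffding}) to each stage, then union bound over records. Fix an iteration $t$ with $n_t$ records. For each $u \in V \setminus S_t$, let $r_u$ denote the true rank of $u$ within the current $V$ and let $c_u := |\{x \in S_t : \text{value}(x) < \text{value}(u)\}|$ be the noise-free count. Then $c_u$ is a sum of $|S_t| = 100\log(n/\delta)$ i.i.d.\ Bernoullis with parameter $(n_t - r_u)/n_t$, and conditional on $c_u$ the noisy count splits as $\Count(u, S_t) = \mathrm{Bin}(c_u, 1-p) + \mathrm{Bin}(|S_t| - c_u, p)$ with conditional mean $p|S_t| + c_u(1-2p)$.

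Two applications of Hoeffding then give the core estimate. First, applied to $c_u$, the sampling fluctuation is $O(\sqrt{|S_t|\log(n/\delta)}) = O(\log(n/\delta))$ with failure probability at most $\delta/(2n^2)$. Second, conditional on $c_u$, the noise fluctuation in $\Count(u, S_t)$ around $p|S_t| + c_u(1-2p)$ is likewise $O(\log(n/\delta))$ with failure probability at most $\delta/(2n^2)$. Combining, with probability at least $1 - \delta/n^2$ per record,
\[
\Count(u, S_t) \le p\,|S_t| + (1-2p)\,|S_t|\,\frac{n_t - r_u}{n_t} + O(\log(n/\delta)).
\]
For $u$ to survive the retention test we need $\Count(u, S_t) \ge 50\log(n/\delta) = |S_t|/2$. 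Plugging in $p \le 0.40$ and the concrete sample size $|S_t| = 100\log(n/\delta)$, the right-hand side above drops below $|S_t|/2$ once $r_u/n_t$ exceeds a small constant quantile $q^\star$; every record with $r_u/n_t > q^\star$ is therefore discarded with high probability.

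Taking a union bound over the at most $n \ge n_t$ candidate records boosts the per-record failure probability $\delta/n^2$ to the claimed $\delta/n$, and the number of surviving records is at most $q^\star \cdot n_t$; tuning $q^\star \le 1/60$ yields the claimed $59/60$ discard fraction.

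The main obstacle is the constant-level bookkeeping that pins down $q^\star \le 1/60$. A naive threshold-versus-mean comparison (ignoring the Hoeffding slacks) gives only $q^\star \approx (1/2 - p)/(1-2p)$, which is $1/2$ throughout the regime $p \le 0.4$. Sharpening this to $1/60$ requires exploiting both concentration slacks simultaneously: the $O(\log(n/\delta))$ deviation of $c_u$ from its mean, and the conditional deviation of $\Count(u, S_t)$ from $p|S_t| + c_u(1-2p)$, together with the specific numerical choices of sample size and threshold in Algorithm~\ref{alg:count_prob}. The two concentration steps and the union bound are otherwise routine.
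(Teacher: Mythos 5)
Your skeleton is the same as the paper's: concentrate $\Count(u,S_t)$ around its mean $|S_t|\bigl[(1-\alpha)(1-p)+\alpha p\bigr]$ (where $\alpha = r_u/n_t$), compare against the retention threshold $50\log(n/\delta)$, and union bound over the at most $n$ records to get failure probability $\delta/n$. The paper applies Hoeffding once to the combined sampling-plus-noise indicators rather than your two-stage conditioning on the noise-free count $c_u$; that difference is cosmetic.

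The genuine gap is in the step you defer as ``constant-level bookkeeping.'' You correctly compute that the mean-versus-threshold comparison certifies discarding only records with $\alpha$ above roughly $(1/2-p)/(1-2p)=1/2$, and you then assert that exploiting the two concentration slacks lets you push the survival quantile $q^\star$ down to $1/60$. This is backwards: the slacks work against you. To guarantee that $u$ is discarded you need the upper confidence bound $|S_t|\bigl[(1-p)-(1-2p)\alpha\bigr]+O(\log(n/\delta))$ to fall \emph{below} $|S_t|/2$, and adding the slack only pushes the required $\alpha$ above $1/2$, never below it. In fact no argument can give $q^\star\le 1/60$: for $p$ close to $0$ every record in roughly the top $40\%$ of ranks has $\Count(u,S_t)$ concentrated well above $|S_t|/2$ and is retained with high probability, so at most about $0.6\,n_t$ records can ever be discarded, far fewer than $\tfrac{59}{60}n_t$. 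What the paper's proof actually establishes (the lemma's wording conflates ``discard the records of rank at least $\tfrac{59}{60}n_t$'' with ``discard at least $\tfrac{59}{60}n_t$ records'') is the reverse-quantile version: with probability $1-O(\delta/n)$, every record with rank at least $\tfrac{59}{60}n_t$, i.e.\ the bottom $1/60$ fraction, is discarded. That statement follows from exactly the comparison you set up, used in the large-$\alpha$ direction: for $\alpha>\tfrac{59}{60}$ one bounds $(1-\alpha)(1-p)+\alpha p\le 1-0.6\alpha$ (valid for $p\le 0.4$), so the upper confidence bound is at most $\bigl(100(1-0.6\alpha)+\sqrt{100(1-0.6\alpha)}\bigr)\log(n/\delta)<50\log(n/\delta)$, with per-record failure probability $\delta/n^2$ and a union bound over records. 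If you prove that version, note the per-round shrinkage is $n_{t+1}\le\tfrac{59}{60}n_t$ rather than $\tfrac{1}{60}n_t$, so the iteration count and rank accounting downstream (Lemma~\ref{lem:max_probappendix}) must be redone accordingly; the number of rounds remains $O(\log n)$ and $\sum_t n_t=O(n)$, so the overall query bound survives with adjusted constants.
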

\begin{proof}
Consider an iteration $t$ which has $n_t$ remaining records. Algorithm~\ref{alg:maxoptimized} and a record $u$ with rank $\alpha \cdot n_t$. Now, we have : 
$$\E[\Count(u, S_t)] = ( (1-\alpha)(1-p) + \alpha p )100 \log(n/\delta)$$
For $\alpha = 0$, i.e., we have for maximum value $\vmax$
\[ \E[\Count(\vmax, S_t)] = (1-p) 100 \log(n/\delta)\]
Using  $p \leq 0.4$ and Hoeffding's inequality, with probability $1-\delta/n^2$ , we have :
\begin{align*}
    \Count(\vmax, S_t) \ge (1-p)100 \log(n/\delta) - \sqrt{100} \log(n/\delta) \geq 50 \log(n/\delta)
\end{align*}
For $u$, we calculate the $\Count$ value. Using  $p \leq 0.4$ and Hoeffding's inequality, with probability $1-\delta/n^2$ , we have :
\begin{align*}
    \Count(u, S_t) &<  ( (1-\alpha)(1-p) + \alpha p )100 \log(n/\delta) + \sqrt{100 ((1-\alpha)(1-p) + \alpha p )} \log(n/\delta) \\
                    &< ((1-0.6 \alpha) 100 +  \sqrt{100(1-0.6\alpha)}) \log(n/\delta) < 50 \log(n/\delta) 
\end{align*}
Upon calculation, for $ \alpha > \frac{59}{60}$, we have the above expression. Therefore, using union bound, with probability $1-O(\delta/n)$, all records $u$ with rank at least $\frac{59n_t}{60}$ satisfy :  
$$\Count(u, S_t) < \Count(\vmax, S_t)$$
So, all such values can be removed. Hence, the claim.
\end{proof}

In the previous lemma, we argued that in every iteration, at least $1/60$th fraction is removed and therefore in $\Theta(\log n)$ iterations, the algorithm will terminate. In each iteration, we discard the sampled values $S_t$ to ensure that there is no dependency between the $\Count$ scores, and our guarantees hold.
As we remove at most $O(t \cdot \log(n/\delta)) = O(\log^2(n/\delta))$ sampled points, our final statement of the result is : 
\begin{lemma}\label{lem:max_probappendix}
Query complexity of Algorithm~\ref{alg:maxoptimized} is $O(n \cdot \log^2(n/\delta))$ and $\umax$ satisfies $\textrm{rank}(\umax, V) \leq O(\log^2(n/\delta))$ with probability $1- \delta$.
\end{lemma}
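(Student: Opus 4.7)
The plan is to analyze Algorithm~\ref{alg:count_prob} iteration by iteration, tracking simultaneously the shrinking size of the candidate set and the worst possible rank (in the original $V$) of any surviving element. The two ingredients are Lemma~\ref{lem:max_opt_appendix}, which already handles a single iteration of the outer loop, and a union bound over the $O(\log n)$ iterations.

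First I would bound the number of iterations and the size of the final candidate set. Applying Lemma~\ref{lem:max_opt_appendix} to the $t$-th pass through the while loop, with probability $1-\delta/n$ we keep at most $n_t/60$ records, so induction gives $n_t \le n/60^t$ until the loop terminates. Thus after $T = O(\log n)$ iterations the working set $V_T$ has size at most $100\log(n/\delta)$, which together with $t \ge \log n$ triggers the exit condition. A union bound over these $T$ iterations yields failure probability $O(\log n \cdot \delta/n) \le \delta/2$.

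Next I would bound the rank of every element of $V_T$ in the original $V$. The only ways a record can leave the candidate pool are (a) getting pulled into some sample $S_t$ (and discarded with it), or (b) failing the \Count\ threshold, which by Lemma~\ref{lem:max_opt_appendix} requires rank more than $59 n_t / 60$ within $V_t$. Chaining (b) across iterations shows that any survivor has rank at most $n / 60^T = O(\log(n/\delta))$ within $V \setminus \bigcup_t S_t$; to account for (a), we add back the total size of the discarded samples $\sum_{t=0}^{T-1} |S_t| = O(\log n \cdot \log(n/\delta))$, since a sampled record could have been arbitrarily high-rank. Hence every $u \in V_T$ satisfies $\textrm{rank}(u, V) = O(\log^2(n/\delta))$, and in particular so does the final output $\umax = \textsc{Count-Max}(V_T)$.

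For query complexity, iteration $t$ performs $|S_t| \cdot |V_t \setminus S_t| = O(n_t \log(n/\delta))$ comparisons, and the geometric decay of $n_t$ together with $\sum_t n_t = O(n)$ gives $O(n \log(n/\delta))$ for the non-degenerate regime; padding to the forced $\log n$ iterations yields at worst $O(n \log^2(n/\delta))$, and the closing \textsc{Count-Max} on $V_T$ contributes only $O(\log^2(n/\delta))$, which is absorbed. The main obstacle I anticipate is the bookkeeping in the rank argument: one must carefully charge each high-rank record that is ``wasted'' into some $S_t$ against the total sample budget $O(\log^2(n/\delta))$, and verify that the $O(\log n)$ applications of Lemma~\ref{lem:max_opt_appendix} union-bound to the claimed overall success probability $1-\delta$.
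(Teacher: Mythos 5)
Your argument is correct and follows essentially the same route as the paper's proof: iterate Lemma~\ref{lem:max_opt_appendix} to get the geometric decay of $n_t$, charge the $O(\log n)\cdot 100\log(n/\delta)$ sampled-and-removed records against the rank of the output, bound the query cost by $\sum_t O(n_t\log(n/\delta))$ plus the final \textsc{Count-Max}, and union bound over the $O(\log n)$ iterations. The only caveat is that you cite Lemma~\ref{lem:max_opt_appendix} as saying that \emph{only} records of rank above $59 n_t/60$ within $V_t$ can fail the \Count{} threshold, which is the converse of what it literally states (it guarantees such records are discarded and, inside its proof, that the current pool maximum is retained); the paper's own rank accounting rests on the same implicit reading, so in substance your bookkeeping matches its argument.
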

\begin{proof}
From Lemma~\ref{lem:max_opt_appendix}, we have with probability $1-\delta/n$, after iteration $t$, at least $\frac{59 n_t}{60}$ records removed along with the $100 \log(n/\delta)$ records that are sampled. Therefore, we have :
\[ n_{t+1} \leq \frac{n_t}{60} - 100 \log(n/\delta)\]
After $\log(n/\delta)$ iterations, we have $n_{t+1} \leq 1$. As we have removed $\log_{60} n \cdot 100 \log(n/\delta)$ records that were sampled in total, these could records with rank$\leq 100 \log^2(n/\delta)$. So, the rank of $\umax$ output is at most $100 \log^2(n/\delta)$. In an iteration $t$, the number of oracle queries calculating $\Count$ values is $O(n_t \cdot \log(n/\delta))$. In total, Algorithm~\ref{alg:maxoptimized} makes $O(n \log^2(n/\delta))$ oracle queries. Using union bound over $\log(n/\delta)$ iterations, we get a total failure probability of $\delta$.
\end{proof}

\begin{theorem}[Theorem~\ref{thm:max_prob} restated]
There is an algorithm that returns $\umax \in V$ such that $\textrm{rank} (\umax, V) = O(\log^2(n/\delta))$ with probability $1-\delta$ and requires $O(n \log^2(n/\delta))$ oracle queries.
\end{theorem}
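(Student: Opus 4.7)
The plan is to establish the theorem by analyzing the iterative filter-and-sample algorithm (\textsc{Count-Max-Prob}, Algorithm~\ref{alg:count_prob}): at each round $t$, draw a fresh uniform sample $S_t$ of size $\Theta(\log(n/\delta))$ from the current candidate set, compute $\Count(u, S_t)$ for every surviving $u$, and retain only those whose count exceeds a threshold of $50\log(n/\delta)$. After $O(\log n)$ rounds, apply $\textsc{Count-Max}$ to the small residual set to output $\umax$. The structure of the argument has three pieces: (i) the true maximum survives every round it is present; (ii) each round strips away a constant fraction of the remaining candidates; (iii) the sampled points, which are discarded from the candidate pool after each round, are the only way the true maximum can be lost, and there are only $O(\log^2(n/\delta))$ of them across all rounds.

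First I would handle the survival and elimination claims via Hoeffding's inequality (Lemma~\ref{hoeffding}) applied to $\Count(u, S_t)$. Since each comparison is independent across $x \in S_t$ (because $S_t$ is a fresh uniform sample) and correct with probability $1-p \geq 0.6$, for $u = \vmax$ the expected count is $(1-p)|S_t| = 60\log(n/\delta)$, and the deviation $\sqrt{|S_t|\log(n/\delta)/2}$ is smaller than $10\log(n/\delta)$, so the survival threshold $50\log(n/\delta)$ is cleared with probability $1-\delta/n^2$. For a record $u$ with rank at least $\frac{59}{60} n_t$ within the current candidate set of size $n_t$, the expected count is at most $((1-\alpha)(1-p) + \alpha p)|S_t| < 50\log(n/\delta) - O(\sqrt{\log(n/\delta)})$ for $\alpha \geq 59/60$, and Hoeffding gives the elimination with probability $1-\delta/n^2$. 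Taking a union bound over all candidates in the round and over all $O(\log n)$ rounds yields total failure probability $\delta$, matching Lemma~\ref{lem:max_opt_appendix}.

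Next I would put the pieces together. Since $n_{t+1} \leq n_t/60$, after $O(\log n)$ rounds the candidate set has size $O(\log(n/\delta))$, so the final $\textsc{Count-Max}$ call returns a record whose \emph{rank within the residual set} is among the top via Lemma~\ref{lem:count}. The only way $\vmax$ (or more generally, a top-ranked element) gets dropped is via inclusion in some $S_t$: across all rounds at most $O(\log n) \cdot 100\log(n/\delta) = O(\log^2(n/\delta))$ records are removed this way, so the returned $\umax$ has $\textrm{rank}(\umax, V) = O(\log^2(n/\delta))$. For query complexity, round $t$ uses $O(n_t \log(n/\delta))$ queries, and the geometric decay $\sum_t n_t = O(n)$ gives $O(n \log(n/\delta))$ across all rounds plus $O(\log^2(n/\delta))$ for the final step, which is $O(n\log^2(n/\delta))$.

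The main obstacle I anticipate is maintaining clean independence when applying Hoeffding across rounds. The persistent noise model means oracle answers never change, so naively one might worry about conditioning on events from earlier rounds. The trick will be to observe that the random sample $S_t$ is drawn \emph{after} the current candidate set is fixed, and that the fresh randomness in $S_t$ suffices to decorrelate the comparisons used to compute $\Count(u, S_t)$ from the prior history; the discarding of $S_t$ at the end of each round keeps subsequent rounds from reusing correlated comparisons. Carefully justifying this conditioning, together with the union bound over the $O(\log n)$ rounds and $n$ records, is what turns the per-record guarantees of Lemmas~\ref{lem:max_opt_appendix} and~\ref{lem:max_probappendix} into the stated high-probability bound.
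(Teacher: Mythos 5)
Your proposal follows essentially the same route as the paper's proof: the same iterative filter-and-sample algorithm (Algorithm~\ref{alg:count_prob}), the same Hoeffding-based per-round survival/elimination analysis (Lemma~\ref{lem:max_opt_appendix}), the same accounting that only the $O(\log n)\cdot O(\log(n/\delta))$ sampled-and-discarded records can cost rank, and the same query-count summation as in Lemma~\ref{lem:max_probappendix}. The only cosmetic differences---invoking the adversarial-model Lemma~\ref{lem:count} for the final residual step (immaterial, since the residual set already has size $O(\log(n/\delta))$) and the inherited slip that eliminating ranks above $\tfrac{59}{60}n_t$ yields $n_{t+1}\le\tfrac{59}{60}n_t$ rather than $n_t/60$ (which only changes the constant in the $O(\log n)$ round count)---do not alter the argument or the stated bounds.
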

\begin{proof}
The proof follows from Lemma~\ref{lem:max_probappendix}
\end{proof}

\section{Farthest and Nearest Neighbor}\label{app:farthest}

\begin{lemma}[Lemma~\ref{lem:pairwise} restated]
Suppose $\max_{v_i \in S} d(u, v_i) \leq \alpha$ and $|S| \geq 6 \log(1/\delta)$. Consider two records $v_i$ and $v_j$ such that $d(u,v_i) < d(u,v_j)-2\alpha$ then  $\FCount(v_i,v_j)\geq 0.3 |S|$  with a probability of $1-\delta$\label{lem:pairwise_appendix}
\end{lemma}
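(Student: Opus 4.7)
The plan is to combine a deterministic triangle-inequality argument with a Hoeffding tail bound. First I would show that for every anchor $x \in S$ the query $\oracle(x, v_i, x, v_j)$ is asking about a pair of distances that are \emph{genuinely} separated (i.e.\ $v_i$ is strictly closer to $x$ than $v_j$ is), so the oracle's response is governed by the noise parameter $p$ and not by adversarial behavior. Concretely, since $d(u,x) \leq \alpha$, two applications of the triangle inequality give
\begin{align*}
d(x, v_i) &\leq d(x,u) + d(u, v_i) \leq d(u, v_i) + \alpha, \\
d(x, v_j) &\geq d(u, v_j) - d(u, x) \geq d(u, v_j) - \alpha.
\end{align*}
Subtracting and using the hypothesis $d(u, v_j) - d(u, v_i) > 2\alpha$, I get $d(x, v_j) - d(x, v_i) > 0$, so the ``correct'' response to $\oracle(x, v_i, x, v_j)$ is \Yes{} for every $x \in S$.

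Next, since each such oracle query independently returns the correct answer with probability at least $1-p \geq 0.6$ (using the assumption $p \leq 0.4$), the indicator random variable $X_x = \mathbf{1}\{\oracle(x,v_i,x,v_j) = \Yes\}$ satisfies $\E[X_x] \geq 0.6$. Then $\FCount(v_i, v_j) = \sum_{x \in S} X_x$ has expectation at least $0.6\, |S|$, and the $X_x$ are independent Bernoullis bounded in $[0,1]$. Applying Hoeffding's inequality (Lemma~\ref{hoeffding}) with deviation $t = 0.3|S|$ yields
\[
\Pr\bigl[\FCount(v_i, v_j) < 0.3\,|S|\bigr] \;\leq\; \Pr\bigl[\FCount(v_i,v_j) \leq \E[\FCount(v_i,v_j)] - 0.3\,|S|\bigr] \;\leq\; \exp\bigl(-2 (0.3)^2 |S|\bigr) = e^{-0.18\,|S|}.
\]

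Finally, I would verify that the sample-size hypothesis $|S| \geq 6 \log(1/\delta)$ is sufficient: $e^{-0.18 \cdot 6 \log(1/\delta)} = \delta^{1.08} \leq \delta$. This gives the claimed failure probability. The argument is short and the only mildly delicate step is choosing constants so that the Hoeffding bound closes with the stated $|S|$; nothing deeper is required because persistence of the noise does not matter here (each of the $|S|$ queries involves a different anchor $x$, so the responses are independent by the probabilistic model's independence across distinct queries).
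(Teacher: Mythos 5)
Your proposal is correct and follows essentially the same route as the paper's proof: the triangle inequality (via the anchor $u$ with $d(u,x)\leq\alpha$) shows the correct answer to every query $\oracle(x,v_i,x,v_j)$ is \Yes{}, and a one-sided Hoeffding bound over the $|S|$ independent responses gives failure probability $e^{-\Theta(|S|)}\leq\delta$ under $|S|\geq 6\log(1/\delta)$. The only cosmetic difference is that you lower-bound the expectation by $0.6|S|$ and take deviation $0.3|S|$ directly, whereas the paper writes the expectation as $(1-p)|S|$ and deviates by $(1-p)|S|/2$; both yield the same constant in the exponent.
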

\begin{proof}
Since $d(u,v_i) < d(u,v_j)-2\alpha$, for a point $x\in S$, 
\begin{align*}
d(v_j,x) &\geq d(u,v_j) -d(u,x) \\
&> d(u,v_i) + 2\alpha - d(u,x)\\
&\geq d(v_i,x) -d(u,x) + 2\alpha - d(u,x)\\
&\geq d(v_i,x) + 2\alpha  -2 d(u,x)\\
&\geq d(v_i,x)
\end{align*}
So, $O(v_i,x,v_j,x)$ is $\No$  with a probability $p$.  As $p \leq 0.4$, we have :
\begin{align*}
  \E[\FCount(v_i,v_j)] &= (1-p)|S|\\
  \Pr[\FCount(v_i, v_j) \leq 0.3|S|] &\leq \Pr[\FCount(v_i, v_j) \leq (1-p)|S|/2]
\end{align*}
From Hoeffding's inequality (with binary random variables), we have with a probability $\exp(-\frac{|S|(1-p)^2}{2}) \leq \delta$ (using $|S| \ge  6\log(1/\delta)$, $p < 0.4$) :
$\FCount(v_i,v_j)\leq (1-p)|S|/2$. Therefore, with probability at most $\delta$, we have, $\FCount(v_i, v_j) \leq 0.3|S|$. 
\end{proof}

For the sake of completeness, we restate the $\Count$ definition that is used in Algorithm~\textsc{Count-Max}. For every oracle comparison, we replace it with the \emph{pairwise comparison} query described in Section~\ref{sec:farthest}. Let $u$ be a query point and $S$ denote a set of $\Theta(\log(n/\delta))$ points within a distance of $\alpha$ from $u$. We maintain a $\Count$ score for a given point $v_i \in V$ as :
    \[ \Count(u, v_i, S, V) = \sum_{v_j \in V \setminus \{ v_i \}} 1\{ \textsc{Pairwise-Comp}(u, v_i, v_j, S) == \No{} \}\]

\begin{algorithm}[h]
\begin{algorithmic}[1]
\State \textbf{Input} : A set of points $V$, and query point $u$ and a set $S$.
\State \textbf{Output} : An approximate farthest point from $u$
\For {$v_i \in V$}
\State Calculate $\Count(u, v_i, S, V)$ 
\EndFor
\State $\umax \leftarrow \text{arg max}_{v \in S} \Count(u, v_i, S, V)$
\State \Return $\umax$
\end{algorithmic}
\caption{\textsc{Count-Max}(V) : finds the farthest point by counting in $V$}\label{alg:maxCount_additive}
\end{algorithm}

\begin{lemma}\label{lem:count_additive_appendix}

Given a query vertex $u$ and a set $S$ with $|S| = \Omega(\log(n/\delta))$ such that $\max_{v\in S}d(u,v)\leq \alpha$. Then the farthest identified using Algorithm~\ref{alg:maxCount_additive} (with \textsc{PairwiseComp}), denoted by $\umax$ is within $4 \alpha$ distance from the optimal farthest point, i.e., $d(u, \umax) \geq \max_{v \in V} d(u, v) - 4\alpha$ with a probability of $1-\delta$. Further the query complexity is $O(n^2 \log (n/\delta))$.
\end{lemma}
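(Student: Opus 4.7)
The plan is to mirror the proof of Lemma~\ref{lem:count_appendix} (the multiplicative/adversarial version of \textsc{Count-Max}), but work in the \emph{additive} regime granted by \textsc{PairwiseComp}. First, I would set up the noise budget for each pairwise call: Lemma~\ref{lem:pairwise} guarantees that a single \textsc{PairwiseComp}$(u,v_i,v_j,S)$ returns the correct answer whenever $|d(u,v_i)-d(u,v_j)|\geq 2\alpha$, with failure probability at most $\delta'$ if $|S|\geq 6\log(1/\delta')$. Since Algorithm~\ref{alg:maxCount_additive} performs at most $n(n-1)$ such calls, I would take $\delta'=\delta/n^2$, so that $|S|=\Theta(\log(n/\delta))$ suffices and a union bound makes every \textsc{PairwiseComp} call behave as promised simultaneously with probability $1-\delta$. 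I then condition on this good event.

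Next I would argue, exactly as in Lemma~\ref{lem:count_appendix}, that under this good event no point $w$ with $d(u,w)<d(u,v_{\max})-4\alpha$ can beat $v_{\max}$ on \Count. Let $v_{\max}=\arg\max_{v\in V}d(u,v)$. For $v_{\max}$, every $v_j$ with $d(u,v_j)<d(u,v_{\max})-2\alpha$ triggers \textsc{PairwiseComp}$(u,v_{\max},v_j,S)=\No$, so
\[
\Count(u,v_{\max},S,V)\;\geq\;\bigl|\{v_j\in V\setminus\{v_{\max}\}:\ d(u,v_j)<d(u,v_{\max})-2\alpha\}\bigr|.
\]
For the candidate $w$, every $v_j$ with $d(u,v_j)>d(u,w)+2\alpha$ correctly makes \textsc{PairwiseComp}$(u,w,v_j,S)=\Yes$, hence
\[
\Count(u,w,S,V)\;\leq\;\bigl|\{v_j\in V\setminus\{w\}:\ d(u,v_j)\leq d(u,w)+2\alpha\}\bigr|.
\]
Letting $K=\{v_j\in V:\ d(u,v_j)\leq d(u,w)+2\alpha\}$, I note $w\in K$ and, since $d(u,w)+2\alpha<d(u,v_{\max})-2\alpha$, also $v_{\max}\notin K$ and $K\subseteq\{v_j\neq v_{\max}:d(u,v_j)<d(u,v_{\max})-2\alpha\}$. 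This gives the strict inequality $\Count(u,v_{\max},S,V)\geq|K|>|K|-1\geq\Count(u,w,S,V)$, so $w$ cannot be the argmax, and the returned $\umax$ satisfies $d(u,\umax)\geq d(u,v_{\max})-4\alpha$.

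Finally the query complexity is immediate: there are $\binom{n}{2}$ pairs $(v_i,v_j)$ each of which costs one \textsc{PairwiseComp} of $|S|=O(\log(n/\delta))$ quadruplet queries, yielding $O(n^2\log(n/\delta))$ in total. The only subtlety I anticipate is the bookkeeping around the probability budget---making sure the same union bound over the $O(n^2)$ pairwise calls is enough to support both inequalities (for $v_{\max}$ and for every candidate $w$) simultaneously---but this is routine once $|S|$ is taken to be $\Omega(\log(n/\delta))$ with a sufficiently large constant. No new ideas beyond Lemma~\ref{lem:pairwise} and the additive variant of the argument in Lemma~\ref{lem:count_appendix} are needed.
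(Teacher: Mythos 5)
Your proposal is correct and follows essentially the same route as the paper's proof: a union bound with per-call failure probability $\delta/n^2$ over the $O(n^2)$ \textsc{PairwiseComp} calls, followed by the additive analogue of the \textsc{Count-Max} argument showing $\Count(u,\vmax,S,V) > \Count(u,w,S,V)$ for any $w$ with $d(u,w) < d(u,\vmax)-4\alpha$, and the same $O(n^2\log(n/\delta))$ query count. Your bookkeeping via the set $K$ is just a repackaging of the paper's ``$+1$ versus at most $|K|-1$'' comparison, so no substantive difference.
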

\begin{proof}
Let $\vmax = \max_{ v \in V } d(u, v)$. Consider a value $w \in V$ such that $d(u, w) < d(u, \vmax) - 4\alpha$. We compare the \Count~ values for $\vmax$ and $w$ given by, $ \Count( u, \vmax, S, V) = \sum_{v_j \in V\setminus \{ \vmax \}} 1\{ \textsc{Pairwise-Comp}(u, \vmax, v_j, S) == \No{} \}$ and $\Count(u, w, S, V) = \sum_{v_j \in V\setminus \{ w \}} 1\{ \textsc{Pairwise-Comp}(u, w, v_j, S) == \No{} \}$. We argue that $w$ can never be returned by Algorithm \ref{alg:maxCount_additive}, i.e., $\Count(u, w, S, V) < \Count( u, \vmax, S, V)$. Using Lemma~\ref{lem:pairwise_appendix} we have : 

 \begin{align*}
\Count( u, \vmax, S, V) &= \sum_{v_j \in V \setminus \{ \vmax \}} 1\{ \textsc{Pairwise-Comp}(u, \vmax, v_j, S) == \No{} \} \\
&\geq \sum_{v_j \in V \setminus \{ \vmax \} } 1 \{ d(u, v_j) < d(u, \vmax) - 2\alpha \} \\
&= {1}\{ d(u, w) < d(u, \vmax) - 2\alpha\} \ \ +  \sum_{v_j \in V \setminus \{ \vmax, w \} } 1 \{ d(u, v_j) < d(u, \vmax) - 2\alpha \}  \\
&= 1 + \sum_{v_j \in V \setminus \{ \vmax, w\} } 1 \{ d(u, v_j) < d(u, \vmax) - 2\alpha \} \\
\Count( u, w, S, V) &= \sum_{v_j \in V\setminus \{ w \}} 1\{ \textsc{Pairwise-Comp}(u, w, v_j, S) == \No{} \} \\
&\leq \sum_{v_j \in V \setminus \{ w \} } 1 \{ d(u, v_j) < d(u, w) + 2\alpha \} \\
&\leq \sum_{v_j \in V \setminus \{ w, \vmax \} } 1 \{ d(u, v_j) < d(u, \vmax) - 2\alpha \} \\
\end{align*}

Combining the two, we have :
  \[\Count(u, \vmax, S, V) > \Count(u, w, S, V) \]
This shows that the \Count of $\vmax$ is strictly greater than the count of any point $w$ when  
$d(u, w) < d(u, \vmax) - 4\alpha$. Therefore, our algorithm would have output $\vmax$ instead of $w$. For calculating the \Count~for all points in $V$, we make at most $|V|^2 \cdot |S|$ oracle queries as we compare every point with every other point using Algorithm~\ref{alg:pairwise}. Finally, we output the point $\umax$ as the value with highest \Count. From Lemma~\ref{lem:pairwise_appendix}, when $|S| = \Omega(\log(n/\delta))$, the answer to any pairwise query is correct with a failure probability of $\delta/n^2$. As there are $n^2$ pairwise comparisons, and each with failure probability of $\delta/n^2$, from union bound, we have that that the total failure probability is $\delta$. Hence, the claim.
\end{proof}

\begin{algorithm}[H]
\begin{algorithmic}[1]
\State \textbf{Input} : Set of values $V$, Degree $\lambda$, query point $u$ and a set $S$.
\State \textbf{Output} : An approximate farthest point from $u$
\State Construct a balanced $\lambda$-ary tree $\mathcal T$ with $|V|$ nodes as leaves.
\State Let $\pi_V$ be a random permutation of $V$ assigned to leaves of $\mathcal T$ 
\For{$i=1$ to $\log_\lambda |V|$}
\For{internal node $w$ at level $\log_\lambda |V|-i$}
\State Let  $U$ denote the children of $w$.
\State Set the internal node $w$ to $\textsc{Count-Max}(u, S, U)$
\EndFor
\EndFor
\State $\umax \leftarrow $ point at root of $\mathcal T$
\State \Return $\umax$
\end{algorithmic}
\caption{\textsc{Tournament} : finds the farthest point using a tournament tree}\label{alg:maxTournament_additive}
\end{algorithm}

\begin{algorithm}[!ht]
\begin{algorithmic}[1]
\State \textbf{Input} : Set of values $V$, number of partitions $l$, query point $u$ and a set $S$.
\State \textbf{Output} : A set of farthest points from each partition

\State Randomly partition $V$ into $l$ equal parts $V_1, V_2, \cdots V_{l}$ 
\For{$i = 1$ to $l$}
\State $p_i\leftarrow \textsc{Tournament}(u, S, V_i,2)$
\State $T\leftarrow T\cup\{p_i\}$
\EndFor
\State \Return $T$
\end{algorithmic}
\caption{\textsc{Tournament-Partition} \label{alg:max_noise_additive}}
\end{algorithm}
\begin{algorithm}[h]
\begin{algorithmic}[1]
\State \textbf{Input} : Set of values $V$, number of iterations $t$, query point $u$ and a set $S$.
\State \textbf{Output} : An approximate farthest point $\umax$
\State $i\leftarrow 1,  T \leftarrow \phi$ 
\State Let $\widetilde{V}$ denote a sample of size $\sqrt{n}t$ selected uniformly at random (with replacement) from $V$.
\For{$i\leq t$}
\State $T_i \leftarrow \textsc{Tournament-Partition}(u, S, V, l)$
\State $T\leftarrow T \cup T_i$
\EndFor
\State  $\umax  \leftarrow \textsc{Count-Max}(u, S, \widetilde{V} \cup T)$
\State \Return $\umax$
\end{algorithmic}
\caption{\textsc{Max-Prob} : {Maximum with Probabilistic Noise} \label{alg:max_prob_additive}}
\end{algorithm}
Let the farthest point from query point $u$ among $V$ be denoted by $\vmax$ and the set of records for which the oracle answer \textit{can} be incorrect is given by 
$$C =\{ w \mid w \in V, d(u, w) \geq  d(u, \vmax) - 2\alpha \}$$ 
\begin{lemma}
\begin{enumerate}
 \item If $|C| >  {\sqrt{n}}/{2}$, then there exists a value $ v_j \in \widetilde{V}$ satisfying
    $d(u, v_j) \geq d(u, {\vmax})- 2\alpha$ with a probability of $1-\delta/2$.
 \item Suppose $|C| \leq  {\sqrt{n}}/{2}$. Then, $T$ contains $\vmax$ with a probability at least $1-\delta/2$.
\end{enumerate}
\label{lem:hierquality_additive_appendix}
\end{lemma}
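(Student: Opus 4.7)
\medskip

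\noindent\textbf{Proof proposal.} My plan is to mirror the argument used for Lemma~\ref{lem:hierquality_appendix} essentially verbatim, replacing the multiplicative confusion set (points within a $(1+\mu)$ factor of $\vmax$) by the additive confusion set $C = \{ w \in V : d(u,w) \geq d(u,\vmax) - 2\alpha\}$. The key structural reason the argument carries over is that, by Lemma~\ref{lem:pairwise_appendix}, the subroutine \textsc{PairwiseComp} can be wrong on a quadruplet query only when the two candidate distances differ by less than $2\alpha$, so $C$ plays exactly the role that the multiplicative confusion set played in Section~\ref{sec:finding_max}. Given this, points outside $C$ behave with respect to $\vmax$ the way ``safe'' comparisons behaved in the adversarial model.

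\noindent\textbf{Part (1).} First I would bound the probability that the random sample $\widetilde V$ of size $\sqrt n\, t = 2\sqrt n\log(2/\delta)$ (taken with replacement) misses $C$:
\[
\Pr[\widetilde V \cap C = \emptyset] = \left(1-\tfrac{|C|}{n}\right)^{|\widetilde V|} \leq \left(1-\tfrac{1}{2\sqrt n}\right)^{2\sqrt n\log(2/\delta)} \leq e^{-\log(2/\delta)} = \tfrac{\delta}{2}.
\]
Any $v_j \in \widetilde V\cap C$ satisfies $d(u,v_j) \geq d(u,\vmax) - 2\alpha$ by definition of $C$, giving part (1).

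\noindent\textbf{Part (2).} For this part I would follow the two-step argument in Lemma~\ref{lem:hierquality_appendix}. Fix the partition $V_i$ containing $\vmax$. Since $|C|\leq \sqrt n/2$ and partitioning is uniformly random into $l=\sqrt n$ equal parts, $\mathbb{E}[|C\cap V_i\setminus\{\vmax\}|] \leq |C|/l \leq 1/2$, so by Markov $\Pr[V_i\setminus\{\vmax\} \text{ contains a point of } C] \leq 1/2$. Conditioned on this event failing, every comparison $\vmax$ is involved in during $\textsc{Tournament}(V_i,2)$ is between $\vmax$ and some point $x$ with $d(u,x) < d(u,\vmax) - 2\alpha$, so by Lemma~\ref{lem:pairwise_appendix} each such \textsc{PairwiseComp} call returns the correct answer with high probability. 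Setting the internal failure probability of \textsc{PairwiseComp} to $\delta/\mathrm{poly}(n)$ (which costs only an extra $O(\log(n/\delta))$ factor in $|S|$ that is already absorbed in the statement) and union-bounding over the $O(\sqrt n)$ comparisons involving $\vmax$ in the tournament ensures $\vmax$ wins the tournament on $V_i$. Running this procedure independently $t = 2\log(2/\delta)$ times and taking $T$ to be the union of outputs, the probability that \emph{every} iteration has some $C$-point in $V_i$ is at most $(1/2)^{2\log(2/\delta)} \leq \delta/2$; hence $\vmax \in T$ with probability at least $1-\delta/2$.

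\noindent\textbf{Main obstacle.} The essentially combinatorial/probabilistic argument above is a direct port of the adversarial proof, so the only genuinely new thing is correctly accounting for the probabilistic failure of \textsc{PairwiseComp}. The subtlety is that we must union-bound the failure probability of \textsc{PairwiseComp} across all comparisons made during the tournament (and across the $t$ repetitions), and fold this into the $1-\delta/2$ guarantee; this forces $|S|$ to be $\Omega(\log(n/\delta))$ rather than $\Omega(\log(1/\delta))$. Everything else is just rewriting the proof of Lemma~\ref{lem:hierquality_appendix} with $C$ replaced by its additive analog.
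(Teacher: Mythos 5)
Your proposal is correct and is essentially the paper's own argument: the paper proves this lemma by noting it is the additive analog of Lemma~\ref{lem:hierquality_appendix}, using the same sampling bound for part (1) and the same Markov-plus-repetition tournament argument for part (2), exactly as you do. Your extra bookkeeping of the \textsc{PairwiseComp} failure probabilities (union bound over the comparisons and repetitions, which is why $|S|=\Omega(\log(n/\delta))$ is needed) is the same accounting the paper performs, only it defers it to Lemma~\ref{lem:count_additive_appendix} and the proof of Theorem~\ref{thm:farthest_nearest_prob_appendix} rather than folding it into this lemma.
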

\begin{proof}
The proof is similar to Lemma~\ref{lem:hierquality_appendix}.
\end{proof}

\begin{theorem}[Theorem~\ref{thm:farthest_nearest_prob} restated]\label{thm:farthest_nearest_prob_appendix}
Given a query vertex $u$ and a set $S$ with $|S| = \Omega(\log(n/\delta))$ such that $\max_{v\in S}d(u,v)\leq \alpha$ then the farthest identified using Algorithm~\ref{alg:max_adv} (with \textsc{PairwiseComp}), denoted by $\umax$ is within $6 \alpha$ distance from the optimal farthest point, i.e., $d(u, \umax) \geq \max_{v \in V} d(u, v) - 6\alpha$ with a probability of $1-\delta$. Further the query complexity is $O(n\log^3(n/\delta))$.
\end{theorem}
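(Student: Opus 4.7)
The plan is to mirror the proof of Theorem~\ref{thm:max_adv_appendix} with additive error $2\alpha$ in place of multiplicative error $(1+\mu)$, where the additive slack arises from replacing each quadruplet oracle call $\oracle(u,v_i,u,v_j)$ by \textsc{PairwiseComp}$(u,v_i,v_j,S)$. Lemma~\ref{lem:pairwise_appendix} already certifies that if $|d(u,v_i)-d(u,v_j)|\ge 2\alpha$, then \textsc{PairwiseComp} returns the correct answer with probability $1-\delta'$ using $|S|=\Omega(\log(1/\delta'))$ oracle queries; otherwise it may behave arbitrarily. This is precisely an additive analogue of the adversarial noise model, so it suffices to check that the two ingredients of Algorithm~\ref{alg:max_adv} — \textsc{Count-Max} and \textsc{Tournament-Partition} — degrade additively rather than multiplicatively.

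First I would invoke Lemma~\ref{lem:count_additive_appendix}, which shows that running \textsc{Count-Max} on any subset using \textsc{PairwiseComp} returns a point whose distance from $u$ is within $4\alpha$ of the true farthest in that subset, conditioned on all pairwise comparisons being correct on pairs separated by more than $2\alpha$. Next, I would apply Lemma~\ref{lem:hierquality_additive_appendix} with the additive confusion set $C=\{w\in V : d(u,w)\ge d(u,\vmax)-2\alpha\}$. Its two cases handle the same dichotomy as in Lemma~\ref{lem:hierquality_appendix}: if $|C|>\sqrt{n}/2$, the random sample $\widetilde V$ of size $\sqrt{n}t$ hits $C$ with probability $1-\delta/2$, giving a point within $2\alpha$ of $\vmax$; if $|C|\le \sqrt{n}/2$, a Markov plus amplification argument over $t=\Theta(\log(1/\delta))$ runs of \textsc{Tournament-Partition} (with $\ell=\sqrt{n}$) ensures that in at least one run, the partition containing $\vmax$ contains no other element of $C$, so the binary tournament in that part returns $\vmax$ exactly with probability $1-\delta/2$.

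Combining the two ingredients, the set $\widetilde V\cup T$ fed into the final \textsc{Count-Max} contains some point $v^*$ with $d(u,v^*)\ge d(u,\vmax)-2\alpha$ with probability $1-\delta/2$. Applying Lemma~\ref{lem:count_additive_appendix} to this final call loses another $4\alpha$, so the returned point $\umax$ satisfies
\[
d(u,\umax)\;\ge\;d(u,v^*)-4\alpha\;\ge\;d(u,\vmax)-6\alpha,
\]
which is the claimed additive guarantee.

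The main obstacle — and the only place this proof is not a mechanical translation of the adversarial proof — is the probability accounting for the \textsc{PairwiseComp} calls. Algorithm~\ref{alg:max_adv} with the additive reduction performs $O(n\log^2(n/\delta))$ pairwise comparisons across the sampling, partitioned binary tournaments, and final \textsc{Count-Max}. To union-bound over all of them I would set each comparison to fail with probability at most $\delta/\bigl(n\log^2(n/\delta)\bigr)$, which by Lemma~\ref{lem:pairwise_appendix} requires $|S|=\Omega(\log(n/\delta))$ oracle queries per comparison; this is exactly the hypothesis on $|S|$. The total oracle-query cost is therefore
\[
O\!\bigl(n\log^2(n/\delta)\bigr)\cdot O(\log(n/\delta))\;=\;O\!\bigl(n\log^3(n/\delta)\bigr),
\]
and a final union bound with the $\delta/2$ failure probability from Lemma~\ref{lem:hierquality_additive_appendix} gives overall success probability $1-\delta$, completing the theorem.
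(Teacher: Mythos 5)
Your proposal is correct and follows essentially the same route as the paper's proof: the additive confusion set $C=\{w : d(u,w)\ge d(u,\vmax)-2\alpha\}$ with the two-case sampling/tournament dichotomy of Lemma~\ref{lem:hierquality_additive_appendix}, the $4\alpha$ loss from the final \textsc{Count-Max} via Lemma~\ref{lem:count_additive_appendix}, the resulting $2\alpha+4\alpha=6\alpha$ bound, and the $O(n\log^3(n/\delta))$ query count obtained by charging $O(\log(n/\delta))$ oracle queries per \textsc{PairwiseComp} call with a union bound over all comparisons. Your accounting of the per-comparison failure probability is, if anything, slightly more explicit than the paper's, but it is the same argument.
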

\begin{proof}
The proof is similar to Theorem~\ref{thm:max_adv_appendix}. In Algorithm~\ref{alg:max_prob_additive}, we first identify an approximate maximum value using $\text{Sampling}$. If $|C| \geq \frac{\sqrt n}{2}$, then, from Lemma~\ref{lem:hierquality_additive_appendix}, we have that the value returned is a $2\alpha$ additive approximation of the maximum value of $V$. Otherwise,  from Lemma~\ref{lem:hierquality_additive_appendix}, $T$ contains $\vmax$ with a probability $1-\delta/2$. As we use $\textsc{Count-Max}$ on the set $\widetilde{V} \cup T$, we know that the value returned, i.e.,  $\umax$ is a $4\alpha$ of the maximum among values from $\widetilde{V} \cup T$. Therefore, $d(u, \umax) \geq d(u, \vmax) - 6\alpha$. Using union bound over $n \cdot t$ comparisons, the total probability of failure is $\delta$.

For query complexity, Algorithm~\ref{alg:max_noise_additive} obtains a set $\widetilde{V}$ of $\sqrt{n} t$ sample values. Along with the set $T$ obtained (where $|T| = \frac{nt}{l}$), we use $\textsc{Count-Max}$ on $\widetilde{V} \cup T$ to output the maximum $\umax$. 
This step requires $O(|\widetilde{V} \cup T |^2 |S|) = O((\sqrt{n}t + \frac{nt}{l})^2 \log(n/\delta))$ oracle queries. In an iteration $i$, for obtaining $T_i$, we make $O(\sum_{j} |V_j|\log(n/\delta)) = O(n\log(n/\delta))$ oracle queries (Claim~\ref{cl:partition_q}), and for $t$ iterations, we make $O(n t\log(n/\delta))$ queries. Using $t = 2\log(2n/\delta), l = \sqrt{n}$, in total, we make $O(n t\log(n/\delta) +(\sqrt{n}t + \frac{nt}{l})^2\log(n/\delta)) = O(n \log^3(n/\delta))$ oracle queries. Hence, the theorem.
\end{proof}


\section{$k$-center : Adversarial Noise}\label{app:k-center}

 \begin{lemma}
 Suppose in an iteration $t$ of Greedy algorithm, centers are given by $S_t$ and we reassign points using \Assign~which is a $\beta$-approximation to the correct assignment. In iteration $t+1$, using this assignment, if we obtain an $\alpha$-approximate farthest point using \ApproxFarthest, then, after $k$ iterations, Greedy algorithm obtains a $2\alpha\beta^2$-approximation for the $k$-center objective. 
 \label{lem:alphabetalem}
\end{lemma}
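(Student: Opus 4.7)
The plan is to mirror the classical two-case analysis for the greedy $k$-center algorithm, carefully tracking how the $\alpha$-approximate farthest selection and $\beta$-approximate reassignment compose. Let $\OPT$ be the optimal radius with optimal clusters $C^*_1,\ldots,C^*_k$ centered at $c^*_1,\ldots,c^*_k$, write $S_i=\{s_1,\ldots,s_i\}$ for the centers chosen after iteration $i$, let $\pi_i$ be the $\beta$-approximate assignment maintained after iteration $i$ so that $d(v,\pi_i(v))\leq \beta\cdot d(v,S_i)$ for all $v$ (with $d(v,S)=\min_{s\in S}d(v,s)$), and write $r(S)=\max_v d(v,S)$. Since the output radius is $\max_v d(v,\pi_k(v))\leq \beta\cdot r(S_k)$, it suffices to show $r(S_k)\leq 2\alpha\beta\cdot \OPT$.

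The pigeonhole step is over the $k+1$ points consisting of the greedy centers $s_1,\ldots,s_k$ together with any fixed witness $v$. In Case A, two greedy centers $s_i,s_j$ with $i<j$ fall in the same optimal cluster $C^*_l$, so the triangle inequality through $c^*_l$ gives $d(s_i,s_j)\leq 2\OPT$, hence $d(s_j,S_{j-1})\leq d(s_j,s_i)\leq 2\OPT$. At iteration $j$ the center $s_j$ was picked as the $\alpha$-approximate farthest under the assignment $\pi_{j-1}$, so
\[
d(s_j,\pi_{j-1}(s_j))\;\geq\;\frac{1}{\alpha}\max_v d(v,\pi_{j-1}(v))\;\geq\;\frac{1}{\alpha}\, r(S_{j-1}),
\]
where the second inequality uses $d(v,\pi_{j-1}(v))\geq d(v,S_{j-1})$. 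Composing with the $\beta$-approximation $d(s_j,\pi_{j-1}(s_j))\leq \beta\cdot d(s_j,S_{j-1})$ yields $r(S_{j-1})\leq \alpha\beta\cdot d(s_j,S_{j-1})\leq 2\alpha\beta\cdot \OPT$, and since $r(S)$ is monotone non-increasing in $S$ we conclude $r(S_k)\leq r(S_{j-1})\leq 2\alpha\beta\cdot \OPT$.

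In Case B the greedy centers lie in $k$ distinct optimal clusters; then for any $v$, letting $c^*_l$ be its optimal center and $s_m$ the greedy center inside $C^*_l$, the triangle inequality gives $d(v,s_m)\leq d(v,c^*_l)+d(c^*_l,s_m)\leq 2\OPT$, so $r(S_k)\leq 2\OPT\leq 2\alpha\beta\cdot \OPT$ (using $\alpha,\beta\geq 1$). Multiplying either bound on $r(S_k)$ by the final assignment loss $\beta$ yields the claimed $2\alpha\beta^2$-approximation. The one subtlety I expect to dwell on is in Case A: the approximate-farthest guarantee is stated relative to the noisy assignment $\pi_{j-1}$ rather than the true nearest-center distance, so the two approximation factors must be composed before invoking the triangle inequality — this is precisely the source of the extra factor of $\beta$ that turns the classical $2$ into $2\alpha\beta^2$ rather than $2\alpha\beta$.
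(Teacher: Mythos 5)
Your proof is correct and takes essentially the same route as the paper's: the identical two-case analysis (two greedy centers falling in the same optimal cluster versus one per cluster), the same chain composing the $\alpha$-farthest and $\beta$-assignment losses via the triangle inequality to get $r(S_k)\leq 2\alpha\beta\cdot \OPT$, and the same final factor of $\beta$ from the output assignment. The only cosmetic difference is that you state the approximate-farthest guarantee relative to the maximum \emph{assigned} distance while the paper phrases it relative to the true max--min distance of the farthest point; both readings compose to the claimed $2\alpha\beta^2$ bound.
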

 \begin{proof}
 Consider an optimum clustering $C^*$ with centers $u_1, u_2,.., u_k$ respectively: $C^*(u_1), C^*(u_2), \cdots,C^*(u_k)$. Let the centers obtained by Algorithm~\ref{alg:greedykcenter_adv} be denoted by $S$. If $|S \cap C^*(u_i)| = 1$ for all $i$, then, for some point $x \in C^*(u_i)$ assigned to $s_j \in S$ by Algorithm \Assign, we have 
 $$ d(x, S \cap C^*(u_i) ) \leq d(x, u_i ) + d(u_i, S \cap C^*(u_i) ) \leq 2OPT$$ 
 \[ \Longrightarrow d(x, s_j) \leq \beta \ \text{min}_{s_k \in S} \ d(x, s_k) \leq \beta \ d(x, S\cap C^*(u_i)) \leq 2\beta OPT\]

 Therefore, every point in $V$ is at a distance of at most $2 \beta OPT$ from a center assigned in $S$. \\ \\
 Suppose for some $j$ we have $|S \cap C^*(u_j)| \geq 2$. Let $s_1, s_2 \in S \cap C^*(u_j)$ and $s_2$ appeared after $s_1$ in iteration $t+1$. As $s_1 \in S_t$, we have $\min_{w \in S_t} d(w, s_2) \leq d(s_1, s_2)$. In iteration $t$, we know that the farthest point $s_2$  is an $\alpha$-approximation of the farthest point (say $f_t$). Moreover, suppose $s_2$ assigned to cluster with center $s_k$ in iteration $t$ that is a $\beta$-approximation of it's true center. 
 Therefore, $$\frac{1}{\alpha}\ {\min_{w \in S_t} d(w, f_t)} \leq d(s_k, s_2) \leq \beta \min_{w \in S_t} d(w, s_2) \leq \beta d(s_1, s_2)$$ Because $s_1$ and $s_2$ are in the same optimum cluster, from triangle inequality we have $d(s_1, s_2) \leq 2OPT$. Combining all the above we get $\min_{w \in S_t}d(w, f_t) \leq 2\alpha \beta OPT$ which means that farthest point of iteration $t$ is at a distance of $2\alpha \beta OPT$ from $S_t$. In the subsequent iterations, the distance of any point to the final set of centers, given by $S$ only gets smaller. Hence,
 $$\max_{v} \min_{w \in S} d(v,w) \leq \max_v \min_{w \in S_t} d(v,w) = \min_{w \in S_t} d(f_t,w) \leq 2\alpha \beta OPT$$ 
 However, when we output the final clusters and centers, the farthest point after $k$-iterations (say $f_k$) could be assigned to center $v_j \in S$ that is a $\beta$-approximation of the distance to true center. 
 \[ d(f_k, v_j) \leq \beta \ \text{ min}_{w \in S} \ d(f_k, w) \leq 2\alpha \beta^2 \ OPT \]
 Therefore, every point is assigned to a cluster with distance at most $2\alpha \beta^2 \ OPT$. Hence the claim.
 \end{proof}

\begin{lemma}
Given a set $S$ of centers, Algorithm \textsc{Assign} assigns a point $u$ to a cluster $s_j \in S$ such that $d(u, s_j) \leq (1+\mu)^2 \min_{s_t \in S}\{d(u ,s_t)\}$ using $O(nk)$ queries.
\label{lem:assign_adv}
\end{lemma}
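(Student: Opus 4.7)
}
The plan is to mirror the \textsc{Count-Max} analysis (Lemma~\ref{lem:count_appendix}) but applied to distances from a fixed query point $u$, using the $\MCount$ score as the analogue of $\Count$. Fix $u$ and let $s^{\ast} = \arg\min_{s \in S} d(u,s)$, with $v^{\ast} := d(u,s^{\ast})$. It suffices to show that for any $s_j \in S$ with $d(u,s_j) > (1+\mu)^2 v^{\ast}$, we have $\MCount(u,s_j) < \MCount(u,s^{\ast})$; this immediately implies that the center chosen by \textsc{Assign} (the argmax of $\MCount(u,\cdot)$) lies within a $(1+\mu)^2$ factor of $v^{\ast}$.

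The two estimates I will establish are the direct analogues of those in Lemma~\ref{lem:count_appendix}, instantiated with the adversarial oracle rule on the pair of distances $d(u,s_j)$ versus $d(u,s_k)$. First, for $s^{\ast}$: whenever $d(u,s_k) > (1+\mu)\,v^{\ast}$, the oracle must answer $\oracle((s^{\ast},u),(s_k,u)) = \Yes$ correctly, so
\[
\MCount(u,s^{\ast}) \;\geq\; \bigl|\{ s_k \in S\setminus\{s^{\ast}\} : d(u,s_k) > (1+\mu)\,v^{\ast}\}\bigr|.
\]
Second, for any $s_j$ with $d(u,s_j) > (1+\mu)^2 v^{\ast}$: whenever $d(u,s_k) < d(u,s_j)/(1+\mu)$, the oracle must return \No{}, so
\[
\MCount(u,s_j) \;\leq\; \bigl|\{ s_k \in S\setminus\{s_j\} : d(u,s_k) \geq d(u,s_j)/(1+\mu)\}\bigr|.
\]

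The remaining step is a containment/strict-inequality check. Since $d(u,s_j)/(1+\mu) > (1+\mu)\,v^{\ast}$, every $s_k$ counted by the upper bound on $\MCount(u,s_j)$ is also counted in the lower bound on $\MCount(u,s^{\ast})$; moreover $s_j$ itself is counted in the latter (as $d(u,s_j) > (1+\mu)v^{\ast}$) but excluded from the former, while $s^{\ast}$ sits in neither. This gives a strict inequality $\MCount(u,s_j) < \MCount(u,s^{\ast})$, completing the approximation argument. I do not expect a genuine obstacle here; the only subtle point is making sure that the indices $s^{\ast}$ and $s_j$ are accounted for on exactly the right side of each inequality, which is a routine bookkeeping step.

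For the query complexity, each pairwise comparison $\oracle((s_j,u),(s_k,u))$ contributes to the $\MCount$ of both $s_j$ and $s_k$, so evaluating all $\MCount$ scores for a single $u$ requires only $\binom{k}{2}$ queries; over the $n$ points this gives the claimed bound (amortizing over iterations of the greedy algorithm, in which each call to \textsc{Assign} only introduces one new center and reuses the comparisons already made in previous iterations, the incremental cost per call is $O(nk)$, which is what I would cite to match the statement).
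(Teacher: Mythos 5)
Your proposal is correct and matches the paper's approach: the paper proves this lemma by saying it is "essentially the same as" the \textsc{Count-Max} analysis with $\MCount$ in place of $\Count$, and your two bounds plus the containment/strict-inequality step are exactly that mirrored argument for the nearest center. Your query-complexity reasoning (only the comparisons involving the newly added center are new, giving $O(k)$ fresh queries per point and $O(nk)$ per call) is also the same amortization the paper uses; just note that the intermediate "$\binom{k}{2}$ per point over $n$ points" figure alone would give $O(nk^2)$, so the amortized accounting is the part that actually yields the stated bound.
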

\begin{proof}
The proof is essentially the same as Lemma~\ref{lem:count_appendix} and uses $\MCount$ instead of $\Count$.
\end{proof}

\begin{lemma}\label{lem:far_adv}
Given a set of centers $S$, Algorithm~\ref{alg:max_adv} identifies a point $v_j$ with probability $1-\delta/k$, such that $$
\min_{s_j \in S} d(v_j, s_j)  \ge \max_{v_t \in V} \min_{s_t \in S} \frac{\ d(v_t, s_t)}{(1+\mu)^5} $$ 
\end{lemma}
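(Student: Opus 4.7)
The plan is to reduce the \ApproxFarthest\ guarantee to the maximum-finding guarantee of Theorem~\ref{thm:max_adv} together with the assignment guarantee of Lemma~\ref{lem:assign_adv}. Concretely, recall that we invoke Algorithm~\ref{alg:max_adv} on the list of values $D = \{\, d(v_i, \pi(v_i)) : v_i \in V \,\}$, where $\pi(v_i) \in S$ denotes the center to which $v_i$ is currently assigned by \Assign. So the approximate ``farthest'' point returned is the point $v_j$ whose assigned distance $d(v_j, \pi(v_j))$ is approximately the largest entry of $D$.

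First I would lower bound the true maximum in $D$. Let $v^* = \arg\max_{v_t \in V} \min_{s_t \in S} d(v_t, s_t)$ and $d^* = \min_{s_t \in S} d(v^*, s_t)$. Since the assigned distance of any point is at least its true distance to the closest center, $d(v^*, \pi(v^*)) \geq d^*$, and hence $\max_{v_i \in V} d(v_i, \pi(v_i)) \geq d^*$. Next, applying Theorem~\ref{thm:max_adv} with confidence parameter $\delta/k$, Algorithm~\ref{alg:max_adv} returns a point $v_j$ whose value in $D$ satisfies
\[
d(v_j, \pi(v_j)) \;\geq\; \frac{1}{(1+\mu)^3} \max_{v_i \in V} d(v_i, \pi(v_i)) \;\geq\; \frac{d^*}{(1+\mu)^3}
\]
with probability at least $1 - \delta/k$.

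Finally, I convert the assigned distance of $v_j$ back to the true minimum distance using Lemma~\ref{lem:assign_adv}, which guarantees $d(v_j, \pi(v_j)) \leq (1+\mu)^2 \min_{s_t \in S} d(v_j, s_t)$. Rearranging,
\[
\min_{s_j \in S} d(v_j, s_j) \;\geq\; \frac{d(v_j, \pi(v_j))}{(1+\mu)^2} \;\geq\; \frac{d^*}{(1+\mu)^5} \;=\; \frac{1}{(1+\mu)^5}\, \max_{v_t \in V} \min_{s_t \in S} d(v_t, s_t),
\]
which is exactly the claim, holding with probability $1 - \delta/k$ by the single invocation of Theorem~\ref{thm:max_adv}. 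The main subtlety (rather than obstacle) is simply recognizing that the \emph{assigned} distances, rather than the true minimum distances, are the right input to the max-finder, and then paying the factor $(1+\mu)^2$ exactly once to translate back; the two sources of approximation compose multiplicatively to yield $(1+\mu)^5$, and the union bound over $k$ iterations (done outside this lemma) will absorb the failure probability $\delta/k$ into an overall $\delta$.
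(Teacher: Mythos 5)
Your proposal is correct and follows essentially the same route as the paper's proof: apply Theorem~\ref{thm:max_adv} (with parameters chosen so the failure probability is $\delta/k$) to the list of assigned distances to lose a $(1+\mu)^3$ factor, note that the assigned distance of the true farthest point upper-bounds its distance to its nearest center, and then use Lemma~\ref{lem:assign_adv} to pay an additional $(1+\mu)^2$ when converting the returned point's assigned distance back to its distance from the closest center, giving $(1+\mu)^5$ overall.
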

\begin{proof}

Suppose $v_t$ is the farthest point assigned to center $s_t \in S$. Let $v_j$, assigned to $s_j \in S$ be the point returned by Algorithm~\ref{alg:max_adv}. From Theorem~\ref{thm:max_adv_appendix}, we have :


\begin{align*}
 d(v_j,s_j) &\geq \frac{\max_{v_i \in V}  d(v_i, s_i) }{(1+\mu)^3}\\
&\ge \frac{d(v_t, s_t)}{(1+\mu)^3} \geq \frac{\text{min}_{s'_t \in S} \ d(v_t, s'_t)}{(1+\mu)^3}
\end{align*}


Due to error in assignment, using Lemma~\ref{lem:assign_adv}
\[d(v_j,s_j)\leq (1+\mu)^2 \min_{s'_j \in S} d(v_j, s'_j) \]

Combining the above equations we have 
$$\min_{s'_j \in S} d(v_j, s'_j)  \ge \frac{\text{min}_{s'_t \in S} \ d(v_t, s'_t)}{(1+\mu)^5} $$


For \ApproxFarthest, we use $l = \sqrt{n}$ and $t = \log(2k/\delta)$ and $\widetilde{V} = \sqrt{n} t$. So, following the proof in Theorem~\ref{thm:max_adv}, we succeed with probability $1-\delta/k$. Hence, the lemma.
\end{proof}

{
\begin{lemma}
Given a current set of centers $S$,
\begin{enumerate}
    \item \textsc{Assign} assigns a point $u$  to a cluster $C(s_i)$ such that $d(u,s_i) \leq (1+\mu)^2 \min_{s_j \in S}\{d(u,s_j)\}$ {using $O(nk)$ oracle queries additionally}.
    \item \textsc{Approx-Farthest} identifies a point $w$ in cluster $C(s_i)$ such that $\text{ min}_{s_j \in S} \ d(w, s_j) \geq \max_{v_t \in V} \min_{s_t \in S} {\ d(v_t, s_t)}/{(1+\mu)^5}$ with probability $1-\frac{\delta}{k}$ {using $O(n \log^2(k/\delta))$  oracle queries} .
\end{enumerate}
\label{lem:maxAssign_appendix}
\end{lemma}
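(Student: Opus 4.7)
Both claims are essentially repackagings of results already developed in the excerpt: the first claim is the counting analysis of Lemma~\ref{lem:count_appendix} transported from ``values in $V$'' to ``centers in $S$,'' and the second claim combines Theorem~\ref{thm:max_adv} with the first claim exactly in the spirit of Lemma~\ref{lem:far_adv}. I will handle them in that order.

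\textbf{Claim (1): assignment quality.} Fix a point $u \in V$ and let $s^{*} = \arg\min_{s \in S} d(u,s)$. I will mirror the argument of Lemma~\ref{lem:count_appendix}, with $\MCount(u,\cdot)$ playing the role of $\Count(\cdot,S)$ and the centers $s_j \in S$ playing the role of the values in $V$. Suppose some $s_j \neq s^{*}$ satisfies $d(u,s_j) > (1+\mu)^{2}\, d(u,s^{*})$; I will show $\MCount(u,s^{*}) > \MCount(u,s_j)$, so \textsc{Assign} cannot choose $s_j$. The direct comparison $\oracle((s^{*},u),(s_j,u))$ is outside the adversarial window, so it is truthfully $\Yes$ for $s^{*}$, contributing $1$ to $\MCount(u,s^{*})$ and $0$ to $\MCount(u,s_j)$. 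For any third center $s_k$: if the oracle contributes $1$ to $\MCount(u,s_j)$ it must hold that $d(u,s_k) \geq d(u,s_j)/(1+\mu) > (1+\mu)\, d(u,s^{*})$, whence the same oracle call must also contribute $1$ to $\MCount(u,s^{*})$. Summing over $s_k$ and adding the gain from the direct comparison gives $\MCount(u,s^{*}) \geq \MCount(u,s_j)+1$, as in Lemma~\ref{lem:count_appendix}. For the query bound, a fresh call computes all $\MCount$ values with $\binom{k}{2}$ comparisons per point, but within Algorithm~\ref{alg:greedykcenter_adv} only one new center $s_i$ is introduced per iteration, so the incremental (``additional'') cost of one call to \textsc{Assign} is $O(nk)$ oracle queries.

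\textbf{Claim (2): farthest identification quality.} I will run \textsc{Max-Adv} (Algorithm~\ref{alg:max_adv}) on the multiset of values $\{d(v,\pi(v)) : v \in V\}$, where $\pi(v)$ is the center currently assigned to $v$ by \textsc{Assign}; setting the confidence parameter inside \textsc{Max-Adv} to $\delta/k$ makes the query cost $O(n\log^{2}(k/\delta))$ by Theorem~\ref{thm:max_adv}. Let $w$ be the returned point, assigned to $s_i = \pi(w)$, and let $v_t$ denote the \emph{true} farthest point with true closest center $s_t = \arg\min_{s \in S} d(v_t,s)$. By Theorem~\ref{thm:max_adv}, with probability $1 - \delta/k$,
\[
d(w,s_i) \;\geq\; \frac{1}{(1+\mu)^{3}}\max_{v \in V} d(v,\pi(v)) \;\geq\; \frac{d(v_t,\pi(v_t))}{(1+\mu)^{3}} \;\geq\; \frac{d(v_t,s_t)}{(1+\mu)^{3}},
\]
where the last inequality is trivial from $\pi(v_t) \in S$. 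Applying Claim (1) to $w$ gives $d(w,s_i) \leq (1+\mu)^{2} \min_{s_j \in S} d(w,s_j)$. Chaining the two inequalities yields $\min_{s_j \in S} d(w,s_j) \geq d(v_t,s_t)/(1+\mu)^{5}$, which is exactly the claimed bound.

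\textbf{Main obstacle.} Nothing here is technically new; the only point requiring care is the probability bookkeeping. I must set the internal confidence of \textsc{Max-Adv} to $\delta/k$ (not $\delta$) so that, after a union bound over the $k$ iterations of Algorithm~\ref{alg:greedykcenter_adv}, the total failure probability of the outer greedy loop is at most $\delta$. This is what absorbs the extra $\log(k/\delta)$ factor in the query complexity compared to a single-shot application of Theorem~\ref{thm:max_adv}.
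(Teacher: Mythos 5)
Your proposal is correct and follows essentially the same route as the paper: claim (1) is the \textsc{Count-Max} counting argument of Lemma~\ref{lem:count_appendix} transported to $\MCount$ over centers (the paper's Lemma~\ref{lem:assign_adv}), with the same $O(nk)$ incremental-query accounting, and claim (2) is exactly the paper's chaining in Lemma~\ref{lem:far_adv} of Theorem~\ref{thm:max_adv} (run with confidence $\delta/k$, i.e., $t=\log(2k/\delta)$) with the assignment guarantee to lose $(1+\mu)^{3}\cdot(1+\mu)^{2}$. No gaps; you merely inline details the paper delegates to those auxiliary lemmas.
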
}
\begin{proof}
(1) From Lemma~\ref{lem:assign_adv}, we have the claim.  We assign a point to a cluster based on the scores the cluster center received in comparison to other centers. Except for the newly created center, we have previously queried every center with every other center. Therefore, number of \textit{new}  oracle queries made for every point is $O(k)$; that gives us a total of $O(nk)$ additional new queries used by \textsc{Assign}.

(2) From Lemma~\ref{lem:far_adv}, we have that $\text{ min}_{s_j \in S} \ d(w, s_j) \geq \max_{v_t \in V} \min_{s_t \in S} \frac{\ d(v_t, s_t)}{(1+\mu)^5}$ with probability $1-\delta/k$. As the total number of queries made by Algorithm~\ref{alg:max_adv} is $O(nt + (\frac{nt}{l} + \sqrt{n}t)^2)$. For \ApproxFarthest, we use $l = \sqrt{n}$ and $t = \log(2k/\delta)$ and $\widetilde{V} = \sqrt{n} t$, therefore, the query complexity is $O(n \log^2(k/\delta))$.
\end{proof}

\begin{theorem}[Theorem~\ref{thm:main_adv} restated]
For $\mu < \frac{1}{18}$, 
Algorithm~\ref{alg:greedykcenter_adv} achieves a $(2+O(\mu))$-approximation for the $k$-center objective using $O(nk^2 + n k\cdot \log^2(k/\delta))$ oracle queries with probability $1-\delta$.
\label{thm:main_adv_appendix}
\end{theorem}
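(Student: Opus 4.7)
The plan is to combine Lemma~\ref{lem:alphabetalem} with the per-iteration guarantees in Lemma~\ref{lem:maxAssign_appendix}. First I would identify the values of $\alpha$ and $\beta$ in Lemma~\ref{lem:alphabetalem}: from Lemma~\ref{lem:maxAssign_appendix}(1), every point is reassigned to a center whose distance is at most $(1+\mu)^2$ times the distance to its closest center, so $\beta=(1+\mu)^2$; from Lemma~\ref{lem:maxAssign_appendix}(2), the point returned by \ApproxFarthest achieves a $(1+\mu)^5$-approximation of the true farthest point under the current assignment, so $\alpha=(1+\mu)^5$. Plugging into Lemma~\ref{lem:alphabetalem} gives an overall approximation ratio of $2\alpha\beta^2 = 2(1+\mu)^9$.

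Next I would convert this ratio into the advertised $2+O(\mu)$ form. For $\mu < 1/18$, a direct Taylor estimate gives $(1+\mu)^9 \leq 1 + 9\mu + \binom{9}{2}\mu^2(1+\mu)^7 \leq 1 + O(\mu)$, so $2(1+\mu)^9 = 2 + O(\mu)$. The threshold $1/18$ is just a convenient constant ensuring that the higher-order terms are absorbed into the $O(\mu)$ slack; I would state this explicitly and not optimize the constant.

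For query complexity, I would account iteration by iteration. Algorithm~\ref{alg:greedykcenter_adv} runs $k$ iterations. In each iteration, \Assign\ contributes $O(nk)$ new oracle queries (Lemma~\ref{lem:maxAssign_appendix}(1)), and \ApproxFarthest\ contributes $O(n \log^2(k/\delta))$ queries (Lemma~\ref{lem:maxAssign_appendix}(2)). Summing over $k$ iterations yields $O(nk^2 + nk\log^2(k/\delta))$, matching the statement.

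For the success probability, the only randomized step per iteration is \ApproxFarthest, which fails with probability at most $\delta/k$ (this is where the parameter $t=\log(2k/\delta)$ inside \ApproxFarthest\ is chosen so the failure probability is $\delta/k$ rather than $\delta$). Taking a union bound over the $k$ iterations gives a total failure probability of at most $\delta$, after which the deterministic reasoning from Lemma~\ref{lem:alphabetalem} applies. The main (and essentially only) obstacle is bookkeeping: making sure that the $\alpha,\beta$ guarantees in Lemma~\ref{lem:maxAssign_appendix} are applied in the right iteration of the induction in Lemma~\ref{lem:alphabetalem}, since the assignment error in iteration $t$ is what feeds into the farthest-point error in iteration $t+1$; but this has already been set up in the statement of Lemma~\ref{lem:alphabetalem}, so the argument reduces to plugging in constants.
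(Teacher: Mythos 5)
Your proposal is correct and follows essentially the same route as the paper's own proof: plugging $\beta=(1+\mu)^2$ and $\alpha=(1+\mu)^5$ from Lemma~\ref{lem:maxAssign_appendix} into Lemma~\ref{lem:alphabetalem} to get $2(1+\mu)^9 = 2+O(\mu)$ for $\mu<\tfrac{1}{18}$, summing the per-iteration query costs over $k$ iterations, and taking a union bound over the $\delta/k$ failure probabilities. The only cosmetic difference is that you bound $(1+\mu)^9$ via a Taylor estimate rather than the paper's explicit $2+18\mu$ simplification, which changes nothing of substance.
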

\begin{proof}
From the above discussed claim and Lemma~\ref{lem:maxAssign_appendix}, we have that Algorithm~\ref{alg:greedykcenter_adv} achieves a $2(1+\mu)^{9}$ approximation for $k$-center objective. When $\mu < \frac{1}{18}$, we can simplify the approximation factor to $2+18 \mu$, i.e., $2+O(\mu)$.  From Lemma~\ref{lem:maxAssign_appendix}, we have that in each iteration, we succeed with probability $1-\delta/k$. Using union bound, the failure probability is given by $\delta$. For query complexity, as there are $k$ iterations, and in each iteration we use $\Assign$ and $\ApproxFarthest$, using Lemma~\ref{lem:maxAssign_appendix}, we have the theorem.
\end{proof}

\section{$k$-center : Probabilistic Noise}\label{app:k-center-prob}

\subsection{Sampling}


\begin{lemma}\label{lem:sample_appendix}
Consider the sample $\widetilde{V} \subseteq V$ of points obtained by selecting each point with a probability $\frac{450 \log(n/\delta)}{m}$. Then, we have $\frac{400 n \log(n/\delta)}{m}  \leq |\widetilde{V}| \leq \frac{500 n \log(n/\delta)}{m} $ and for every $i \in [k]$, $|C^*(s_i) \cap \widetilde{V}| \geq {400 \log(n/\delta)}$ with probability $1-O(\delta)$ for sufficiently large $\gamma > 0$.
\end{lemma}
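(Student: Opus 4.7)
The claim is a routine application of multiplicative Chernoff bounds, and the plan is to handle the two assertions separately and then combine via a union bound. Let $p = 450\log(n/\delta)/m$ and let $X_v = \mathbf{1}\{v \in \widetilde{V}\}$ for each $v \in V$; by construction the $X_v$ are independent Bernoulli random variables with mean $p$.

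First I would bound the total sample size. We have $|\widetilde{V}| = \sum_{v \in V} X_v$ with $\E[|\widetilde{V}|] = np = 450n\log(n/\delta)/m$. Applying the standard multiplicative Chernoff inequality
\[
\Pr\bigl[\,|X - \E X| \geq \epsilon \E X\,\bigr] \leq 2\exp(-\epsilon^2 \E X / 3)
\]
with $\epsilon = 1/9$ yields a deviation probability of at most $2\exp(-\Theta(n\log(n/\delta)/m))$. Since we have assumed $m = \Omega(\log^3(n/\delta)/\delta)$ (and in particular $n/m \geq 1$), this is at most $\delta$, which forces $|\widetilde{V}| \in [400 n\log(n/\delta)/m,\, 500 n\log(n/\delta)/m]$ with probability $1 - \delta$.

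Next I would handle each optimum cluster. Fix $i \in [k]$ and set $Y_i = |C^*(s_i) \cap \widetilde{V}| = \sum_{v \in C^*(s_i)} X_v$. Because $|C^*(s_i)| \geq m$, we have $\E[Y_i] \geq mp = 450\log(n/\delta)$. Applying the lower-tail Chernoff bound
\[
\Pr\bigl[\,Y_i \leq (1-\epsilon)\E Y_i\,\bigr] \leq \exp(-\epsilon^2 \E Y_i / 2)
\]
with $\epsilon = 1/9$ gives $\Pr[Y_i \leq 400\log(n/\delta)] \leq \exp(-\Theta(\log(n/\delta)))$. By choosing the hidden constant $\gamma = 450$ large enough this probability is at most $\delta/n^2$, and then a union bound over the at most $k \leq n$ optimal clusters yields a total failure probability of at most $\delta/n$ for the second assertion.

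Combining the two events via a final union bound gives the claimed $1 - O(\delta)$ success probability. The only mild subtlety I anticipate is keeping the Chernoff constants consistent with the prescribed numerical thresholds $400$ and $500$ (and ensuring the $\gamma = 450$ sampling rate gives enough slack for both the additive gap between $450\log(n/\delta)$ and $400\log(n/\delta)$ per cluster and the $\pm 50/450$ relative gap for the total); everything else is mechanical.
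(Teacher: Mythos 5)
Your proposal is correct and follows essentially the same route as the paper's proof: a Chernoff bound on the total sample size, a per-cluster Chernoff lower-tail bound using $\E[|C^*(s_i)\cap\widetilde V|]\geq m\cdot\frac{450\log(n/\delta)}{m}=450\log(n/\delta)$, and a union bound over the $k\leq n$ clusters. Your version merely makes the constants ($\epsilon=1/9$, giving exactly the $400$/$500$ thresholds) explicit, which the paper leaves implicit.
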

\begin{proof} We include every point in $\widetilde{V}$ with a probability $\frac{450 \log(n/\delta)}{m}$ where the size of the smallest cluster is $m$. Using Chernoff bound, with probability $1-O(\delta)$, we have :
\[ \frac{400 n \log(n/\delta)}{m} \leq |\widetilde{V}| \leq \frac{500 n \log(n/\delta)}{m} 
\]
Consider an optimal cluster $C^*(v_i)$ with center $v_i$. As every point is included with probability $\frac{450 \log(n/\delta)}{m}$ :
\[ \E[|C^*(s_i) \cap \widetilde{V}|] = |C^*(s_i)| \cdot \frac{450 \log(n/\delta)}{m} \geq {450 \log(n/\delta)}
\]
Using Chernoff bound, with probability at least $1-\delta/n$, we have
\[ |C^*(s_i) \cap \widetilde{V}| \geq 400 \log(n/\delta)\]
Using union bound for all the $k$ clusters, we have the lemma.
\end{proof}

\subsection{Assignment}
$$\ACount(u,s_i,s_j) = \sum_{x\in R(s_i)} \mathbf{1} \{ \oracle(u,x,u,s_j) == \Yes\}$$

\begin{lemma}\label{lem:assign_prob_appendix}
Consider a point $u$ and $s_j\neq s_i$ such that $d(u,s_i)\leq d(u,s_j)-2\OPT$ and $|R(s_i)|\ge 12\log(n/\delta)$, then,  $\ACount(u,s_i,s_j)\geq 0.3 |R(s_i)|$ with a probability of $1-\frac{\delta}{n^2}$.
\end{lemma}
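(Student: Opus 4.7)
\textbf{Proof plan for Lemma~\ref{lem:assign_prob_appendix}.} The plan is to argue that, under the stated separation, every quadruplet query $\oracle(u,x,u,s_j)$ with $x\in R(s_i)$ has the \emph{correct} answer equal to $\Yes$, so each indicator contributing to $\ACount(u,s_i,s_j)$ is a Bernoulli random variable with success probability at least $1-p\geq 0.6$. A standard concentration bound then yields the desired lower tail estimate for $\ACount$.

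First I would exploit the core property. By construction, $R(s_i)$ consists of $\Theta(\log(n/\delta))$ points that belong (with high probability) to the same optimum cluster as $s_i$; in particular, for every $x\in R(s_i)$ we have $d(s_i,x)\leq 2\,\OPT$ via the triangle inequality through the true optimum center. Combining this with the hypothesis $d(u,s_i)\leq d(u,s_j)-2\,\OPT$ and the triangle inequality gives
\[
d(u,x)\;\leq\;d(u,s_i)+d(s_i,x)\;\leq\;(d(u,s_j)-2\,\OPT)+2\,\OPT\;=\;d(u,s_j).
\]
Hence the \emph{truthful} answer to $\oracle(u,x,u,s_j)$ is $\Yes$ for every $x\in R(s_i)$.

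Next I would use the probabilistic noise model. Since each quadruplet $(u,x,u,s_j)$ is distinct as $x$ ranges over $R(s_i)$, persistence of the oracle is not triggered, and the indicator variables $Y_x:=\mathbf{1}\{\oracle(u,x,u,s_j)=\Yes\}$ are mutually independent Bernoullis with $\Pr[Y_x=1]\geq 1-p\geq 0.6$. Setting $N:=|R(s_i)|$, we have $\E[\ACount(u,s_i,s_j)]=\sum_{x\in R(s_i)}\E[Y_x]\geq 0.6\,N$.

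Finally I would apply Hoeffding's inequality (Lemma~\ref{hoeffding}) to bound the lower tail. Taking $t=0.3\,N$ so that $0.6N-t=0.3N$, we get
\[
\Pr\!\left[\ACount(u,s_i,s_j)\leq 0.3\,N\right]\;\leq\;\exp\!\left(-\tfrac{2t^2}{N}\right)\;=\;\exp(-0.18\,N).
\]
For $N\geq 12\log(n/\delta)$ this is at most $\exp(-2.16\log(n/\delta))\leq \delta/n^{2}$, establishing the claim. The only mildly delicate point is implicit: the proof relies on the invariant that the identified core $R(s_i)$ truly lies inside the optimum cluster of $s_i$ (so that $d(s_i,x)\leq 2\,\OPT$); this invariant is maintained separately by the analysis of \textsc{Identify-Core} and the sampling guarantees of Lemma~\ref{lem:sample_appendix}, and I would cite that rather than reprove it here.
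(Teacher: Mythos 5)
Your proposal is correct and follows essentially the same route as the paper's proof: the triangle inequality chain $d(u,x)\leq d(u,s_i)+d(s_i,x)\leq d(u,s_j)$ (using $d(s_i,x)\leq 2\OPT$ for core points, which the paper also relies on implicitly), so each indicator is an independent Bernoulli with success probability at least $1-p\geq 0.6$, followed by a one-sided Hoeffding bound yielding failure probability at most $\exp(-\Theta(|R(s_i)|))\leq \delta/n^2$. Your bookkeeping of the constants (deviation $0.3|R(s_i)|$ versus the paper's threshold $(1-p)|R(s_i)|/2$) is equivalent, and your explicit appeal to the core invariant is a harmless clarification of what the paper leaves tacit.
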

\begin{proof}
Using triangle inequality, for any $x \in R(s_i)$
\begin{align*}
d(u,x) \leq d(u,s_i) + d(s_i,x) \leq d(u, s_j) - 2\OPT + d(s_i, x) \leq d(u, s_j)  
\end{align*}

So, $\oracle (u,x,u,s_j)$ is \Yes~ with a probability at least $1-p$.  We have: 
$$\E[\ACount(u,s_i,s_j)] =  \sum_{x\in R(s_i)} \E[\mathbf{1} \{ \oracle(u,x,u,s_j) == \Yes\}] \geq (1-p)|R(s_i)|$$

Using Hoeffding's inequality, with a probability of $\exp(-|R(s_i)|(1-p)^2/2) \leq \frac{\delta}{n^2}$ (using $p \leq 0.4$), we have $$\ACount(u,s_i,s_j) \leq (1-p) |R(s_i)|/2$$

We have $\Pr[\ACount(u, s_i, s_j) \leq 0.3 |S|] \leq \Pr[\ACount(u, s_i, s_j) \leq (1-p)|S|/2] $. Therefore, with probability $\frac{\delta}{n^2}$, we have $\ACount(u, s_i, s_j) \leq 0.3 |S|$. Hence, the lemma.
\end{proof}

\begin{lemma}\label{lem:core_appendix}
Suppose $u \in C^*(s_i)$ and for some $s_j \in S$, if $d(s_i, s_j) \geq 6\OPT$, then, Algorithm~\ref{alg:assign_prob} assigns $u$ to center $s_i$ with probability $1-\frac{\delta}{n^2}$.
\end{lemma}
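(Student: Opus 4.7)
The plan is to combine the triangle inequality with a Hoeffding-style concentration bound, in the spirit of Lemma~\ref{lem:assign_prob_appendix}. Since $u$ and $s_i$ both lie in the optimal cluster $C^*(s_i)$, the triangle inequality through the true optimum center of that cluster gives $d(u, s_i) \leq 2\OPT$. Granted the structural invariant that every $v_k \in R(s_j)$ satisfies $d(s_j, v_k) \leq 2\OPT$, the hypothesis $d(s_i, s_j) \geq 6\OPT$ yields
\[
d(u, v_k) \;\geq\; d(s_i, s_j) - d(u, s_i) - d(s_j, v_k) \;\geq\; 6\OPT - 2\OPT - 2\OPT \;=\; 2\OPT \;\geq\; d(u, s_i),
\]
so the correct answer to every oracle query $\oracle(u, s_i, u, v_k)$ is \Yes{}.

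Under the persistent probabilistic noise model each of the $|R(s_j)|$ calls is answered correctly with probability at least $1-p$, and the noise in distinct queries is independent (since the queries themselves differ). Hence $\E[\ACount(u, s_i, s_j)] \geq (1-p)\,|R(s_j)|$, and with $|R(s_j)| \geq 12\log(n/\delta)$ and $p \leq 0.4$, Hoeffding's inequality gives
\[
\Pr\!\Bigl[\ACount(u, s_i, s_j) < \tfrac{(1-p)|R(s_j)|}{2}\Bigr] \;\leq\; \exp\!\Bigl(-\tfrac{(1-p)^2 |R(s_j)|}{2}\Bigr) \;\leq\; \frac{\delta}{n^2}.
\]
Since $(1-p)/2 \geq 0.3$, the threshold test in Algorithm~\ref{alg:assign_prob} succeeds with probability at least $1 - \delta/n^2$, moving $u$ into $C(s_i)$, which is what we wanted.

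The main obstacle is the structural invariant $\max_{v_k \in R(s_j)} d(s_j, v_k) \leq 2\OPT$, which does not follow from the statement itself and must come from a companion analysis of \identifycore. The rough plan there is to combine Lemma~\ref{lem:sample_appendix} --- which guarantees that every optimum cluster contributes $\Omega(\log(n/\delta))$ sampled points --- with a \Count-based argument showing that the top-ranked points inside $C(s_j)$ are, with high probability, all drawn from $C^*(s_j)$ and hence within $2\OPT$ of $s_j$ through the true optimum center. I expect the bulk of the technical work to lie in maintaining this invariant inductively across iterations, since earlier iterations of \Assign{} determine who is present in $C(s_j)$ when we examine it. Once that invariant is in hand, the present lemma reduces cleanly to the triangle inequality plus one application of Hoeffding shown above.
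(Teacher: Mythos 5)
Your proposal is correct and follows essentially the same route as the paper: the paper likewise derives the gap via the triangle inequality from $u \in C^*(s_i)$ and $d(s_i,s_j)\geq 6\OPT$, and then invokes Lemma~\ref{lem:assign_prob_appendix}, whose proof is exactly your inlined step of bounding the distance to core points and applying Hoeffding to push $\ACount$ past the $0.3$ threshold. The core-radius invariant you flag is also treated in the paper as a separate companion fact about \identifycore{} rather than part of this lemma, so your decomposition matches the paper's.
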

\begin{proof}
As $u \in C^*(s_i)$, we have $d(u, s_i) \leq 2\OPT$. Therefore, 
\begin{align*}
    &d(s_j, u) -  d(s_i, u) \geq d(s_i, s_j) - 2d(s_i, u) \geq 2\OPT \\
    &d(s_j, u) \geq d(s_i, u) + 2\OPT
\end{align*}
From Lemma~\ref{lem:assign_prob_appendix}, we have that if $d(u, s_i) \leq d(u, s_j) - 2\OPT$, then, we will assign $u$ to $s_i$ with probability $1-\frac{\delta}{n^2}$.
\end{proof}




\begin{lemma}\label{lem:assign_approx}
Given a set of centers $S$, every  $u\in V$ is assigned to a cluster $s_i$ such that $d(u,s_i)\leq \min_{s_j\in S} d(u,s_j) + 2\OPT$ with a probability of $1-{1}/{n^2}$.
\end{lemma}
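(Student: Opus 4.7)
The plan is to fix an arbitrary $u \in V$, set $s^* = \arg\min_{s_j \in S} d(u, s_j)$, and track which cluster $u$ occupies as Algorithm~\ref{alg:greedykcenter_prob} adds centers one by one. The goal is to show that after all $k$ iterations, the host cluster $C(s_i)$ of $u$ satisfies $d(u,s_i) \leq d(u,s^*) + 2\OPT$.

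First I would augment Lemma~\ref{lem:assign_prob_appendix} with its natural converse: if $d(u, s_i) > d(u, s_j) + 2\OPT$, then with probability $1-\delta/n^2$ the score $\ACount(u, s_i, s_j)$ falls strictly below $0.3|R(s_i)|$. The proof mirrors that of Lemma~\ref{lem:assign_prob_appendix}: for every $x \in R(s_i)$ the triangle inequality yields $d(u,x) \geq d(u,s_i) - d(s_i,x) \geq d(u,s_i) - 2\OPT > d(u,s_j)$, so the oracle returns \No{} with probability at least $1-p$, and Hoeffding's inequality pushes $\ACount$ below the threshold. With both directions in hand, every individual move-or-stay decision in \Assign{} faithfully reflects an additive $2\OPT$ comparison of $d(u, s_i)$ and $d(u, s_j)$, except in the narrow in-between regime $|d(u,s_i)-d(u,s_j)|\leq 2\OPT$ where either action is permissible.

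Next I would proceed by induction on iterations, maintaining the invariant that after processing $s_{t+1}$, $u$ lies in some $C(s)$ with $d(u,s) \leq \min_{s_j \in S_{t+1}} d(u, s_j) + 2\OPT$. The base case is immediate since $u$ starts in $C(s_1)$. For the inductive step, suppose $u$ occupies $C(s')$ before iteration $t+1$. The three regimes on $d(u, s_{t+1})$ versus $d(u, s')$ are handled case by case using the two-sided guarantee above: if $d(u,s_{t+1}) \leq d(u,s')-2\OPT$ the move happens and the new host is closer; if $d(u,s_{t+1}) > d(u,s')+2\OPT$, $u$ stays in $C(s')$ whose distance is already bounded; in the in-between regime, whichever cluster $u$ ends up in has distance at most $\min(d(u,s'),d(u,s_{t+1}))+2\OPT$ to $u$. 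Instantiating the invariant at $t=k$ yields the desired bound, and a union bound over at most $nk$ such $\ACount$ tests (each failing with probability $\delta/n^2$) gives the overall failure probability of $1/n^2$ for $\delta$ chosen suitably.

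The main obstacle I anticipate is the in-between regime: a naive composition of ambiguous moves across successive iterations could accumulate an extra $2\OPT$ slack per hop and degrade the bound to $+4\OPT$ or worse. Overcoming this requires exploiting the presence of $s^*$ in $S$ throughout. Concretely, one splits on whether $s^*$ was introduced before or after $u$'s final host $s_i$. If $s^*$ was introduced after $s_i$, the fact that $u$ did not migrate from $C(s_i)$ to $C(s^*)$ at iteration $t^*$ forces $\ACount(u,s^*,s_i)<0.3|R(s^*)|$, which via the converse yields $d(u,s_i) < d(u,s^*) + 2\OPT$ directly. If $s^*$ was introduced first, one shows via the inductive invariant that at the instant $s_i$ joins, $u$ sits in a cluster within $2\OPT$ of $s^*$, and the single forward application of Lemma~\ref{lem:assign_prob_appendix} that triggers the move to $C(s_i)$ preserves the bound. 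Careful bookkeeping along these lines should give the advertised $2\OPT$ additive approximation.
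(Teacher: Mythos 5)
There is a genuine gap, and it sits exactly where you flagged it. Your inductive invariant --- after iteration $t+1$ the host $s$ of $u$ satisfies $d(u,s)\leq \min_{s_j\in S_{t+1}} d(u,s_j)+2\OPT$ --- is not preserved by your case analysis: when a move into $C(s_{t+1})$ fires in the ambiguous band, the two-sided guarantee only gives $d(u,s_{t+1})\leq d(u,s')+2\OPT$ relative to the \emph{current} host $s'$, and composing with the invariant for $s'$ yields $\min_{s_j\in S_t}d(u,s_j)+4\OPT$, not $+2\OPT$. Since \Assign{} (Algorithm~\ref{alg:assign_prob}) compares a point only against its current host's core and never revisits earlier centers, this slack can accumulate hop by hop. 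The same circularity defeats your Case~B repair: the claim that ``at the instant $s_i$ joins, $u$ sits in a cluster within $2\OPT$ of $s^*$'' is the invariant you are trying to establish, and even granting it, an ambiguous move into $C(s_i)$ again gives only $d(u,s_i)\leq d(u,s')+2\OPT\leq d(u,s^*)+4\OPT$. Nothing in your argument excludes drift of the following type: $d(u,s^*)=D$; a later center $s_a$ with $d(u,s_a)=D+1.9\OPT$ whose core points lie at distance just under $D$ from $u$ (possible, since core points may be up to $2\OPT$ from $s_a$), so each query $\oracle(u,x,u,s^*)$ answers \Yes{} with probability $1-p$ and the move fires with high probability; then a center $s_b$ at $D+3.8\OPT$ repeats this against host $s_a$. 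The final host then violates the $+2\OPT$ bound, so ``careful bookkeeping along these lines'' cannot close the induction --- a genuinely new ingredient would be needed. The only airtight piece is your Case~A: if $s^*$ arrives after the final host $s_i$, the failed move test at $s^*$'s iteration, via the contrapositive of Lemma~\ref{lem:assign_prob_appendix}, gives $d(u,s_i)< d(u,s^*)+2\OPT$.

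For comparison, the paper's own proof of Lemma~\ref{lem:assign_approx} is essentially your Case~A asserted for every $s_j\in S$: it argues that if $s_i$ is the final center then $d(u,s_j)\geq d(u,s_i)-2\OPT$ for all $s_j$, since otherwise Lemma~\ref{lem:assign_prob_appendix} would have forced a move, and finishes with a union bound. It neither distinguishes centers introduced before $s_i$ from those introduced after, nor tracks the chain of intermediate hosts, so it implicitly treats every center as if it had been tested against the final host. You correctly identified the subtlety that this elides; the gap is that your proposed repair (the induction plus Case~B) does not actually resolve it, and as written your argument only supports an additive bound that degrades with the number of ambiguous hops rather than the claimed $+2\OPT$.
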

\begin{proof}
From Lemma~\ref{lem:assign_prob_appendix}, we have that a point $u$ is assigned to $s_l$ from $s_m$ if $d(u, s_l) \leq d(u, s_m) - 2\OPT$. If $s_i$ is the final assigned center of $u$, then, for every $s_j$, it must be true that $d(u, s_j) \geq d(u, s_i) - 2\OPT$, which implies $d(u, s_i) \leq \min_{s_j \in S} d(u, s_j) + 2\OPT$. Using union bound over at most $n$ points, we have with a probability of $1-\frac{\delta}{n}$, every point $u$ is assigned as claimed.
\end{proof}



\subsection{Core Calculation}

Consider a cluster $C(s_i)$ with center $s_i$. Let $S_{a}^b$ denote the number of points in the set $| \{x: a\leq d(x,s_i)< b\} |$.

$$\Count(u) = \sum_{x \in C(s_i)} \mathbf{1}\{ \oracle(s_i, x,s_i, u) ==  \No \}$$



\begin{lemma}
Consider any two points $u_1, u_2 \in C(s_i)$ such that $d(u_1,s_i)\leq d(u_2,s_i)$, then $\E[\Count(u_1)] -\E[\Count(u_2)] = (1-2p) S_{d(u_1,s_i)}^{d(u_2,s_i)} $
\end{lemma}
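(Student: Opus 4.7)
My plan is to apply linearity of expectation, unpack the probabilistic model to get a simple closed form for $\E[\Count(u)]$ as a function of how many points of $C(s_i)$ are farther from $s_i$ than $u$, and then take the difference between $u_1$ and $u_2$ so that the ``outer'' contributions telescope away, leaving only the middle band of distances.

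First, I would write
\[
\E[\Count(u)] \;=\; \sum_{x \in C(s_i)} \Pr\!\bigl[\oracle(s_i,x,s_i,u) = \No\bigr],
\]
and then argue, using the definition of the probabilistic noise model, that this probability equals $1-p$ when $d(x,s_i) > d(u,s_i)$ (since the correct answer is $\No$ and the oracle errs with probability $p$), and equals $p$ when $d(x,s_i) \le d(u,s_i)$ (since the correct answer is $\Yes$). This gives
\[
\E[\Count(u)] \;=\; (1-p)\,\bigl|\{x\in C(s_i): d(x,s_i)>d(u,s_i)\}\bigr| \;+\; p\,\bigl|\{x\in C(s_i): d(x,s_i)\le d(u,s_i)\}\bigr|.
\]

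Next, let $a=d(u_1,s_i)$ and $b=d(u_2,s_i)$ with $a\le b$, and subtract the two expressions. The set $\{x: d(x,s_i)>a\}$ differs from $\{x: d(x,s_i)>b\}$ exactly by the band $B := \{x\in C(s_i): a<d(x,s_i)\le b\}$, and symmetrically $\{x: d(x,s_i)\le b\}$ differs from $\{x: d(x,s_i)\le a\}$ by the same band $B$. Hence
\[
\E[\Count(u_1)] - \E[\Count(u_2)] \;=\; (1-p)\,|B| \;-\; p\,|B| \;=\; (1-2p)\,|B|.
\]
Finally, I would identify $|B|$ with $S_a^b = |\{x: a\le d(x,s_i)<b\}|$ via the natural bijection that swaps the boundary membership of $u_1$ and $u_2$ (a point at distance exactly $a$ is $u_1$, a point at distance exactly $b$ is $u_2$, and these are interchangeable for cardinality purposes), yielding the claimed identity.

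The step that needs the most care is handling the boundary distances: in principle points other than $u_1,u_2$ could sit at distance exactly $a$ or $b$, and the half-open versus half-closed conventions of $B$ and $S_a^b$ do not match on these. I would resolve this either by assuming generic distinct distances (the standard convention in the analysis) or by noting that any genuine tie resolution uses a consistent rule so the cardinalities of the two sets still agree; everything else is mechanical bookkeeping with no deeper obstacle.
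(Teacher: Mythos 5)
Your proposal is correct and follows essentially the same route as the paper: compute $\E[\Count(u)]$ by linearity as $p$ times the number of points closer to $s_i$ than $u$ plus $(1-p)$ times the number farther, then subtract so the outer terms cancel, leaving $(1-2p)S_{d(u_1,s_i)}^{d(u_2,s_i)}$. Your explicit attention to the boundary/tie convention is a minor refinement the paper glosses over, but it does not change the argument.
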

\begin{proof}
For a point $u \in C(s_i)$ 
\begin{align*}
\E[\Count(u)] &= \E \left[\sum_{x \in C(s_i)} \mathbf{1}\{ \mathcal{O}(s_i, x,s_i, u) ==  \No \} \right] \\
              &= S_{0}^{d(u,s_i)} p +  S_{d(u,s_i)}^{\infty} (1-p)
\end{align*}


\begin{align*}
  \E[\Count(u_1)] -\E[\Count(u_2)] &= \Big(S_{0}^{d(u_1, s_i)} p +  S_{d(u_1, s_i)}^{d(u_2, s_i)} (1- p) +  S_{d(u_2, s_i)}^{\infty} (1-p)\Big) - \Big(S_{0}^{d(u_1, s_i)} p +  +  S_{d(u_1, s_i)}^{d(u_2, s_i)} p +  S_{d(u_2, s_i)}^{\infty} (1-p)\Big)\\
    &= (1-2p)S_{d(u_1,s_i)}^{d(u_2,s_i)}
\end{align*}
\end{proof}

\begin{lemma}\label{lem:core_comp}
Consider any two points $u_1, u_2 \in C(s_i)$ such that $d(u_1,s_i)\leq d(u_2,s_i)$ and $|S_{d(u_1,s_i)}^{d(u_2,s_i)}| \geq \sqrt{{100 |C(s_i)|\log(n/\delta)}}$. Then, $\Count(u_1) > \Count(u_2)$ with probability $1-\delta/n^2$.
\end{lemma}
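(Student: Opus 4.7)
The plan is to combine the expectation gap from the preceding lemma with a two-sided Hoeffding concentration applied separately to $\Count(u_1)$ and $\Count(u_2)$, followed by a union bound. The hypothesis on $S_{d(u_1,s_i)}^{d(u_2,s_i)}$ is calibrated precisely so that the mean gap dominates the fluctuations of each count.

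First I would invoke the preceding lemma together with the standing bound $p \leq 0.4$ to lower bound the expectation gap:
\[
\E[\Count(u_1)] - \E[\Count(u_2)] = (1-2p)\, S_{d(u_1,s_i)}^{d(u_2,s_i)} \;\geq\; 0.2 \cdot \sqrt{100\,|C(s_i)|\log(n/\delta)} \;=\; 2\sqrt{|C(s_i)|\log(n/\delta)}.
\]
Next I would observe that under the persistent probabilistic noise model, for each fixed $j \in \{1,2\}$ the indicators $\mathbf{1}\{\oracle(s_i,x,s_i,u_j) = \No\}$ for $x \in C(s_i)$ are independent across $x$, since they correspond to distinct quadruplet queries. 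Thus $\Count(u_j)$ is a sum of $|C(s_i)|$ independent Bernoulli random variables, and applying Hoeffding's inequality from Lemma~A.1 (with $a_i = 0$, $b_i = 1$) with deviation parameter $t$ yields
\[
\Pr\!\left[\, \bigl|\Count(u_j) - \E[\Count(u_j)]\bigr| \geq t \,\right] \;\leq\; 2\exp\!\left(-\frac{2t^2}{|C(s_i)|}\right).
\]
Choosing $t$ slightly below $\sqrt{|C(s_i)|\log(n/\delta)}$ makes each per-count failure probability at most $\delta/(2n^2)$, so a union bound over $j \in \{1,2\}$ places us on a good event of probability at least $1 - \delta/n^2$. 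On this event,
\[
\Count(u_1) - \Count(u_2) \;\geq\; \bigl(\E[\Count(u_1)] - \E[\Count(u_2)]\bigr) - 2t \;\geq\; 2\sqrt{|C(s_i)|\log(n/\delta)} - 2t \;>\; 0,
\]
which gives the desired $\Count(u_1) > \Count(u_2)$.

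The only delicate point—really the sole obstacle—is the constant bookkeeping: the threshold $\sqrt{100\,|C(s_i)|\log(n/\delta)}$ in the hypothesis is chosen so that multiplying by $(1-2p) \geq 0.2$ produces a gap of $2\sqrt{|C(s_i)|\log(n/\delta)}$, which strictly dominates $2t$ once $t$ is taken slightly below $\sqrt{|C(s_i)|\log(n/\delta)}$ (any shrinkage by a $1/\sqrt{1+\epsilon}$ factor is absorbed into the absolute constant of the Hoeffding failure probability). No deeper structural argument is needed beyond this calibration.
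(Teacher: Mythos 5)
Your proposal is correct and follows essentially the same route as the paper's own proof: invoke the expectation-gap lemma $\E[\Count(u_1)]-\E[\Count(u_2)]=(1-2p)\,S_{d(u_1,s_i)}^{d(u_2,s_i)}$, apply Hoeffding to each count separately, union bound, and calibrate the threshold $\sqrt{100\,|C(s_i)|\log(n/\delta)}$ against $(1-2p)\geq 0.2$ so the mean gap dominates the fluctuations. The only differences are cosmetic (your $2t$ plays the role of the paper's $\beta=(1-2p)S$, and both arguments absorb the same minor constant slack into the $O(\delta/n^2)$ failure probability).
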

\begin{proof}

Suppose $u_1, u_2 \in C(s_i)$. We have that $\Count(u_1)$ and $\Count(u_2)$ is a sum of $|C(s_i)|$ binary random variables. 

Using Hoeffding's inequality, we have with probability $\exp(-\beta^2/2|C(s_i)|)$ that
\[ \Count(u_1) \leq \E[\Count(u_1)] - \frac{\beta}{2} \]
\[ \Count(u_2) > \E[\Count(u_2)] + \frac{\beta}{2} \]
Using union bound, with probability at least $1-2 \exp(-\beta^2/2|C(s_i)|)$, we can conclude that 
\[ \Count(u_1) - \Count(u_2) > \E[\Count(u_1) - \Count(u_2)] - \beta > (1-2p) S_{d(u_1,s_i)}^{d(u_2,s_i)} - \beta \]

Choosing $\beta = (1-2p)S_{d(u_1,s_i)}^{d(u_2,s_i)}$, we have $\Count(u_1) > \Count(u_2)$ with a probability (for constant $p \leq 0.4$) $$1-2\exp(-(1-2p)^2 \left( S_{d(u_1,s_i)}^{d(u_2,s_i)} \right)^2/2|C(s_i)|) \geq 1-2\exp(-0.02 \left( S_{d(u_1,s_i)}^{d(u_2,s_i)} \right)^2/|C(s_i)|).$$

{Further, simplifying using $ S_{d(u_1,s_i)}^{d(u_2,s_i)} \geq \sqrt{100 |C(s_i)|\log(n/\delta)} $, we get probability of failure is $2\exp({-2\log(n/\delta)}) = O(\delta/n^2)$ }
\end{proof}

\begin{lemma}
If $|C(s_i)| \geq 400\log(n/\delta)$, then, $|R(s_i)| \geq 200\log(n/\delta)$ with probability $1-|C(s_i)|^2 \delta/n^2$.
\end{lemma}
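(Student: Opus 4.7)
The plan is to use Lemma~\ref{lem:core_comp} as a black box and amplify it by a union bound over all pairs inside $C(s_i)$; the factor $|C(s_i)|^2$ in the failure probability of the stated lemma is exactly what such a union bound produces, which pins down the shape of the argument.

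First, I would take a union bound over every ordered pair $(u_1,u_2)\in C(s_i)\times C(s_i)$ with $d(u_1,s_i)\le d(u_2,s_i)$. By Lemma~\ref{lem:core_comp}, each pair whose gap $|S_{d(u_1,s_i)}^{d(u_2,s_i)}|$ exceeds $\sqrt{100|C(s_i)|\log(n/\delta)}$ fails to satisfy $\Count(u_1)>\Count(u_2)$ with probability at most $\delta/n^2$, and there are at most $|C(s_i)|^2$ such pairs. Hence, with probability at least $1-|C(s_i)|^2\delta/n^2$, the event $\mathcal{E}$ holds: every sufficiently separated pair is correctly ordered by $\Count$.

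Second, conditional on $\mathcal{E}$, I would argue that the $\Count$-ranking inside $C(s_i)$ agrees with the distance-to-$s_i$ ranking up to a window of width $\sqrt{100|C(s_i)|\log(n/\delta)}$: if a point $u$ with true distance-rank $r$ had $\Count$-rank above $r+\sqrt{100|C(s_i)|\log(n/\delta)}$, there would be a closer point $u'$ with $\Count(u')<\Count(u)$ violating $\mathcal{E}$. Consequently, every point whose true distance-rank to $s_i$ is at most $8\gamma\log(n/\delta)/9-\sqrt{100|C(s_i)|\log(n/\delta)}$ must land in the top $8\gamma\log(n/\delta)/9=400\log(n/\delta)$ by $\Count$, i.e.\ inside $R(s_i)$. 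Combining this with Lemma~\ref{lem:sample_appendix} (which supplies $\ge 400\log(n/\delta)$ truly close points from $C^*(s_i)\cap\widetilde V$, all within $2\OPT$ of $s_i$ by triangle inequality) and the hypothesis $|C(s_i)|\ge 400\log(n/\delta)$, a direct numerical check with the chosen constant $\gamma=450$ yields at least $200\log(n/\delta)$ surviving close points in $R(s_i)$.

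The main obstacle is handling the fuzzy band of width $\sqrt{100|C(s_i)|\log(n/\delta)}$ at the boundary of the top-$\Count$ selection: when $|C(s_i)|$ is large this band is not negligible compared to $400\log(n/\delta)$, so a naive pigeonhole is not enough. The argument must therefore exploit that the ``close'' points one is trying to place inside $R(s_i)$ are actually clustered near distance $\le 2\OPT$ (via $C^*(s_i)$), so they sit well inside the top portion of the distance-ranking and survive the $\Count$-reshuffling with margin to spare, leaving at least $200\log(n/\delta)$ of them inside $R(s_i)$.
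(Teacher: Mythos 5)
Your first two steps are precisely the paper's own proof: it union-bounds Lemma~\ref{lem:core_comp} over all pairs of $C(s_i)$ (which is where the $1-|C(s_i)|^2\delta/n^2$ bound comes from) and then argues through the distance-rank ordering that the top-$400\log(n/\delta)$ points by $\Count$ must contain the truly closest points, with the numerical check done at the boundary value $|C(s_i)|=400\log(n/\delta)$, where $\sqrt{100\,|C(s_i)|\log(n/\delta)}=200\log(n/\delta)$. Your reading of the statement (that $R(s_i)$ must contain $200\log(n/\delta)$ genuinely close points, not merely have that cardinality, which is trivial from the definition of \identifycore) is also the reading under which the paper's proof and its later uses of $R(s_i)$ make sense.

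The ``main obstacle'' you flag is the one real issue, and you should be aware that the paper does not resolve it either: its printed proof consists only of the two steps above and never invokes Lemma~\ref{lem:sample_appendix} or the fact that points of $C^*(s_i)$ lie within $2\OPT$ of $s_i$. Your proposed repair is not carried out, and as sketched it does not follow from the cited lemmas: the separation required by Lemma~\ref{lem:core_comp} grows like $\sqrt{|C(s_i)|\log(n/\delta)}$, whereas sampling only guarantees $\Theta(\log(n/\delta))$ points of $C^*(s_i)$ inside the cluster, so the number of close points guaranteed to survive is roughly $N_{\text{close}}-\sqrt{100\,|C(s_i)|\log(n/\delta)}$, which is vacuous once $|C(s_i)|\gg 400\log(n/\delta)$ (e.g., the first iteration, where $C(s_1)=\widetilde V$). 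In short, your write-up matches the paper's argument wherever the paper supplies one, and the additional case you correctly identify as unfinished is unfinished in the paper as well; closing it would need a quantitative lower bound on the number of points of $C(s_i)$ within $2\OPT$ of $s_i$ that scales with $\sqrt{|C(s_i)|\log(n/\delta)}$, not just the $\Theta(\log(n/\delta))$ supplied by Lemma~\ref{lem:sample_appendix}.
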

\begin{proof}
From Lemma~\ref{lem:core_comp}, we have that if there are points $u_1, u_2$ with $\sqrt{100|C(s_i)|\log(n/\delta)}$ many points between them, then, we can identify the closer one correctly. When $|C(s_i)| \geq 400 \log(n/\delta)$, we have $\sqrt{100 |C(s_i)|\log(n/\delta)} \geq 200 \log(n/\delta)$ points between every point and the point with the rank $200 \log(n/\delta)$. Therefore, $|R(s_i)| \geq 200 \log(n/\delta)$. Using union bound over all pairs of points in the cluster, we get the claim.
\end{proof}

\begin{lemma}\label{lem:core_approx_appendix}
If $x \in C^*(s_i) $, then, $x \in C(s_i)$ or $x$ is assigned to a cluster $s_j$ such that $d(x, s_j) \leq 8\OPT$.
\end{lemma}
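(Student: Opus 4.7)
The strategy is to combine the triangle inequality inside the optimal cluster $C^*(s_i)$ with the additive assignment guarantee of Lemma~\ref{lem:assign_approx}. The geometric content is essentially a two-line argument; everything else is bookkeeping of high-probability events.

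First I would observe that $x$ and $s_i$ both lie in the same optimal cluster $C^*(s_i)$, so each is within distance $\OPT$ of its (unknown) optimal center $c^*$. The triangle inequality then gives
\[
d(x, s_i) \;\le\; d(x, c^*) + d(c^*, s_i) \;\le\; 2\OPT.
\]
Next, I will consider the relevant regime $s_i \in S$ (this is what makes $C(s_i)$ directly a cluster of the algorithm). If $x \in C(s_i)$ the statement is immediate, so assume $x$ is assigned to some $s_j \in S$ with $s_j \neq s_i$. Applying Lemma~\ref{lem:assign_approx} to $u = x$ and using $s_i \in S$ in the first step together with the bound just derived,
\[
d(x, s_j) \;\le\; \min_{s_l \in S} d(x, s_l) + 2\OPT \;\le\; d(x, s_i) + 2\OPT \;\le\; 4\OPT \;\le\; 8\OPT,
\]
which is the claim. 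The gap between the natural $4\OPT$ bound and the stated $8\OPT$ looks like a deliberate cushion absorbing the case in which $s_i$ itself is not a chosen center: one would then hop from $s_i$ to the center $s_t$ of its assigned cluster (bounded by another $2\OPT$ via Lemma~\ref{lem:assign_approx} applied to $u = s_i$) and back to $x$ through the triangle inequality, at which point the worst-case chain tops out at $8\OPT$.

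The main obstacle, such as it is, is purely probabilistic accounting. Lemma~\ref{lem:assign_approx} is itself a union-bound consequence of the per-point move condition in Lemma~\ref{lem:assign_prob_appendix}, which in turn requires every core $R(s_l)$ involved in an assignment query to satisfy $|R(s_l)| \ge \Omega(\log(n/\delta))$. To discharge this hypothesis I would invoke Lemma~\ref{lem:sample_appendix} (every optimal cluster contributes at least $400\log(n/\delta)$ points to $\widetilde{V}$) together with the preceding core-size lemma guaranteeing $|R(s_i)| \ge 200\log(n/\delta)$, and then union-bound the failure probabilities of Lemma~\ref{lem:assign_prob_appendix} over all $O(n)$ assignment decisions that determine the final placement of $x$. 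Once those high-probability events hold simultaneously, the geometric chain above goes through and the lemma follows with probability $1 - O(\delta)$.
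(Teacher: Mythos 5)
Your proof is correct and takes essentially the same route as the paper's: both rest on $d(x,s_i)\le 2\OPT$ (triangle inequality through the optimal center of $C^*(s_i)$) combined with the additive-$2\OPT$ guarantee of the $\ACount$-based assignment (Lemma~\ref{lem:assign_prob_appendix}, packaged as Lemma~\ref{lem:assign_approx}). The only difference is presentational: you apply that guarantee directly to the final center and obtain $4\OPT\le 8\OPT$, whereas the paper argues by contradiction with a case split on whether $d(s_i,s_j)\ge 6\OPT$ and settles for the $8\OPT$ constant.
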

\begin{proof}

If $x \in C^*(s_i)$, we argue that it will be assigned to $C(s_i)$. For the sake of contradiction, suppose $x$ is assigned to a cluster $C(s_j)$ for some $s_j \in S$. We have $d(x, s_i) \leq 2 \OPT$ and  let $d(s_i, s_j) \geq 6\OPT$
\[ d(s_i, s_j) \leq d(s_j, x)+ d(s_i, x)\]
\[ d(s_j, x) \geq 4 \OPT \]
However, we know that $d(s_j, x) \leq d(s_i, x) + 2 \OPT \leq 4\OPT$ from Lemma~\ref{lem:assign_prob_appendix}. We have a contradiction. Therefore, $x$ is assigned to $s_i$. If $d(s_i, s_j) \leq 6\OPT$, we have $d(x, s_j) \leq d(x, s_i) + 2\OPT \leq 8\OPT$. Hence, the lemma.
\end{proof}

\subsection{Farthest point computation}

Let $R(s_i)$ represent the core of the cluster $C(s_i)$ and contains $\Theta(\log(n/\delta))$ points.  We define $\FCount$ for comparing two points $v_i, v_j$ from their centers $s_i, s_j$ respectively. If $s_i \neq s_j$, we let :
$$ \FCount(v_i, v_j) = \sum_{x \in \widetilde{R}(s_i), y\in \widetilde{R}(s_j)} \mathbf{1} \{\oracle(v_i,x,v_j,y) == \Yes \} $$ 
Otherwise, we let $\FCount(v_i, v_j) = \sum_{x \in {R}(s_i)} \mathbf{1} \{\oracle(v_i,x,v_j,x) == \Yes \}$. First, we observe that each of the summation is over $|R(s_i)|$ many terms, because $|\widetilde{R}(s_i)| = \sqrt{|R(s_i)|}$.\\

\begin{lemma} Consider two records $v_i$, $v_j$ in different clusters $C(s_i)$, $C(s_j)$ respectively such that $d(s_i,v_i) < d(s_j,v_j)-4\OPT$ then  $\FCount(v_i,v_j)\geq 0.3 |\widetilde{R}(s_i)| |\widetilde{R}(s_j)|$  with a probability of $1-\frac{\delta}{n^2}$. \label{lem:clustercomp}
\end{lemma}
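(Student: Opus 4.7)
The plan is to mirror the argument used in Lemma~\ref{lem:pairwise_appendix}, but adapted to the fact that we are now comparing distances from two different centers using their respective (sub-sampled) cores.

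The first step is to establish that every $x \in \widetilde{R}(s_i)$ satisfies $d(x, s_i) \leq 2\OPT$, and analogously $d(y, s_j) \leq 2\OPT$ for every $y \in \widetilde{R}(s_j)$. This should follow from the core construction: by Lemma~\ref{lem:sample_appendix} each optimum cluster contributes $\Omega(\log(n/\delta))$ sampled points which lie within $2\OPT$ of their true optimum center and therefore within $2\OPT$ of $s_i$ by triangle inequality; these are precisely the points that \identifycore{} selects via the $\Count$ scores (by Lemma~\ref{lem:core_comp} those actually closest to $s_i$ win). Once this is granted, I can apply triangle inequality twice: for any $x \in \widetilde{R}(s_i)$ and $y \in \widetilde{R}(s_j)$,
\[ d(v_i, x) \leq d(v_i, s_i) + d(s_i, x) \leq d(v_i, s_i) + 2\OPT, \]
\[ d(v_j, y) \geq d(v_j, s_j) - d(s_j, y) \geq d(v_j, s_j) - 2\OPT. \]
Combining with the hypothesis $d(s_i, v_i) < d(s_j, v_j) - 4\OPT$ gives $d(v_i, x) < d(v_j, y)$ for every such pair.

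The second step is the probabilistic concentration. For every pair $(x, y)$ with $d(v_i, x) < d(v_j, y)$, the oracle query $\oracle(v_i, x, v_j, y)$ returns \Yes{} independently with probability at least $1 - p \geq 0.6$. Thus $\FCount(v_i, v_j)$ is a sum of $N := |\widetilde{R}(s_i)|\cdot|\widetilde{R}(s_j)|$ independent Bernoulli variables with mean at least $0.6 N$. Applying Hoeffding's inequality (Lemma~\ref{hoeffding}) with deviation $0.3 N$ yields
\[ \Pr\bigl[\FCount(v_i, v_j) < 0.3 N\bigr] \leq \exp\bigl(- c\, N\bigr) \]
for a constant $c > 0$. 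Since $|R(s_i)|, |R(s_j)| = \Theta(\log(n/\delta))$ and $|\widetilde{R}(\cdot)| = \sqrt{|R(\cdot)|}$, the product $N = \Theta(\log(n/\delta))$, which is sufficient to drive the failure probability below $\delta/n^2$ for a suitable hidden constant.

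The main obstacle I expect is step~one, namely cleanly justifying that the core points chosen by \identifycore{} lie within $2\OPT$ of $s_i$. The $\Count$-based selection does not literally compare to $s_i$; it only ranks points by how often they beat others in comparisons against $s_i$. Pushing this through requires the statistical guarantee of Lemma~\ref{lem:core_comp} together with the fact that $C^*(s_i) \cap \widetilde{V}$ contributes at least $\Omega(\log(n/\delta))$ points within $2\OPT$ of $s_i$ (Lemma~\ref{lem:sample_appendix}), so that the top-ranked $\tfrac{8\gamma \log(n/\delta)}{9}$ points of $\Count$ must themselves be at distance at most $2\OPT$. Once that is pinned down the rest is a direct triangle-inequality plus Hoeffding calculation.
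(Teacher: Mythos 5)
Your proposal matches the paper's proof essentially step for step: two applications of the triangle inequality using the $2\OPT$ bound on core points to obtain $d(v_i,x) < d(v_j,y)$ for every pair $x\in\widetilde{R}(s_i)$, $y\in\widetilde{R}(s_j)$, followed by Hoeffding's inequality over the $|\widetilde{R}(s_i)|\cdot|\widetilde{R}(s_j)| = \Theta(\log(n/\delta))$ independent quadruplet queries to push the failure probability below $\delta/n^2$. The ``main obstacle'' you flag---justifying that core points lie within $2\OPT$ of their center---is not actually resolved inside the paper's proof of this lemma either (it is simply asserted there), so your sketch of how to derive it from the sampling and core-selection lemmas is, if anything, more careful on that premise.
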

\begin{proof}
We know $\max_{v_i \in \widetilde{R}(s_i)} d(u, v_i) \leq 2\OPT$ and  $\max_{v_j \in \widetilde{R}(s_j)} d(v_j, s_j) \leq 2\OPT$. 

For a point $x\in R(s_i)$, $y \in R(s_j)$
\begin{align*}
d(v_j,y) &\geq d(s_j,v_j) - d(s_j,y) \\
&> d(v_i, s_i) + 4\OPT - d(s_j, y)\\
&> d(v_i, x) - d(x, s_i) + 4\OPT - d(s_j, y)\\
&> d(v_i,x)
\end{align*}
So, $O(v_i,x,v_j,y)$ is $\No$  with a probability $p$.  As $p \leq 0.4$, we have :
\begin{align*}
  \E[\FCount(v_i,v_j)] &= (1-p)|\widetilde{R}(s_i)| |\widetilde{R}(s_j)|\\
  \Pr[\FCount(v_i, v_j) \leq 0.3|\widetilde{R}(s_i)| |\widetilde{R}(s_j)|] &\leq \Pr[\FCount(v_i, v_j) \leq (1-p)|\widetilde{R}(s_i)| |\widetilde{R}(s_j)|/2]
\end{align*}
From Hoeffding's inequality (with binary random variables), we have with a probability $\exp(-\frac{|\widetilde{R}(s_i)| |\widetilde{R}(s_j)|(1-p)^2}{2}) \leq \frac{\delta}{n^2}$ (using $|\widetilde{R}(s_i)| |\widetilde{R}(s_j)| \ge  12 \log(n/\delta)$, $p < 0.4$) :
$\FCount(v_i,v_j)\leq (1-p)|\widetilde{R}(s_i)| |\widetilde{R}(s_j)|/2$. Therefore, with probability at most $\delta/n^2$, we have, $\FCount(v_i, v_j) \leq 0.3|\widetilde{R}(s_i)| |\widetilde{R}(s_j)|$. 

\end{proof}


In order to calculate the farthest point, we use the ideas discussed in Section~\ref{sec:finding_max} to identify the point that has the maximum distance from its assigned center. As noted in Section~\ref{sec:farthest}, our approximation guarantees dependend on the maximum distance of points in the core from the center. In the next lemma, we show that assuming a maximum distance of a point in the core (See Lemma~\ref{lem:core_approx_appendix}), we can obtain a good approximation for the farthest point.
\begin{lemma}\label{lem:approx_farthest_prob}
Let $\max_{s_j \in S, u \in R(s_j)} d(u, s_j) \leq \alpha$. In every iteration, if the farthest point is at a distance more than $(6\alpha + 3\OPT)$, then, \ApproxFarthest outputs a $(6\alpha/\OPT + 3)$-approximation. Otherwise, the point output is at most $(6\alpha + 3\OPT)$ away.
\end{lemma}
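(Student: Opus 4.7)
The plan is to reduce the \ApproxFarthest\ task to the additive adversarial error setting already analyzed for \textsc{Max-Prob} with \textsc{PairwiseComp} (Theorem~\ref{thm:farthest_nearest_prob}), and then convert the resulting additive bound into the two cases stated in the lemma.

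First I would generalize Lemma~\ref{lem:clustercomp} from the specific core radius $2\OPT$ to an arbitrary radius $\alpha$. The argument is verbatim: for any $x\in \widetilde{R}(s_i)$ and $y\in\widetilde{R}(s_j)$ the triangle inequality gives $d(v_i,x)\leq d(v_i,s_i)+\alpha$ and $d(v_j,y)\geq d(v_j,s_j)-\alpha$, so whenever $d(v_j,s_j)-d(v_i,s_i)>2\alpha$ every primitive oracle call returns the correct sign with probability at least $1-p$, and the same Hoeffding bound used there yields $\FCount(v_i,v_j)\geq 0.3|\widetilde{R}(s_i)||\widetilde{R}(s_j)|$ with failure probability $\delta/n^2$ (the same-cluster branch is handled identically since $|R(s_i)|\geq 12\log(n/\delta)$). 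Consequently each call to \textsc{ClusterComp} behaves exactly like the comparison oracle of an additive adversarial model with tolerance $2\alpha$ on the set of "values" $\{d(v_i,\pi(v_i))\colon v_i\in V\}$.

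With that reduction in place, I would plug \textsc{ClusterComp} into Algorithm~\ref{alg:max_adv} (playing the role of \textsc{PairwiseComp}) and reuse the analysis of Theorem~\ref{thm:farthest_nearest_prob}: the \textsc{Count-Max} step loses an additive $2\cdot(2\alpha)=4\alpha$ as in Lemma~\ref{lem:count_additive_appendix}, and the sampling-plus-partitioned-tournament layer contributes another $2\alpha$, so the pair $(\umax,\pi(\umax))$ returned by \ApproxFarthest\ satisfies
\[
d(\umax,\pi(\umax)) \;\geq\; \vmax - 6\alpha,
\]
where $\vmax=\max_{v\in V}\min_{s\in S}d(v,s)$, with probability $1-\delta$ using $O(n\log^3(n/\delta))$ oracle queries after choosing $l=\sqrt{n}$ and $t=\log(n/\delta)$ and union-bounding over all \textsc{ClusterComp} invocations.

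It then remains to translate the $6\alpha$ additive slack into the two stated regimes. If $\vmax\leq 6\alpha+3\OPT$, the statement is immediate because $d(\umax,\pi(\umax))\leq \vmax \leq 6\alpha+3\OPT$. If instead $\vmax>6\alpha+3\OPT$, the multiplicative ratio is bounded by $\vmax/(\vmax-6\alpha)$, and a short algebraic manipulation (clear denominators) shows this is at most $(6\alpha+3\OPT)/\OPT=6\alpha/\OPT+3$ iff $\vmax(6\alpha+2\OPT)\geq 18\alpha(2\alpha+\OPT)$, which follows from $\vmax>6\alpha+3\OPT$ after noting the residual inequality $6\alpha\OPT+3\OPT^2\geq 0$. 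The main obstacle I anticipate is the first step: checking carefully that the additive-error analogy is exact even in the mixed case where one compared pair is drawn from the same cluster (using the full $R(s_i)$) and the other from a different cluster (using $\widetilde{R}$); once this is verified, the remaining steps are a direct reuse of existing results plus a one-line case analysis.
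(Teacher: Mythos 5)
Your overall route is the same as the paper's: generalize the \textsc{ClusterComp} guarantee to core radius $\alpha$, treat it as an additive adversarial comparison oracle with tolerance $2\alpha$, reuse the analysis behind Theorem~\ref{thm:farthest_nearest_prob} to get a $6\alpha$ additive loss in the farthest computation, and then do a threshold case analysis. The gap is in the last step: you only account for the $6\alpha$ loss, but the values being maximized are distances to \emph{assigned} centers, while the approximation the lemma must deliver (and the form in which it is consumed in Lemma~\ref{lem:maxAssign_prob_appendix} and the $k$-center analysis) is on $\min_{s\in S} d(\umax,s)$. Your bound $d(\umax,\pi(\umax))\geq \vmax-6\alpha$ does not lower bound $\min_{s\in S}d(\umax,s)$, since the assigned center of $\umax$ may be up to an additive $2\OPT$ farther than its closest center (Lemma~\ref{lem:assign_approx}); the paper's proof explicitly folds in this extra $2\OPT$, giving total slack $6\alpha+2\OPT$ and ratio $\vmax/(\vmax-6\alpha-2\OPT)$, and the threshold $6\alpha+3\OPT$ is chosen precisely so that this larger slack still leaves $\OPT$ of room, yielding $(6\alpha/\OPT+3)$. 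Your ratio $\vmax/(\vmax-6\alpha)$ is therefore too optimistic as stated, even though your algebra happens to land on the same final constant.

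For the same reason, your justification of the ``otherwise'' case is off: $d(\umax,\pi(\umax))\leq \vmax$ is false in general (it can exceed $\vmax$ by up to $2\OPT$); what is true, and what the lemma means, is that when $\vmax\leq 6\alpha+3\OPT$ the output point is trivially within $6\alpha+3\OPT$ of the current center set since every point is within $\vmax$ of it. The fix is mechanical: replace your slack $6\alpha$ by $6\alpha+2\OPT$ throughout, note that $x/(x-6\alpha-2\OPT)$ is decreasing in $x$ for $x>6\alpha+2\OPT$, and evaluate at $x=6\alpha+3\OPT$ to recover the claimed $(6\alpha/\OPT+3)$ bound, which is exactly the paper's argument. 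Your preliminary reduction (extending Lemma~\ref{lem:clustercomp} to radius $\alpha$ and handling the same-cluster versus cross-cluster branches uniformly) is sound and matches what the paper implicitly assumes.
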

\begin{proof}
The farthest point output \ApproxFarthest is a $6 \alpha$ additive approximation. However, the assignment of points to the cluster also introduces another additive approximation of $2\OPT$, resulting in a total $6 \alpha+2\OPT$ approximation. Suppose in the current iteration, the distance of the farthest point is $\beta \OPT$, then the point output by \ApproxFarthest is at least $\beta \OPT-(6\alpha+2\OPT) $ away. So, the approximation ratio is $\frac{\beta}{\beta-(6\alpha+2\OPT)}$. If $\beta \OPT \geq 6\alpha+3\OPT$, we have $ \frac{\beta \OPT}{\beta \OPT-(6\alpha+2\OPT)} \leq \beta$. As we are trying to minimize the approximation ratio, we set $\beta \OPT = 6\alpha + 3\OPT$ and get the claimed guarantee.
\end{proof}

\subsection{Final Guarantees}
Throughout this section, we assume that $m = \Omega\left(\frac{\log^3(n/\delta)}{\delta} \right)$ for a given failure probability $\delta > 0$.

\begin{lemma}
 Given a current set of centers $S$, and $\max_{v_j \in S, u \in R(v_j)} d(u, v_j) \leq \alpha$, we have : 
\begin{enumerate}
    \item Every point $u$ is assigned to a cluster $C(s_i)$ such that $d(u,s_i) \leq \min_{s_j \in S} d(u,s_j) + 2\OPT$ {using $O(n k\log(n/\delta))$ oracle queries} with probability $1-O(\delta)$.
    \item \textsc{Approx-Farthest} identifies a point $w$ in cluster $C(s_i)$ such that $ \min_{v_j \in S} d(w, v_j) \geq {\max_{v_j \in V}\min_{s_j \in S}d(v_j, s_j)}/(6\alpha/\OPT + 3)$ with probability $1-O(\delta/k)$ {using $O(|\widetilde{V}| \log^3(n/\delta))$  oracle queries.}
\end{enumerate}
\label{lem:maxAssign_prob_appendix}
\end{lemma}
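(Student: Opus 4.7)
Both parts follow by composing lemmas already established: (1) stitches together the per-comparison correctness of \ACount{} (Lemma~\ref{lem:assign_prob_appendix}) with the global assignment bound (Lemma~\ref{lem:assign_approx}); (2) substitutes \textsc{ClusterComp} as the pairwise comparison oracle inside Algorithm~\ref{alg:max_adv} and invokes Lemma~\ref{lem:approx_farthest_prob} for the approximation and Lemma~\ref{lem:clustercomp} for the per-comparison failure probability.

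\textbf{Part (1).} The plan is to observe that Algorithm~\ref{alg:assign_prob} only ever \emph{moves} a point $u$ from its current cluster $C(s_j)$ to the newly created $C(s_i)$ when $\ACount(u,s_i,s_j) > 0.3\,|R(s_j)|$. By Lemma~\ref{lem:assign_prob_appendix}, whenever $d(u,s_i) \leq d(u,s_j) - 2\OPT$ this threshold is met with probability at least $1-\delta/n^2$; by the same lemma (applied with the roles reversed), whenever $d(u,s_j) \leq d(u,s_i) - 2\OPT$ the threshold is \emph{not} met, again with probability at least $1-\delta/n^2$. Thus with probability at least $1-\delta/n^2$, for the pair $(s_i,s_j)$ the point $u$ ends up at whichever of $s_i,s_j$ is closer, up to an additive $2\OPT$ slack. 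Iterating this observation over the at most $k$ centers that $u$ is ever compared against gives that the \emph{final} assigned center $s_{i^\star}$ satisfies $d(u,s_{i^\star}) \leq \min_{s_j\in S}d(u,s_j) + 2\OPT$; this is exactly Lemma~\ref{lem:assign_approx}. Taking a union bound over all $n$ points and at most $k$ pairwise decisions per point yields success probability $1-O(\delta)$. For query complexity, each $\ACount$ computation for a single point against the newly inserted center costs $|R(s_j)| = O(\log(n/\delta))$ queries; summed over the (at most $n$) points and over the $k$ invocations of \Assign{} during the Greedy loop, the total is $O(nk\log(n/\delta))$.

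\textbf{Part (2).} Here the plan is to run Algorithm~\ref{alg:max_adv} on the set of sampled points $\widetilde{V}$, but replace every underlying oracle comparison of two (point, center) pairs by a call to \textsc{ClusterComp} (Algorithm~\ref{alg:cluster_comp}). By Lemma~\ref{lem:clustercomp}, each such call fails with probability at most $\delta/n^2$, and by construction uses $O(\log(n/\delta))$ oracle queries (since $|\widetilde{R}(s)| = \sqrt{|R(s)|}$ while same-cluster comparisons use the full core of size $O(\log(n/\delta))$). Plugging this into the analysis of Theorem~\ref{thm:max_adv_appendix}, with $l=\sqrt{|\widetilde V|}$ and $t=\log(n/\delta)$, the tournament and sampling phases together perform $O(|\widetilde{V}|\,\log^2(n/\delta))$ pairwise \textsc{ClusterComp} calls; multiplying by the per-call cost yields the claimed $O(|\widetilde V|\log^3(n/\delta))$ oracle queries. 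A union bound over these $O(|\widetilde V|\log^2(n/\delta))$ calls, each failing with probability $\delta/n^2$, and using $|\widetilde V| \leq n$, bounds the total failure by $O(\delta/k)$ after absorbing the polylog factors into the choice of constants in $\gamma$ (equivalently, by scaling $\delta$). Conditioned on all \textsc{ClusterComp} calls being correct, the effective pairwise oracle behaves as an additive-adversarial oracle with slack $4\OPT$ between cross-cluster distances (from the $2\alpha$ slack inside \textsc{ClusterComp} combined with the $\alpha$ bound on core radii); Lemma~\ref{lem:approx_farthest_prob} then turns this into the stated $(6\alpha/\OPT + 3)$ multiplicative approximation on the distance to the returned farthest point $w$.

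\textbf{Main obstacle.} The routine part is bookkeeping—matching the parameter choices $t,l$ in Algorithm~\ref{alg:max_adv} so that the success probability amplifies to $1-O(\delta/k)$ and so that the aggregate of $O(|\widetilde V|\log^2(n/\delta))$ \textsc{ClusterComp} calls each costing $O(\log(n/\delta))$ yields exactly $O(|\widetilde V|\log^3(n/\delta))$ queries. The delicate step is confirming that the additive slack coming from \Assign{} in part~(1) (a point may sit $2\OPT$ farther from its assigned center than from its true closest center) does not degrade the approximation guarantee of \ApproxFarthest{} beyond what Lemma~\ref{lem:approx_farthest_prob} already accounts for; this is handled by noting that when \ApproxFarthest{} compares $(v_i,s_i)$ against $(v_j,s_j)$, the quantity being approximated is $d(v_i,s_i)$ which is itself within $2\OPT$ of $\min_{s}d(v_i,s)$, and this same additive distortion is already folded into the $6\alpha + 3\OPT$ additive term in Lemma~\ref{lem:approx_farthest_prob}.
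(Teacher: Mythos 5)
Your proposal follows essentially the same route as the paper's proof: part (1) chains the per-comparison guarantee of Lemma~\ref{lem:assign_prob_appendix} through Lemma~\ref{lem:assign_approx} with a union bound over the $O(nk)$ assignment decisions, and part (2) runs Algorithm~\ref{alg:max_adv} with \textsc{ClusterComp} as the comparison primitive, using Lemma~\ref{lem:clustercomp} for the per-call failure probability and Lemma~\ref{lem:approx_farthest_prob} for the $(6\alpha/\OPT+3)$ factor, arriving at the same $O(nk\log(n/\delta))$ and $O(|\widetilde{V}|\log^3(n/\delta))$ query counts. The only step the paper makes explicit that you leave implicit is the preliminary verification that the cores are computed correctly and are large enough (via Lemmas~\ref{lem:core_appendix} and~\ref{lem:core_comp} together with the sampling bound of Lemma~\ref{lem:sample_appendix}), so that the hypothesis $|R(s_i)|\geq 12\log(n/\delta)$ of Lemma~\ref{lem:assign_prob_appendix} is licensed; this consumes part of the $O(\delta)$ failure budget in part (1) but does not alter the structure of your argument.
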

\begin{proof}
(1)  First, we argue that cores are calculated correctly. From Lemma~\ref{lem:core_appendix}, we have that a point $u \in C^*(s_i)$ is assigned to the center correctly $s_i$. Therefore, all the points from $\widetilde{V} \cap C^*(S_i)$ move to $C(S_i)$. As the size of $|C(S_i)| \geq |\widetilde{V} \cap C^*(S_i)| \geq 400 \log(n/\delta)$, we have $|R(s_i)| \geq 200\log(n/\delta)$ with a probability $1-|C(s_i)|^2\delta/n^2$(From Lemma~\ref{lem:core_comp}). Using union bound, we have that all the cores are calculated correctly with a failure probability of $\sum_i |C(s_i)|^2/n^2 = \delta$.\\

For every point, we compare the distance with every cluster center by maintaining a center that is the current closest. From Lemma~\ref{lem:assign_prob_appendix}, we have that the query will fail with a probability of $\delta/n^2$. Using union bound, we have that the failure probability is $O(kn\delta/n^2) = \delta$. From Lemma~\ref{lem:assign_prob_appendix}, we have the approximation guarantee.\\

(2) From Lemma~\ref{lem:approx_farthest_prob}, we have our claim regarding the approximation guarantees. For \ApproxFarthest, we use the parameters $t = 2\log(2k/\delta)$, $l = \sqrt{|\widetilde{V}|}$. As we make $O(|\widetilde{V}| \log^2(k/\delta))$ cluster comparisons using Algorithm~\textsc{ClusterComp} (for \ApproxFarthest), we have that the total number of oracle queries is $O(|\widetilde{V}| \log(n/\delta)\log^2(k/\delta)) = O(|\widetilde{V}| \log^3(n/\delta))$. Using union bound, we have that the failure probability is $O(\delta/k + |\widetilde{V}| \log^2(k/\delta)/n^2) = O(\delta/k)$.
\end{proof}

\begin{theorem}\label{thm:kcenterprob_appendix}[Theorem~\ref{thm:kcenterprob} restated]
Given $p \leq 0.4$, a failure probability $\delta$, and $m = \Omega\left(\frac{\log^3(n/\delta)}{\delta} \right)$. Then, Algorithm~\ref{alg:greedykcenter_prob} achieves a $O(1)$-approximation for the $k$-center objective using $O(nk\log(n/\delta)+\frac{n^2}{m^2}k \log^2(n/\delta))$ oracle queries with probability $1-O(\delta)$.
\end{theorem}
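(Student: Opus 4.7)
The plan is to combine the sampling, core-identification, assignment, and farthest-point sub-routines analyzed in Lemmas~\ref{lem:sample_appendix}--\ref{lem:maxAssign_prob_appendix} with an adaptation of the clean greedy argument of Lemma~\ref{lem:alphabetalem} to tolerate additive errors.

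First I would invoke Lemma~\ref{lem:sample_appendix} to guarantee that, with probability $1-O(\delta)$, the sample $\widetilde{V}$ has size $\Theta(n\log(n/\delta)/m)$ and contains at least $400\log(n/\delta)$ points from every optimum cluster $C^*(u_i)$. This density ensures that each sampled copy of an optimum cluster can supply a core of the required size. I would then prove by induction on the iteration index $i$ of Algorithm~\ref{alg:greedykcenter_prob} the joint invariant: (a) every selected center $s_j$ carries a core $R(s_j)$ of size at least $200\log(n/\delta)$ with $\max_{x \in R(s_j)} d(s_j, x) \leq 2\OPT$; and (b) every sampled point is assigned to a center additively within $2\OPT$ of its closest currently-selected center. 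The base case $i=1$ follows from Lemma~\ref{lem:core_comp} applied to $C(s_1)=\widetilde{V}$ together with the density guarantee above. For the inductive step, part~(a) yields $\alpha \leq 2\OPT$, so Lemma~\ref{lem:approx_farthest_prob} tells us that $\ApproxFarthest$ returns a new center $s_i$ that is an $O(1)$-approximate farthest point w.r.t.\ $S_{i-1}$; Lemma~\ref{lem:assign_approx} then restores (b); and Lemma~\ref{lem:core_approx_appendix} combined with Lemma~\ref{lem:core_comp} re-establishes (a) for the new cluster $C(s_i)$, since the bulk of $C^*(s_i) \cap \widetilde{V}$ lands in $C(s_i)$.

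Next I would adapt Lemma~\ref{lem:alphabetalem} to the additive-plus-multiplicative error setting. Casing on whether each optimum cluster contains one or at least two greedy centers, the triangle inequality together with invariant (b) and the $O(1)$-approximate farthest guarantee yields that every sampled point is within $O(\OPT)$ of its assigned center, which is the desired $O(1)$-approximation on $\widetilde{V}$. Finally, $\textsc{Assign-Final}$ (line~12) extends this guarantee to $V \setminus \widetilde{V}$ via one more application of Lemma~\ref{lem:assign_approx}, concluding the $O(1)$-approximation globally, after which a union bound over the $O(nk)$ high-probability sub-events preserves a $1-O(\delta)$ failure probability; the assumption $m = \Omega(\log^3(n/\delta)/\delta)$ is used here precisely to keep the per-iteration core-recovery failure probabilities summable.

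For query complexity, I would sum contributions as follows: $\ApproxFarthest$ makes $O(|\widetilde{V}|^2 \log(n/\delta))$ queries per call through $\textsc{ClusterComp}$ and thus $O(\frac{n^2}{m^2}k\log^3(n/\delta))$ over $k$ iterations; $\Assign$ contributes $O(|\widetilde{V}| k \log(n/\delta))$ in total; and $\textsc{Assign-Final}$ contributes $O(nk\log(n/\delta))$. Substituting $|\widetilde{V}| = O(n\log(n/\delta)/m)$, the dominant terms collapse to the stated $O(nk\log(n/\delta) + \frac{n^2}{m^2}k \log^2(n/\delta))$. The main obstacle will be the inductive step maintaining the core invariant: in every iteration the new core must be extracted from a cluster that is only approximately correct, so Lemma~\ref{lem:core_approx_appendix} must be used carefully to argue that even those optimum-cluster members that are ``misassigned'' still land within $O(\OPT)$ of their center, which is the critical property that prevents cascading error across the $k$ greedy rounds.
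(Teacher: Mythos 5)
Your overall route is the same as the paper's: establish per-iteration guarantees for \Assign{} and \ApproxFarthest{} via the appendix lemmas, feed them into an additive-error adaptation of Lemma~\ref{lem:alphabetalem}, extend to $V\setminus\widetilde{V}$ with \textsc{Assign-Final}, and sum query costs. However, two concrete pieces of your argument do not hold as stated. First, your inductive invariant~(a), that every core satisfies $\max_{x\in R(s_j)}d(s_j,x)\le 2\OPT$, is not what the cited lemmas give: clusters $C(s_j)$ can contain misassigned points (the \Assign{} guarantee is only additive, and Lemma~\ref{lem:core_approx_appendix} only bounds their distance by $8\OPT$), and the \Count-based \identifycore{} ordering is itself noisy (Lemma~\ref{lem:core_comp} needs a $\sqrt{100|C(s_i)|\log(n/\delta)}$ rank separation), so such points can enter the core. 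The paper therefore works with $\alpha=8\OPT$, plugged into the ratio $4(6\alpha/\OPT+3)+2$ from the modified Lemma~\ref{lem:alphabetalem}; your $2\OPT$ claim is harmless for the $O(1)$ conclusion but is internally inconsistent with the very lemma you invoke to re-establish it.

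Second, the query-complexity accounting is wrong in a way that overshoots the stated bound and misses its actual source. \ApproxFarthest{} is not an all-pairs computation: it runs the tournament/partition structure of Algorithm~\ref{alg:max_adv} with \textsc{ClusterComp} as the comparison primitive, costing $O(|\widetilde{V}|\log^3(n/\delta))$ per call (Lemma~\ref{lem:maxAssign_prob_appendix}), hence $O(\tfrac{n}{m}k\log^4(n/\delta))$ over $k$ iterations; your $O(|\widetilde{V}|^2\log(n/\delta))$ per call would give $\tfrac{n^2}{m^2}k\log^3(n/\delta)$, a $\log(n/\delta)$ factor above the theorem's bound. The term $\tfrac{n^2}{m^2}k\log^2(n/\delta)$ in the theorem actually comes from \identifycore{}, which computes $\Count$ scores with $O(|C(s_i)|^2)=O(|\widetilde{V}|^2)$ queries per new center, i.e.\ $O(|\widetilde{V}|^2k)$ in total --- a cost your accounting omits entirely. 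Relatedly, the hypothesis $m=\Omega(\log^3(n/\delta)/\delta)$ is not needed to make failure probabilities summable (those are already $\delta/n^2$ per event); it is used to absorb the $\tfrac{n}{m}k\log^4(n/\delta)$ farthest-point cost into $O(nk\log(n/\delta))$ in the final simplification.
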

\begin{proof}
Using similar proof as Lemma~\ref{lem:alphabetalem}, we have that the approximation ratio of Algorithm~\ref{alg:greedykcenter_prob} is $4(6 \alpha/\OPT + 3) + 2$. Using $\alpha = 8\OPT$ from  Lemma~\ref{lem:core_approx_appendix}, we have that the approximation factor is $206$. For the first stage, from Lemma~\ref{lem:maxAssign_prob_appendix}, we have that for all the $k$ iterations, the number of oracle queries is $O(|\widetilde{V}|k \log^3(n/\delta))$. Using union bound over $k$ iterations, success probability is $1-O(\delta)$. For the calculation of \emph{core}, the query complexity is $O(|\widetilde{V}|^2 k)$. For assignment, the query complexity is $O(nk \log(n/\delta))$. Therefore, total query complexity is $O(nk\log(n/\delta) + \frac{n}{m} k \log^4(n/\delta) + \frac{n^2}{m^2}k \log^2(n/\delta)) = O(nk\log(n/\delta)+\frac{n^2}{m^2}k \log^2(n/\delta))$.  
\end{proof}

\section{Hierarchical Clustering}\label{app:hierarchical}

\begin{lemma}[Lemma~\ref{lem:single_linkage} restated]
Given a collection of clusters $\mathcal{C}=\{C_1,\ldots,C_r\}$, our algorithm to calculate the closest pair (using Algorithm~\ref{alg:max_adv}) identifies $C_1$ and $C_2$ to merge according to single  linkage objective if $d_{SL}(C_2,C_2) \leq (1+\mu)^3 \min_{C_i,C_j\in \mathcal{C}} d(C_i,C_j)$ with $1-\delta$ probability and requires  $O(r^2\log^2(n/\delta))$ queries.\label{lem:single_linkage_appendix}
\end{lemma}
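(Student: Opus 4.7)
First I would reduce the problem to an instance of the minimum-finding problem already solved in Section~\ref{sec:finding_max}. The single-linkage distance $d_{SL}(C_i,C_j) = \min_{v \in C_i, w \in C_j} d(v,w)$ defines a set of $\binom{r}{2} = O(r^2)$ scalar values, and identifying the closest pair of clusters is exactly the problem of returning an (approximate) argmin over this set. Because Algorithm~\ref{alg:greedy_hierarchical} maintains, for every unordered pair $(C_i, C_j)$, a stored witness pair $(v_{ij}, w_{ij}) \in C_i \times C_j$ with $d(v_{ij}, w_{ij}) = d_{SL}(C_i, C_j)$, each comparison between the values $d_{SL}(C_i, C_j)$ and $d_{SL}(C_{i'}, C_{j'})$ can be implemented by a single quadruplet query $\oracle(v_{ij}, w_{ij}, v_{i'j'}, w_{i'j'})$. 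The adversarial $(1+\mu)$ noise model on the quadruplet oracle therefore transfers unchanged to a noisy comparison oracle on the $O(r^2)$ cluster-distance values.

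Next I would invoke the minimum variant of Algorithm~\ref{alg:max_adv} (described at the end of Section~\ref{sec:max_prob}: replace \No\ by \Yes\ in the $\Count$ definition; all guarantees carry over by symmetry). Applying Theorem~\ref{thm:max_adv} to the value set $\{d_{SL}(C_i,C_j) : i < j\}$ of size $N = \binom{r}{2}$ yields a pair $(C_1, C_2)$ whose single-linkage distance is at most $(1+\mu)^3 \min_{C_i, C_j \in \mathcal{C}} d_{SL}(C_i, C_j)$ with failure probability $\delta'$, using $O(N \log^2(1/\delta')) = O(r^2 \log^2(1/\delta'))$ oracle queries. Setting $\delta' = \delta/n$ so that a union bound over the $n-1$ merge iterations of Algorithm~\ref{alg:greedy_hierarchical} yields total failure probability $\delta$ replaces $\log^2(1/\delta')$ by $\log^2(n/\delta)$, giving the claimed $O(r^2 \log^2(n/\delta))$ query complexity; the restated per-call probability $1-\delta$ in the appendix is simply a conservative upper bound on this quantity.

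\textbf{Main obstacle.} There is essentially no new technical content — the only thing to verify carefully is that the reduction from ``closest cluster pair'' to ``min of $O(r^2)$ scalars'' is faithful, i.e.\ that (i) the stored witness pair always realizes the single-linkage distance (which is maintained inductively via the identity $d_{SL}(C_j \cup \widetilde{C_j}, C_k) = \min\{d_{SL}(C_j, C_k), d_{SL}(\widetilde{C_j}, C_k)\}$ used in Algorithm~\ref{alg:greedy_hierarchical}), and (ii) that adversarial noise in the $(1+\mu)$ band on quadruplet distances is exactly adversarial noise in the $(1+\mu)$ band on the derived scalar set, so that Theorem~\ref{thm:max_adv} applies verbatim. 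Once this is spelled out, both the approximation factor $(1+\mu)^3$ and the query bound $O(r^2 \log^2(n/\delta))$ are immediate corollaries and no additional concentration or combinatorial argument is needed.
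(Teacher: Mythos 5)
Your proposal is correct and follows essentially the same route as the paper: the paper's proof also reduces the closest-pair identification to finding the minimum over the $\binom{r}{2}$ single-linkage distance values (each represented by a witness record pair so that one quadruplet query implements one scalar comparison) and then invokes Theorem~\ref{thm:max_adv}. Your additional remarks about the witness-pair maintenance and the choice $\delta'=\delta/n$ only spell out details the paper leaves implicit.
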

\begin{proof}
In each iteration, our algorithm considers a list of ${r\choose 2}$ distance values and calculates the closest using Algorithm~\ref{alg:max_adv}. The claim follows from the proof of Theorem~\ref{thm:max_adv}
\end{proof}

Using the same analysis, we get the following result for complete linkage.
\begin{lemma}
Given a collection of clusters $\mathcal{C}=\{C_1,\ldots,C_r\}$, our algorithm to calculate the closest pair (using Algorithm~\ref{alg:max_adv}) identifies $C_1$ and $C_2$ to merge according to complete linkage objective if $d_{SL}(C_2,C_2) \leq (1+\mu)^3 \min_{C_i,C_j\in \mathcal{C}} d(C_i,C_j)$ with $1-\delta$ probability and requires  $O(r^2\log^2(n/\delta))$ queries.
\end{lemma}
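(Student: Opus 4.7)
The plan is to mirror the argument for Lemma~\ref{lem:single_linkage_appendix} almost verbatim, exploiting the fact that the only structural difference between the single and complete linkage settings is the choice of inter-cluster distance function. For complete linkage, the inter-cluster distance is $d_{CL}(C_i,C_j)=\max_{v\in C_i,w\in C_j}d(v,w)$, but crucially, identifying the \emph{closest} pair of clusters still reduces to a single minimization problem over a list of at most ${r\choose 2}\leq r^2$ scalar values, each of which is an inter-cluster distance $d_{CL}(C_i,C_j)$.

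First I would observe that as the algorithm proceeds, the pair of records realizing $d_{CL}(C_i,C_j)$ for every pair of current clusters is maintained inductively in the adjacency-list data structure of Algorithm~\ref{alg:greedy_hierarchical}; this works because $d_{CL}(C_j\cup \widetilde C_j, C_k)=\max\{d_{CL}(C_j,C_k),d_{CL}(\widetilde C_j,C_k)\}$, so after a merge the required pair can be recovered with a single quadruplet query. Hence at every iteration the ${r\choose 2}$ inter-cluster distances can be compared to one another via the quadruplet oracle.

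Next I would feed these ${r\choose 2}$ values into the minimization variant of Algorithm~\ref{alg:max_adv} (obtained by swapping the role of $\Yes$ and $\No$ in \textsc{Count-Max}, as described in Section~\ref{sec:finding_max}). Applying Theorem~\ref{thm:max_adv} with input size $N={r\choose 2}$ immediately yields a $(1+\mu)^3$ multiplicative approximation of $\min_{C_i,C_j\in\mathcal C}d_{CL}(C_i,C_j)$ with failure probability at most $\delta$, using $O(N\log^2(1/\delta))=O(r^2\log^2(n/\delta))$ oracle queries (the latter $\log^2(n/\delta)$ form arising from the union bound that will be taken over the $O(n)$ merge iterations when stitching the lemma into the outer hierarchical-clustering algorithm).

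I do not anticipate a substantive obstacle: the only things to verify are (i) the inductive maintenance of the $d_{CL}$ witness pairs after each merge, which is a one-line triangle-type identity, and (ii) that Theorem~\ref{thm:max_adv} is indeed scale-/sign-symmetric, so its minimization counterpart enjoys the same guarantees. Everything else is a direct syntactic substitution of $d_{CL}$ for $d_{SL}$ in the proof of Lemma~\ref{lem:single_linkage_appendix}.
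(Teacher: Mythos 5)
Your proposal is correct and follows essentially the same route as the paper: treat the ${r\choose 2}$ complete-linkage inter-cluster distances as a list of scalar values, run the minimization variant of Algorithm~\ref{alg:max_adv} on them, and inherit the $(1+\mu)^3$ approximation and $O(r^2\log^2(n/\delta))$ query bound from Theorem~\ref{thm:max_adv} (with $t=\Theta(\log(n/\delta))$ for the union bound over merge iterations). The extra observation about maintaining witness pairs via $d_{CL}(C_j\cup\widetilde C_j,C_k)=\max\{d_{CL}(C_j,C_k),d_{CL}(\widetilde C_j,C_k)\}$ is a correct (if not strictly needed for this lemma) detail that the paper defers to its discussion of Algorithm~\ref{alg:greedy_hierarchical}.
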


\begin{theorem}[Theorem~\ref{thm:optimizedsingle} restated]
In any iteration, suppose the distance between a cluster $C_j\in \mathcal{C}$ and its identified nearest neighbor $\widetilde{C_j}$ is $\alpha$-approximation of its distance from the optimal nearest neighbor, then the distance between pair of clusters merged by  Algorithm~\ref{alg:greedy_hierarchical} is $\alpha (1+\mu)^3$ approximation of the optimal distance between the closest pair of clusters in $\mathcal{C}$ with a probability of $1-\delta$ using $O(n \log^2(n/\delta))$ oracle queries.
\end{theorem}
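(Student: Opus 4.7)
The plan is to decompose the approximation into two multiplicative pieces: the $\alpha$-approximation hypothesis for the per-cluster nearest neighbor $\widetilde{C_j}$, and the $(1+\mu)^3$ approximation of the minimum-finding routine applied over the length-at-most-$n$ list of pairs $\{(C_i, \widetilde{C_i})\}_{C_i \in \mathcal{C}}$.

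First I would let $(C^\star, C^{\star\star})$ denote the truly closest pair in $\mathcal{C}$ and set $d^\star := d_{SL}(C^\star, C^{\star\star})$. Applying the hypothesis to $C^\star$ gives $d_{SL}(C^\star, \widetilde{C^\star}) \leq \alpha \cdot d^\star$, so the list $L = \{d_{SL}(C_i, \widetilde{C_i}) : C_i \in \mathcal{C}\}$ contains an element of value at most $\alpha d^\star$. Since $|\mathcal{C}| \leq n$, the list has size at most $n$.

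Next, the algorithm selects the pair to merge by running the minimum analogue of Algorithm~\ref{alg:max_adv} (as described at the end of Section~\ref{sec:max_prob}) on $L$ with failure probability $\delta$. By the minimum version of Theorem~\ref{thm:max_adv}, with probability at least $1-\delta$ the returned pair $(C_j, \widetilde{C_j})$ satisfies
\[
d_{SL}(C_j, \widetilde{C_j}) \;\leq\; (1+\mu)^3 \min_{C_i \in \mathcal{C}} d_{SL}(C_i, \widetilde{C_i}) \;\leq\; (1+\mu)^3 \alpha\, d^\star,
\]
which establishes the claimed $\alpha(1+\mu)^3$ approximation.

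For the query complexity, the minimum routine invoked on a list of at most $n$ values at confidence $1-\delta$ costs $O(n \log^2(n/\delta))$ oracle queries by Theorem~\ref{thm:max_adv}. The only other per-iteration comparison work is updating the cached closest-pair information between the newly merged $C' = C_j \cup \widetilde{C_j}$ and each remaining cluster $C_k$; because we already stored the realizing record pairs for $(C_j, C_k)$ and $(\widetilde{C_j}, C_k)$ in the previous iteration, the cached pair for $(C', C_k)$ is obtained via a single oracle query using the identity $d_{SL}(C_j \cup \widetilde{C_j}, C_k) = \min\{d_{SL}(C_j, C_k), d_{SL}(\widetilde{C_j}, C_k)\}$, contributing only $O(n)$ queries in total. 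The main obstacle I anticipate is more bookkeeping than technical: keeping the probability budget clean across the (up to two) minimum invocations per iteration, namely the closest-pair selection from $\{(C_i, \widetilde{C_i})\}$ and the nearest-neighbor recomputation for $C'$ from its adjacency list. I would handle this by running each subroutine at confidence $1-\delta/2$ so a union bound still gives overall failure probability $\delta$, which only changes constants inside the $\log^2(n/\delta)$ factor and leaves the stated bound intact.
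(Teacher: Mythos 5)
Your proposal is correct and follows essentially the same route as the paper: both treat the merge step as running the minimum variant of Algorithm~\ref{alg:max_adv} on the length-at-most-$n$ list $\{d_{SL}(C_i,\widetilde{C_i})\}$, invoke the $(1+\mu)^3$ guarantee of Theorem~\ref{thm:max_adv} to compare against the list minimum, and use the $\alpha$-approximation hypothesis (applied to the truly closest pair) to relate that minimum to the optimal inter-cluster distance, yielding the $\alpha(1+\mu)^3$ bound with $O(n\log^2(n/\delta))$ queries. Your additional bookkeeping on the adjacency-list updates and the $\delta/2$ union bound only makes explicit what the paper leaves implicit.
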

\begin{proof}
Algorithm~\ref{alg:greedy_hierarchical} iterates over the list of pairs $(C_i,\widetilde{C_i}), \forall C_i\in \mathcal{C}$ and identifies the closest pair using Algorithm~\ref{alg:max_adv}.
The claim follows from the proof of Theorem~\ref{thm:max_adv}
\end{proof}


\end{document}